\newtheorem{theorem}{Theorem}[section]
\newtheorem{lemma}[theorem]{Lemma}
\newtheorem{corollary}[theorem]{Corollary}
\newtheorem{observation}[theorem]{Observation}
\newtheorem{problem}{Problem}
\newtheorem{property}{Property}
\theoremstyle{definition}
\newtheorem{definition}{Definition}
\def\emph#1{\textbf{\textit{\boldmath #1}}}
\let\realbfseries=\bfseries
\def\bfseries{\realbfseries\boldmath}
 \gdef\xxxmark{%
   \expandafter\ifx\csname @mpargs\endcsname\relax 
     \expandafter\ifx\csname @captype\endcsname\relax 
       \marginpar{xxx}
     \else
       xxx 
     \fi
   \else
     xxx 
   \fi}
 \gdef\xxx{\@ifnextchar[\xxx@lab\xxx@nolab}
 \long\gdef\xxx@lab[#1]#2{\textbf{[\xxxmark #2 ---{\sc #1}]}}
 \long\gdef\xxx@nolab#1{\textbf{[\xxxmark #1]}}
 \long\gdef\xxx@lab[#1]#2{}\long\gdef\xxx@nolab#1{}%
\newif\ifabstract
\newif\iffull
\newcounter{section-preserve}
\newcounter{theorem-preserve}
\newcommand{\blank}[1]{}
\newtoks\magicAppendix
\newtoks\magictoks
\newif\iflater
\long\def\later#1{\magictoks={#1}%
  \edef\magictodo{\noexpand\magicAppendix={\the\magicAppendix \par
    \the\magictoks%
  }}
  \magictodo}
\long\def\both#1{\magictoks={#1}%
  \edef\magictodo{\noexpand\magicAppendix={\the\magicAppendix \par
    \noexpand\setcounter{theorem-preserve}{\noexpand\arabic{theorem}}%
    \noexpand\setcounter{theorem}{\arabic{theorem}}%
    \noexpand\setcounter{section-preserve}{\noexpand\arabic{section}}%
    \noexpand\setcounter{section}{\arabic{section}}%
    \noexpand\let\noexpand\oldsection=\noexpand\thesection
    \noexpand\def\noexpand\thesection{\thesection}
    \noexpand\let\noexpand\oldlabel=\noexpand\label
    \noexpand\let\noexpand\label=\noexpand\blank
    \the\magictoks%
    \noexpand\setcounter{theorem}{\noexpand\arabic{theorem-preserve}}%
    \noexpand\setcounter{section}{\noexpand\arabic{section-preserve}}%
    \noexpand\let\noexpand\thesection=\noexpand\oldsection
    \noexpand\let\noexpand\label=\noexpand\oldlabel
  }}
  \magictodo
  \the\magictoks}
\def\magicappendix{\latertrue \the\magicAppendix}
\def\abstractlater#1{\ifabstract\later{#1}\fi}
  \long\def\both#1{#1}
  \let\later=\both
  \def\magicappendix{}
\newenvironment{proofsketch}{\begin{proof}[Proof sketch]}{\end{proof}}
\newcommand{\start}{\raisebox{-0.25ex}{\includegraphics[height=2ex]{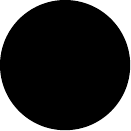}}}
\newcommand{\antibody}{\includegraphics[height=2ex]{figures/antibody}}
\newcommand{\hexagon}{\includegraphics[height=2ex]{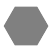}}
\newcommand{\broken}{\raisebox{0.75ex}{\includegraphics[height=0.5ex]{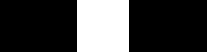}}}
\renewcommand{\square}{\includegraphics[height=2ex]{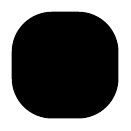}}
\renewcommand{\star}{\includegraphics[height=2ex]{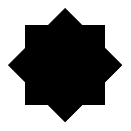}}
\renewcommand{\triangle}[1]{\ifcase#1%
  \or\includegraphics[height=1.75ex]{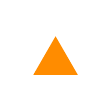}
  \or\includegraphics[height=1.75ex]{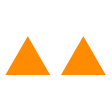}
  \or\includegraphics[height=1.75ex]{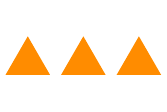}
  \fi}
\newcommand{\tetris}{\includegraphics[height=2ex,trim=10 10 10 10]{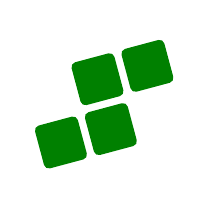}}
\newcommand{\tetrisfix}{\includegraphics[height=2ex,trim=10 10 10 10]{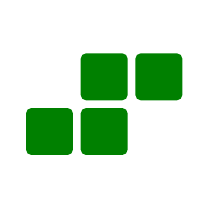}}
\newcommand{\antitetris}{\includegraphics[height=2ex,trim=10 10 10 10]{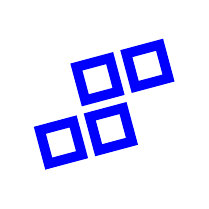}}
\newcommand{\dominovert}{\includegraphics[height=1.5ex,trim=15 15 15 15]{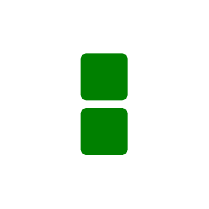}}
\newcommand{\dominofree}{\includegraphics[height=1.5ex,trim=15 15 15 15]{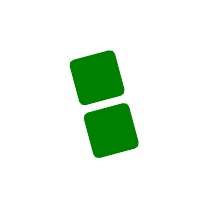}}
\newcommand{\monomino}{\includegraphics[height=1.5ex,trim=20 20 20 20]{figures/monomino}}
\newcommand{\antimonomino}{\includegraphics[height=1.5ex,trim=20 20 20 20]{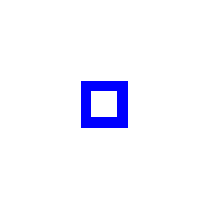}}
\title{\textbf{Who witnesses The Witness?} \\[0.5ex]
  \Large Finding witnesses in The Witness is hard and sometimes impossible%
  \thanks{A preliminary version of this paper appeared at the
    \textit{9th International Conference on Fun with Algorithms}, 2018~\cite{DBLP:conf/fun/AbelBDHHKLR18}.}
}
\author{
Zachary Abel\thanks{MIT EECS Department, 50 Vassar St.,
  Cambridge, MA 02139, USA, \protect\url{zabel@mit.edu}} \and
Jeffrey Bosboom\thanks{MIT Computer Science and Artificial Intelligence
  Laboratory, 32 Vassar Street, Cambridge, MA 02139, USA,
  \protect\url{{jbosboom,mcoulomb,edemaine,jkopin,jaysonl}@mit.edu},
  \protect\url{mrudoy@gmail.com}} \and
Michael Coulombe\footnotemark[3]\and
Erik D. Demaine\footnotemark[3]\and
Linus Hamilton\thanks{MIT Mathematics Department, 77 Massachusetts Avenue,
  Cambridge, MA 02139, USA, \protect\url{{luh,achester}@mit.edu}} \and
Adam Hesterberg\footnotemark[4]
  \thanks{Now at Google Inc.} \and
Justin Kopinsky\footnotemark[3] \and
Jayson Lynch\footnotemark[3] \and
Mikhail Rudoy\footnotemark[3] \footnotemark[5] \and
Clemens Thielen\thanks{Department of Mathematics,
  University of Kaiserslautern, Paul-Ehrlich-Str.~14,
  D-67663 Kaiserslautern, Germany, \protect\url{thielen@mathematik.uni-kl.de}}
}
\date{}
\begin{document}         
\maketitle

\vspace*{-4ex}
\begin{abstract}
  We analyze the computational complexity of the many types of
  pencil-and-paper-style puzzles featured in the 2016 puzzle video game
  \emph{The Witness}.  In all puzzles, the goal is to draw a simple path in a
  rectangular grid graph from a start vertex to a destination vertex.
  The different puzzle types place different constraints on the path:
  preventing some edges from being visited (broken edges);
  forcing some edges or vertices to be visited (hexagons);
  forcing some cells to have certain numbers of incident path edges
  (triangles); or
  forcing the regions formed by the path to be
  partially monochromatic (squares),
  have exactly two special cells (stars), or
  be singly covered by given shapes (polyominoes) and/or
  negatively counting shapes (antipolyominoes).
  We show that any \textit{one} of these clue types (except the first)
  is enough to make path finding NP-complete
  (``witnesses exist but are hard to find''),
  even for rectangular boards.
  Furthermore, we show that a final clue type (antibody), which necessarily
  ``cancels'' the effect of another clue in the same region, makes path finding
  $\Sigma_2$-complete (``witnesses do not exist''), even with a single
  antibody (combined with many anti/polyominoes), and the problem
  gets no harder with many antibodies.
  On the positive side, we give a polynomial-time algorithm for monomino clues,
  by reducing to hexagon clues on the boundary of the puzzle,
  even in the presence of broken edges,
  and solving ``subset Hamiltonian path'' for terminals on the boundary of an
  embedded planar graph in polynomial time.
\end{abstract}
\vspace*{-2ex}



\definecolor{hard}{rgb}{1,0.85,0.85}
\definecolor{open}{rgb}{1,1,0.85}
\definecolor{easy}{rgb}{0.85,0.85,1}
\definecolor{header}{rgb}{0.85,0.85,0.85}

\arrayrulewidth=0.75pt
\begin{table}[t]
  \centering
  \vspace*{-3ex}
  \def\FONT{\footnotesize}
  \def\ALLOW{\checkmark}
  \def\OPEN{\emph{OPEN}}
  \def\STACK#1#2{$\vcenter{\hbox{#1}\hbox{#2}}$}
  \tabcolsep=3pt
  \begin{tabular}{|cccccccc||c|>{\footnotesize}c|}
    \hline
    \rowcolor{header}
    \multicolumn{1}{|c|}{\FONT broken edge} & \multicolumn{1}{c|}{\FONT hexagon} & \multicolumn{1}{c|}{\FONT square} & \multicolumn{1}{c|}{\FONT star} & \multicolumn{1}{c|}{\FONT triangle} & \multicolumn{1}{c|}{\FONT polyomino} & \multicolumn{1}{c|}{\FONT antipolyomino} & \multicolumn{1}{c||}{\FONT antibody} & \multicolumn{1}{c|}{} & 
    \\
    \rowcolor{header}
    \multicolumn{1}{|c|}{\broken} & \multicolumn{1}{c|}{\hexagon} & \multicolumn{1}{c|}{\square} & \multicolumn{1}{c|}{\star} & \multicolumn{1}{c|}{\triangle2} & \multicolumn{1}{c|}{\tetris} & \multicolumn{1}{c|}{\antitetris} & \multicolumn{1}{c||}{\antibody} & \multicolumn{1}{c|}{\raisebox{2ex}{complexity}} & \multicolumn{1}{c|}{\normalsize\raisebox{2ex}{ref}}
    \\
    \hline
    \hline
    \rowcolor{easy}
    \ALLOW &&&&&&&& $\in{}$L & Obs~\ref{thm:broken-edges}
    \\
    \rowcolor{hard}
    \ALLOW & \ALLOW \FONT vertices &&&&&&& NP-complete & Obs~\ref{thm:broken edges+vertex hexagons}
    \\
    \rowcolor{open}
    & \ALLOW \FONT vertices &&&&&&& \OPEN & Prob~\ref{open:vertex-hexagons}
    \\
    \rowcolor{hard}
    & \ALLOW \FONT edges &&&&&&& NP-complete & Thm~\ref{thm:edge hexagons}
    \\
    \rowcolor{easy}
    \ALLOW & \multicolumn{3}{l}{\ALLOW \FONT on boundary} &&&&& $\in{}$P & Thm~\ref{thm:monominoes-dp}
    \\
    \hline
    \rowcolor{easy}
    && \ALLOW \FONT 1 color &&&&&& always \textsc{yes} & Obs~\ref{obs:1 square}
    \\
    \rowcolor{hard}
    && \ALLOW \FONT 2 colors &&&&&& NP-complete & \clap{\STACK{Thm~\ref{thm:squares 2 color}}{\!\!\cite{eurocg}\!}}
    \\
    \hline
    \rowcolor{open}
    &&& \ALLOW \FONT 1 color &&&&& \OPEN & Prob~\ref{open:1 star}
    \\
    \rowcolor{hard}
    &&& \ALLOW \FONT $n$ colors &&&&& NP-complete & Thm~\ref{thm:stars}
    \\
    \hline
    \rowcolor{hard}
    \ALLOW &&&& \ALLOW \FONT any &&&& NP-complete & \!\cite{slitherlink}\!
    \\
    \rowcolor{hard}
    &&&& \ALLOW \triangle1 &&&& NP-complete & Thm~\ref{thm:triangles1}
    \\
    \rowcolor{hard}
    &&&& \ALLOW \triangle2 &&&& NP-complete & Thm~\ref{thm:triangles2}
    \\
    \rowcolor{hard}
    &&&& \ALLOW \triangle3 &&&& NP-complete & Thm~\ref{thm:triangles3}
    \\
    \hline
    \rowcolor{easy}
    &&&&& \ALLOW \monomino &&& $\in{}$P & \cite{eurocg}\!
    \\
    \rowcolor{easy}
    \ALLOW &&&&& \ALLOW \monomino &&& $\in{}$P & Thm~\ref{MonominoesAndBrokenEdgesTheorem}
    \\
    \rowcolor{hard}
    &&&&& \ALLOW \monomino & \ALLOW \antimonomino && NP-complete & Thm~\ref{anti-monominoes}
    \\
    \rowcolor{hard}
    &&&&& \ALLOW \dominofree &&& NP-complete & Thm~\ref{rotating dominoes}
    \\
    \rowcolor{hard}
    &&&&& \ALLOW \dominovert &&& NP-complete & Thm~\ref{vertical dominoes}
    \\
    \hline
    \rowcolor{easy}
    \ALLOW & \ALLOW & \ALLOW & \ALLOW & \ALLOW & \ALLOW & \ALLOW && $\in{}$NP & Obs~\ref{all-but-antibody-np}
    \\
    \rowcolor{easy}
    \ALLOW & \ALLOW & \ALLOW & \ALLOW & \ALLOW &&& \ALLOW $n$ & $\in{}$NP & Thm~\ref{all-but-poly-antipoly-np}
    \\
    \rowcolor{hard}
    &&&&& \ALLOW && \ALLOW 2 & $\Sigma_2$-complete & Thm~\ref{thm:antibody-poly}
    \\
    \rowcolor{easy}
    \ALLOW & \ALLOW & \ALLOW & \ALLOW & \ALLOW & \ALLOW && \ALLOW $1$ & $\in{}$NP & Thm~\ref{thm:antibody-np}
    \\
    \rowcolor{hard}
    &&&&& \ALLOW & \ALLOW & \ALLOW 1 & $\Sigma_2$-complete & Thm~\ref{thm:antibody-antipoly}
    \\
    \rowcolor{easy}
    \ALLOW & \ALLOW & \ALLOW & \ALLOW & \ALLOW & \ALLOW & \ALLOW & \ALLOW $n$ & $\in \Sigma_2$ & Thm~\ref{in Sigma_2}
    \\
    \hline
  \end{tabular}
  \vspace*{-2ex}
  \caption{Our results for one-panel puzzles in The Witness:
    computational complexity with various sets of
    allowed clue types (marked by {\ALLOW}).
    Allowed polyomino clues are either arbitrary (\ALLOW),
    or restricted to be monominoes (\ALLOW \protect\monomino),
    vertical dominoes (\ALLOW \protect\dominovert),
    or rotatable dominoes (\ALLOW \protect\dominofree).}
  \label{2D results}
\end{table}

\section{Introduction}
\label{sec:intro}

\begin{wrapfigure}{r}{3.5in}
  \centering
  \vspace*{-9ex}
  \includegraphics[width=\linewidth]{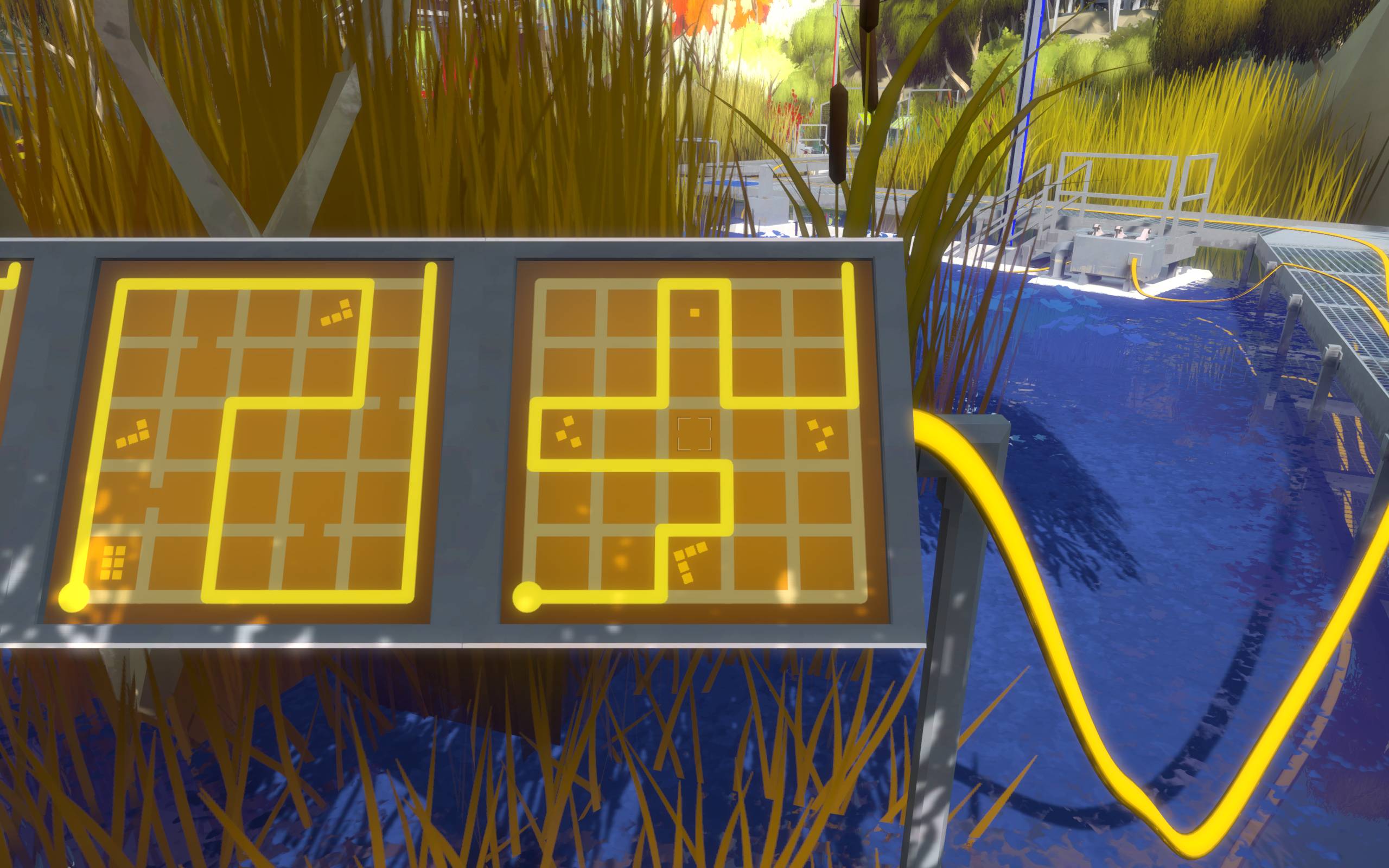}
  \caption{A screenshot from The Witness, featuring 2D puzzles in a 3D world.
    }
  \vspace*{-3ex}
  \label{screenshot puzzle}
\end{wrapfigure}

\emph{The Witness}%
\footnote{The Witness is a trademark owned by Jonathan Blow.
  Screenshots and game elements are included under Fair Use
  for the educational purposes of describing video games and
  illustrating theorems.}
\cite{wikipedia} is an acclaimed 2016 puzzle video game designed by
Jonathan Blow (who originally became famous for designing the 2008 platform
puzzle game Braid, which is undecidable \cite{Hamilton-2014-braid}).
The Witness is a first-person adventure game, but the main mechanic of the game
is solving 2D puzzles presented on flat panels (sometimes CRT monitors)
within the game; see Figure~\ref{screenshot puzzle}.
The 2D puzzles are in a style similar to pencil-and-paper puzzles,
such as Nikoli puzzles.  Indeed, one clue type in The Witness (triangles)
is very similar to the Nikoli puzzle \emph{Slitherlink}
(which is NP-complete \cite{slitherlink}).

\begin{wrapfigure}{r}{2.5in}
  \centering
  \includegraphics[width=0.47\linewidth]{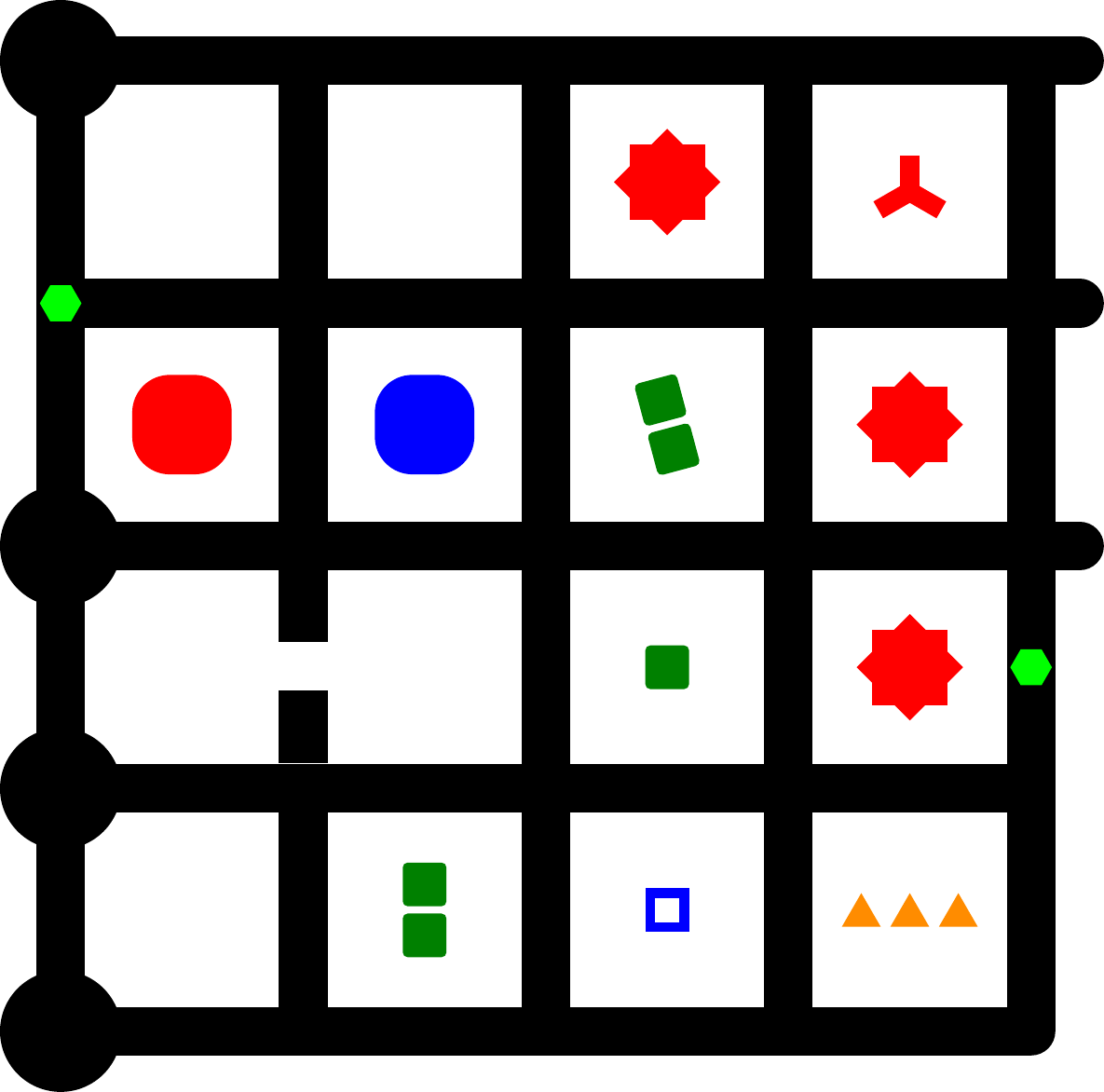}\hfill
  \includegraphics[width=0.47\linewidth]{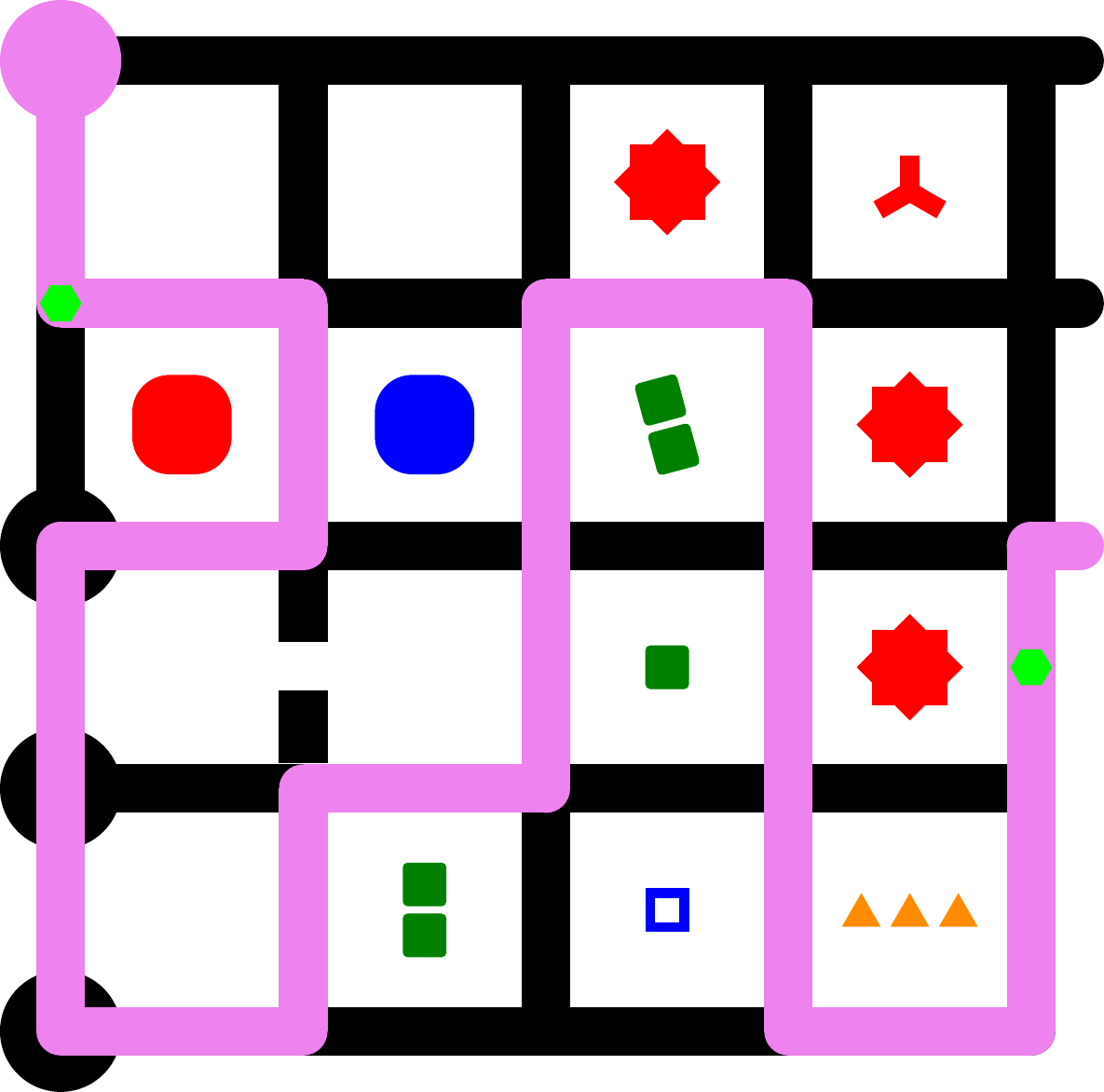}
  \caption{A small Witness puzzle featuring all clue types
           (left) and its solution (right). (Not from the actual video game.)}
  \vspace*{-3ex}
  \label{simple}
\end{wrapfigure}

In this paper, we perform a systematic study of the computational complexity
of all single-panel puzzle types in The Witness, as well as some of the
3D ``metapuzzles'' embedded in the environment itself.
Table~\ref{2D results} summarizes our single-panel results,
which range from polynomial-time algorithms (as well as membership in~L)
to completeness in two complexity classes, NP (i.e., $\Sigma_1$)
and the next level of the polynomial hierarchy, $\Sigma_2$.
Table~\ref{3D results} summarizes our metapuzzle results, where
PSPACE-completeness typically follows immediately.

\paragraph{Witness puzzles.}
Single-panel puzzles in The Witness (which we refer to henceforth as
\emph{Witness puzzles}) consist of an $x \times y$ full rectangular grid;%
\footnote{While most Witness puzzles have a rectangular boundary,
  some lie on a general grid graph.  This generalization is mostly equivalent
  to having broken-edge clues (defined below) on all the non-edges of the
  grid graph, but the change in boundary can affect the decomposition into
  regions.  We focus here on the rectangular case because it is most common
  and makes our hardness proofs most challenging.}
one or more \emph{start circles} (drawn as a large dot, \start);
one or more \emph{end caps}
(drawn as half-edges leaving the rectangle boundary);
and zero or more \emph{clues} (detailed below)
each drawn on a vertex, edge, or cell%
\footnote{We refer to the unit-square faces of the rectangular grid as
  \emph{cells}, given that ``squares'' are a type of clue and ``regions''
  are the connected components outlined by the solution path and
  rectangle boundary.  ``Pixels'' will be used for unit squares of
  polyomino clues.}
of the rectangular grid.
Figure~\ref{simple} shows a small example and its solution.
The goal of the puzzle is to find a simple path that starts at one of the
start circles, ends at one of the end caps, and satisfies all the constraints
imposed by the clues (again, detailed below).
We generally focus on the case of a single start circle and single end cap,
which makes our hardness proofs the most challenging.

We now describe the clue types and their corresponding constraints.
Table~\ref{clue types} lists the clues by what they are drawn on ---
grid edge, vertex, or cell --- which we refer to as ``this'' edge,
vertex, or cell.
While the last five clue types are drawn on a cell,
their constraint applies to the \emph{region} that contains that cell
(referred to as ``this region''),
where we consider the regions of cells in the rectangle as decomposed by
the (hypothetical) solution path and the rectangle boundary.

\begin{table}
\centering
\begin{tabular}{llcp{4in}}
  \rowcolor{header}
  clue        & drawn on & symbol & constraint
  \\ \hline
  broken edge & edge     & \broken
    & The solution path cannot include this edge.
  \\
  hexagon     & edge     & \hexagon
    & The solution path must include this edge.
  \\
  hexagon     & vertex   & \hexagon
    & The solution path must visit this vertex.
  \\
  triangle    & cell     & \triangle2\!\!
    & There are three kinds of triangle clues (\triangle1\!\!, \triangle2\!,
      \triangle3\!).  For a clue with $i$ triangles, the path must include
      exactly $i$ of the four edges surrounding this cell.
  \\
  square      & cell     & \square
    & A square clue has a color.
      This region must not have any squares of a color different from this clue.
  \\
  star        & cell     & \star
    & A star clue has a color.
      This region must have exactly one other star, exactly one square, or
      exactly one antibody of the same color as this clue.
  \\
  polyomino   & cell     & \tetris
    & A polyomino clue has a specified polyomino shape, and
      is either nonrotatable (if drawn orthogonally, like \tetrisfix)
      or rotatable by any multiple of $90^\circ$
      (if drawn at $15^\circ$, like \tetris).
      Assuming no antipolyominoes, this region must be perfectly packable
      by the polyomino clues within this region.
  \\
  antipolyomino & cell   & \antitetris
    & Like polyomino clues, an antipolyomino clue has a specified
      polyomino shape and is either rotatable or not.
      For some $i \in \{0,1\}$, each cell in this region must be coverable by exactly $i$ layers,
      where polyominoes count as $+1$ layer and antipolyominoes count as
      $-1$ layer (and thus must overlap), with no positive or negative layers
      of coverage spilling outside this region.
  \\
  antibody    & cell     & \antibody
    & Effectively ``erases'' itself and another clue in this region.
      This clue also must be necessary, meaning that the solution path
      should not otherwise satisfy all the other clues.
      See Section~\ref{Antibodies} for details.
  \\
\end{tabular}
\caption{Witness puzzle clue types and the definitions of their constraints.}
\label{clue types}
\end{table}

The solution path must satisfy \textit{all} the constraints given by all the
clues.  (The meaning of this statement in the presence of antibodies is
complicated; see Section~\ref{Antibodies}.)
Note, however, that if a region has no clue constraining it in a particular
way, then it is free of any such constraints.  For example, a region without
polyomino or antipolyomino clues has no packing constraint.

As summarized in Table~\ref{2D results}, we prove that most clue types
\textit{by themselves} are enough to obtain NP-hardness.
The exceptions are broken edges, which alone just define a graph search
problem; and vertex hexagons, which are related to Hamiltonian path in
rectangular grid graphs as solved in \cite{Itai-Papadimitriou-Luiz-1982}
but remain open.
But vertex hexagons are NP-hard when we also add broken edges.
On the other hand, vertex and/or edge hexagons restricted to the boundary
of the puzzle (even for nonrectangular boundaries) are polynomial;
this result more generally solves ``subset Hamiltonian path''
(find a simple path visiting a specified subset of vertices and/or edges)
when the subset is on the outside face of a planar graph.
For squares, we determine that exactly two colors are needed for hardness.
For stars, we do not know whether one or any constant number of colors are hard.
For triangles, each single kind of triangle clue alone suffices for hardness,
though the proofs differ substantially between kinds.
For polyominoes, monominoes alone are easy to solve,
but monominoes plus antimonominoes are hard, as are rotatable dominoes
by themselves and vertical nonrotatable dominoes by themselves.
All problems without antibodies or without (anti)polyominoes are in NP.
Antibodies combined with (anti)polyominoes push the complexity up to
$\Sigma_2$-completeness, but no further.

\paragraph{Nonclue constraints.}
In Section~\ref{sec:nonclue-constraints}, we consider how additional features
in The Witness beyond clues can affect (the complexity of) Witness puzzles.
Specifically, these features include visual obstruction caused by the
surrounding 3D environment (which blocks some edges), symmetry puzzles
(where inputting one path causes a symmetric copy to be drawn as well),
and intersection puzzles (where multiple puzzles must be solved
simultaneously by the same solution path).

\paragraph{Metapuzzles.}
We also consider some of the \emph{metapuzzles} formed by the 3D environment
in The Witness, where the traversable geometry changes according to
2D single-panel puzzles.
See Section~\ref{Metapuzzles} for details of these interaction models.
Table~\ref{3D results} lists our metapuzzle results,
which are all PSPACE-completeness
proofs following the infrastructure of \cite{Nintendo_TCS} (from FUN 2014).

\begin{table}
  \centering
  \tabcolsep=4pt
  \begin{tabular}{|c||c|c|}
    \hline
    \rowcolor{header}
    features & complexity & \multicolumn{1}{c|}{ref}
    \\
    \hline
    \hline
    \rowcolor{hard}
    sliding bridges & PSPACE-complete & Thm~\ref{sliding-bridges}
    \\
    \rowcolor{hard}
    elevators and ramps & PSPACE-complete & Thm~\ref{thm:elevator}
    \\
    \rowcolor{hard}
    power cables and doors & PSPACE-complete & Thm~\ref{thm:power cables}
    \\
    \rowcolor{open}
    light bridges & OPEN & Prob~\ref{open:light bridge}
    \\
    \hline
  \end{tabular}
  \caption{Our results for metapuzzles in The Witness:
    computational complexity with various sets of environmental features.}
  \label{3D results}
\end{table}

\paragraph{Puzzle design.}
In Section~\ref{sec:puzzle design}, we consider designing 2D Witness
puzzles using the variety of clues available.  In particular, we introduce
the \emph{Witness puzzle design problem} where the goal is to find a puzzle
whose solution is a specific path or set of paths.  This problem naturally
arises in metapuzzles, or if we were to design a Witness puzzle font
(in the style of \cite{Fonts_TCS,SpiralGalaxies_MOVES2017}).
Although many versions of this problem remain open, we present some basic
universality results and limitations.

\paragraph{Open Problems.}
Finally, Section~\ref{sec:conclusion} collects together the main open
problems from this paper.

\section{Hamiltonicity Reduction Framework}
\label{sec:Hamiltonicity Reduction Framework}

We introduce a framework for proving NP-hardness of Witness puzzles
by reduction from Hamiltonian cycle in a grid graph $G$ of maximum degree~$3$.
Roughly speaking, we scale $G$ by a constant scale factor~$s$,
and replace each vertex by a block called a chamber;
refer to Figure~\ref{Hamiltonicity graph}.
Precisely, for each vertex $v$ of $G$ at coordinates $(x,y)$,
we construct a $(2r+1) \times (2r+1)$ subgrid of vertices
$\{s x - r, \dots, s x + r\} \times \{s y - r, \dots, s y + r\}$,
and all induced edges between them, called a \emph{chamber}~$C_v$.
This construction requires $2 r < s$ for chambers not to overlap.
For each edge $e = \{v,w\}$ of $G$,
we construct a straight path in the grid from $s v$ to $s w$, and
define the \emph{hallway} $H_{v,w}$ to be
the subpath connecting the boundaries of $v$'s and $w$'s chambers,
which consists of $s - 2 r$ edges.
Figure~\ref{Hamiltonicity graph} illustrates this construction
on a sample graph $G$.

\begin{figure}
  \centering
  \def\scale{0.18}
  $\vcenter{\hbox{\subcaptionbox{\label{grid Hamiltonicity} Instance of grid-graph Hamiltonicity}{\includegraphics[scale=0.3]{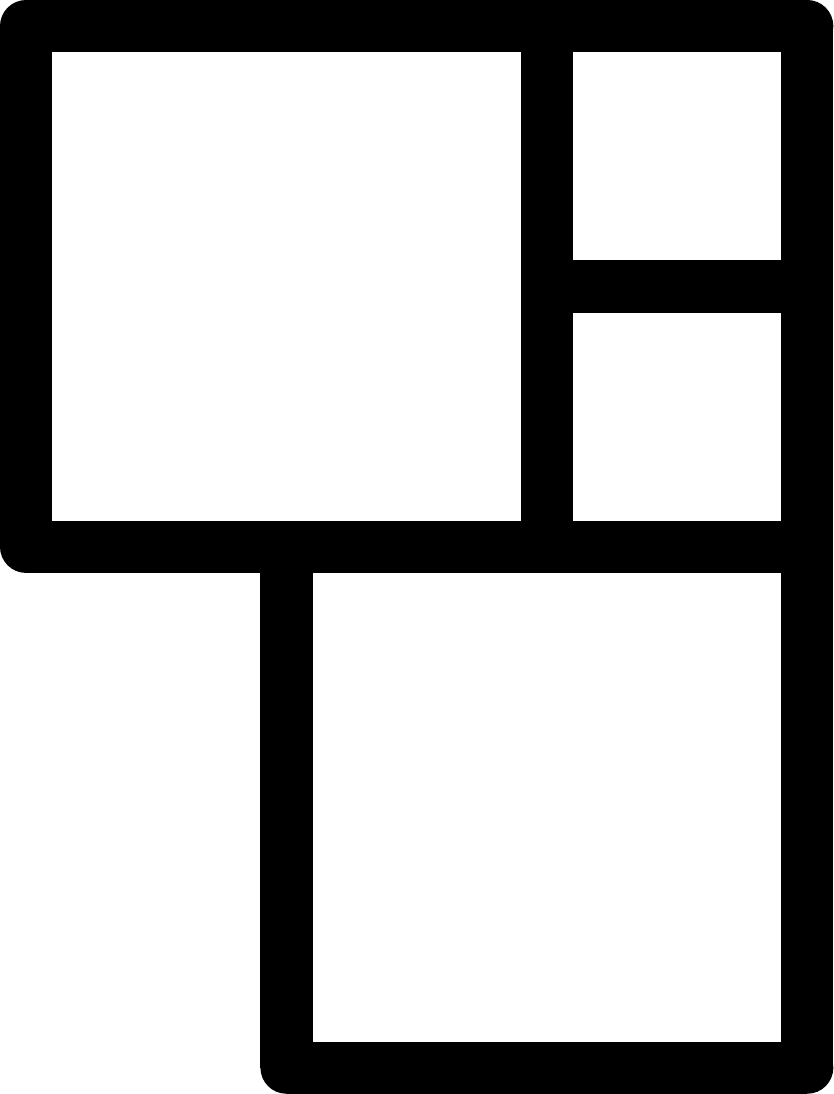}}}}$
  \hfil\hfil
  $\vcenter{\hbox{\subcaptionbox{\label{grid Hamiltonian} A possible solution to (a)}{\includegraphics[scale=0.3]{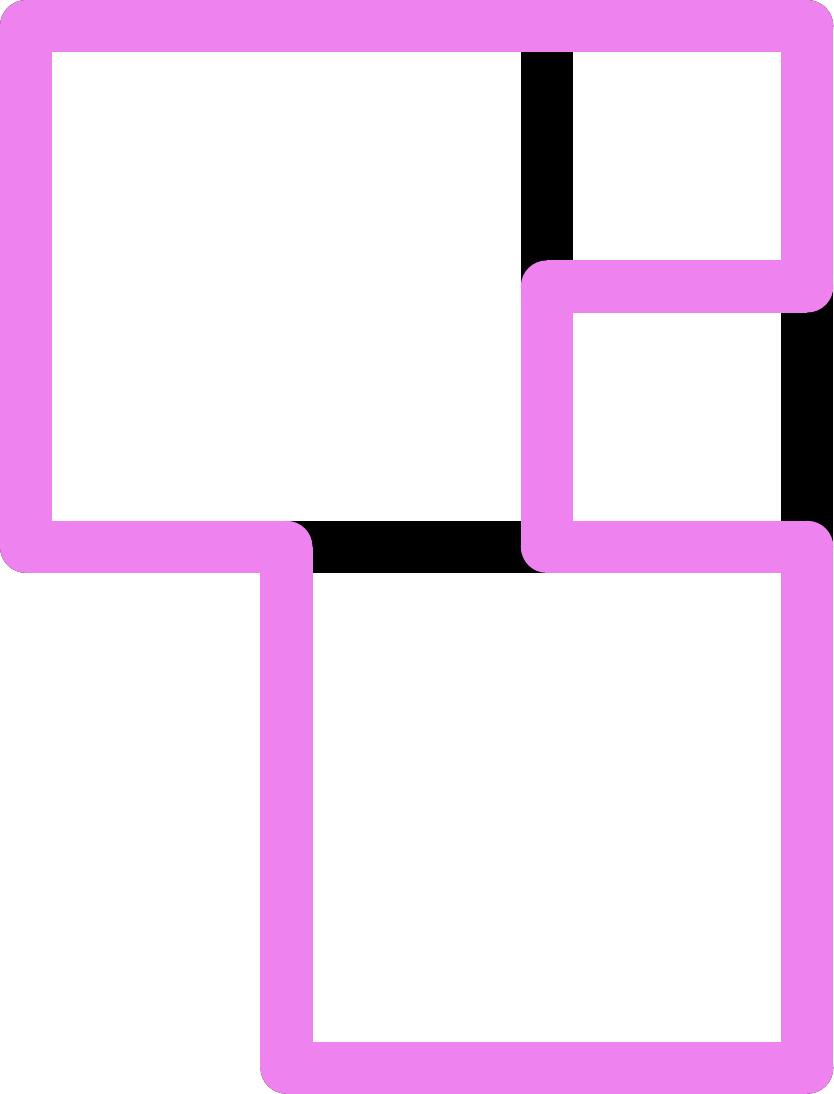}}}}$
  \hfil\hfil
  $\vcenter{\hbox{\subcaptionbox{Corresponding chambers (turquoise) and hallways (pink)}{\includegraphics[scale=0.18]{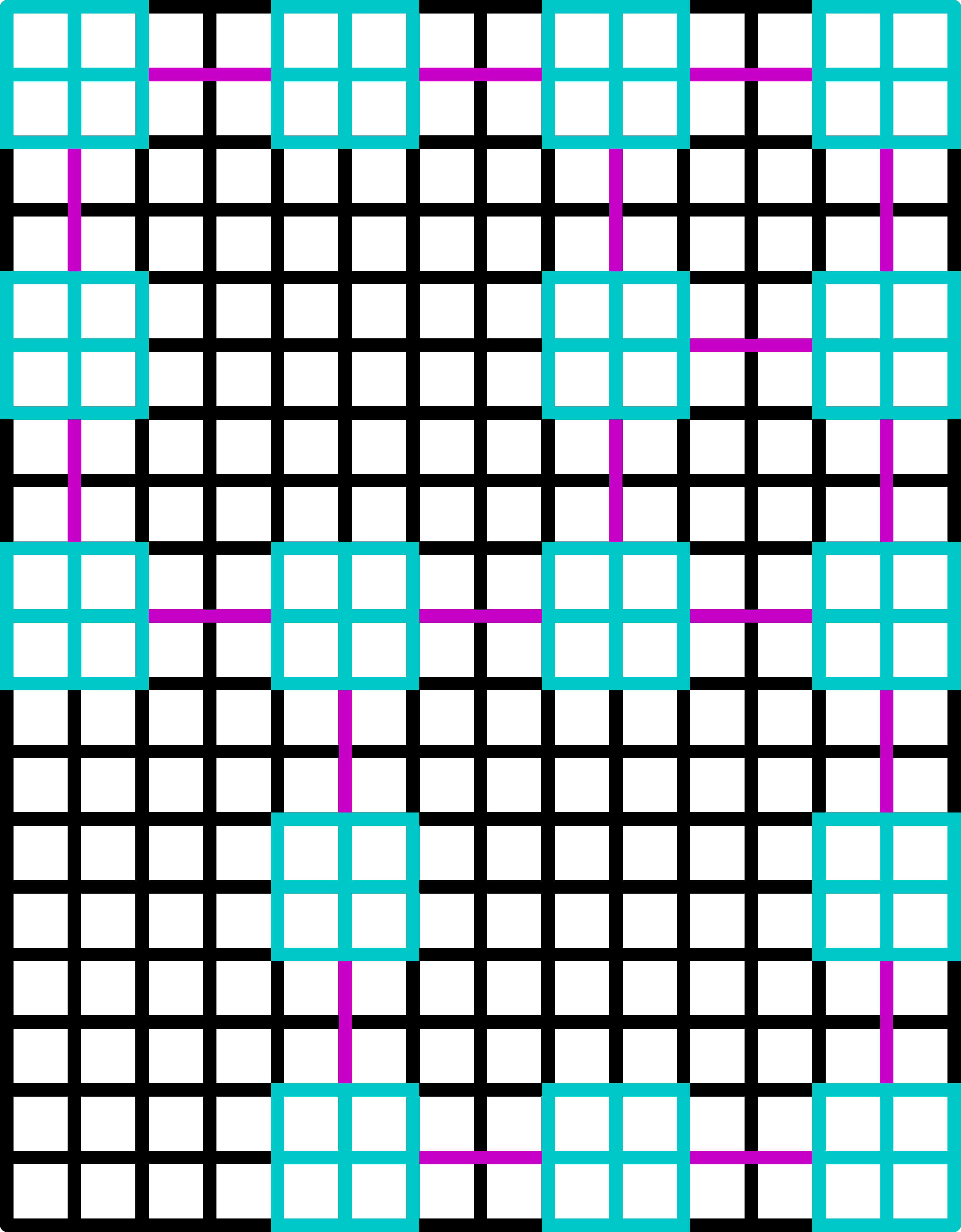}}}}$
  \caption{An example of the Hamiltonicity framework with $r=1$ and $s=4$.}
  \label{Hamiltonicity graph}
\end{figure}

In each reduction, we define constraints to force
the solution path to visit (some part of) each chamber at least once,
to alternate between visiting chambers and traversing hallways
that connect those chambers,
and to traverse each hallway at most once.
Because $G$ has maximum degree~$3$, these constraints imply that
each chamber is entered exactly once and exited exactly once.
Next to one chamber on the boundary of $G$, called the \emph{start/end chamber},
we place the start circle and end cap of the Witness puzzle.
Thus, any solution to the Witness puzzle induces a Hamiltonian cycle
in~$G$.
To show that any Hamiltonian cycle in $G$ induces a solution to the
Witness puzzle, we simply need to show that a chamber can be traversed in
each of the $3 \choose 2$ ways.

\subsection{Simple Applications of the Hamiltonicity Framework}
\label{sec:simple hamiltonicity framework}

In this section, we briefly present some simple constructions based
on this Hamiltonicity framework.  All of these results are subsumed by
stronger results presented formally in later sections, so we omit the
details of these simpler NP-hardness proofs.
Membership in NP for all puzzles except antibodies will be proved
in Observation~\ref{all-but-antibody-np}.

\later{
\begin{corollary}
It is NP-complete to solve Witness puzzles containing broken edges and squares of two colors.
\end {corollary}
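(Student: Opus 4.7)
The plan is to reduce from Hamiltonian cycle in grid graphs of maximum degree~$3$, which is NP-hard by Itai, Papadimitriou, and Szwarcfiter. I apply the Hamiltonicity reduction framework of Section~\ref{sec:Hamiltonicity Reduction Framework}: scale the input graph $G$ by a suitable constant factor $s$, place a $(2r+1)\times(2r+1)$ chamber at each vertex of~$G$, and connect chambers by straight hallways corresponding to the edges of~$G$.

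First, I place broken-edge clues on every edge of the ambient rectangular grid that does not belong to the chamber/hallway scaffold. This forces every candidate solution path to live in the scaffold, which is a planar embedding of a scaled copy of~$G$. The start circle and end cap are placed on the boundary of one designated start/end chamber so that any solution, if it visits every chamber, projects to a Hamiltonian cycle in~$G$.

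The novel part is the chamber gadget, built entirely from squares of two colors (say white and black). Inside each chamber I install a pair of ``beacon'' cells, one white and one black, positioned so that: (i)~if the path bypasses the chamber, then the two beacons lie in a common region and violate the square rule; and (ii)~if the path enters via one incident hallway and exits via another, then the path segment inside the chamber can be routed so as to separate the beacons. Since $G$ has maximum degree~$3$, I must realize (ii) for each of the $\binom{3}{2}=3$ possible hallway pairings. I expect to take $r$ slightly larger than~$1$ (for instance $r=2$) so that the chamber admits a through-traversal between any pair of incident hallway entry vertices that also passes between the beacons; the two beacons can be placed on opposite sides of this ``spine'' so that any such traversal cuts them apart, while any path that skips the chamber leaves all chamber cells in one large region with the scaffold exterior.

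Given such a chamber gadget, any valid Witness solution must enter every chamber and traverse each used hallway once; combined with the maximum-degree-$3$ assumption and the start/end placement, this forces it to project to a Hamiltonian cycle in~$G$. Conversely, any Hamiltonian cycle in $G$ lifts to a path that traverses each chamber in one of the three permitted patterns and therefore satisfies every square constraint. Membership in NP is given by Observation~\ref{all-but-antibody-np}. The main obstacle is the simultaneous realization of (i) and (ii): the beacons must be arranged so that \emph{no} way of bypassing a chamber separates them, yet \emph{every} legitimate through-traversal does, uniformly across all three hallway pairings. Getting this right while keeping chambers small and non-overlapping is the core technical content of the construction.
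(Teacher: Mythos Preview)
Your proposal is essentially the paper's proof: the paper writes simply ``We make hallways with broken edges and chambers with one square of each color,'' which is exactly your Hamiltonicity-framework-plus-two-beacons construction.

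One point to tighten: your condition (ii) is purely local (the path can separate the two beacons within a single chamber), but the conclusion ``therefore satisfies every square constraint'' is a global statement. You need that no region of the final path contains both a white and a black square, which is not automatic from pairwise separation---a white beacon from chamber $A$ could a priori share a region with a black beacon from chamber $B$. The fix is easy once noticed: with start and end adjacent on the boundary and the path confined to the interior scaffold, the solution path cuts the board into exactly two regions (inside and outside), and by orienting the lifted Hamiltonian cycle and routing through each chamber consistently you can put all whites on one side and all blacks on the other. The paper handles this by picture rather than argument, so you are at the same level of rigor, but it is worth one sentence in a written proof.
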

\begin{proof}
We make hallways with broken edges and chambers with one square of each color. See Figure~\ref{Hamiltonicity square and broken edges}.
\begin{figure}
  \centering
  \subcaptionbox{Unsolved}{\includegraphics[scale=0.172]{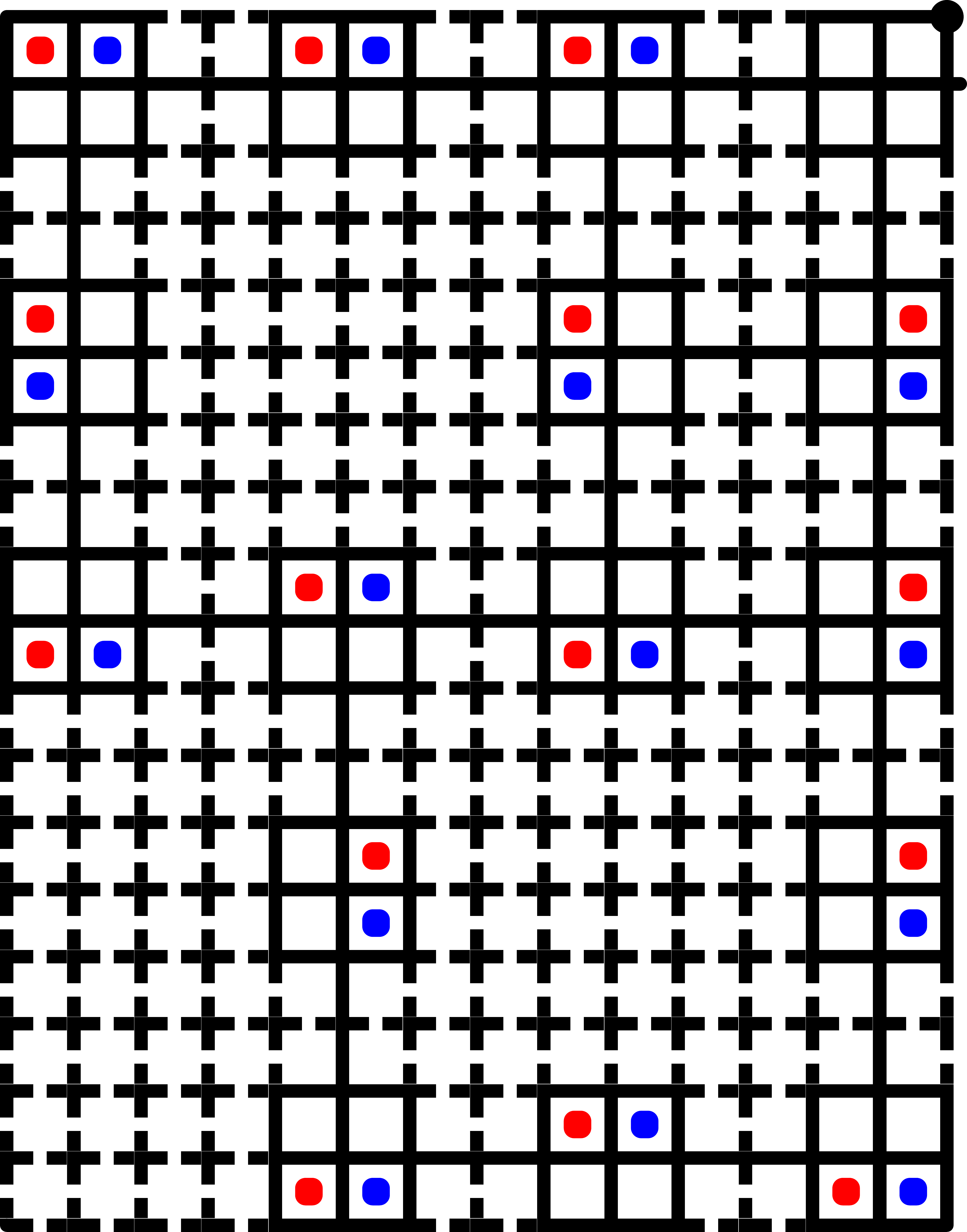}}
  \hfil\hfil
  \subcaptionbox{Solved}{\includegraphics[scale=0.172]{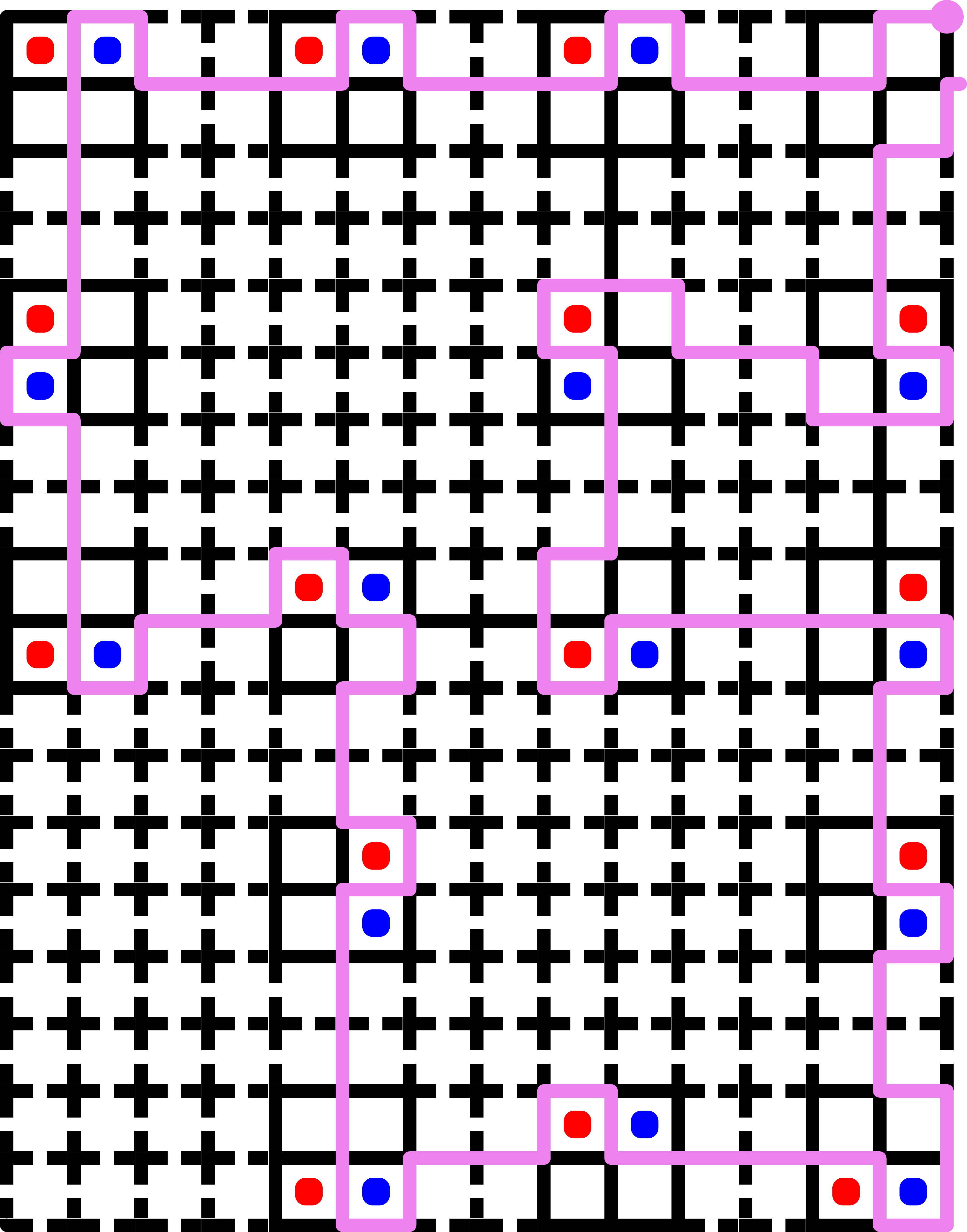}}
  \caption{Example of the Hamiltonicity framework applied to Witness with
           squares and broken edges.}
  \label{Hamiltonicity square and broken edges}
\end{figure}
\end{proof}

\begin{corollary}
It is NP-complete to solve Witness puzzles containing broken edges and stars of
arbitrarily many colors.
\end{corollary}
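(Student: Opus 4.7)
The plan is to reduce from Hamiltonian cycle in grid graphs of maximum degree~$3$ \cite{Itai-Papadimitriou-Luiz-1982}, using the Hamiltonicity framework of Section~\ref{sec:Hamiltonicity Reduction Framework}. Given an input grid graph $G$, I would instantiate the framework with, say, $r = 2$ and $s = 6$, and mark every grid edge outside the chamber-hallway skeleton as a broken edge, so that any Witness solution path lies entirely within the skeleton. The start circle and end cap are placed at a designated start/end chamber on the boundary of~$G$, exactly as in the squares corollary above.

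To force every chamber to be visited, I would assign each chamber $C_v$ its own unique star color $c_v$, and place four $c_v$-colored stars inside $C_v$: two in the top-left corner cells and two in the bottom-right corner cells. Because $c_v$ appears only inside $C_v$, the star constraint for color $c_v$ is independent of the other colors, and it forces the four $c_v$-stars to be split into exactly two regions, each containing a same-color pair. If $C_v$ is unvisited, no path edges lie inside $C_v$, so all four $c_v$-stars end up in one connected (external-merged) region, and each star sees three other $c_v$-stars rather than exactly one --- violating the constraint. If $C_v$ is visited, the path through the chamber can be routed between the two diagonally opposite corners, separating the two pairs into distinct regions and satisfying the constraint.

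The forward direction converts a Hamiltonian cycle of $G$ into a Witness solution by choosing, for each chamber, a chamber-internal path compatible with the cycle's entry-exit hallways and separating the two star-pairs. The backward direction observes that every chamber must be visited (else some $c_v$ constraint is violated), and then the alternating chamber-hallway structure of the solution path projects down to a Hamiltonian cycle of~$G$. Membership in NP follows from Observation~\ref{all-but-antibody-np}.

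The main obstacle will be the geometric verification that for each of the (at most) three entry-exit hallway pairs of a degree-$3$ chamber, there is a chamber-internal path that both realizes that entry-exit pair and separates the two diagonally opposite star-pairs into distinct regions of the whole Witness grid --- that is, such that the two resulting sub-regions are not secretly reconnected by a detour through the exterior. This is a routine case check given the routing flexibility of an $r = 2$ chamber, but it is the one place in the construction where the global region topology (rather than the local chamber topology) must be examined carefully.
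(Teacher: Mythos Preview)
Your proposal is correct and follows essentially the same approach as the paper. The paper's proof is a one-liner---``make hallways with broken edges and chambers with four stars of one color (a different color for each chamber)''---plus a figure, so your write-up is actually more detailed than the original; both leave the per-chamber routing/separation case check to a picture or to the reader.
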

\begin{proof}
We make hallways with broken edges and chambers with four stars of one
color (a different color for each chamber). See Figure~\ref{Hamiltonicity star and broken edges}.
\begin{figure}
  \centering
  \subcaptionbox{Unsolved}{\includegraphics[scale=0.172]{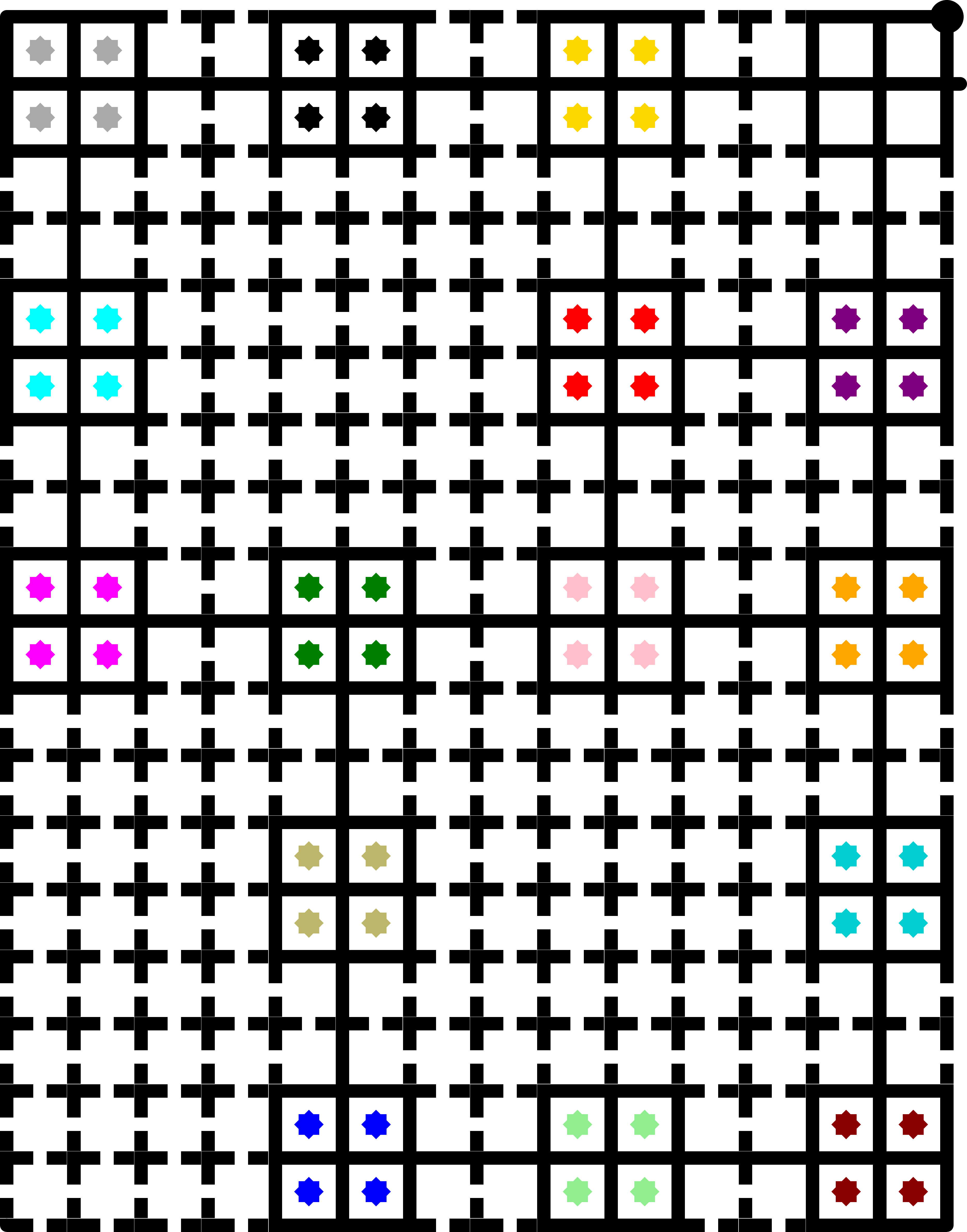}}
  \hfil\hfil
  \subcaptionbox{Solved}{\includegraphics[scale=0.172]{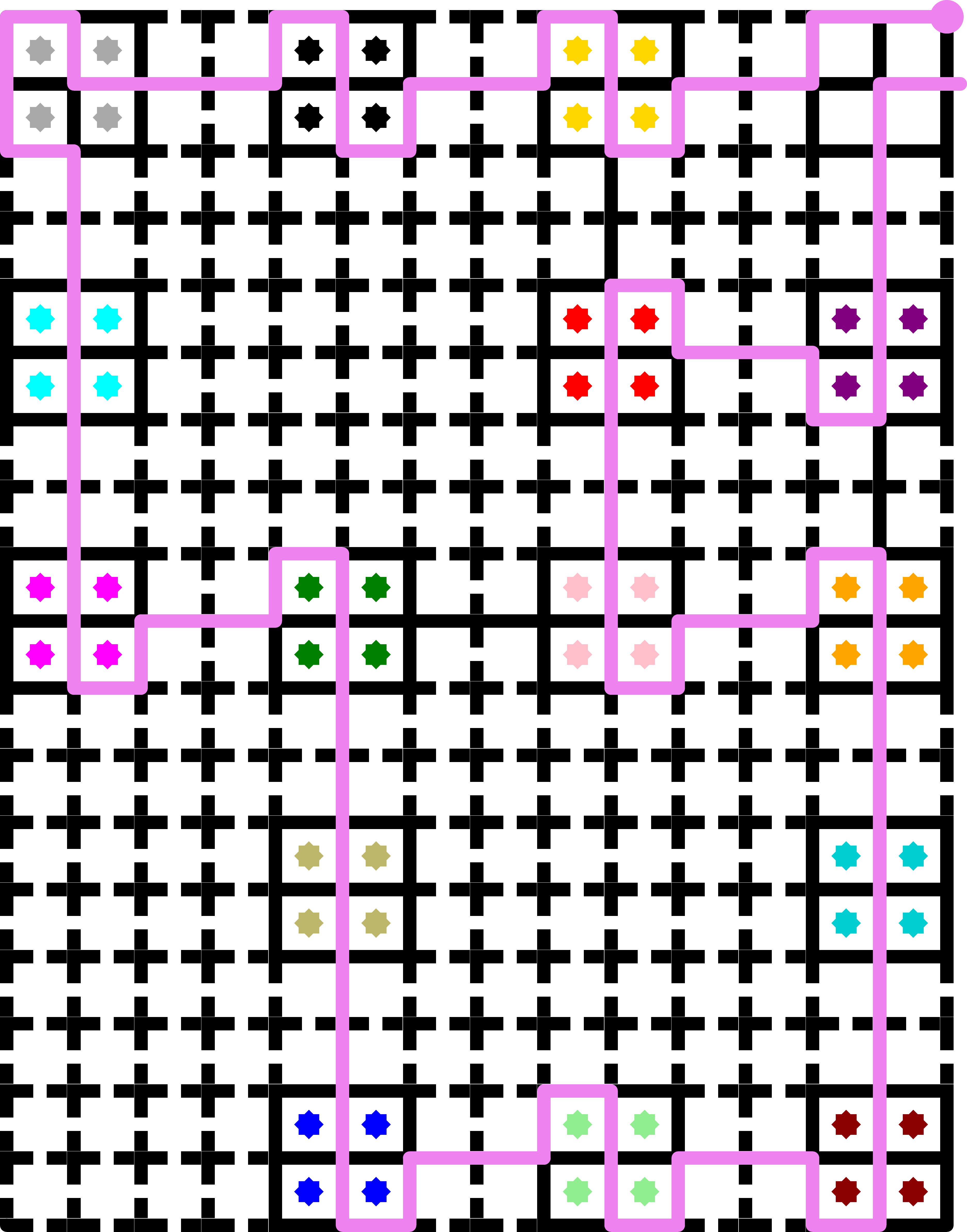}}
  \caption{Example of the Hamiltonicity framework applied to Witness with
           stars and broken edges.}
  \label{Hamiltonicity star and broken edges}
\end{figure}
\end{proof}

\begin{corollary}\label{thm:triangles-0-2} \label{cor:broken edges+triangles}
It is NP-complete to solve Witness puzzles containing broken edges and $k$-triangle clues, for any (single) $k \in \{1,2,3\}$. 
\end{corollary}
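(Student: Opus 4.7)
The plan is to invoke the Hamiltonicity reduction framework of Section~\ref{sec:Hamiltonicity Reduction Framework}, reducing from Hamiltonian cycle in a grid graph $G$ of maximum degree~$3$. Hallways are enforced with broken edges: every edge adjacent to but not part of the intended corridor is declared broken, so that any solution path entering a hallway must traverse its full length without deviating. The start circle and end cap are placed at a chosen boundary chamber, so that any solution path projects to a Hamiltonian cycle of~$G$.

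The only clue-dependent work is to design, for each $k \in \{1,2,3\}$ separately, a chamber gadget populated with $k$-triangle clues (and internal broken edges if needed) that (i) forces the path to enter and leave the chamber, covering the chamber in a way that satisfies every $k$-triangle inside; and (ii) permits each of the $\binom{3}{2}=3$ enter/exit pairs from the (up to three) attached hallways. For $k=2$, I would place a $2$-triangle clue in each interior cell of the chamber block: since a $2$-triangle is satisfied whenever the path either turns at, or passes straight through, the cell, the path can be routed as a snake covering the entire chamber for any admissible entry/exit pair. For $k=1$, I would use broken edges to carve the chamber into a thin corridor of cells, each carrying a $1$-triangle clue, forcing the path to skirt each cell along exactly one side. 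For $k=3$, I would analogously sculpt the chamber so that the path wraps tightly around the three required edges of each $3$-triangle cell, with broken edges ruling out any way to ``undercount'' a clued cell.

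The main obstacle, which I expect to handle by local case analysis, is verifying for $k \in \{1,3\}$ that the chosen chamber geometry genuinely admits all three required traversals while forbidding any ``degenerate'' behavior, such as the path skimming the chamber to satisfy the triangle clues without coming in through the intended hallway or, conversely, being unable to pair up two specific hallways because of parity or orientation issues forced by the $k$-triangle constraints. Once each chamber gadget is verified locally, the overall reduction is immediate from the framework: the maximum-degree-$3$ condition on $G$ ensures each chamber is entered and exited exactly once, so a valid Witness solution projects to a Hamiltonian cycle, and conversely any Hamiltonian cycle can be lifted hallway-by-hallway and chamber-by-chamber into a valid Witness path. Membership in NP is inherited from Observation~\ref{all-but-antibody-np}.
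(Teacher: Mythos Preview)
Your reduction is framed correctly---Hamiltonicity framework, broken edges to carve out hallways---but your chamber gadgets are overengineered, and the $k=2$ one is problematic as stated. The paper's proof places a \emph{single} $k$-triangle clue in each chamber and nothing else. Since $k\ge 1$, any solution path must touch at least one edge of that clued cell, which already forces a visit to the chamber; a chamber of modest radius then leaves ample room to route the path past that one cell with exactly $k$ incident edges, for any of the three enter/exit hallway pairs. No corridors, no filling the chamber with clues, no internal broken edges.

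Your $k=2$ proposal---fill the chamber with $2$-triangles and snake through---runs into the difficulty that at each U-turn of a snake, the cell on the inside of the turn has three incident path edges (top, side, bottom), violating its $2$-triangle. Regions densely filled with $2$-triangles are precisely the delicate setting analyzed later in Section~\ref{sec:2-triangles}; they do not admit free snaking. Your $k=1$ and $k=3$ sketches may be repairable with enough internal broken-edge surgery, but that means solving a harder local problem than the corollary actually requires. The insight you are missing is that one clue per chamber already suffices.
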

\begin{proof}
We make hallways with broken edges and chambers with a single
$k$-triangle clue. See Figure~\ref{Hamiltonicity triangle and broken edges}.
\begin{figure}
  \centering
  \subcaptionbox{Unsolved}{\includegraphics[scale=0.172]{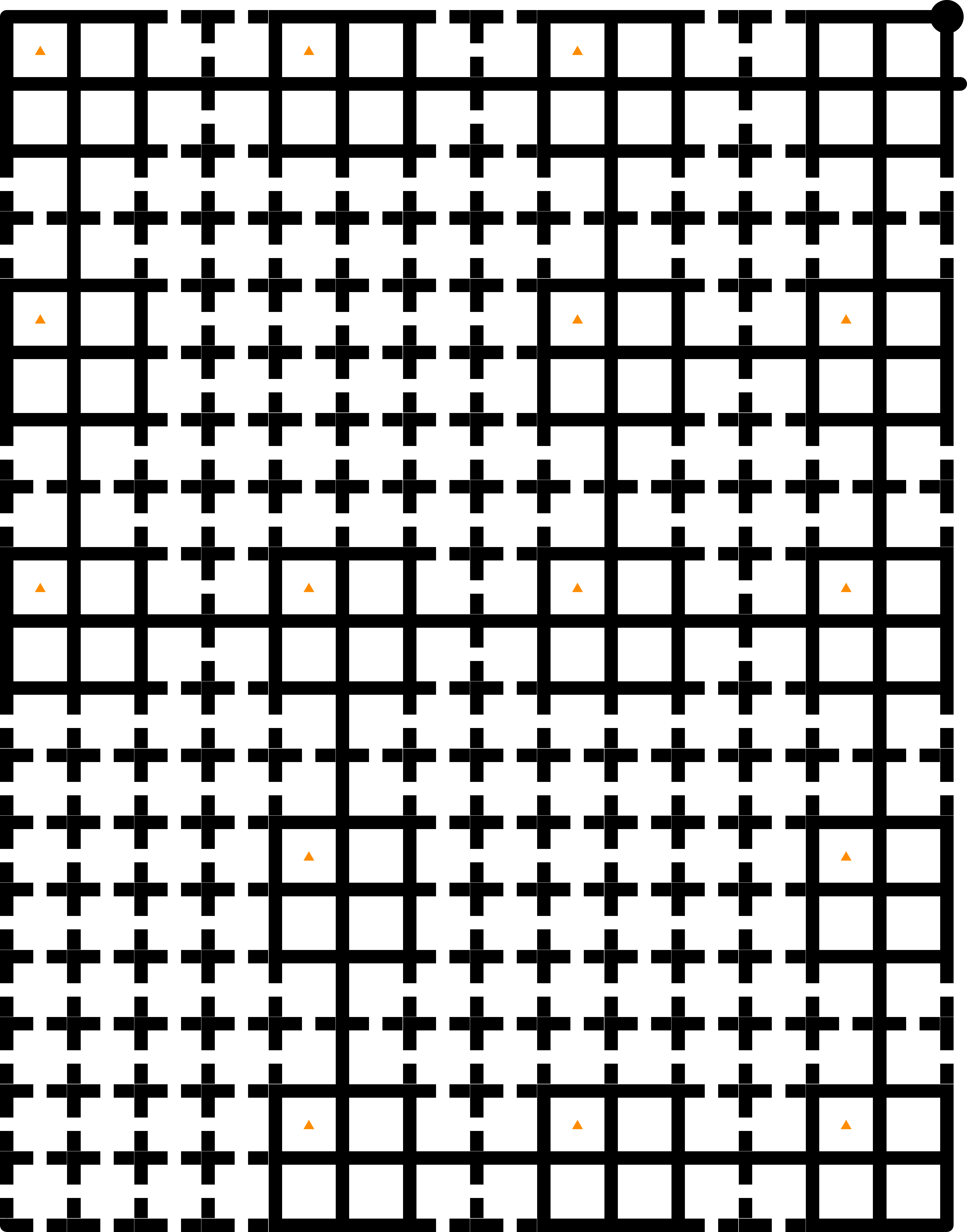}}
  \hfil\hfil
  \subcaptionbox{Solved}{\includegraphics[scale=0.172]{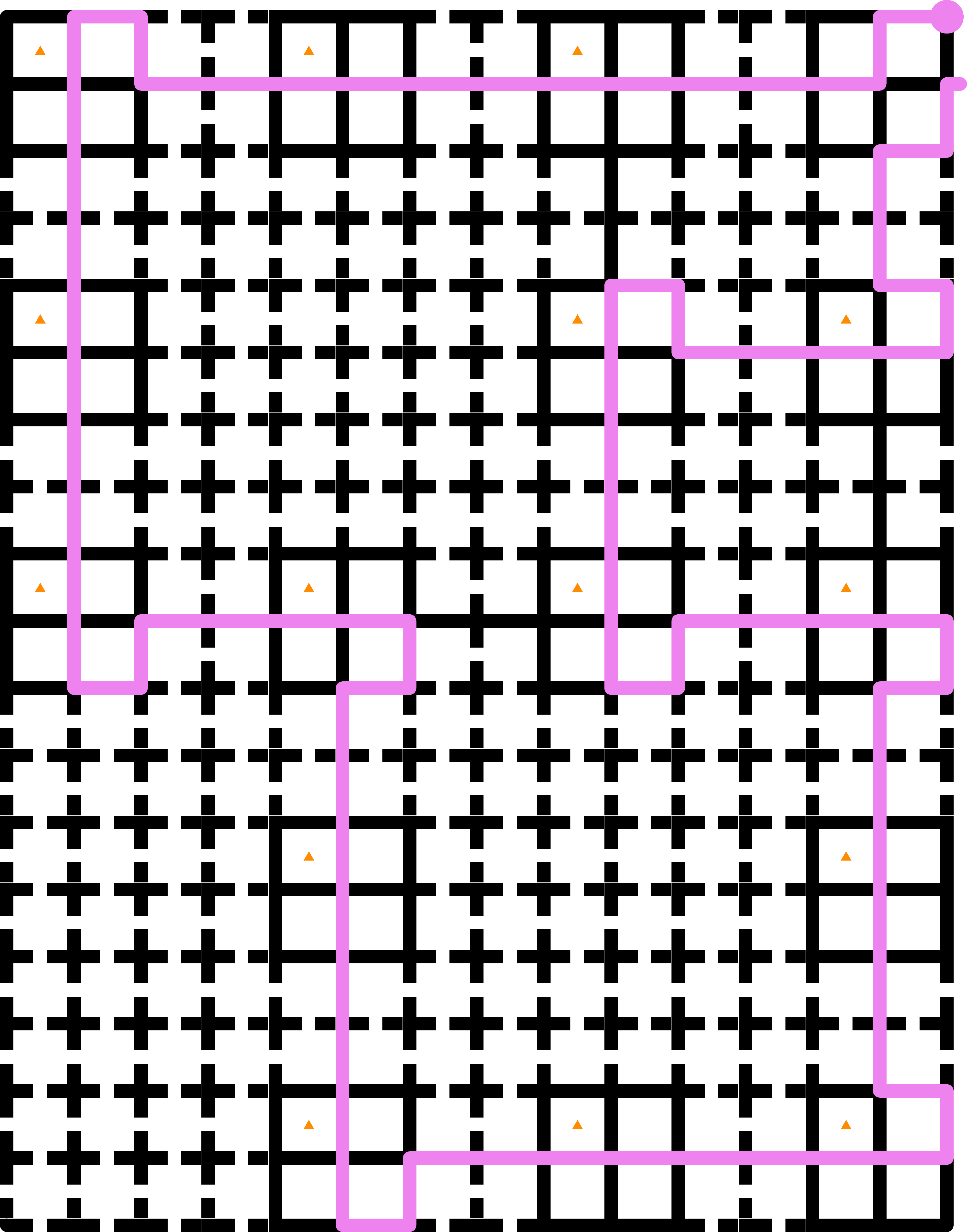}}
  \caption{Example of the Hamiltonicity framework applied to Witness with
           1-triangles and broken edges.}
  \label{Hamiltonicity triangle and broken edges}
\end{figure}
\end{proof}
}

\section{Hexagons and Broken Edges}

\abstractlater{
  \section{Proofs: Hexagons}
  \label{appendix:hexagons}
}

Hexagons are placed on vertices or edges of the graph and
require the path to pass through all of the hexagons.
Broken edges are edges which cannot be included in the path.
In this section, we show two positive results and two negative results.
On the positive side, we show that
puzzles with just broken edges are solvable in $L$
(Section~\ref{sec:broken edges}),
and puzzles with hexagons just on the boundary of the puzzle
(even when the boundary is not rectangle) and arbitrary broken edges
are solvable in~$P$ (Section~\ref{sec:boundary hexagons}).
On the negative side, we show that
puzzles with just hexagons on vertices and broken edges are NP-complete
(Section~\ref{sec:hexagon+broken}), and
puzzles with just hexagons on edges (and no broken edges) are NP-complete
(Section~\ref{sec:edge hexagons}).
We leave open the complexity of puzzles with just hexagons on vertices
(and no broken edges):

\begin{restatable}{open}{vertexhexagons} \label{open:vertex-hexagons}
  Is there a polynomial-time algorithm to solve Witness puzzles containing only hexagons on vertices?
\end{restatable}

\subsection{Just Broken Edges}
\label{sec:broken edges}

\begin{observation} \label{thm:broken-edges}
Witness puzzles containing only broken edges, multiple start circles, and 
multiple end caps are in L.
\end{observation}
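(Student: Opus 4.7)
The plan is to reduce the problem directly to undirected $s$-$t$ connectivity (USTCON) and invoke Reingold's theorem that USTCON $\in$ L. With no clues other than broken edges, a ``solution'' is merely a simple path from some start circle to some end cap, with the constraint that no edge of the path is broken. Equivalently, letting $G$ be the grid graph with all broken edges deleted and with a fresh degree-$1$ vertex attached to each end cap, we need to decide whether some start vertex lies in the same connected component of $G$ as some end-cap vertex. The key observation is the standard fact that in an \emph{undirected} graph, the existence of a walk between two vertices is equivalent to the existence of a simple path between them, so we need not worry about the ``simple'' requirement separately.

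First I would describe the construction of $G$ explicitly in log space: the vertex set is implicit from the input (coordinates in the grid), and membership of an edge in $G$ can be tested in log space by checking that it is not listed as broken. Next, I would enumerate over pairs $(u, v)$ where $u$ ranges over start circles and $v$ ranges over end caps; this enumeration requires only $O(\log n)$ bits of workspace. For each such pair, I invoke Reingold's log-space algorithm \cite{Reingold} for USTCON on $G$ to test whether $u$ and $v$ are in the same component, and answer \textsc{yes} as soon as any pair is connected. Since L is closed under log-space composition with log-space oracle queries to an L-computable predicate, the overall algorithm runs in logarithmic space.

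There is no real obstacle here; the only subtlety is making sure that the reduction from the Witness puzzle to a USTCON instance is itself computable in log space, which is immediate because the grid structure is given succinctly by its dimensions and the broken-edge list, and neighbor queries in $G$ can be answered in log space. The reason the statement allows multiple start circles and end caps without any increase in complexity is precisely that the outer enumeration over start/end pairs costs only log space, leaving the asymptotic resource bound unchanged.
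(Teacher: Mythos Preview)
Your proposal is correct and follows essentially the same approach as the paper: enumerate over all start/end pairs in log space and invoke an $(s,t)$-connectivity test (in L by Reingold) for each. You are a bit more explicit than the paper in naming Reingold's theorem and in spelling out why the ``simple path'' requirement comes for free in an undirected graph, but the argument is the same.
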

\begin{proof}
We keep two pointers and a counter to track which pairs of starts and ends we have tried. For each start and end pair, we run an $(s,t)$ path existence algorithm, which is in L. If any of these return \textsc{yes}, then the answer is \textsc{yes}. Thus, we have solved the problem with a quadratic number of calls to a log-space algorithm, a constant number of pointers, and a counter, all of which only require logarithmic space.
\end{proof}

\subsection{Hexagons and Broken Edges}
\label{sec:hexagon+broken}

The following trivial result motivates Open Problem~\ref{open:vertex-hexagons}
(do we need broken edges?).

\begin{observation} \label{thm:broken edges+vertex hexagons}
It is NP-complete to solve Witness puzzles containing only broken edges and hexagons on vertices.
\end{observation}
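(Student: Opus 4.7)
The plan is to prove both containment in NP and NP-hardness. Membership in NP is immediate: a simple path in the grid is a polynomial-size witness, and in polynomial time one can check that it avoids every broken edge and passes through every hexagon-marked vertex.

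For NP-hardness, the plan is to reduce directly from Hamiltonian path in grid graphs with specified endpoints, which is NP-complete by Itai, Papadimitriou, and Szwarcfiter~\cite{Itai-Papadimitriou-Luiz-1982}. Given a grid graph $G$ embedded in a bounding rectangle $R$ together with specified endpoint vertices $s$ and $t$, I would build the Witness puzzle whose grid is $R$, insert a broken-edge clue on every edge of $R$ that is not an edge of $G$, insert a hexagon on every vertex of $G$, place the start circle at $s$, and place an end cap as a half-edge leaving $R$ at $t$. The construction is polynomial in the size of $G$.

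Correctness in both directions is short. Any Witness-puzzle solution must avoid all broken edges, so it lies entirely in $G$; it must also pass through every hexagon-marked vertex, so it visits every vertex of $G$; being a simple path from $s$ to $t$, it is therefore a Hamiltonian path in $G$ from $s$ to $t$. Conversely, any Hamiltonian $(s,t)$-path in $G$ uses only edges of $G$ (hence no broken edges) and visits every hexagon, so it is a valid Witness solution.

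The only real wrinkle is ensuring the endpoints $s,t$ are boundary vertices of $R$, since an end cap must be a half-edge leaving the rectangle. If the hardness instance delivered by~\cite{Itai-Papadimitriou-Luiz-1982} does not already place its endpoints on the outer boundary, this is easily patched by extending $G$ with two short straight "tail" paths from $s$ and $t$ out to the boundary of an enlarged bounding rectangle, keeping hexagons on every new vertex so that the solution is forced to traverse the tails. This is a trivial padding step and is the only step that requires any real thought; everything else is bookkeeping. This is also why the statement is labeled as an observation rather than a theorem.
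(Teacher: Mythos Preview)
Your proof is correct and follows exactly the same idea as the paper, which disposes of the observation in a single sentence: ``Hamiltonian path in grid graphs \cite{Itai-Papadimitriou-Luiz-1982} is a strict subproblem.'' Your write-up is simply a more explicit version of that one-liner. One small simplification: since start circles may sit at interior vertices, you only need a tail for $t$ (to reach the boundary for the end cap), not for $s$; and in the hard instances produced by \cite{Itai-Papadimitriou-Luiz-1982} one can take the designated endpoint to already lie on the bounding box, so in practice no tail is needed at all.
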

\begin{proof}
Hamiltonian path in grid graphs \cite{Itai-Papadimitriou-Luiz-1982}
is a strict subproblem.
\end{proof}

For edge hexagons, we first present a simple application of the Hamiltonicity
framework that uses broken edges; this result will be subsumed by
Theorem~\ref{thm:edge hexagons} by a more involved application that avoids
broken edges.

\iffull
  \begin{theorem} \label{broken+hexagons}
  It is NP-complete to solve Witness puzzles containing only broken edges and hexagons on edges.
  \end{theorem}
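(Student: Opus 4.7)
The plan is to apply the Hamiltonicity Reduction Framework of Section~\ref{sec:Hamiltonicity Reduction Framework}. Starting from an instance $G$ of Hamiltonian cycle in a max-degree-$3$ grid graph, I build a Witness puzzle on an $O(s\,|V(G)|) \times O(s\,|V(G)|)$ rectangle (with $r$ and $s$ to be fixed). NP-membership is immediate from Observation~\ref{all-but-antibody-np}, so the only task is to realize the framework using only broken edges and edge hexagons.

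The reduction scales $G$ by $s$ and forms one $(2r+1) \times (2r+1)$ chamber per vertex and one straight hallway per edge, exactly as in Section~\ref{sec:Hamiltonicity Reduction Framework}. Then: (i)~declare every grid edge outside the chamber-plus-hallway skeleton to be broken, forcing the solution path to stay entirely inside the skeleton and each hallway to be traversed as a monotone $(s-2r)$-edge segment; (ii)~inside each chamber, place a single edge hexagon on a designated interior edge, in order to force that chamber to be visited; and (iii)~place the start circle and end cap at adjacent points of a chosen start/end chamber so that a solution path corresponds to a closed walk in the framework sense. In the forward direction, any Witness solution must use every hexagon and so visits every chamber; since all inter-chamber connections outside the hallways are broken, the sequence of chambers visited forms a closed walk using every vertex of $G$, and the ``enter and leave each chamber once'' property of the framework upgrades this to a Hamiltonian cycle. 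In the reverse direction, given a Hamiltonian cycle in $G$, route the puzzle path along the corresponding hallways and, in each chamber, pick a simple sub-path that enters at one designated hallway endpoint, exits at another, and passes through the mandatory hexagon edge.

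The main obstacle is exhibiting such a chamber sub-path for each of the $\binom{3}{2}=3$ possible (in, out) hallway pairs at a degree-$3$ vertex (plus the easier degree-$\le 2$ cases), all while covering the one mandatory hexagon edge. I would either enlarge $r$ (e.g.\ take $r=2$) so that a long U-shaped traversal routing through the hexagon is available for any pair of sides, or keep $r=1$ and do a small case analysis, placing the hexagon on a corner-adjacent edge that is reachable from every pair of side midpoints. Either resolution is routine; the remainder of the proof is a direct application of the framework.
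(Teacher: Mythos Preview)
Your proposal is correct and follows essentially the same approach as the paper: apply the Hamiltonicity framework, break all edges outside the chamber/hallway skeleton, and place one edge hexagon per chamber to force a visit. The paper simply commits to the concrete parameters $s=5$, $r=1$, with the hexagon on an edge incident to the chamber center (and notes that even a $2\times 3$ chamber with a middle-edge hexagon works), which resolves exactly the routine case analysis you flag at the end.
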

  \begin{proof}
  Apply the Hamiltonicity framework with scale factor $s=5$ and
  chamber radius $r=1$.
  All edges within chambers and hallways are unbroken,
  and all other edges are broken,
  forcing hallways to be traversed at most once.
  Within each chamber, we place a hexagon on one of the edges
  incident to the center of the chamber.
  This hexagon forces the chamber to be visited, while having enough empty
  space to enable connections however desired.
  In fact, a smaller $2 \times 3$ chamber with a hexagon on the middle edge
  also suffices, as shown in Figure~\ref{2x3}.
  \end{proof}

  \begin{figure}
    \centering
    \includegraphics[scale=0.172]{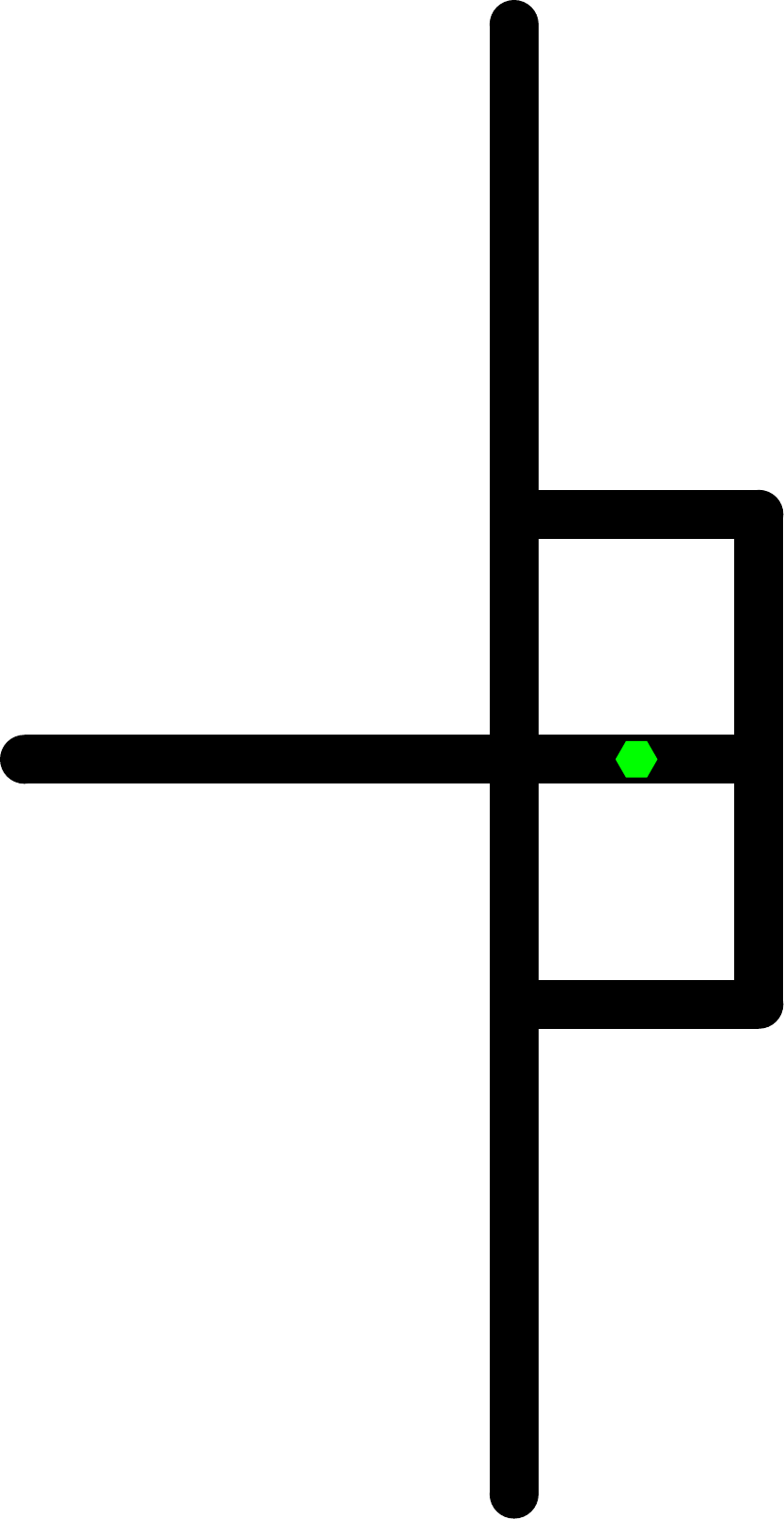}
    \caption{$2 \times 3$ vertex gadget.}
    \label{2x3}
  \end{figure}

\fi

\subsection{Just Edge Hexagons}
\label{sec:edge hexagons}

\both{
\begin{theorem} \label{thm:edge hexagons}
It is NP-complete to solve Witness puzzles containing only hexagons on edges (and no broken edges).
\end{theorem}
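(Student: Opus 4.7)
The plan is to apply the Hamiltonicity framework of Section~\ref{sec:Hamiltonicity Reduction Framework}, reducing from Hamiltonian cycle in grid graphs of maximum degree~$3$. We embed each vertex $v$ of the grid graph $G$ as a chamber and each edge of $G$ as a hallway, with a scale factor $s$ and chamber radius $r$ to be chosen. Membership in NP follows from Observation~\ref{all-but-antibody-np}, so the work is to show NP-hardness.

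The distinguishing challenge, relative to the broken-edge proof of Theorem~\ref{broken+hexagons}, is that the blocking role of broken edges must be simulated using edge hexagons alone. The main tool is a \textbf{vertex-forcing} observation: if two edges incident to a grid vertex $v$ both carry edge hexagons, then any solution path uses both of those edges, so $v$ is visited with exactly those two edges and no other edge at $v$ can appear (since the simple path has degree at most~$2$ at each internal vertex). Chaining this observation along a row or column of hexagon pairs lets us effectively prohibit a whole ``wall'' of edges from appearing in the path, simulating a strip of broken edges.

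Using this trick, I would design a chamber gadget that forces the path to enter through one of at most three port edges (one per incident edge of $G$) and leave through another, supporting each of the $\binom{3}{2}$ possible pairings. Inside the chamber I would place a small spine of edge hexagons guaranteeing that the chamber is visited, and along the non-port portion of the chamber boundary I would install vertex-forcing walls to block stray boundary crossings. For each hallway, I would put one hexagon near the middle (to force a visit) and vertex-forcing walls along its two long sides (to block lateral leakage). Choosing $s$ and $r$ large relative to the wall thickness leaves enough room for the walls to be installed without interfering with the chamber/hallway interiors or with each other.

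The main obstacle is to verify that the exterior region---what remains of the rectangular grid after the chambers, hallways, and their surrounding walls are placed---does not admit shortcut paths. Once all the vertex-forcing walls are in place, the exterior decomposes into small pockets whose boundaries are made entirely of forced edges, so any simple path visiting all hexagons cannot enter such a pocket without exceeding the degree budget at a wall vertex. Combined with the chamber and hallway gadgets, this yields the desired bijection between Hamiltonian cycles in $G$ and satisfying paths in the Witness puzzle: the forward direction threads the chosen cycle through the designated chambers and hallways, while the reverse direction projects a satisfying path back to $G$. The delicate part is checking that the vertex-forcing walls do not over-constrain the interior traversal of chambers and hallways; I expect this to require a case analysis over how the path can enter and exit each chamber through its three ports.
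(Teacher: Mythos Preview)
Your vertex-forcing observation is the right tool, and it matches the paper's key insight: a path of consecutive edge hexagons must be traversed contiguously by the solution, so perpendicular edges through wall vertices are blocked. But the proposal has two genuine gaps.

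First, placing a hexagon in the middle of each hallway forces every hallway to be visited. In the Hamiltonicity framework only the hallways corresponding to edges of the Hamiltonian cycle are traversed; forcing all of them destroys the reduction. The paper puts the required-visit hexagon only in chambers, never in hallways.

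Second, and more fundamentally, you do not say how the hexagon walls get incorporated into a single simple path. Every wall you install must itself be a contiguous subpath of the solution, so the solution is the Hamiltonian portion \emph{together with} all the wall traversals. You need a mechanism that splices each wall into the main path without creating disconnected cycles or extra chamber visits. The paper solves this by placing walls on the boundaries of the connected components of the complement grid graph~$\bar G$ (rather than around chambers and hallways separately); each such boundary cycle is turned into a path by deleting one or two hexagons, with the gap placed adjacent to a designated ``holey chamber'' so that traversing the wall is a single detour starting and ending in that chamber. Your description---separate walls around each chamber and along each hallway side---provides no entry/exit points for the walls and no argument that their concatenation is compatible with the max-degree-$3$ chamber count. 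Without this, the forward direction fails: the natural candidate solution is a disjoint union of the Hamiltonian path and many wall cycles, not a single simple path.
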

}

\ifabstract
\begin{proofsketch}
We use the Hamiltonicity framework. Noting that two edge hexagons incident to the same vertex must be consecutively
traversed by the solution path, we carefully force the solution path to traverse the boundary of every chamber separate
from the decision of which hallways to use. As with other Hamiltonicity framework reductions, we force each chamber to be visited
with an edge hexagon in its center and can deduce the corresponding Hamiltonian cycle in the original grid graph from the set of used hallways.
The full proof can be found in Appendix~\ref{appendix:hexagons}.
\end{proofsketch}
\fi

\later{
\begin{proof}
Apply the Hamiltonicity framework with scale factor $s=8$ and
chamber radius $r=2$; refer to Figure~\ref{Hamiltonicity edge hexagon}.
As before, within each chamber, we place a hexagon on one of the edges
incident to the center of the chamber.
Consider the grid graph $G$ formed by the chambers and hallways,
and its complement grid graph $\bar G$ (induced by all grid points
in $\mathbb Z^2 \setminus G$).

\begin{figure}
  \centering
  \subcaptionbox{Instance corresponding to Figure~\ref{grid Hamiltonicity}}{\includegraphics[scale=0.092]{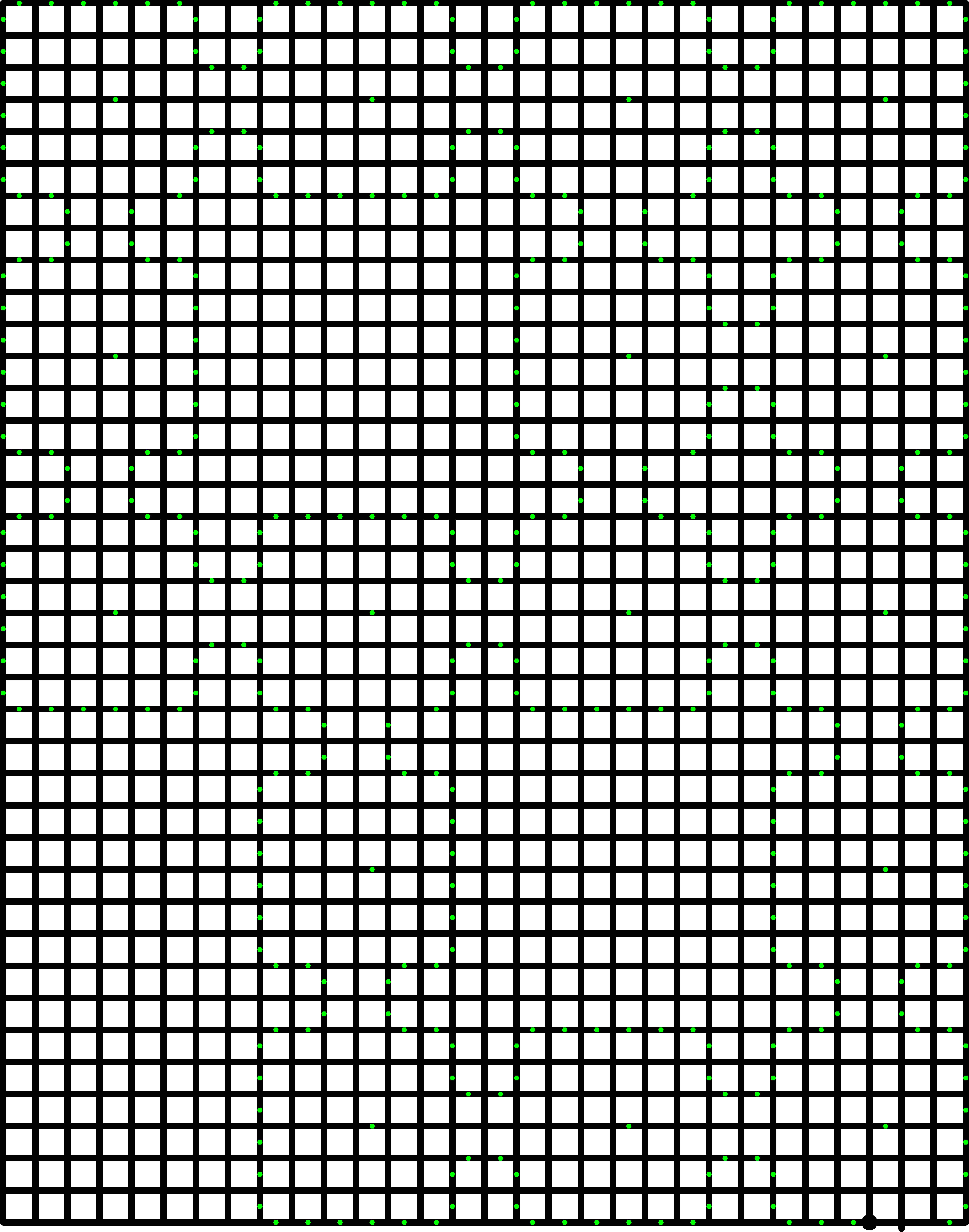}}
  \hfil\hfil
  \subcaptionbox{Solution corresponding to Figure~\ref{grid Hamiltonian}}{\includegraphics[scale=0.092]{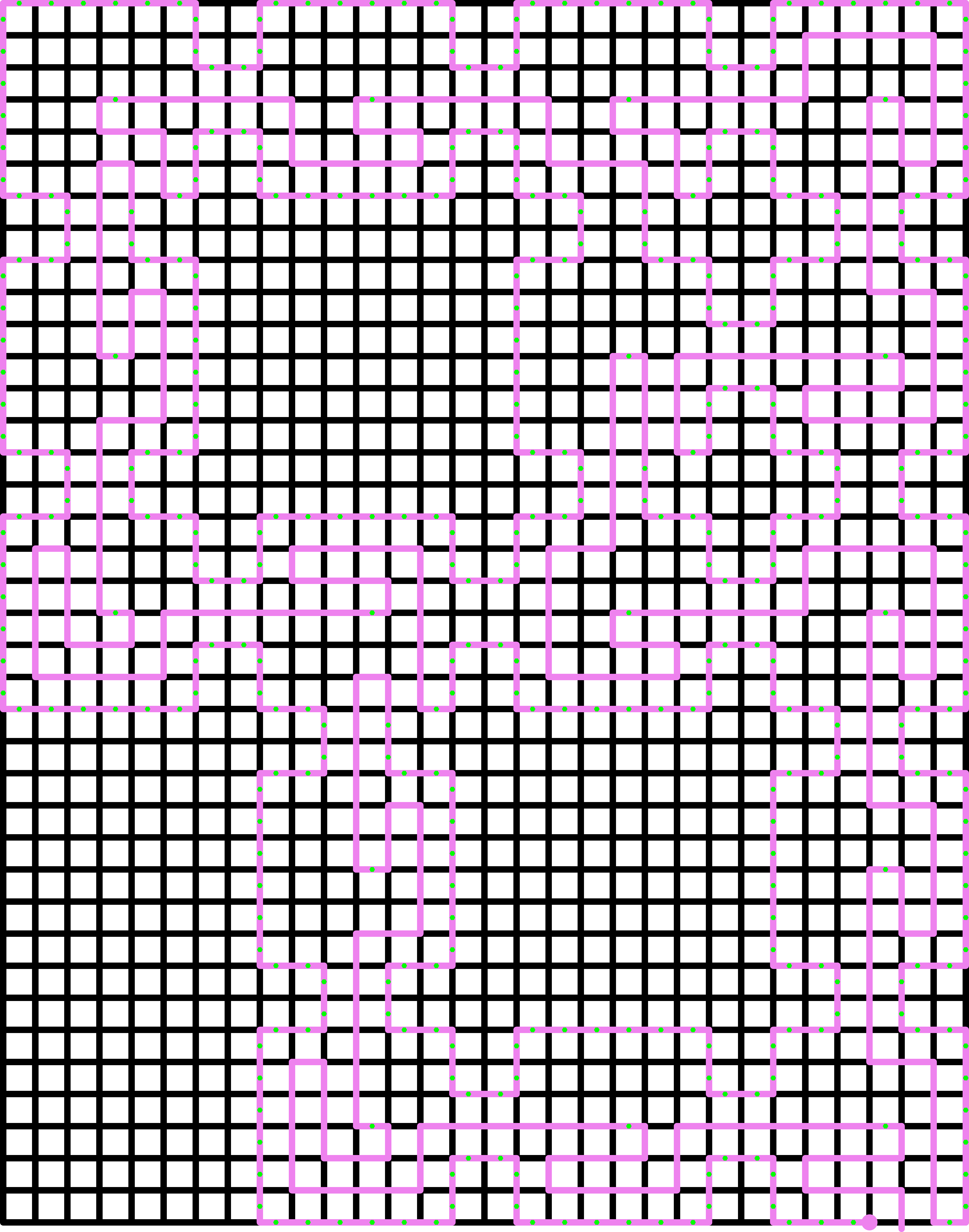}}
  \caption{Example of the Hamiltonicity framework applied to Witness with
           edge hexagons.}
  \label{Hamiltonicity edge hexagon}
\end{figure}

To constrain the solution path to remain mostly within~$G$,
we add hexagons as follows.
For each connected component $C$ of $\bar G$
(including the region exterior to~$G$),
we add a cycle of hexagons on the boundary edges of~$C$.
(These cycles outline the chambers and hallways, without intersecting them.)
We break each cycle by removing one or two consecutive hexagons,
leaving a path of hexagons called a \emph{wall}.
The removed hexagon(s) are adjacent to the \emph{holey chamber} of the
cycle: for a non-outer cycle, the holey chamber is the leftmost topmost chamber
adjacent to the cycle, and for the outer cycle, the holey chamber is the
rightmost bottommost chamber adjacent to the cycle.
For each non-outer cycle, we remove one hexagon, 
from the horizontal edge immediately
below the bottom-right cell of the holey chamber.
For the outer cycle, we remove two hexagons,
from the horizontal edges immediately
below the two rightmost cells in the bottom row of the holey chamber;
these two horizontal edges share a vertex called the \emph{gap vertex}.
Thus, in all cases, the removed hexagon(s) of a cycle are below the bottom
right of its holey chamber, so each chamber is the holey chamber of at
most one cycle.

We place the start circle and end cap at the bottom of the diagram,
with the start circle at the left endpoint of the outer wall,
and the end cap below the gap vertex.
(Thus, the rightmost bottommost chamber is the start/end chamber.)

\paragraph{Witness solution $\to$ Hamiltonian cycle.}

We claim that any solution to this Witness puzzle contains each wall as a
contiguous subpath.  Let $e_1, e_2, \dots, e_k$ be the path of edges with
hexagon clues forming a wall, and let $v_0, v_1, \dots, v_k$ be the
corresponding vertices on the path.
When the solution path visits an edge $e_i$, where $1 \leq i < k$,
the solution path must visit edge $e_{i+1}$ immediately before or after;
otherwise, $e_{i+1}$ could not be visited at another time on the solution
path (contradicting its hexagon constraint) because its endpoint $v_i$
has already been visited.
By induction, any solution path visits the wall edges consecutively,
as either $e_1, e_2, \dots, e_k$ or $e_k, e_{k-1}, \dots, e_1$.

Next we claim that any solution path must enter and exit each non-outer wall
from its corresponding holey chamber.
The cycle corresponding to the wall is the boundary of a
connected component $C$ of~$\bar G$.
The wall contains all vertices of the boundary of~$C$,
so the only way for the solution path to enter or exit the interior of $C$
is by entering or exiting the wall.
But the start circle and end cap are not interior to $C$ or on the wall,
so the solution path must enter and exit the wall from the exterior of~$C$.
The only such neighbors of the wall endpoints are in the holey chamber.

By the previous two claims, any solution to this Witness puzzle cannot go
strictly inside any connected component of $\bar G$, except the outer
component, and traversing a wall starts and ends in the same chamber.
Hence, a solution can traverse from chamber to chamber only via hallways,
and it must visit every chamber to visit the hexagon in the middle.
The solution effectively begins and ends at the rightmost bottommost chamber,
the gap vertex being the only way in from the outside.
Therefore, any solution can be converted into a Hamiltonian cycle in the
original grid graph.

\paragraph{Hamiltonian cycle $\to$ Witness solution.}

To convert any Hamiltonian cycle into a Witness solution, first we route
the Hamiltonian cycle within the chambers and hallways so that, in every
chamber, the routed cycle visits the hexagon in the middle of the chamber
as well as the bottom edge of the bottom-right cell of the chamber.
The routing along each hallway is uniquely defined.
To route within a chamber, we connect one visited hallway along a straight line
to the central vertex of the chamber; then traverse the incident edge
with the hexagon unless we just did; then walk clockwise or counterclockwise
around the remainder of the chamber in order to visit the bottom edge of the
bottom-right cell of the chamber before reaching the other visited hallway.

Next we modify this routed cycle into a Witness solution path by including
the walls.  For each wall, within the corresponding holey chamber,
we replace the bottom edge $e$ of the bottom-right cell of the
chamber by the two vertical edges below~$e$.
For non-outer walls, the two endpoints of the wall attach to these two
vertical edges, effectively replacing the edge $e$ with the wall path.
Thus, before we modify the holey chamber of the outer wall,
we still have a cycle.
For the outer wall, the right endpoint of the wall attaches
to the right vertical edge, while the left endpoint of the wall is the start
circle; the left vertical edge attaches to the gap vertex, which leads to the
end cap.  Thus, we obtain a path from the start circle to the end cap.
\end{proof}
} 


\subsection{Boundary Hexagons and Broken Edges}
\label{sec:boundary hexagons}

In this section, we solve Witness puzzles with arbitrary broken edges
and hexagons just on boundary of the puzzle.
To make this result more interesting,
we allow a generalized type of Witness puzzle (also present in the real game)
where the board consists of an arbitrary simply connected set $C$ of cells
(instead of just a rectangle), and hexagons can be placed on any
vertices and/or edges on the outline of~$C$.
\xxx{Should define outline somewhere.}

This result is essentially a polynomial-time algorithm for solving
\emph{subset Hamiltonian path} ---
find a simple path visiting a specified subset of vertices and/or edges ---
on planar graphs when the subset lies entirely on the outside face
of the graph.
This problem is a natural variation of \emph{subset TSP} ---
find a minimum-length not-necessarily-simple cycle visiting
a specified subset of vertices.
A related result is that \emph{Steiner tree} ---
find a minimum-length tree visiting a specified subset of vertices ---
can be solved in polynomial time on planar graphs when the subset
lies entirely on the outside face of the graph \cite{boundary-steiner-tree}.
This result is a key step in the first PTAS for Steiner tree in planar graphs
\cite{Borradaile-Mathieu-Klein-2007}.
It seems that simple paths are trickier to find than trees,
so our algorithm is substantially more complicated.
Hopefully, our result will also find other applications
in planar graph algorithms.


\label{forced division}
Define a \emph{forced division} $(C,F)$ to consist of a simply connected set $C$
of empty cells (possibly with broken edges between them)
together with a set $F$ of \emph{forced} vertices and edges
(i.e., vertices and/or edges with hexagon clues)
on the outline of~$C$.
A \emph{$(C,F)$-path} is a simple path using only vertices and edges
incident to cells in $C$ that traverses all vertices and edges in~$F$.
\xxx{Is ``simple'' automatically included in ``path''? If so, could remove,
  though it might be good to stress it here.}

To enable a dynamic program for finding $(C,F)$-paths when they exist,
we prove a strong structural result about $(C,F)$-paths that leave the
``most room'' for future paths, which may be of independent interest.

For any two vertices $u$ and $v$ on the outline of~$C$,
any $(C,F)$-path $P$ from $u$ to $v$
decomposes the cells of $C$ into one or more connected components.
For each vertex $t$ on the outline of $C$, if $P$ does not visit $t$,
then there is a unique connected component $R_t(P)$ incident to $t$.
We call $R_t(P)$ the \emph{$t$-remainder} of~$P$.
The intent is for the path to continue on from $v$ into the $t$-remainder,
so we are interested in the case where $v$ is incident to $R_t(P)$,
which is equivalent to preventing the path from going ``backward''
along the outline of $C$ between $v$ and~$t$.



\begin{lemma} \label{thm:monominoes-canonical}
Given a forced division $(C,F)$ with possibly broken edges,
for distinct vertices $u,v,t$ appearing in clockwise order on the outline of~$C$,
if there is any $(C,F)$-path from $u$ to $v$ that does not visit the outline
of $C$ in the clockwise interval $(v,t]$, then
there exists a unique \emph{maximum $t$-remainder} $R^*$ over all such paths.
More precisely, the $t$-remainder of any such path is a subset of~$R^*$, and
there is such a path with $t$-remainder exactly~$R^*$.
The same result holds for any forbidden interval $(v,x)$ of the outline of $C$
that contains $(v,t]$.
\end{lemma}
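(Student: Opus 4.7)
The plan is to prove the lemma by establishing that the family of achievable $t$-remainders is closed under pairwise union. Since the cell set $C$ is finite, taking the union of all achievable $t$-remainders then produces a single maximum $R^*$ that is itself achieved by some valid path. Uniqueness is immediate because every other achievable $t$-remainder is, by closure, a subset of $R^*$. The statement for a larger forbidden interval $(v,x)\supseteq (v,t]$ follows verbatim from the same argument applied to the (possibly smaller) family of admissible paths.

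The heart of the proof is the following combining step: given two valid $(C,F)$-paths $P_1, P_2$ from $u$ to $v$ with $t$-remainders $R_1, R_2$, I would construct a valid path $P^*$ with $R_t(P^*) \supseteq R_1 \cup R_2$. Viewing the paths as edge sets inside the planar disk $C$, the symmetric difference $P_1 \oplus P_2$ has even degree at every vertex (because $P_1$ and $P_2$ share their endpoints $u$ and $v$) and so decomposes into edge-disjoint simple cycles $\gamma_1, \dots, \gamma_k$ in the planar embedding. Because these cycles are edge-disjoint simple cycles in a disk, they can be ordered by planar nesting. Working from innermost outward, I would flip $P_1$ along every cycle $\gamma_i$ whose enclosed region contains a cell of $R_2 \setminus R_1$; each flip replaces an arc of $P_1$ by the corresponding arc of $P_2$ along $\gamma_i$, keeping the endpoints fixed and moving those cells onto the $t$-side. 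A local case analysis at the shared vertices of $\gamma_i$ with $P_1$ shows that each single flip sends a simple path with endpoints $u, v$ to another simple path with endpoints $u, v$, so after all flips the resulting path $P^*$ has every cell of $R_1 \cup R_2$ on its $t$-side.

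The $(C,F)$-path constraints are preserved by each flip. Every forced edge of $F$ lies on the outline and is used by both $P_1$ and $P_2$, hence lies in $P_1 \cap P_2$ rather than in the symmetric difference, so no cycle $\gamma_i$ contains it. Every forced vertex of $F$ is visited by both paths; if a flip swaps incident edges at such a vertex, the vertex remains on the new path. The avoidance condition for $(v,t]$ is similarly preserved, since neither $P_1$ nor $P_2$ uses an outline vertex or edge in $(v,t]$, so nothing on $(v,t]$ enters the symmetric difference or is touched by a flip; broken edges in $C$ are never traversed by either $P_1$ or $P_2$, so they cannot appear in any $\gamma_i$ either.

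The main obstacle I expect is the case where $P_1$ or $P_2$ touches the outline of $C$ at intermediate forced vertices, so that $C \setminus P_i$ splits into more than two cell-components and the ``$t$-side'' is not simply one of two halves of a disk. I plan to handle this by invoking the avoidance hypothesis on $(v,t]$: the cells adjacent to the untouched arc $(v,t]$ form a connected region (because $C$ is simply connected and $(v,t]$ is an uninterrupted arc of the outline) that contains the cell at $t$, so the $t$-side is globally well-defined across all components of $C \setminus P_i$. Within each sub-disk pinched off at a shared forced outline vertex of $P_1$ and $P_2$, the cycle-flipping argument applies independently, and concatenating the pieces yields the desired $P^*$.
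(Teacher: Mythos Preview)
Your high-level strategy---combine any two admissible paths $P_1,P_2$ into a single admissible path whose $t$-remainder contains $R_t(P_1)\cup R_t(P_2)$, then finish by finiteness---matches the paper's. But the combining step you propose has a real gap.

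The operation ``flip $P_1$ along a cycle $\gamma\subseteq P_1\oplus P_2$'' means taking $P_1\oplus\gamma$. For this to yield a simple path you need every vertex to end up with degree at most~$2$, and you need forced vertices to end up with degree exactly~$2$. Consider a forced outline vertex $w$ visited by both paths but with \emph{disjoint} edge pairs: $P_1$ uses edges $a,b$ and $P_2$ uses $c,d$, so $w$ has degree~$4$ in $P_1\oplus P_2$. If the cyclic order at $w$ is $a,b,c,d$ (the two paths touch rather than cross), one planar noncrossing decomposition pairs $(a,b)$ into one cycle and $(c,d)$ into another; flipping only the first cycle gives $P_1\oplus\gamma$ degree~$0$ at $w$, so the forced vertex is lost. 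You can avoid this by choosing the other pairing at $w$, but then you must argue that a \emph{globally consistent} choice exists simultaneously at all such degree-$4$ vertices, that the resulting cycles are still simple and admit a nesting order, and that no single flip creates a degree-$4$ vertex elsewhere (which it can, when $\gamma$ passes through a vertex $w'$ using two $P_2$-edges while $P_1$ already sits at $w'$ on edges outside $\gamma$). Your proposal defers all of this to ``a local case analysis,'' but that analysis is precisely the content of the lemma.

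The paper sidesteps these difficulties entirely. Instead of flipping cycles, it defines for each cell a \emph{depth} (the minimum over $i\in\{1,2\}$ of the number of $P_i$-crossings needed to reach $t$), fills in any cavities to obtain a monotone \emph{filled depth}, and takes $P_3$ to be the set of \emph{ledges}---edges across which filled depth changes. A degree-count then shows the ledges form a single simple path from $u$ to $v$ with no leftover cycles, and a separate parity/orientation argument (occupying most of the proof) shows every forced outline vertex visited by both $P_1$ and $P_2$ is incident to a ledge. This directly produces $P_3$ without any sequential flipping, and the forced-vertex argument is exactly where the paper spends its effort; your sketch does not yet supply an equivalent.
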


\begin{proof}
\begin{figure}
  \centering
  \subcaptionbox{$P_1$ and $P_1$-depth.}{\includegraphics[width=0.49\textwidth]{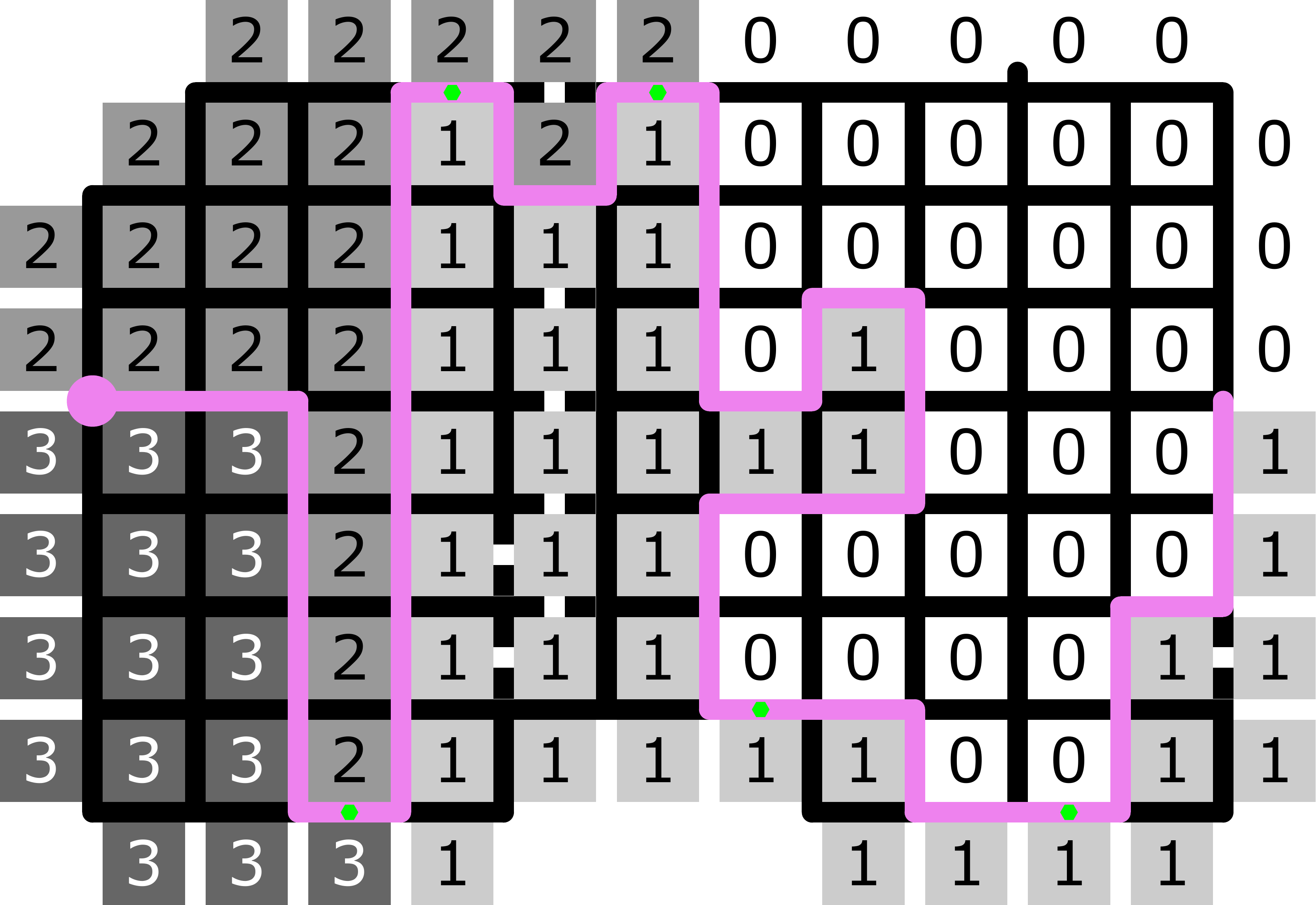}}
  \hfil\hfil
  \subcaptionbox{$P_2$ and $P_2$-depth.}{\includegraphics[width=0.49\textwidth]{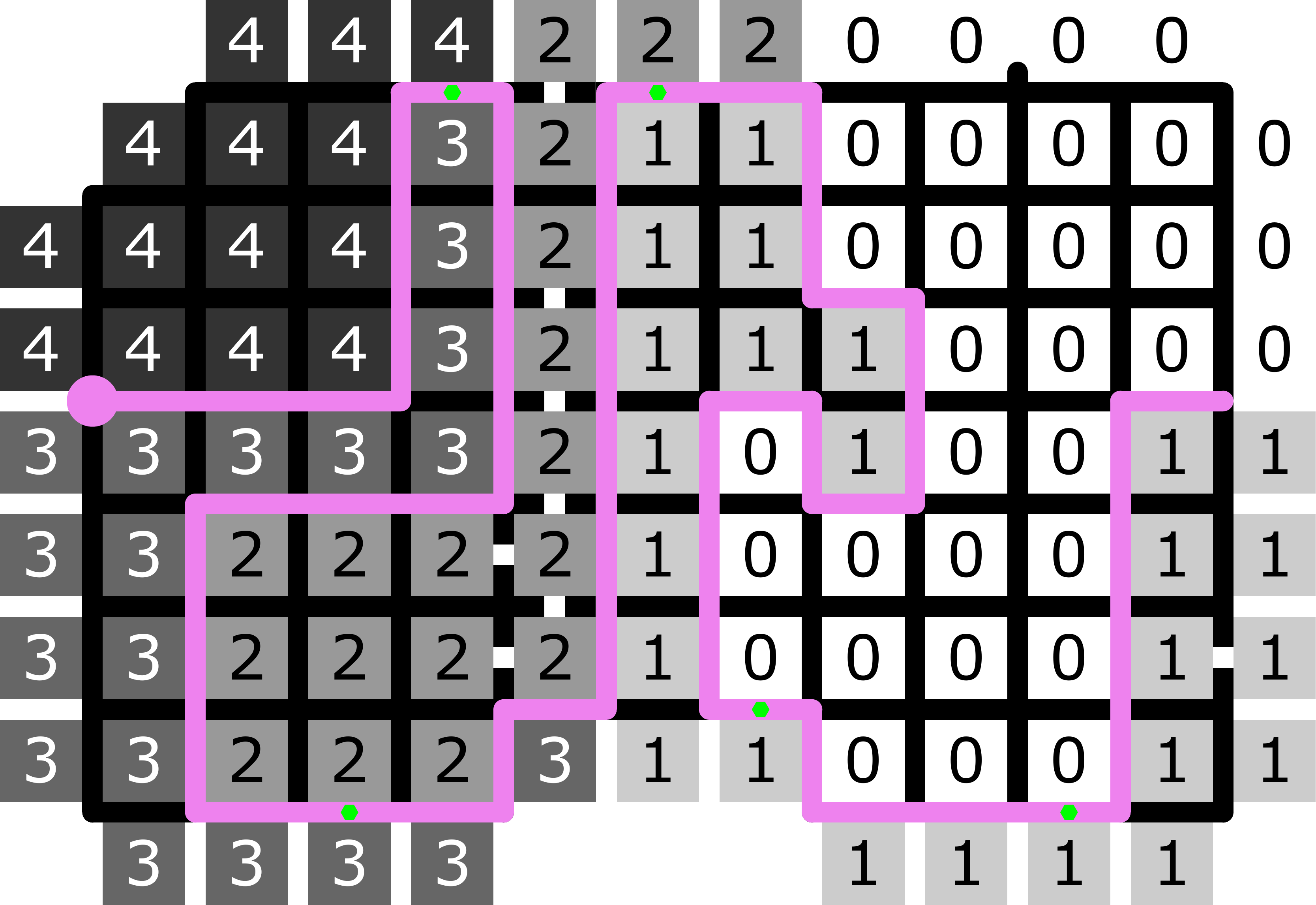}}

  \subcaptionbox{Depth (the minimum of $P_1$-depth and $P_2$-depth).  Note the $1$-cavity inside the depth-$0$ area.}{\includegraphics[width=0.49\textwidth]{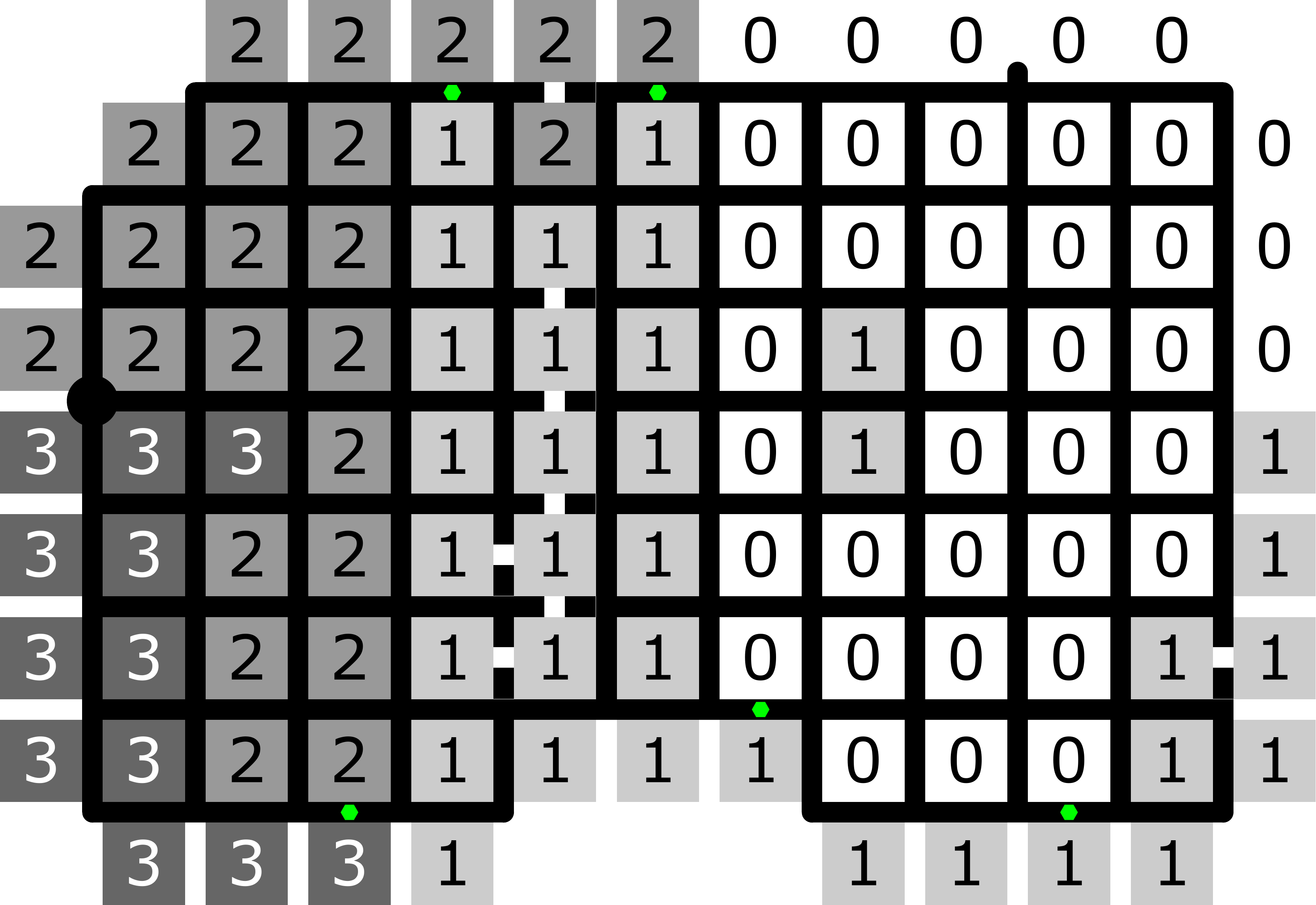}}
  \hfil\hfil
  \subcaptionbox{$P_3$.  After filling the cavity, the ledges form a path from $u$ to $v$.}{\includegraphics[width=0.49\textwidth]{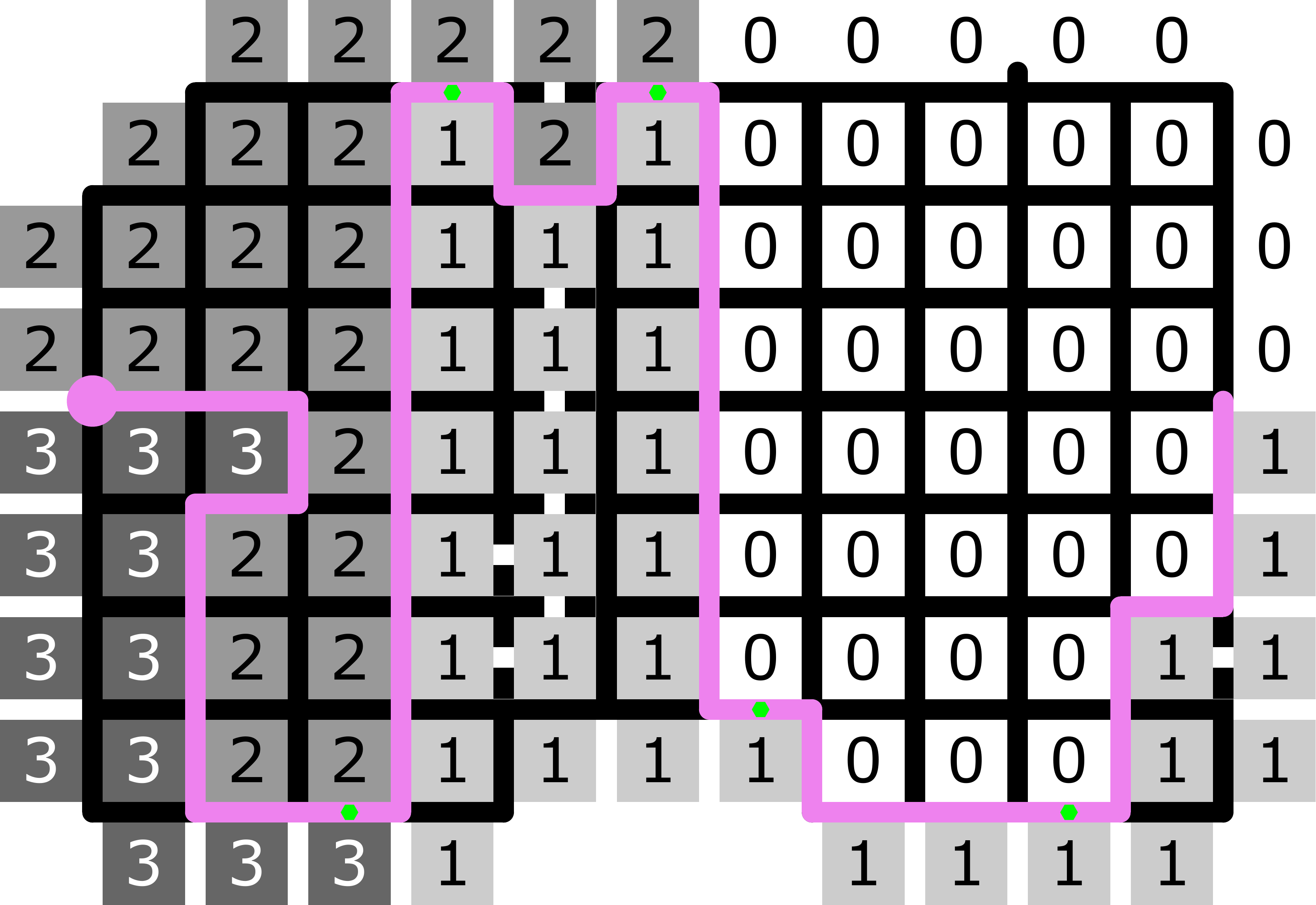}}

  \caption{Example remainder computation, where $F$ is represented by hexagons, $u$ is the start circle, $v$ is the other endpoint of the paths, and $t$ is the end cap.  Maximizing the remainder leaves the maximum freedom for completing the path.  (Note that the two edges in each reflex corner have two exterior depths; they happen to be the same in this example, but they may differ in general.)}
  \label{fig:remainder}
\end{figure}

Let $P_1$ and $P_2$ be any two $(C,F)$-paths from $u$ to $v$
that do not visit the outline of $C$ in the clockwise interval $(v,t]$
(or any larger forbidden interval).
We will construct another such path $P_3$ (built entirely from edges of $P_1$
and~$P_2$) whose $t$-remainder is a superset of the $t$-remainders of
$P_1$ and~$P_2$.  Refer to Figure~\ref{fig:remainder} for an example.
By applying this construction repeatedly to all such paths,
we obtain a single such path whose $t$-remainder is a superset of the
$t$-remainder of all such paths.

Define the \emph{$P_i$-depth} of a cell $c \in C$ to be the minimum possible
number of crossings of $P_i$ by a curve within $C$
from a point interior to $c$ to~$t$.
(This number is most easily seen to be well-defined by allowing the curve
to cross cell boundaries only at edges, not vertices.)
Define the \emph{depth} of a cell $c \in C$ to be the minimum of the
$P_1$-depth and $P_2$-depth of~$c$.


Define a \emph{$k$-cavity} to be a connected set of cells of depth $\geq k$
(with at least one cell of depth exactly~$k$)
``surrounded'' by cells of depth $< k$:
every edge on the outline of the cavity must have an incident
exterior cell that has depth $< k$.
The outline of a cavity therefore shares no edges with the outline of~$C$.
(On the other hand, this definition allows the outline of a cavity to share a
vertex with the outline of~$C$, though we will argue later that this is
impossible.)
Furthermore, by this definition, every cavity is a \textit{maximal}
connected set of cells of depth $\geq k$.

Define the \emph{filled depth} of cells $c \in C$
to be the result of the following process.
Start with filled depth equal to depth.
If there is a $k$-cavity with the current notion of (filled) depth,
set the filled depth of all cells in the cavity to $k-1$.
Repeat until there are no more cavities.


\begin{figure}
  \centering
  \subcaptionbox{\label{fig:remainder kkkk interior} Alternating filled depths $< k, \geq k, < k, \geq k$.}
    {\includegraphics[scale=0.333]{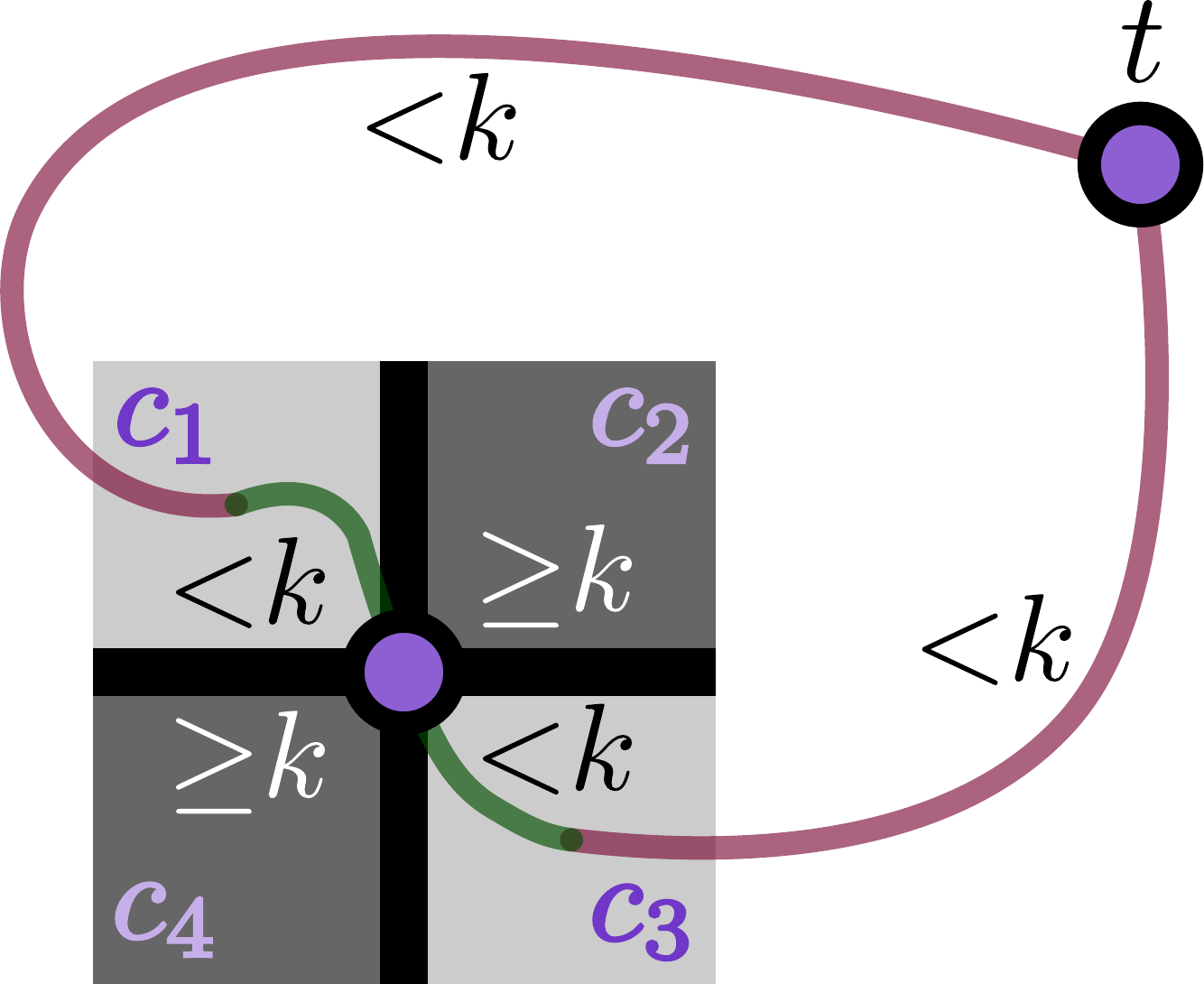}}
  \qquad
  \subcaptionbox{\label{fig:remainder kkkk boundary} Alternating filled depths $< k, \geq k, < k$ on the boundary.}
    {\includegraphics[scale=0.333]{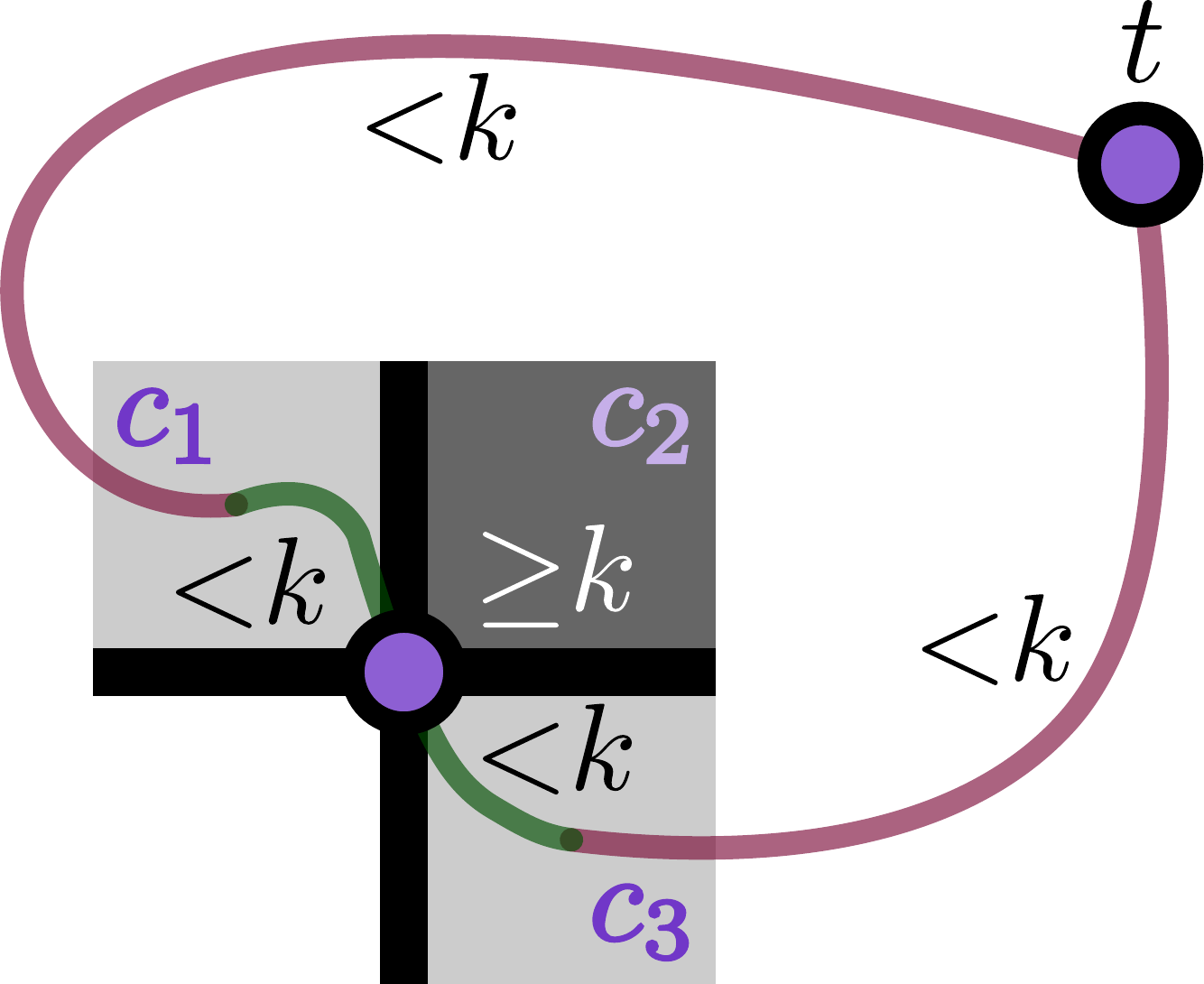}}
  \caption{Two impossible situations in the proof of
    Lemma~\ref{thm:monominoes-canonical}: alternating filled depths.}
  \label{fig:remainder kkkk}
\end{figure}

\xxx{Numbering claims in claim environment would be nice}

Next we claim that we never have cells $c_1, c_2, c_3, c_4$ in clockwise order
around a vertex $x$ with filled depths of $< k, \geq k, < k, \geq k$
respectively; refer to Figure~\ref{fig:remainder kkkk interior}.%
\footnote{For simplicity, we assume here that every vertex has degree
  at most $4$, as in The Witness, though this argument can easily be
  generalized to arbitrary planar graphs:
  replace each $c_i$ with an interval of cells around~$x$,
  so that $c_i$ is indeed adjacent to $c_{i+1}$,
  and generalize to $\geq 4$ alternations.}
%
First, neither $c_1$ nor $c_3$ belongs to a filled cavity, or else
$c_2$ and $c_4$ would belong to the same cavity,
so filling this cavity would place $c_2$ and $c_4$ at the same depth as
either $c_1$ or~$c_3$.
Thus, $c_1$ and $c_3$ have depths equal to filled depths which are $< k$.
By definition of depth, for $i \in \{1,3\}$,
we can draw a simple curve from $c_i$ to $t$
that remains in depth $< k$.
Because filling only decreases depths,
these simple curves also remain in filled depth $< k$.
Connecting these two curves at $x$, we obtain a simple closed curve
through $x$ and $t$ that traverses cells only of filled depth $< k$.
By planarity, this closed curve must contain either $c_2$ or $c_4$
of filled depth $\geq k$, so there must in fact be a $k$-cavity that is not yet
filled, a contradiction, proving the claim.

Similarly, we claim that we never have a vertex $x$ on the outline of $C$
with incident cells $c_1, c_2, c_3$ in clockwise order interior to $C$
with filled depths of $< k, \geq k, < k$ respectively;
refer to Figure~\ref{fig:remainder kkkk boundary}.
Otherwise, as above, for $i \in \{1,3\}$, $c_i$ was not filled,
and we can draw a simple curve from $c_i$
to $t$ that remains in depth $< k$; connect these curves
into a simple closed curve through $x$ and $t$ that traverses cells only of
filled depth $< k$; and by planarity this cycle must contain $c_2$
of filled depth~$k$, forming an unfilled $k$-cavity and a contradiction.

As a consequence of the previous claim,
no $k$-cavity can touch a vertex $x$ of the outline of~$C$.
Furthermore, the previous claim holds for $P_i$-depths as well as filled depths:
by the same argument, we get a simple closed curve (through $x$ and $t$)
that traverses cells only of $P_i$-depth $< k$, yet containing a cell of
$P_i$-depth $\geq k$, contradicting that $P_i$ is a simple path.

For an edge $e$ on the outline of~$C$, define the
\emph{interior depth} of $e$ to be the depth of the incident cell of~$C$; the
\emph{exterior $P_i$-depth} of $e$ to be the minimum possible number of crossings
of $P_i$ by a curve that starts just outside $e$, immediately crosses $e$, and
continues within $C$ to~$t$; and
the \emph{exterior depth} of $e$ to be the minimum of the exterior
$P_1$-depth and exterior $P_2$-depth of $e$.
Note that the exterior depth of $e$ is always at least the interior depth
of~$e$.
(These notions do not need to distinguish between depth and filled depth,
because we never fill a cell incident to an edge on the outline of~$C$.)

We claim that the exterior $P_1$-depth and exterior $P_2$-depth of $e$
have the same parity.
By the definition of exterior $P_i$-depth,
we can draw a simple curve starting at a point $q$ just outside~$C$,
immediately crossing~$e$, and continuing within $C$ to~$t$,
crossing $P_i$ that many times.
We can close these curves without adding $P_i$ crossings
by adding a simple curve exterior to $C$ from $q$ to~$t$.
The resulting simple closed curves for both $P_1$ and $P_2$
enclose the same interval of the outline of $C$,
so $u$ is either inside or outside of both closed curves, and similarly for $v$.
If $u$ and $v$ are on the same side of the closed curves,
then $P_i$ from $u$ to $v$ must cross the closed curve an even number of times,
so $e$ has even exterior $P_i$-depth;
while if $u$ and $v$ are on different sides, $e$ has odd exterior $P_i$-depth.
In either case, the parities are the same for both~$i$.

We claim that two consecutive edges $e_1,e_2$ of the outline of $C$
have the opposite exterior depth parity if and only if
their common endpoint is $u$ or $v$.
Consider the curve proceeding around the common endpoint $x$
from just outside $e_1$, immediately crossing $e_1$,
crossing all cells incident to~$x$, and crossing $e_2$ to just outside~$e_2$.
This curve crosses $P_i$ zero or two times
unless $x$ is one of $P_i$'s endpoints, in which case it crosses exactly once.
Thus, the exterior $P_i$-depths of $e_1$ and $e_2$ have the same parity
unless $x$ is $u$ or $v$, in which case they have opposite parity.
The exterior depths of $e_1$ and $e_2$ therefore satisfy the same parity
relationship.

Define \emph{ledges} as follows: an edge $e$ on the outline of $C$ is a ledge
if it has different interior and exterior depths, and an
edge $e$ interior to $C$ is a ledge if its two incident cells have
different filled depths.
Filled depth changes exactly at ledges, by $\pm 1$, so in particular
every ledge is an edge of $P_1$ or $P_2$ or both.

We claim that, at every vertex of $C$ except $u$ or $v$, the number of incident
ledges is $0$ or $2$; and at $u$ and $v$, the number of incident ledges is~$1$.
By the $< k,\geq k,< k,\geq k$ and $< k, \geq k, < k$ claims,
every vertex has at most two incident ledges.
For a vertex interior to~$C$, we must therefore have zero or two incident ledges:
there cannot be just one change in a cycle of numbers.
A vertex $x$ on the outline of $C$ might have zero, one, or two incident ledge.
Let $e_1$ and $e_2$ be the two (consecutive) edges of the outline of $C$
incident to~$x$.
As in the previous claim, consider the curve proceeding around $x$
from just outside $e_1$, immediately crossing $e_1$,
crossing all cells incident to~$x$, and crossing $e_2$ to just outside~$e_2$.
As we traverse this curve, the filled/exterior depth changes by $\pm 1$
at ledge edges, and at no other edges (by definition of ledge).
Thus, $x$ has exactly one incident ledge if and only if
the exterior depths of $e_1$ and $e_2$ have opposite parity,
which by the previous claim must happen exactly when $x \in \{u,v\}$;
otherwise, $x$ has zero or two incident ledges as desired.

By the previous claim, the set of ledges forms a path $P_3$ between $u$ and $v$
plus zero or more disjoint simple cycles.
We claim that, in fact, there can be no cycles of ledges.
Suppose for contradiction that there were such a cycle~$X$.
There are two cases:
\begin{enumerate}
\item
If $X$ touches the outline of $C$ at just one vertex (which cannot be $t$
by its degree bound) or not at all,
then $X$ is surrounded by cells of a constant filled depth~$k$
(because filled depth changes exactly at ledges), and the filled depths
of cells interior and incident to $X$ must have filled depth
either $< k$ or $> k$.
In fact, no cells interior to $X$ can have filled depth $< k$,
as they would have a curve to $t$ (which is exterior to $X$)
that only visits cells with filled depth $< k$,
contradicting the filled depth of the cells surrounding~$X$.
But if all cells interior to $X$ have filled depth $> k$,
then $X$ is the outline of a $> k$-cavity,
contradicting that we already filled all cavities.

\item
If $X$ touches the outline of $C$ at two or more vertices then,
because $X$ is disjoint from $P_3$, there is a subpath $Y$ of $X$
whose endpoints $y_1,y_2$ lie on the outline of $C$ separating the rest of $X$
from $P_3$ and thus $u$ and~$v$.
Refer to Figure~\ref{fig:remainder case 2}.
Label $y_1$ and $y_2$ to be closer to $u$ and $v$ respectively
on the outline of~$C$, i.e.,
so that their clockwise order is either $y_1,u,v,y_2$ or $u,y_1,y_2,v$.
Because $P_1$ and $P_2$ and thus all ledges are not on
the outline of $C$ in the clockwise interval $(v,t]$,
$t$~is on the same side of $Y$ as~$v$.
Therefore $Y$ is a ledge transition from some filled depth $k$
on the $u,v,t$ side to filled depth $k+1$ on the $X$ side
(locally on either side of~$Y$).
Indeed, the entire $X$ side of $Y$ has filled/exterior depth $> k$;
otherwise, there would be a curve to $t$ (and thus crossing~$Y$)
of filled depths $\leq k$.
Therefore, the other (nonempty) path of~$X$, $X \setminus Y$,
must be a ledge transition from filled depth $k+1$ to $k+2$, so there
must be a filled/exterior depth of $k+2$ on the $X$ side of~$Y$.

\begin{figure}[htbp]
  \centering
  \includegraphics{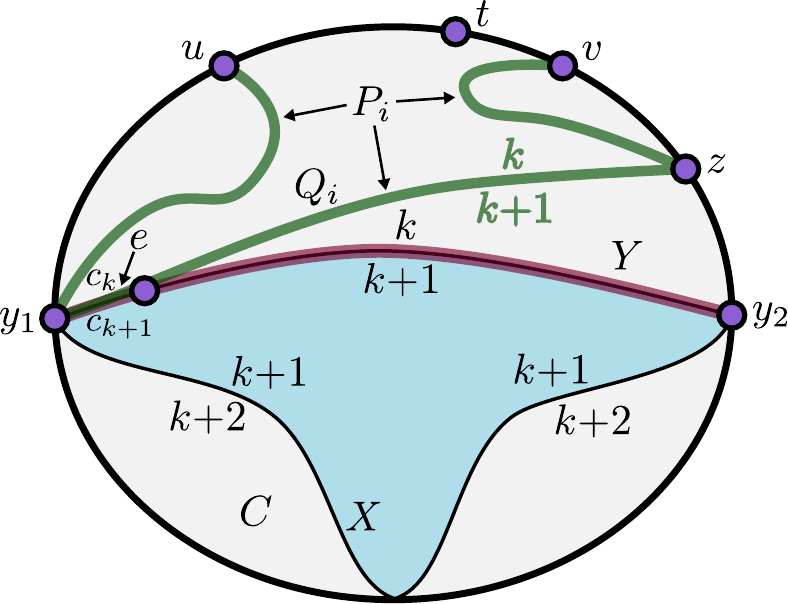}
  \caption{Case 2 of the proof of Lemma~\ref{thm:monominoes-canonical}.}
  \label{fig:remainder case 2}
\end{figure}

Now consider the transition from filled depth $k$ to $k+1$ on the edge $e$
of $Y$ incident to~$y_1$, and let $c_k$ and $c_{k+1}$ be the cells of $C$
incident to $e$ of filled depths $k$ and $k+1$
respectively (which are exterior and interior respectively to~$X$).
In fact, $c_k$ and $c_{k+1}$ have depth $k$ and $k+1$ respectively:
cell $c_k$ could not have been filled as it has a neighbor $c_{k+1}$
of higher depth, and cell $c_{k+1}$ could not have been filled
because that would have made its filled depth match $c_k$'s.
By definition of depth, for some $i \in \{1,2\}$,
the $P_i$-depth of $c_k$ is $k$ while
the $P_i$-depth of $c_{k+1}$ is $k+1$.
To achieve this transition, $P_i$ must have a subpath $Q_i$
between $y_1$ and some vertex $z$ on the outline of~$C$,
with $e$ as its first or last edge,
where one side of $Q_i$ locally has $P_i$-depth $k$ while
the other side of $Q_i$ locally has $P_i$-depth $k+1$.
Because the entire $X$ side of $Y$ has filled/exterior depth $> k$,
$Q_i$ cannot strictly enter the $X$ side of~$Y$,
so $Q_i$ must be nonstrictly on the $t$ side of~$Y$.
Viewed from the other side,
$X$ must be nonstrictly on the $k+1$ side of $Q_i$,
while $t$ must be strictly on the $k$ side of~$Q_i$.
By our labeling of $y_1$ and~$y_2$,
$z$~lies on the interval of the outline of $C$
between $t$ and~$y_2$ not containing $y_1$ (or~$u$),
so $u$ and $v$ lie on the same ($k$) side of $Q_i$ as~$t$.
Now beyond the endpoints $y_1,z$ of $Q_i$, the simple path $P_i$
must proceed on the $k$ side of $Q_i$ which contains $u$ and~$v$,
so $P_i$ never goes strictly on the $k+1$ side of~$Q_i$,
which is nonstrictly on the $t$ side of~$Y$.
But this contradicts that there must be a transition from $P_i$-depth $k+1$
to $k+2$ on $X \setminus Y$, which is on the $X$ side of~$Y$.

\end{enumerate}
%
%
%
Therefore the set of ledges forms exactly a path $P_3$ between $u$ and~$v$.

We claim that $P_3$ visits all forced vertices and edges in~$F$.
First, $P_3$ visits every forced edge because such an edge is in both
$P_1$ and $P_2$ and thus is a ledge via differing internal and external depths,
which are unaffected by cavity filling.
Second, we claim that $P_3$ visits every vertex $x$ on the outline of $C$
that is visited by both $P_1$ and $P_2$, and thus every forced vertex.
Specifically, we show that $x$ is incident to two different depths
(cell or exterior), which implies (because cavities cannot touch a vertex $x$
on the outline of~$C$) that $x$ is incident to two different filled depths,
and thus incident to a ledge.
Suppose for contradiction that
every depth incident to $x$ is equal to the same~$k$, and thus
every incident $P_i$-depth is $\geq k$ and every cell or edge exterior
has either $P_1$-depth or $P_2$-depth equal to~$k$;
refer to Figure~\ref{fig:remainder forced vertex}.
\begin{figure}
  \centering
  \includegraphics{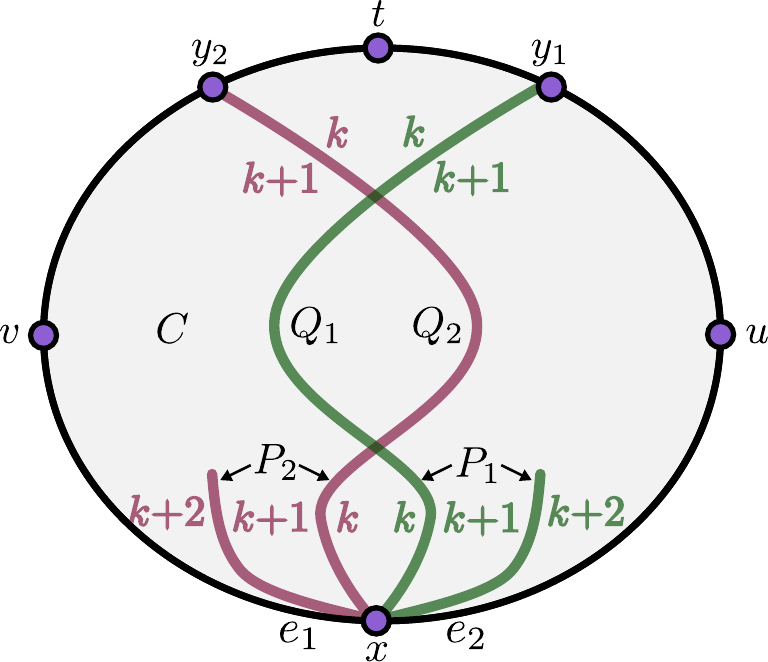}
  \caption{An impossible situation in the proof of
    Lemma~\ref{thm:monominoes-canonical}: vertex $x$ visited by both
    paths $P_1$ and $P_2$ incident to only a single depth~$k$.}
  \label{fig:remainder forced vertex}
\end{figure}
Because the $P_i$-depth changes by $\pm 1$ across each edge of~$P_i$,
$x$ must be incident to a $P_i$-depth of $> k$ as well.
Let $e_1$ and $e_2$ be the two edges of the outline of $C$ incident to~$x$
in counterclockwise order around the outline of~$C$, so that the edges
incident to $x$ proceed clockwise from $e_1$ to~$e_2$.
Assume by symmetry that the exterior $P_1$-depth of $e_1$ is~$k$.
By the $< k, \geq k, < k$ claim applied to $P_1$-depth at~$x$,
the exterior $P_1$-depth of $e_2$ cannot be~$k$; otherwise, there would be
a $P_1$-depth of $> k$ in between $e_1$ and~$e_2$ of exterior $P_1$-depths~$k$.
Thus, the $P_1$-depths clockwise around $x$ must proceed $k, k+1, k+2$.
The exterior depth of $e_2$ is~$k$,
while the exterior $P_1$-depth of $e_2$ is $k+2$,
so the exterior $P_2$-depth of $e_2$ must be~$k$.
By a symmetric argument,
the $P_2$-depths clockwise around $x$ must proceed $k+2, k+1, k$.
Because there are no incident depths $> k$,
the transition of $P_2$-depth from $k+2$ to $k$ must occur fully before
(counterclockwise of) any of the transition of $P_1$-depth from $k$ to $k+2$.
Because of the transition from $P_i$-depth $k$ to $P_i$-depth $k+1$,
each $P_i$ has a subpath $Q_i$ between $x$ and a point $y_i$ on
the outline of $C$, without visiting any vertices on the outline of $C$
in between, where $Q_i$ has $P_i$-depth $k$
locally on one side and $P_i$-depth $k+1$ locally on the other side.
Because of the transition from $P_i$-depth $k+1$ to $P_i$-depth $k+2$,
locally at~$x$, $P_i$ proceeds strictly to the $k+1$ side of~$Q_i$,
and by planarity and simplicity of $P_i$, one endpoint of $P_i$ ($u$ or~$v$)
must be strictly on the $k+1$ side of~$P_i$.
Because $u,v,t$ appear in clockwise order on the outline of $C$,
$u$ must be strictly on the $k+1$ side of $Q_1$
while $v$ must be strictly on the $k+1$ side of~$Q_2$.
Because $t$ is on the $k$ side of both $Q_1$ and $Q_2$, we must have
$t, y_1, u, x, v, y_2, t$ appearing in clockwise order on the outline of~$C$.
Because $P_2$ does not visit the clockwise interval $(v,t]$,
we must have $v = y_2$.
But then $P_2$ consists of $Q_2$ preceded by a path strictly on the
$k+1$ side of~$Q_2$, so $P_2$ cannot reach a vertex on the outline of $C$
on the $k+1$ side of~$Q_1$, so in particular cannot reach~$u$,
a contradiction.

Now that $P_3$ is a path from $u$ to $v$ that visits all forced vertices and
edges, we just need to check a few more properties.
The remainder of $P_3$ contains all cells of filled depth $0$
because such a cell can be connected to $t$ via a curve
that does not cross any ledges; therefore, the remainder of $P_3$
includes all cells of depth $0$, and thus all cells of $P_i$-depth $0$,
which is the remainder of~$P_i$.
Every edge in path $P_3$ is a ledge and thus an edge of $P_1$ or $P_2$,
so $P_3$ uses no broken edges and
shares the property of not visiting the outline of $C$ in the
clockwise interval $(v,t]$ (or any larger forbidden interval).
Therefore, $P_3$ is the path we were searching for.
\end{proof}

Now we give our dynamic programming algorithm for generalized Witness
puzzles with boundary hexagons and broken edges.

\begin{theorem} \label{thm:monominoes-dp}
Given a forced division $(C,F)$ with possibly broken edges,
for any two vertices $s$ and $t$ on the outline of $C$, we can decide in polynomial time whether there exists a $(C,F)$-path from $s$ to $t$.
\end{theorem}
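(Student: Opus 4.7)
The plan is a memoized recursive algorithm whose correctness is driven entirely by the canonical maximum remainder of Lemma \ref{thm:monominoes-canonical}. The core subproblem is $\textsc{Solve}(D, G, u, t)$: decide whether there exists a $(D, G)$-path from $u$ to $t$, where $(D, G)$ is a forced sub-division of $(C, F)$ arising as an iterated maximum remainder and $u, t$ lie on its outline. The base case $G = \emptyset$ reduces to ordinary $(u, t)$-reachability in $D$ (respecting broken edges), which is polynomial.

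In the recursive step, iterate over each candidate ``first outline revisit'' $v$ in the clockwise arc of the outline of $D$ from $u$ up through $t$ (allowing $v = t$, which corresponds to a final segment going directly to $t$). By Lemma \ref{thm:monominoes-canonical}, among all $(D, G)$-paths from $u$ to $v$ that avoid the open clockwise interval $(v, t]$, there is a unique maximum $t$-remainder $R^*_{u,v,t}$. Compute it using the ledge construction in the lemma's proof, set $G^* = G \cap R^*_{u,v,t}$, and recurse on $\textsc{Solve}(R^*_{u,v,t}, G^*, v, t)$; return \textsc{yes} if any choice of $v$ succeeds. Correctness is immediate: any valid $(D, G)$-path from $u$ to $t$ first returns to the outline at some $v$ on the arc $(u, t]$, and the lemma guarantees that its prefix can be replaced by one whose $t$-remainder is $R^*_{u,v,t}$ without losing completability of the rest of the path. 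To compute $R^*_{u,v,t}$ in polynomial time, I would repeatedly merge candidate path pairs $P_1, P_2$ via the depth/cavity-filling/ledge construction from the lemma's proof, updating the reigning path until the resulting canonical path $P_3$ stabilizes into a fixed point.

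The main obstacle is bounding the memoized state space to polynomial size. I plan to prove, by induction on recursion depth, that every sub-division $(D, G)$ visited by the algorithm is uniquely characterized by an $O(1)$-tuple of vertices on the \emph{original} outline of $C$ (specifically the two most recent ``anchor'' outline vertices through which the canonical separator passes, together with the fixed target $t$). Using the uniqueness clause of Lemma \ref{thm:monominoes-canonical} applied inside each nested sub-division, the iterated maximum remainders should collapse to a single canonical region determined solely by its anchor pair in $(C, F)$. Once this inductive collapse is established, the number of distinct subproblems is $O(n^3)$, each is solved in polynomial time via the ledge construction, and the overall algorithm runs in polynomial time. Verifying this collapse — equivalently, showing that ``max-remainder inside a max-remainder'' agrees with ``max-remainder in $C$'' relative to the appropriate anchors — is where the bulk of the technical effort will lie.
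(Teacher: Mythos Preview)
Your high-level plan (recursion guided by the maximum-remainder lemma) matches the paper's, but two of your steps are genuine gaps rather than details to be filled in.

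\textbf{Computing $R^*$.} The lemma's ledge construction takes \emph{two given} paths $P_1,P_2$ and produces a third whose remainder contains both. It is not an algorithm for finding the maximum remainder from scratch: your ``merge until a fixed point'' needs a supply of candidate $(D,G)$-paths to merge, and finding even one such path is exactly the subproblem you are trying to solve. The paper never computes $R^*$ this way. Instead it splits the desired max-remainder path at its \emph{last} outline vertex $x$ before the target: the prefix $s\to x$ is obtained recursively, and the suffix $x\to b$ --- which by choice of $x$ touches the outline only at its endpoints and therefore has no forcing to satisfy --- is found greedily by depth-first search with the left-hand rule. The uniqueness clause of the lemma is used only to justify a cut-and-paste argument (replacing a hypothetical optimal prefix by the recursively computed one), not as a computational device.

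\textbf{State space.} The paper avoids your ``inductive collapse'' entirely by never recursing on a sub-region. All subproblems live in the original $C$ and are parameterized by a clockwise interval $[a,b]$ of the original outline (plus one bit for which endpoint the path ends at): find a max-remainder $(C,\,F\cap[a,b])$-path from $s$ to $b$ (or to $a$) that avoids the outline on $(b,a)$. This is $O(m^2)$ subproblems by construction --- no collapse lemma needed. Your parameterization by the remainder region $D$ is the reason you are stuck proving that iterated max-remainders are determined by $O(1)$ outline vertices; the paper simply chooses the interval as the state from the outset.

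There is also a mismatch in your recursion. If $v$ is literally the \emph{first} outline revisit after $u$, then the $u\to v$ segment visits no forced vertices/edges, so invoking the lemma on ``$(D,G)$-paths from $u$ to $v$'' (which must visit all of $G$) is not what you want; and if instead the $u\to v$ piece is a full $(D,G)$-path, then $G^*=G\cap R^*$ is empty and there is nothing left to recurse on. The paper's interval parameter $[a,b]$ is precisely what tells each subproblem which subset $F\cap[a,b]$ of forced elements it is responsible for.
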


\begin{proof}
%
We use dynamic programming to solve this problem.
For every clockwise interval $[a,b]$ of the outline of $C$
containing $s$ and not strictly containing~$t$ (i.e., $t \notin (a,b)$),
we define two subproblems based on Lemma~\ref{thm:monominoes-canonical}:
\begin{itemize}
\item find a maximum-remainder $(C,F \cap [a,b])$-path from $s$ to $b$
      that does not visit the outline of $C$ in the clockwise interval $(b,a)$,
      or report that no such path exists;
  and
\item find a maximum-remainder $(C,F \cap [a,b])$-path from $s$ to $a$
      that does not visit the outline of $C$ in the clockwise interval $(b,a)$,
      or report that no such path exists.
\end{itemize}
To solve the original problem, we apply either subproblem on the clockwise
interval $[t,t]$
(understood to mean the entire outline of~$C$, starting and ending at~$t$).
In this case, $a=b=t$ and the clockwise interval $(b,a) = \emptyset$,
and the remainder is undefined, so the subproblem statement is exactly to
find a $(C,F)$-path from $s$ to~$t$.

To find a maximum-remainder $(C,F \cap [a,b])$-path $P^*$ from $s$ to $b$
not visiting $(b,a)$, if such a path exists, we guess (try all options for)
the last vertex $x$ before $b$ of the outline of $C$ visited by the path~$P^*$.
Such a vertex $x \in [a,b)$ exists because $s$ is a candidate for $x$,
except in the base case $s=b$,
where we simply return the empty path from $s$ to~$s$.
Vertex $x$ splits $P^*$ into a path $P^*_1$ from $s$ to $x$
followed by a path $P^*_2$ from $x$ to~$b$.
The latter path $P^*_2$ visits the outline of $C$ only at its endpoints,
except when $P^*_2$ is a single edge $(b-1,b)$
where $b-1$ denotes the clockwise previous vertex
before $b$ on the outline of~$C$.
We divide into two overlapping cases based on the relation between
$x \in [a,b)$ and $s$:
\begin{enumerate}
\item If $x$ is in the clockwise interval $[s,b)$,
  then there must not be any forced vertices/edges
  in the clockwise interval $(x,b)$,
  except possibly a single forced edge $(b-1,b)$ when $x=b-1$.
  Otherwise, $P^*_2$ could not include any such forced vertices/edges
  (as it cannot visit the outline of $C$ in that interval)
  and $P^*_2$ effectively ``hides'' such forced vertices/edges from $P^*_1$
  (as $s$ is outside the cycle formed by $P^*_2$ and $[x,b]$).
  Thus, $P^*_1$ can visit forced vertices/edges
  only in the interval $I = [a,x]$.
\item If $x$ is in the clockwise interval $[a,s]$,
  then there must not be any forced vertices/edges
  in the clockwise interval $(a,x)$.
  Otherwise, $P^*_2$ cannot include any such forced vertices/edges
  (as it cannot visit the outline of $C$ in that interval)
  and $P^*_2$ effectively ``hides'' such forced vertices/edges from $P^*_1$
  (as $s$ is outside the cycle formed by $P^*_2$ and $[b,x]$).
  Furthermore, the edge $(b-1,b)$ cannot be a forced edge unless $x=b-1=s$:
  it could not be visited by $P^*_2$
  (as it includes no edges of the outline of~$C$)
  nor by $P^*_1$ (or else $P^*$ would be a cycle, not a simple path).
  Thus, $P^*_1$ can visit forced edges only in the interval $I = [x,b-1]$.
\item If $x = s$, then the constraints of both above cases must hold:
  there must not be any forced vertices/edges in the clockwise interval $(a,b)$,
  except possibly a single forced vertex $s$ and a
  single forced edge $(b-1,b)$ when $x=b-1=s$.
  Furthermore, $P^*_1$ must be the trivial path from $s$ to~$s$, so
  it can visit forced vertices/edges only in the trivial interval $I = [s,s]$.
\end{enumerate}
If any of the stated constraints do not hold, then this choice of $x$ fails.

Otherwise, we recursively find a maximum-remainder $(C,F \cap I)$ path
$P_1$ from $s$ to $x$ that does not visit the outline of $C$
outside the interval $I$ defined above in each case.
Note that $I$ is strictly contained in $[a,b]$ in all cases, so the recursive
calls cannot form a cycle.
By a cut-and-paste argument, we can assume $P_1 = P^*_1$, i.e., that
the path $P_1$ we find matches the prefix $P^*_1$ of the desired path~$P^*$.
Otherwise, modify $P^*$ by replacing $P^*_1$ with $P_1$, resulting in
an equally suitable maximum-remainder $(C,F \cap [a,b])$-path $P^*$
that does not visit $(b,a)$.
First, the remainder of $P^*$ is the remainder of $P^*_2$, so remains unchanged.
Second, $P^*_2$ lies in the remainder of $P^*_1$, which is contained within
the (maximum) remainder of~$P_1$, so $P^*$ remains non-self-crossing.

Next we depth-first search using the left-hand rule
to find a maximum-remainder path $P_2$ from $x$ to $b$
while avoiding all broken edges, all vertices on the outline of~$C$,
and all vertices of~$P_1$.
If the depth-first search fails to find such a path,
then this choice of $x$ fails.
Otherwise, we claim that $P_2 = P^*_2$.
At each step, the depth-first search makes the leftmost choice
that can be completed into a path.  Hence,
if $P^*_2$ deviates from $P_2$ at any step,
$P^*_2$ must be to the right of $P_2$ at that step.
But that deviation is incident to a cell in $P$'s remainder
that is not in $P^*$'s remainder
(because $P_2$ does not visit the outline of $C$ until~$b$,
and the interval $(b,a)$ has not been visited by $P_1$ or $P_2$),
contradicting that $P^*$ is maximum-remainder.
Here we crucially use Lemma~\ref{thm:monominoes-canonical} that
there is a unique maximum remainder according to the subset relation,
not two possible remainders that are mutually incomparable.

Finally, we concatenate $P_1$ and $P_2$ to form a path $P$,
which is a candidate for $P^*$ that (as argued above)
equals $P^*$ assuming our guess for $x$ was correct.
By trying all choices for $x$, and returning the resulting path $P$
that has the maximum remainder, we are sure to find $P^*$ if it exists.

The other type of subproblem,
to find a maximum-remainder $(C,F \cap [a,b])$-path $P^*$ from $s$ to $a$
not visiting $(b,a)$, is symmetric.

Overall, if $C$ has $n$ cells and $m = O(n)$ edges on the outline,
then there are $O(m^2)$ subproblems, $O(m)$ choices per subproblem, and
the depth-first search spends $O(n)$ time per subproblem and choice.
Therefore, the total running time is $O(m^3 n) = O(n^4)$.
\end{proof}

\section{Squares}

\abstractlater{
  \section{Proofs: Squares}
  \label{appendix:squares}
}

Each square clue has a color and is placed on a cell of the puzzle.
Each region formed by the solution path and puzzle boundary must have at most
one color of squares.  In this section, we prove NP-hardness of square clues
of two colors.  This result is tight given the following:

\begin{observation} \label{obs:1 square}
  Witness puzzles containing only squares of one color
  do not constrain the path, so are always solvable.
\end{observation}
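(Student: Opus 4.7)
The plan is essentially immediate from the definitions, so my proof proposal is short. First I would unpack the square-clue constraint as stated in Table~\ref{clue types}: a square clue of a given color forbids its region from containing any square of a \emph{different} color. When all square clues in the puzzle share a single color, every region (however the path happens to carve up the board) contains only squares of that one color, so every square constraint is vacuously satisfied, regardless of the choice of solution path.

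Having dispensed with the clue constraints, the only remaining requirement is the existence of a simple path from some start circle to some end cap. Since a Witness board is a full $x \times y$ rectangular grid graph, it is connected, and the start circle sits on a grid vertex while the end cap sits at a boundary half-edge attached to a grid vertex. Thus one can always exhibit a concrete simple path between them, for instance by routing along the boundary of the rectangle. This path satisfies every clue, so it is a valid solution.

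There is no real obstacle: the argument is a direct application of the definition of the square constraint together with the observation that the rectangular grid is connected. The only minor point to flag explicitly is the standing assumption that the board is a full rectangle with no broken edges and at least one start circle and end cap, which is precisely the setting of the observation.
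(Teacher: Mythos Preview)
Your argument is correct and matches the paper's intent: the observation is stated without proof in the paper because it follows immediately from the square-clue definition, exactly as you have spelled out. There is nothing to add.
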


\def\TRVB{Tree-Residue Vertex Breaking}
\def\trvb{tree-residue vertex breaking}

\subsection{\TRVB}

Our reduction is from \emph{\trvb} \cite{TRVB}.
Define \emph{breaking} a vertex of degree $d$ to be the operation of replacing
that vertex with $d$ vertices, each of degree $1$, with the neighbors of the vertex becoming neighbors of these replacement vertices in a one-to-one way.
The input to the \trvb\ problem is a planar multigraph
in which each vertex is labeled as ``breakable'' or ``unbreakable''.
The goal is to determine whether there exists a subset of the breakable vertices
such that breaking those vertices (and no others) results in the graph
becoming a tree (i.e., destroying all cycles without losing connectivity).
This problem is NP-hard even if all vertices are degree-$4$ breakable
vertices \cite{TRVB}.

\subsection{Squares with Squares of Two Colors}

\begin{theorem} \label{thm:squares 2 color}
  It is NP-complete to solve Witness puzzles containing only squares of two colors.
\end{theorem}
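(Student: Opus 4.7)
The plan is to reduce from \TRVB\ (TRVB) restricted to planar $4$-regular graphs in which every vertex is breakable, which is NP-hard. Given such an instance $G$, I will compute a planar embedding and scale it into a rectangular region of the grid, so that each vertex of $G$ becomes a local \emph{vertex gadget} and each edge becomes an \emph{edge corridor} running along the embedded edge. The start circle and end cap are placed adjacent to each other at a single designated location on the outer boundary, so that any solution path is effectively a simple closed Jordan curve cut at one point.

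Each vertex gadget must offer the solution path exactly two qualitatively distinct routings compatible with its four incident corridors: a ``not broken'' routing that keeps the four surrounding face regions locally separated, and a ``broken'' routing that locally merges them into one region. The choice across all vertices corresponds to selecting a subset $S$ of breakable vertices in the TRVB instance. I will then place square clues of the two colors $A$ and $B$ in the faces of the planar drawing (and along the exterior) so that the global per-region monochromaticity requirement forces the regions of the Witness solution to match the face structure of the graph after breaking exactly the chosen subset $S$. Intuitively, the two colors will distinguish ``interior'' from ``exterior'' in such a way that a legal monochromatic decomposition exists if and only if $G$ with $S$ broken has a single face, i.e., is a tree.

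For the forward direction, given a valid TRVB solution $S$, the residue tree has a simple closed Eulerian tour as the boundary of its unique face; I route the Witness path along this tour inside the corridors, with each vertex gadget set to ``broken'' for vertices in $S$ and ``not broken'' otherwise, and open the closed curve at the designated start/end location. The resulting simple path induces regions that are monochromatic by construction. For the reverse direction, any Witness solution yields a candidate $S$ through the local gadget routings; simplicity and connectivity of the path guarantee the residue has no disconnected components, and the monochromaticity of every region certifies that the broken graph has a single face, hence is a tree.

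The main obstacle will be the joint design of the vertex gadget and the placement of squares around it. The gadget must admit \emph{exactly} the two intended routings and no cheating alternatives; it must cleanly interface with all four incident corridors in both modes regardless of which pairs of corridors the path uses; and, together with the square clues, it must rule out locally legal configurations that do not correspond to a genuine break/not-broken decision. Careful case analysis will be needed to verify that no spurious path satisfies the monochromaticity constraints globally while encoding an invalid or inconsistent breaking, and to confirm that connectivity of the residue (as well as acyclicity) is correctly enforced by the fact that a single simple path separates the color classes.
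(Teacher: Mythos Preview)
Your proposal is correct and follows essentially the same approach as the paper: both reduce from \TRVB\ on planar graphs with degree-$4$ breakable vertices, use two-state vertex gadgets whose local routings encode broken/unbroken, route the solution path as an Euler tour of the residue graph, and use the two square colors to separate the interior of the tour from the exterior. The paper's contribution beyond your outline is precisely the explicit gadget you identify as the main obstacle (Figures~\ref{squares-breakable} and~\ref{squares-edge}), together with the observation that any edge between opposite-colored cells is forced into the path, which is what makes the local case analysis tractable.
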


Concurrent work~\cite{eurocg} also proves this theorem.
However, we prove this by showing that the stronger \emph{Restricted Squares
  Problem} is also hard, which will be useful to reduce from in Section~\ref{sec:stars}.

\begin{problem}[Restricted Squares Problem]
An instance of the \emph{Restricted Squares Problem} is a Witness puzzle
containing only squares of two colors (red and blue), where each cell in the
leftmost and rightmost columns, and each cell in the
  topmost or bottommost rows, contains a square clue; and of these square clues,
  exactly one is blue, and that square clue is not in a corner cell; and the
  start vertex and end cap are the two boundary vertices incident to that
  blue square; see Figure~\ref{squares boundary}.
\end{problem}

\both{
\begin{theorem}
The Restricted Squares Problem is NP-complete.
\end{theorem}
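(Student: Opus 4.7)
The plan is to reduce from \TRVB\ (TRVB), which is NP-hard even when restricted to planar multigraphs in which every vertex has degree $4$ and is breakable. Given such an instance $G$ together with a planar embedding, we will construct a Restricted Squares Problem instance that is solvable if and only if $G$ admits a set of vertex breakings whose result is a tree. Membership in NP of the Restricted Squares Problem follows from the more general Observation~\ref{all-but-antibody-np}, so only hardness requires work.

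First, embed $G$ in a large rectangular grid so each vertex sits inside its own chamber and each edge follows a disjoint grid corridor between its two chambers. Each degree-$4$ breakable vertex is replaced by a \emph{vertex gadget} with four ``ports'' corresponding to its incident edges. The gadget is designed so that the only legal ways for the solution path to interact with it are two local ``modes'': an \emph{intact} mode, in which the path joins the four ports so that the four incident edges of $G$ remain topologically attached through a single region around the vertex, and a \emph{broken} mode, in which the path severs the ports pairwise, so that the vertex's four incident half-edges become topologically separated. Each edge of $G$ is replaced by a \emph{wire gadget}, a thin corridor of cells between the chambers of its endpoints whose two sides correspond to the two local faces that the edge bounds in the embedding.

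Next, place the unique blue square on some non-corner boundary cell with the start circle and end cap at its two boundary vertices, and fill every remaining cell in the topmost row, bottommost row, leftmost column, and rightmost column with a red square; the overall shape is rectangular, so this meets the problem's formatting conditions. The key correspondence is that the regions cut out by any candidate solution path are in bijection with the faces of the planar graph $G'$ obtained from $G$ by breaking exactly those vertices whose gadgets are traversed in broken mode. The square constraints then translate directly into the tree condition on $G'$: because the blue square is the unique square of its color and sits between start and end, the region carrying the blue square must be square-free in red (so $G'$ has exactly one red-square-free face, i.e., is acyclic), while the assignment of red squares around the boundary of the rectangle guarantees that each other face of $G'$ is populated by red squares in a monochromatic way iff the corresponding regions remain merged by connectivity in $G'$ (so $G'$ must be connected). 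Together these say $G'$ is a tree.

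The main obstacle will be designing the vertex gadget rigorously so that (i) both intended local modes are realizable by a simple path, and (ii) no other local path pattern through the gadget is compatible with the surrounding red squares while still yielding a globally monochromatic region structure; some care is required to prevent the path from ``leaking'' between ports in an unintended way, which is usually handled by a dense ring of red squares that forces the interior of the gadget to be fully partitioned. A secondary difficulty is verifying that the wire gadgets and chamber boundaries make the bijection between Witness regions and faces of $G'$ airtight, with no tiny spurious regions that might harbor squares of the wrong color. Once both gadgets are verified, the equivalence between Witness solutions and tree-producing breakings of $G$ is immediate, and the reduction clearly runs in polynomial time.
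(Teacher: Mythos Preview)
Your choice of source problem (planar degree-$4$ breakable TRVB) and the high-level shape of the reduction match the paper. However, what you have written is a plan rather than a proof: you explicitly defer the construction of the vertex gadget (``The main obstacle will be designing the vertex gadget rigorously\ldots''), and without that gadget nothing can be verified. The paper's gadgets (Figures~\ref{squares-breakable} and~\ref{squares-edge}) are small and concrete, built from blue squares and empty cells on a background of red squares, and their local correctness follows from one observation: any grid edge shared by a red-square cell and a blue-square cell is forced into the solution path.

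More substantively, your global correctness argument is framed around a bijection ``Witness regions $\leftrightarrow$ faces of $G'$'', with acyclicity read off from the blue region being the unique red-free face and connectivity from a claim about red squares on the boundary forcing regions to ``remain merged by connectivity in $G'$''. That last step is not an argument as written, and the overall framing is not how the reduction works. In the paper's construction a valid solution has exactly \emph{two} regions---one blue interior and one red exterior---not one per face of $G'$. The mechanism is dual to what you describe: the forced edges (all edges between a red cell and a blue cell) trace the boundary of the thickened embedding of $G'$, i.e., an Euler tour of $G'$ as a ribbon graph. That collection of forced edges forms a single cycle, and hence can be realized as a single start-to-end path, if and only if $G'$ is connected and acyclic. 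So the tree condition is enforced not by counting monochromatic regions but by the requirement that the solution be one simple path.
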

}

\ifabstract
\begin{proofsketch}
We reduce from \trvb{} and construct gadgets for an unbreakable degree~3 vertex (Figure~\ref{squares-unbreakable})
and a breakable degree~4 vertex (Figure~\ref{squares-breakable}) out of squares. We force the solution
path to take an Euler tour of these gadgets, which can only be done if the underlying \trvb{} graph is a tree.
The full proof can be found in Appendix~\ref{appendix:squares}.
\end{proofsketch}
\fi

  
\begin{figure}
\centering
\subcaptionbox{\label{squares breakable degree 4} Unsolved gadget}{%
  \includegraphics[width=.3\textwidth]{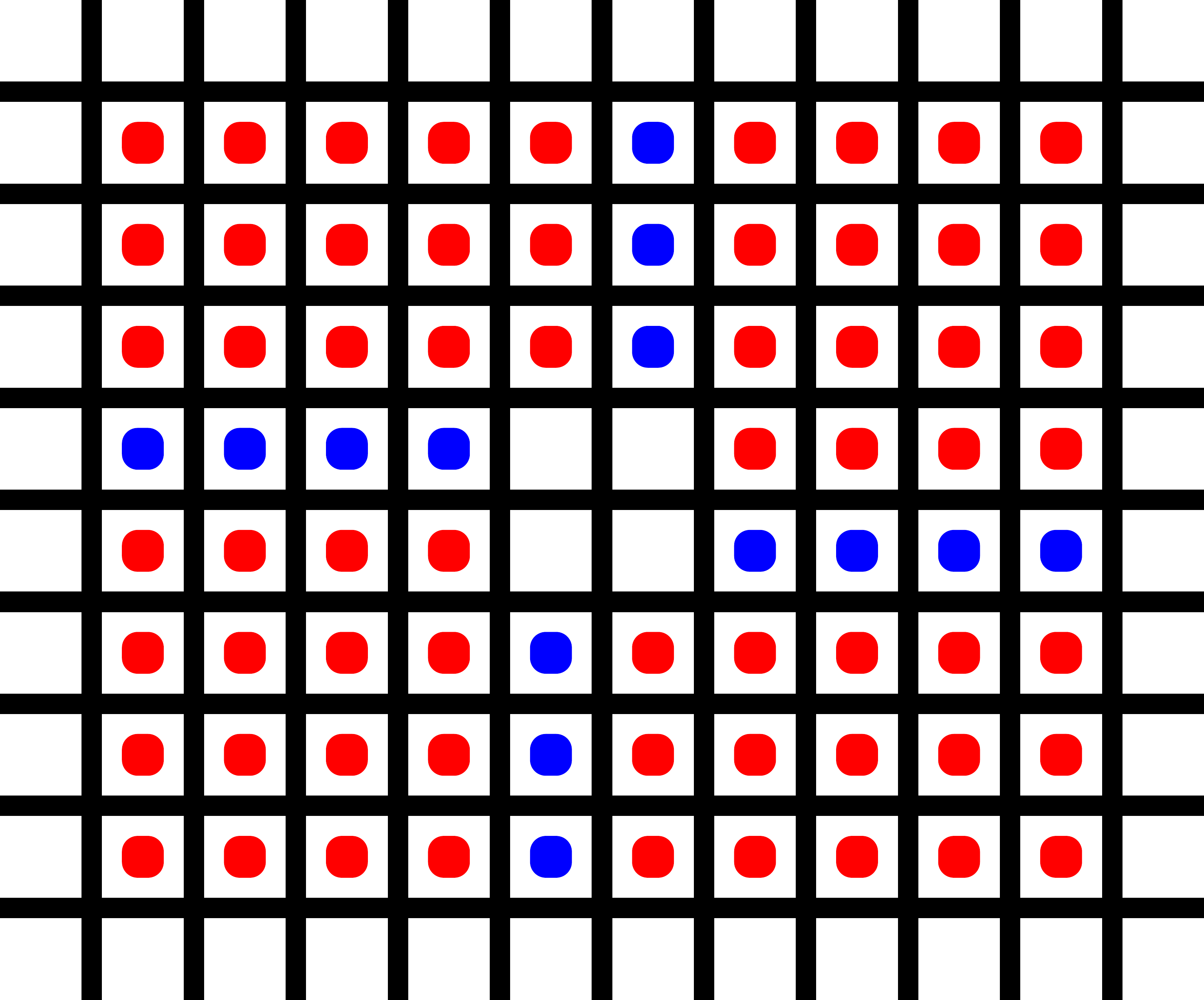}%
}\hfill
\subcaptionbox{\label{squares breakable degree 4 unbroken} Unbroken solution path.}{%
  \includegraphics[width=.3\textwidth]{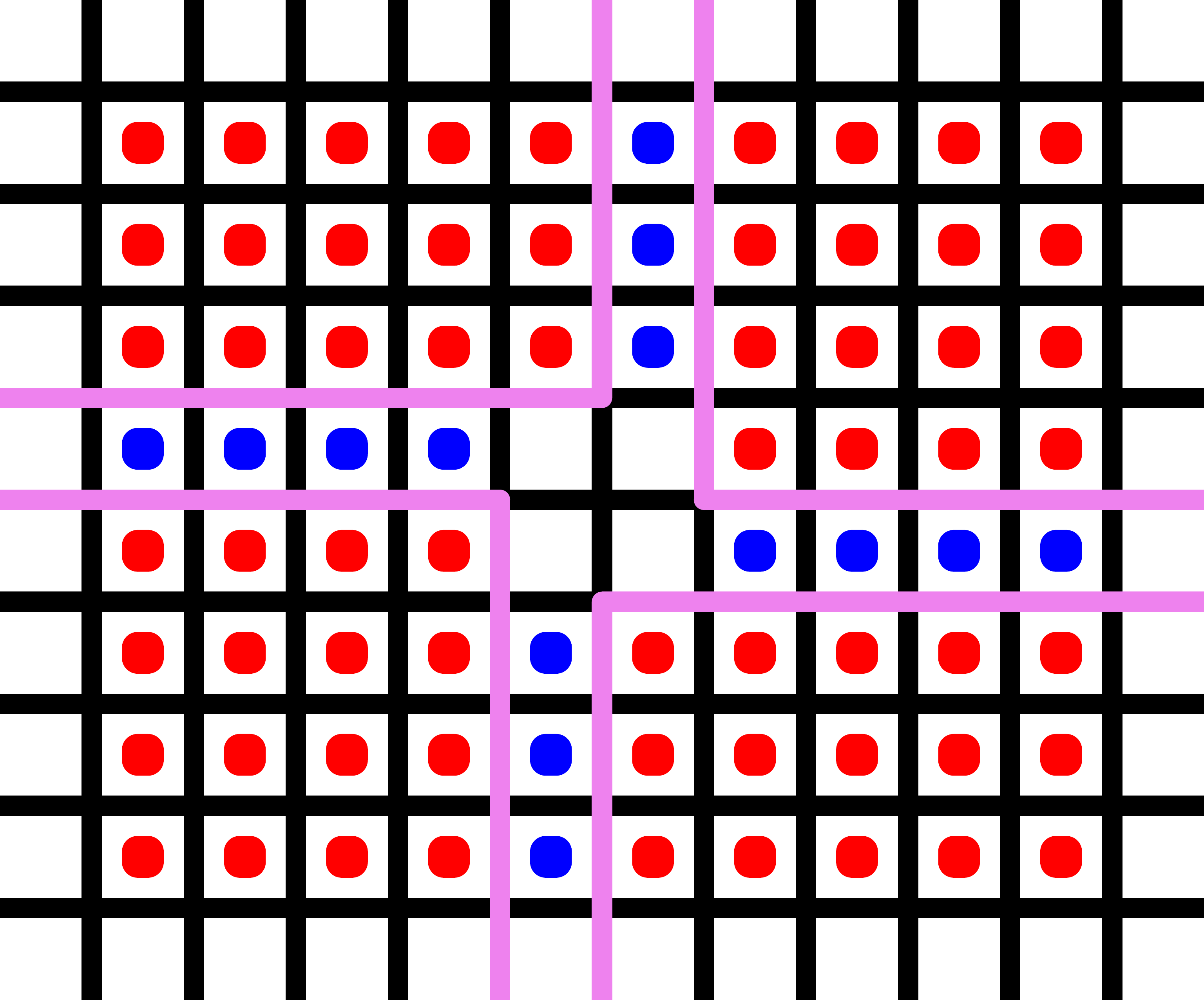}%
}\hfill
\subcaptionbox{\label{squares breakable degree 4 broken} Broken solution path.}{%
  \includegraphics[width=.3\textwidth]{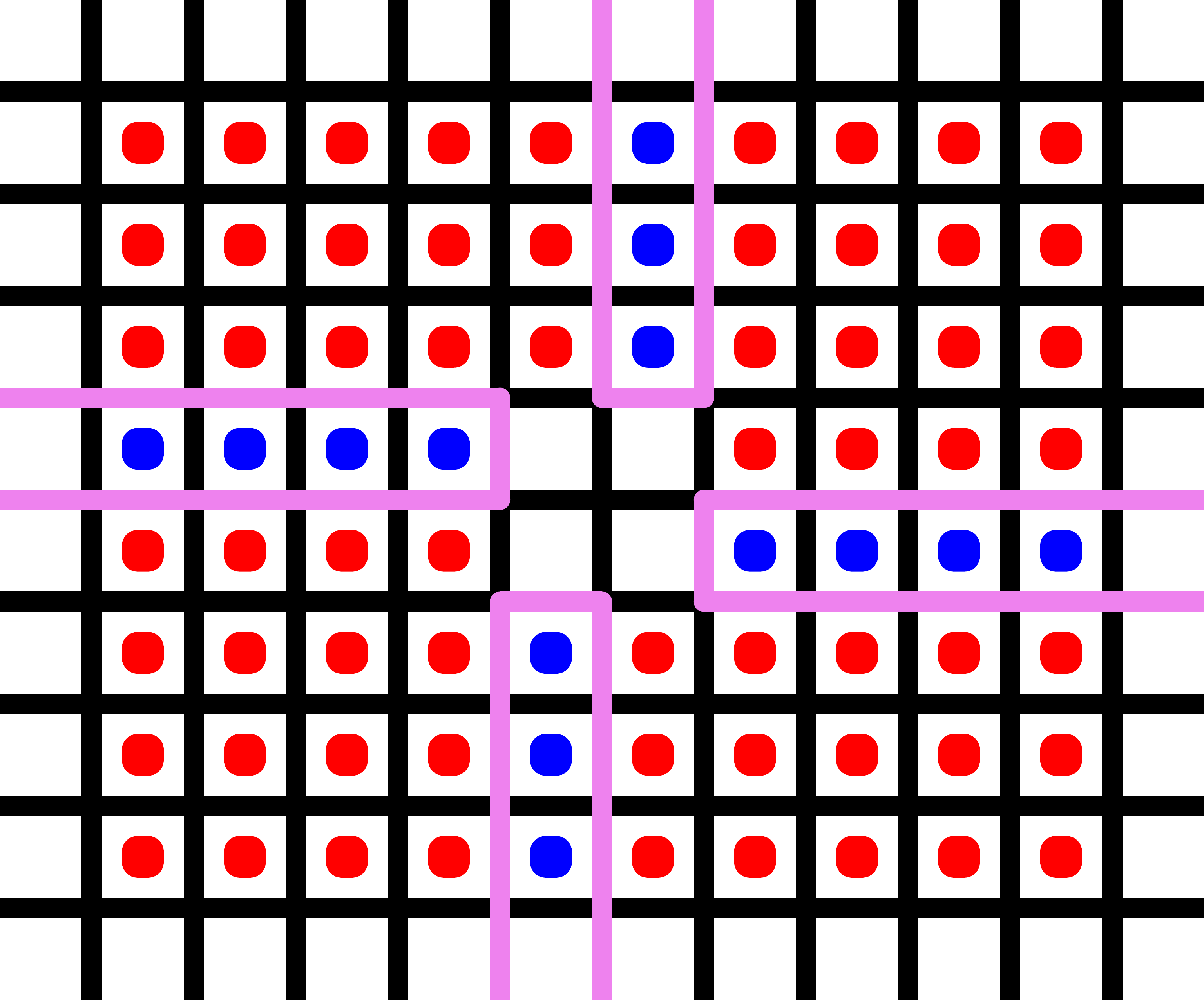}%
}
\caption{Breakable degree-4 vertex gadget.}
\label{squares-breakable}
\end{figure}

\begin{figure}
\centering
\subcaptionbox{\label{squares wire} Edge gadget}{%
  \hspace{.5cm}%
  \includegraphics[width=.3\textwidth]{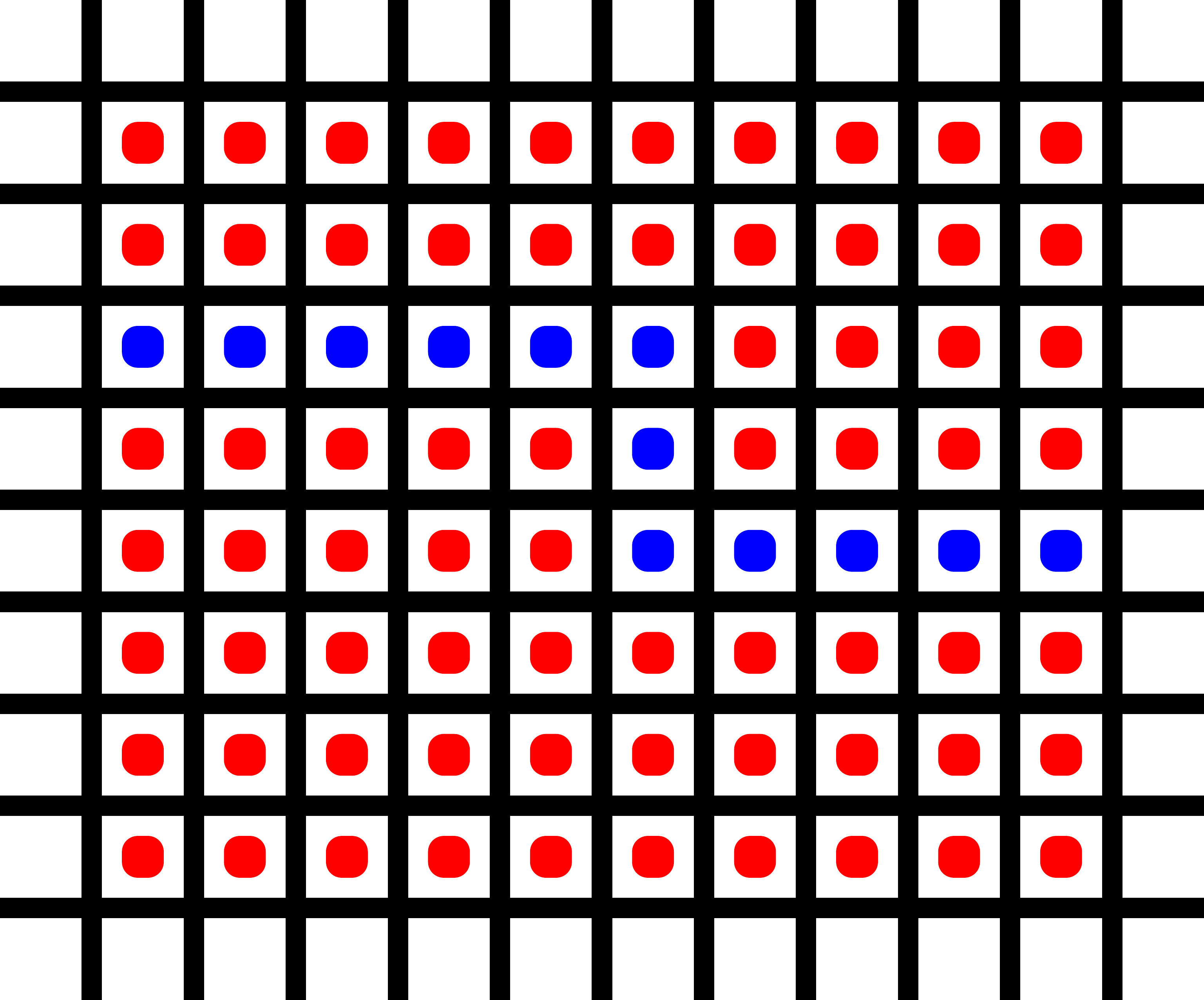}%
  \hspace{.5cm}%
}~~~
\subcaptionbox{\label{squares wire solution} Local solution}{%
  \hspace{.5cm}%
  \includegraphics[width=.3\textwidth]{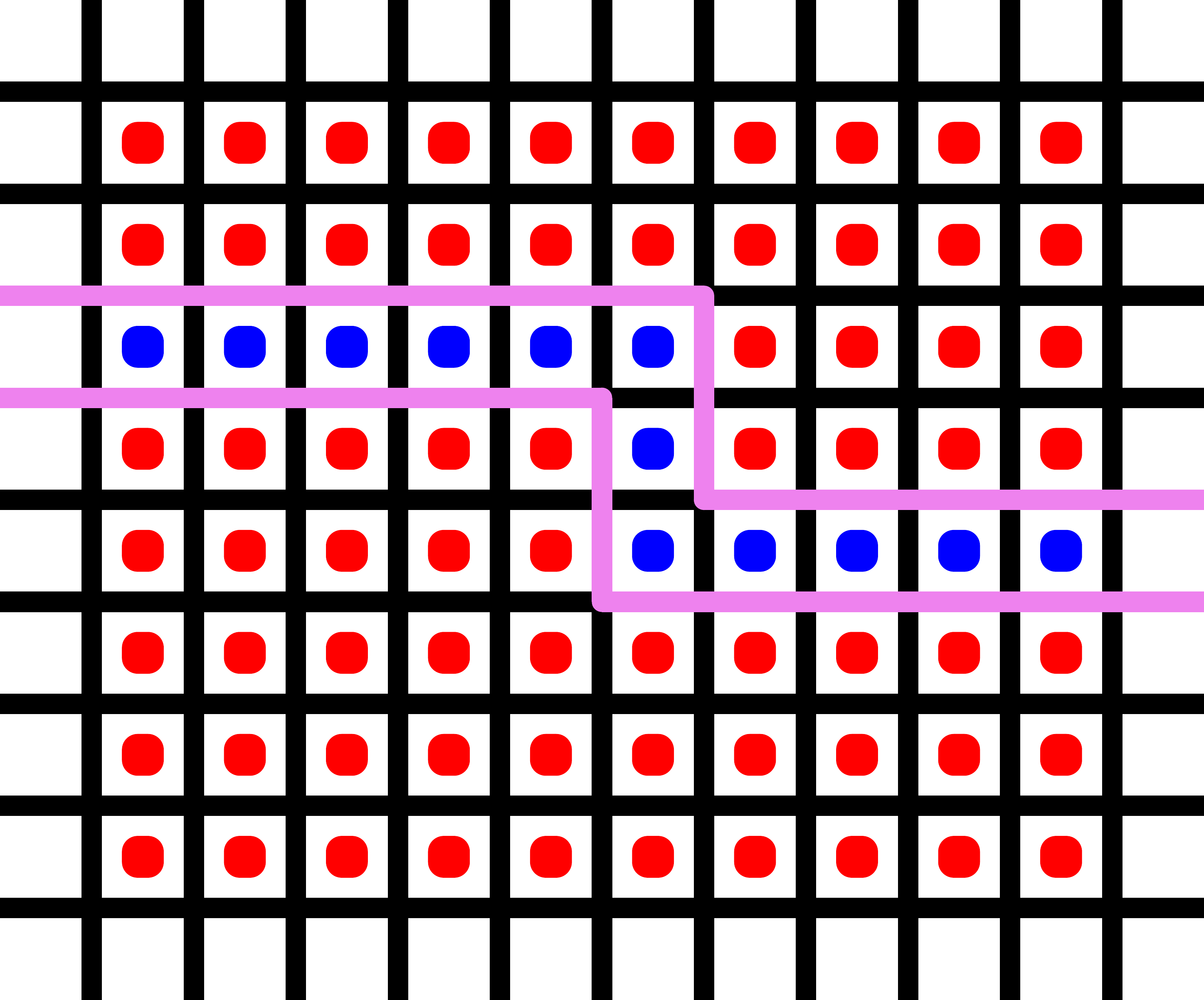}%
  \hspace{.5cm}%
}
\caption{Edge gadget.}
\label{squares-edge}
\end{figure}

\later{
\begin{proof}
  We reduce from \trvb\ on a planar multigraph with
  breakable vertices of degree~4.
  The overall plan is to fill the board with red squares, and then
  embed the multigraph into the strict interior of the board
  (i.e., without using the extreme rows and columns).
  The gadgets representing the graph use a combination of blue squares and
  empty cells.
  Figures~\ref{squares-breakable} and~\ref{squares-edge} illustrate
  the gadgets for a breakable vertex of degree~4 and an edge (including turns), respectively.
  The square constraints dictate that any solution path must traverse the
  edge shared by two cells with squares of opposite color.
  By this property, the only possible local solutions to the gadgets
  are the ones shown in the figures.
 
\begin{figure}
\centering
\subcaptionbox{\label{squares boundary} Boundary of the Restricted Squares Problem}{%
  \hspace{.5cm}%
  \includegraphics[width=.3\textwidth,angle=90]{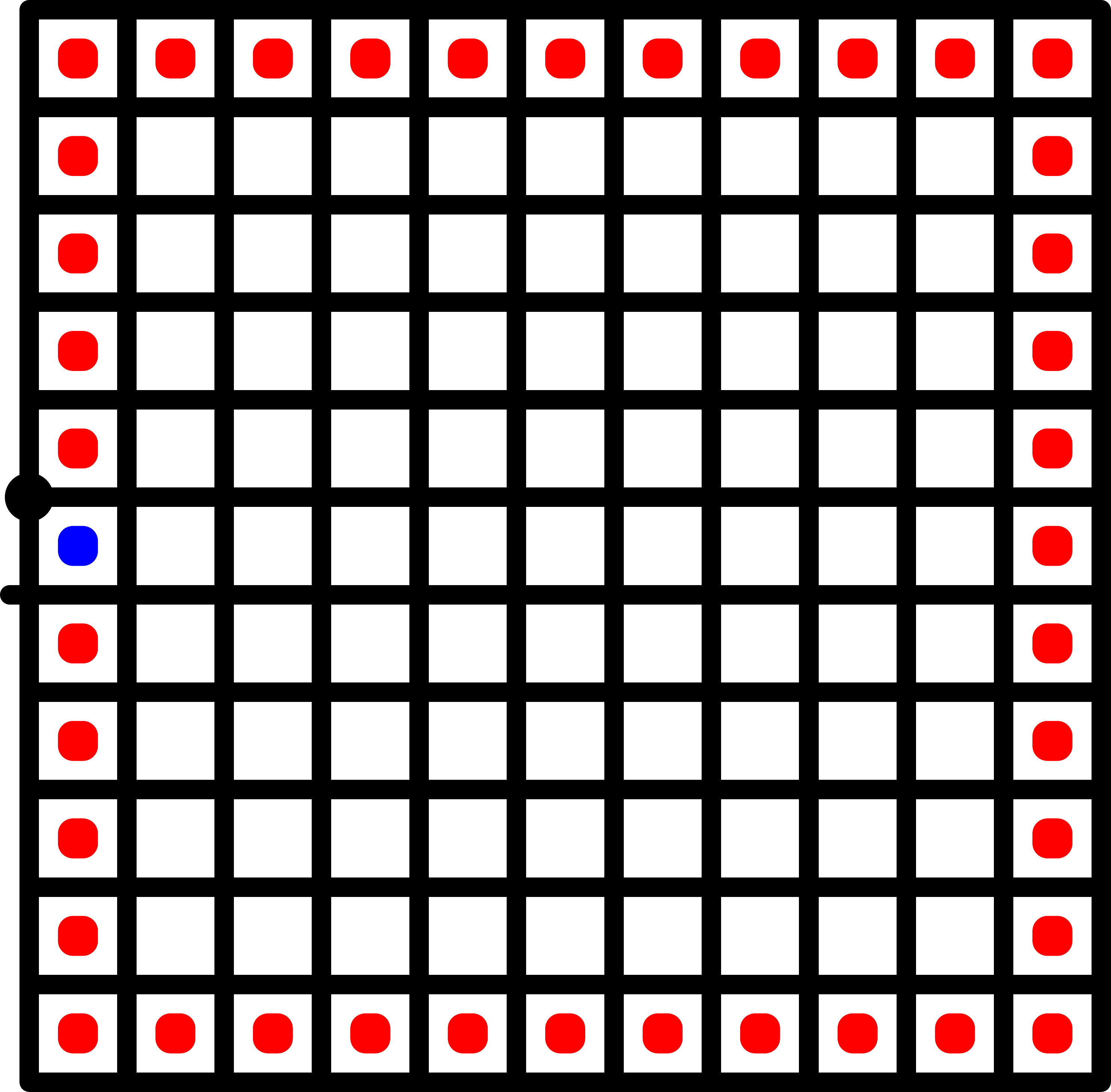}%
  \hspace{.5cm}%
}~~~
\subcaptionbox{\label{squares boundary connection} Connecting the boundary to the center}{%
  \includegraphics[width=3.5cm,angle=90]{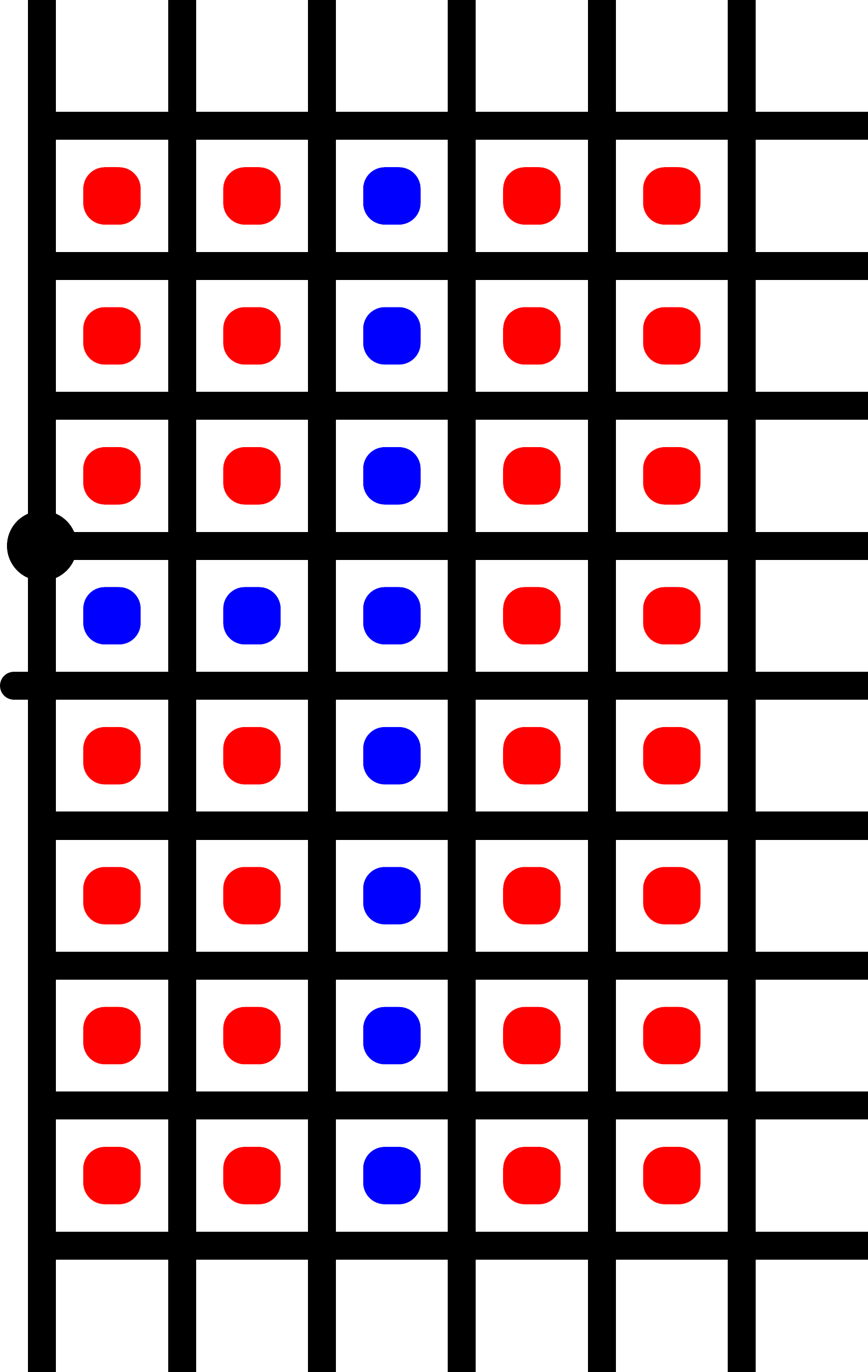}%
}
\caption{Boundary of the Restricted Squares Problem.}
\end{figure}

  To connect these gadgets together, we first lay out the multigraph
  orthogonally on a linear-size grid with turns along the edges
  (e.g., using \cite{Schaffter-1995}), scale by a constant factor,%
  \footnote{For example, a scale factor of $8$ in each dimension suffices.
    The center of a vertex gadget is at most $2 \times 2$, and three more rows
    and columns on each side suffice to wiggle the incident edges to a desired
    row or column to match an adjacent gadget.
    More efficient scaling is likely possible.}
  place the vertex gadgets at the vertices, and route the edges to connect
  the vertex gadgets (roughly following the drawing,
  but possibly adding extra turns).
  Finally, we choose any edge on the bounding box of the embedded multigraph and connect it to the boundary as shown in Figure~\ref{squares boundary connection}.  This effectively introduces an unbreakable vertex of degree 3, which does not affect the solvability of the \trvb\ instance.
  We place the start vertex and end cap at the two boundary vertices at the end of this boundary connection, as required by the Restricted Squares Problem. 

  We claim that there is a bijection between solutions of the
  constructed Restricted Squares Problem and solutions to the given
  instance of \trvb: a vertex gets broken if and only if
  we choose the locally disconnected solution
  (Figure~\ref{squares breakable degree 4 broken})
  in the corresponding vertex gadget.
  It remains to show that breaking a subset of vertices results in a tree
  if and only if the corresponding local solutions form a global solution path
  to the Witness puzzle.
  The local solutions mimic an Euler tour of the planar graph
  after breaking the subset of vertices, and this Euler tour is a single
  connected cycle if and only if the graph is connected and acyclic.
  The one difference is the subdivided edge with an extra unbreakable vertex of
  degree 3 connected to the boundary (Figure~\ref{squares boundary connection}),
  which transforms an Euler tour into a solution path starting at the start
  vertex and ending at the end cap.
  Provided that a vertex breaking solution exists, the corresponding
  solution path will satisfy all square constraints because it will have
  all blue squares on its interior and all red squares on its exterior.
\end{proof}
}

\section{Stars}
\label{sec:stars}

\abstractlater{
  \section{Proofs: Stars}
  \label{appendix:stars}
}

Star clues are in cells of a puzzle. If a region formed by the solution
path and boundary of a puzzle has a star of a given color, then the
number of clues (stars, squares, or antibodies) of that color in that
region must be exactly two. A star imposes no constraint on clues with
colors different from that of the star.

\both{
\begin{theorem} \label{thm:stars}
  It is NP-complete to solve Witness puzzles containing only stars
  (of arbitrarily many colors).
\end{theorem}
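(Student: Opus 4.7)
The plan is to reduce from the Restricted Squares Problem, just shown NP-complete. In such an instance the start and end are both boundary vertices of the unique blue square $B$, so every solution path partitions the board into exactly two regions---one containing $B$ and one containing every red square. I will encode this bipartition using only star clues, leveraging arbitrarily many colors.

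The key primitive is a \emph{separation gadget} built from multi-colored stars. Since any region containing a star of color $c$ must contain exactly one other $c$-clue, placing four stars of one color across four cells forces those cells to partition as $2+2$ between the two regions. Additionally placing a unique-color 2-star pair pinning two of the four cells into the same region forces the $2+2$ split to match the pinning, so the remaining two cells land in the opposite region. This gives a primitive that requires two designated cells to lie in different regions of the solution.

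The construction scales the Restricted Squares board by a constant factor so each original cell becomes a small block of sub-cells, providing room to host all the stars that reference it (sidestepping the one-clue-per-cell restriction) plus internal pair-star ``glue'' to keep each block atomic. For each horizontally or vertically adjacent pair of red cells in the original instance, I would place a unique-color 2-star pair between sub-cells across the shared block boundary; because the Restricted Squares construction fills the board with red outside the gadgets, these pair-stars transitively force all red cells into one region. Adjacent blue-blue pairings, augmented by a handful of long-range pair-stars that connect the unique boundary $B$ to the interior blue cells of each gadget, similarly force every blue cell into one region. Finally, one separation gadget forces $B$ into a different region than a representative red cell, which by transitivity separates $B$ from every red cell.

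The main obstacle is making the construction fit within the one-clue-per-cell restriction: the scale factor must be chosen large enough that each block accommodates all its pair-stars, glue stars, and separation-gadget contributions without collision, while keeping the whole construction polynomial. Correctness follows because any Restricted Squares solution lifts to a Stars solution by routing its path through the blown-up grid along the same bipartition (satisfying every pair and separation constraint), and conversely any Stars solution satisfies every separation and pairing, so its induced bipartition on original cells gives a valid Restricted Squares solution. Membership in NP follows from Observation~\ref{all-but-antibody-np}.
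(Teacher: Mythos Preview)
Your reduction rests on ``internal pair-star glue to keep each block atomic,'' but this primitive does not exist with stars alone. A pair of stars of a fresh color forces its two host cells into the same region, but because each cell may carry at most one clue, such pairs form a \emph{matching} on the sub-cells of a block. A matching cannot enforce transitive connectivity: to chain $A\sim B$ and $B\sim C$ you would need two stars in~$B$. Using four stars of one color does not help either---they force a $2{+}2$ split, which is a separation primitive, not a cohesion primitive. Consequently there is no star-only gadget that makes a $k\times k$ block with $k\ge 2$ behave as a single cell, and without atomicity your adjacency pair-stars between red (or blue) blocks only constrain the particular sub-cells they occupy, not the whole blocks. The transitivity argument that ``all red cells land in one region'' then collapses.

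Even granting atomicity, your converse direction has a gap. A Stars solution on the scaled grid yields a bipartition of blocks, but a Restricted Squares solution is a \emph{simple path} on the unscaled grid. The scaled solution path can traverse several sub-edges along the same original edge or make peninsular excursions into blocks; it need not project to anything simple, so ``the induced bipartition gives a valid Restricted Squares solution'' conflates partitions with paths. The paper avoids both obstacles by working at scale~$1$: it embeds the squares instance unchanged as a sub-rectangle~$S$ and places the \emph{second} star of each color outside~$S$, using a boundary gadget (a cluster of five black stars, one white star, and groups of four cyan and four orange stars) whose local case analysis forces every outside star paired with a red square onto one side of the path and every outside star paired with a blue square onto the other. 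The solution path restricted to~$S$ is then literally a path separating reds from blues, so no scaling or projection argument is needed.
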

}

\ifabstract
\begin{proofsketch}
We reduce from the Restricted Squares Problem. For every square in the source instance, $I$, we use exactly one pair of stars of a distinct color corresponding to that square,
   as well as ten auxiliary colors.
Figure~\ref{star scheme} shows the high level structure of the reduction.
A subrectangle, $S$, of the puzzle is designated for recreating
   $I$. For each pair of stars corresponding to a square, we place one of the two stars on the boundary of the puzzle, and the other
   in $S$ in the same position as the corresponding square in $I$. The solution path will be forced to divide the overall puzzle into exactly two regions---an ``inside'' and an
   ``outside''---such that all of the boundary stars corresponding to red squares are on the outside and all of the boundary stars corresponding to blue squares are on the inside.
   Then, inside of $S$, the solution path must ensure that all stars corresponding to red squares are in the outside region and all stars corresponding to blue squares are in the
   inside region, or else the star constraint will be violated. Then the solution path inside of $S$ must correspond exactly to a solution path in $I$.
The full proof is available in Appendix~\ref{appendix:stars}.
\end{proofsketch}
\fi

\begin{figure}
  \centering
  \includegraphics[width=\linewidth]{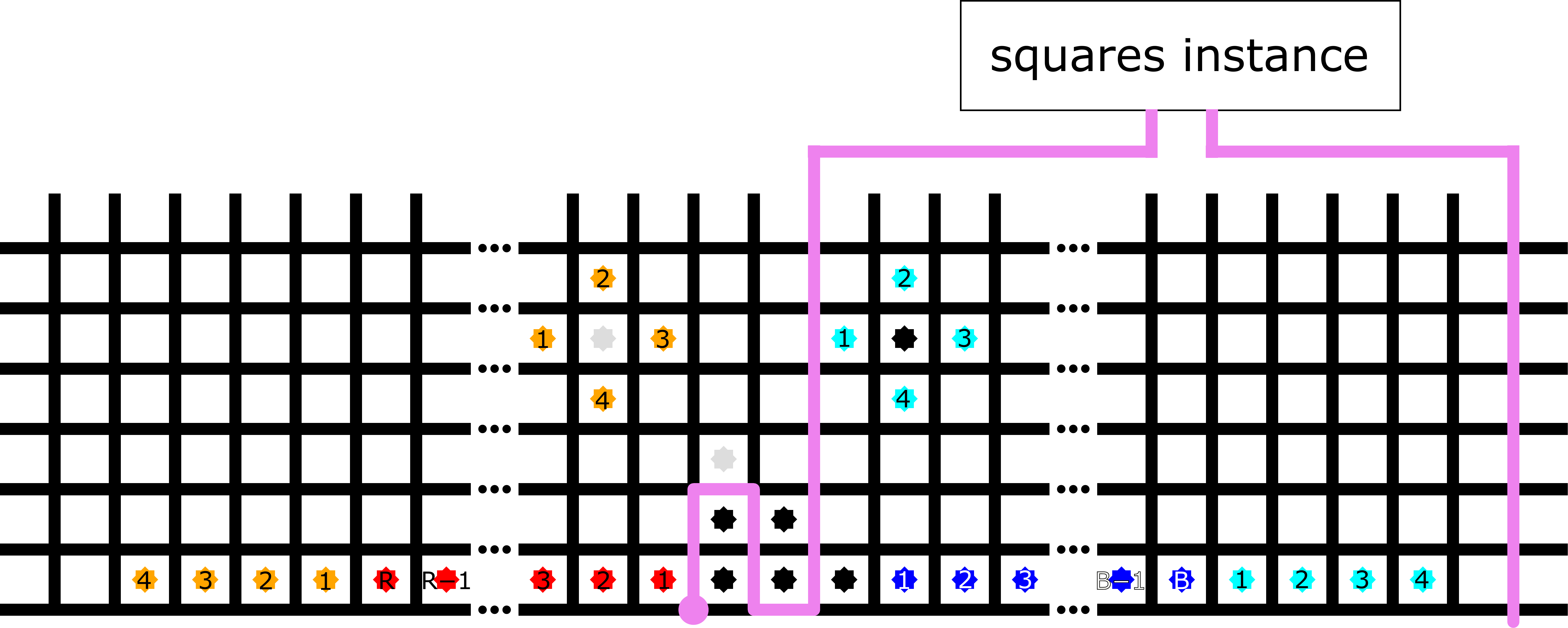}
  \caption{The boundary of the reduction. Each visual (color, number) pair represents a distinct color in the constructed instance.
  All stars depicted as blue correspond to blue squares in the source instance and must be in the inside region.
  Stars depicted as red correspond to red squares and must be in the outside region. The other stars enforce this.}
  \label{star scheme}
\end{figure}

\later{
\begin{proof}
  We reduce from the Restricted Squares Problem. 
  Figure~\ref{star scheme} shows the general scheme of the reduction.
  If the given squares instance has $R$ red squares and $B$ blue squares,
  then in the stars instance, we use $R+B+10$ colors:
  black,
  white,
  $c_1,c_2,c_3,c_4$   (all drawn as cyan),
  $o_1,o_2,o_3,o_4$   (all drawn as orange),
  $r_1,r_2,\dots,r_R$ (all drawn as red), and
  $b_1,b_2,\dots,b_B$ (all drawn as blue).

\begin{figure}
\centering
\subcaptionbox{\label{stars casework noturn} After visiting the vertex in the center of four stars of the same color, the path cannot turn. (Rotations omitted.)}{%
  \hspace{.5cm}%
  \includegraphics[width=2cm]{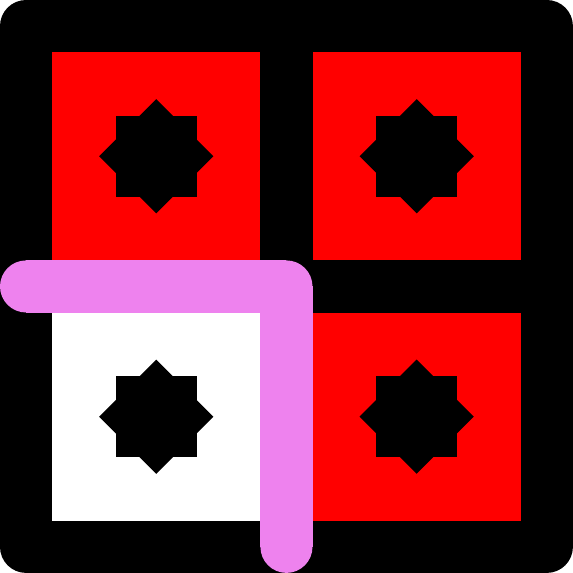}%
  \hspace{.5cm}%
}~~~
\subcaptionbox{\label{stars casework not horizontal} The star in the right column prevents the path from splitting the four stars horizontally.}{%
  \includegraphics[width=3cm]{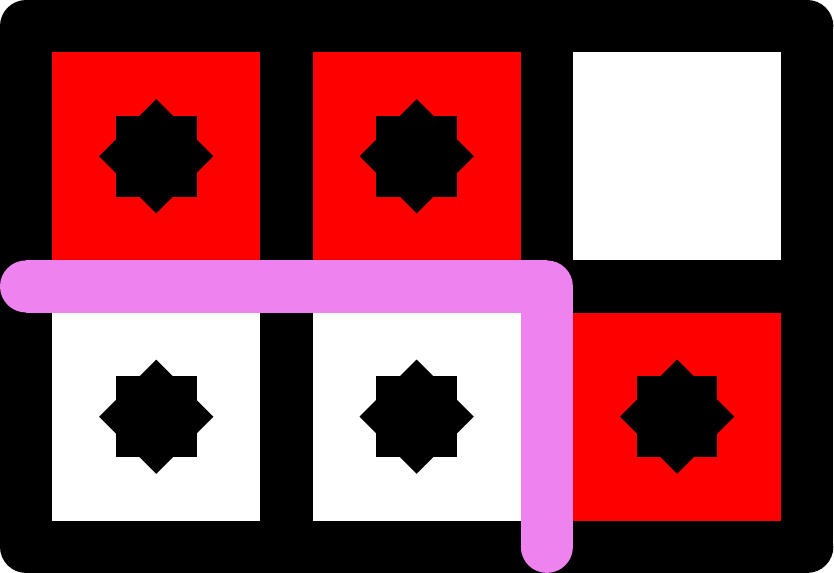}%
  ~~
  \includegraphics[width=3cm]{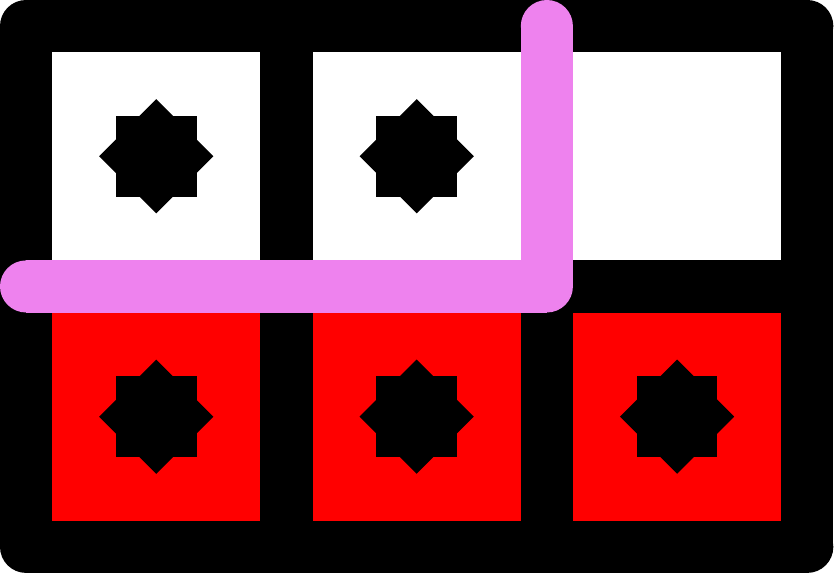}%
}~~~
\subcaptionbox{\label{stars casework forced} Thus, the path must split the four stars vertically and separate the right star by using these forced edges, so the rightmost star cannot be in the same region as the middle stars.}{%
  \hspace{.5cm}%
  \includegraphics[width=3cm]{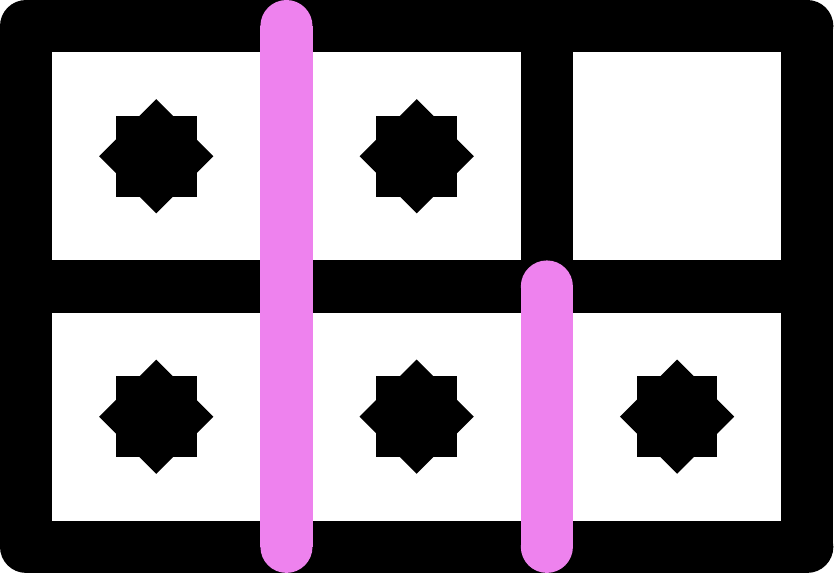}%
  \hspace{.5cm}%
}

\subcaptionbox{\label{stars casework white} No valid assignment of stars to regions places the white star in the same region as the rightmost black star.}{%
  \includegraphics[width=3cm]{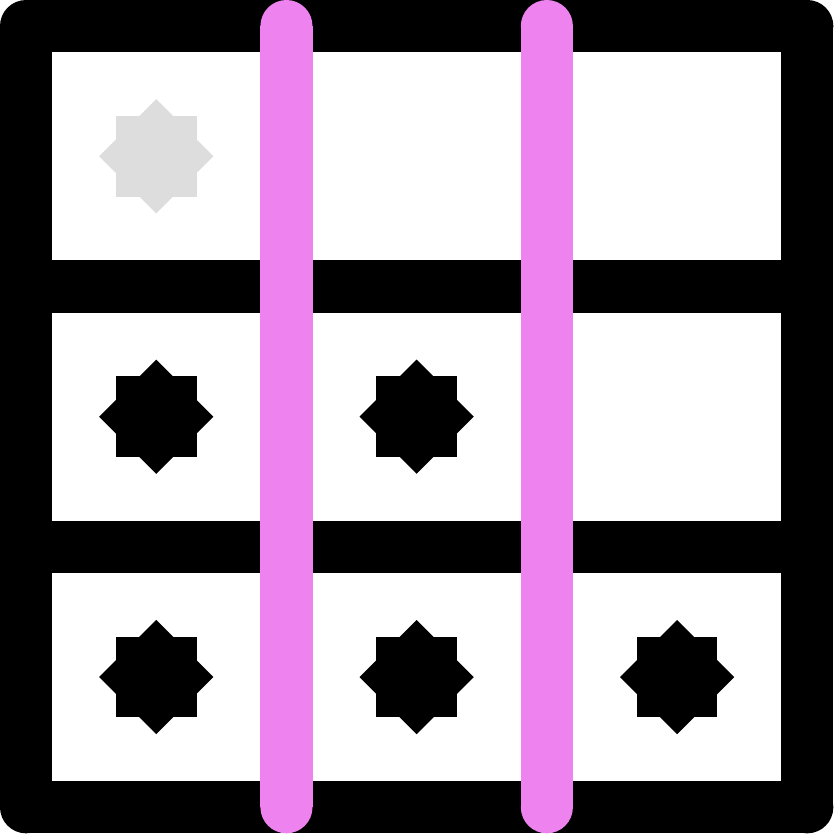}%
  ~~
  \includegraphics[width=3cm]{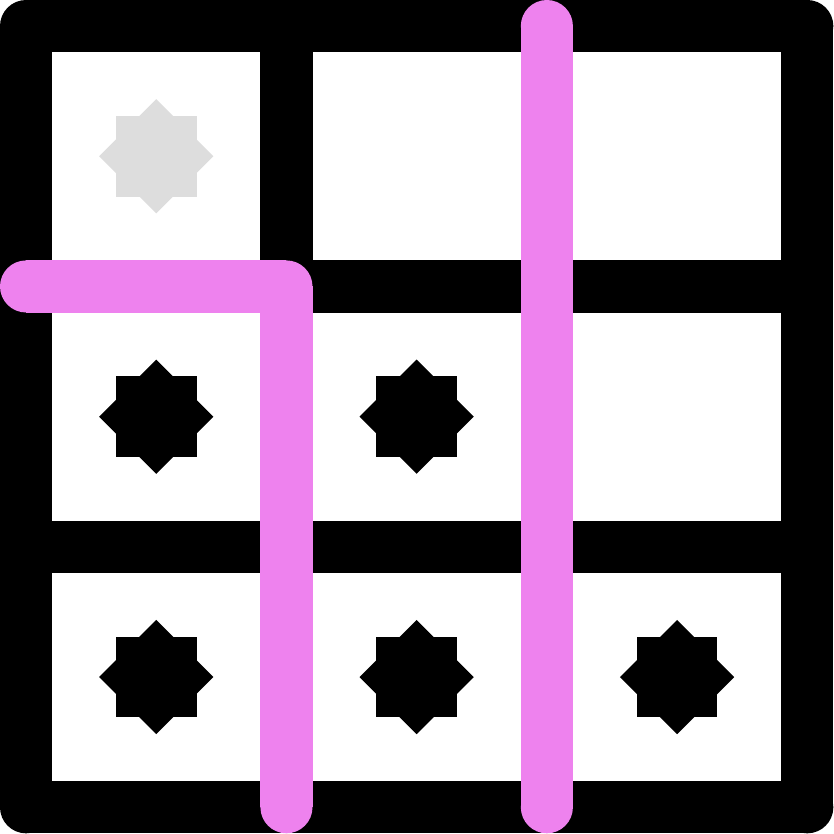}%
}
\caption{Analysis of the group of contiguous black stars.}
\label{stars casework}
\end{figure}

Consider the group of five contiguous black stars.  Figure~\ref{stars casework} shows that any solution must place the two leftmost black stars in the same region, the two middle black stars in another region, and the two rightmost black stars (including the star outside the group) in a third region.  Now consider the extragroup black star, which is adjacent to a $c_1$ star, a $c_2$ star, a $c_3$ star, and a $c_4$ star.  At least one of the four edges of the extragroup black star is absent, so the extragroup star is in the same region as at least one of its neighboring cyan stars, and thus also in the same region as the singular other star of that cyan color in the far right of the puzzle.  By transitivity, the rightmost intragroup black star (named $b$ in the following) is in the same region as at least one far-right cyan star.

Consider the edge $e$ to the left of $b$. Any solution path passes
through $e$ to
separate $b$ from the star to its left. Then consider the
maximal part $P$ of the solution path that contains $e$ but not any
boundary edges. Orient $P$ so that it starts with edge $e$; note that
this orientation may be opposite the orientation of the whole solution
path. If $P$'s other endpoint lies on the part of the boundary between
the
black stars and the far-right cyan stars, then $b$ and those stars are in
a different
regions, a contradiction. Therefore, all the boundary
cells between $b$ and the far-right cyan stars are in the same
region. In particular, the blue stars of colors $b_1, b_2, \dots,
b_B$ are all in the same region as $b$, that is, on the right of $P$.

Now consider the white star $w$ adjacent to the group of black stars.  As shown in Figure~\ref{stars casework white}, $w$ is not in the same region as $b$, so $w$ is to the left of $P$ (though not necessarily in the region immediately to the left of $P$), and thus the other (leftmost) white star is also left of $P$.  As with the cyan stars, at least one orange star among those adjacent to the leftmost white star lies on the left of $P$, and so does at least one far-left orange star. As with the cyan, black, and blue stars, this implies that the red stars of colors $r_1, r_2, \dots, r_R$ are on the left of $P$.

Thus, we have shown that the red stars are on a different side of
$P$ from the blue stars. This is possible only if the original
Restricted Squares Problem instance has a solution. 
Conversely, if the original squares instance is possible to solve, then
we can augment that solution into a solution to the constructed stars
Witness puzzle as shown in Figure~\ref{star scheme}. \xxx{make figure specifically for this (rather than conflating with the schema figure); was figure 4 on
  coauthor/google sheets}
\end{proof}
} 

\begin{restatable}{open}{star} \label{open:1 star}
Is it NP-hard to solve Witness puzzles containing only a constant number of colors of stars? Or just a single color of stars?
\end{restatable}

\section{Triangles}

\abstractlater{
  \section{Proofs: Triangles}
  \label{appendix:triangles}
}
Triangles are placed in cells.
The number of edges on the solution path that are incident to that cell
must match the number of triangles. This constraint is similar to Slitherlink,
which is known to be NP-complete~\cite{slitherlink}.
Table~\ref{SlitherlinkTable} summarizes known and new results for Slitherlink
puzzles with clues chosen from $\{0,1,2,3\}$.
The one difference is that The Witness does not allow $0$-triangle clues.
Unfortunately, the proof in~\cite{slitherlink} relies critically on being able
to force zero edges around a cell using $0$ clues.
We can simulate $0$ clues using broken edges, as in Table~\ref{2D results}.
To avoid broken edges, we develop new proofs that $1$-triangle, $2$-triangle,
and $3$-triangle clues are NP-hard.
\ifabstract
Proofs omitted from this section can be found in Appendix~\ref{appendix:triangles}.
\fi

\begin{table}
\centering
\begin{tabular}{|l|l|}
\hline
\emph{Clue types} & \emph{Complexity}  \\ \hline
\hline
$0$ & P \cite{slitherlink} \\ \hline
\textbf{1} & \textbf{NP-complete [Theorem~\ref{thm:triangles1}]} \\ \hline
\textbf{2} & \textbf{NP-complete [Theorem~\ref{thm:triangles2}]} \\ \hline
\textbf{3} & \textbf{NP-complete [Theorem~\ref{thm:triangles3}]} \\ \hline
$4$ & P [trivial] \\ \hline
$0$ and $(1$ or $2$ or $3)$ & NP-complete \cite{slitherlink}\footnote{The proof in \cite{slitherlink} is for $0$ and $1$ clues, but can be straightforwardly adapted to replace $1$ clues with $2$ or $3$ clues.} \\ \hline 
\end{tabular}
\caption{Summary of Slitherlink / Witness triangle constraints. New results are bold. Recall that ``$0$'' is not a valid clue in The Witness.}
\label{SlitherlinkTable}
\end{table}

\subsection{1-Triangle Clues}
\label{sec:1-triangles}

Proving hardness of Witness puzzles containing only 1-triangle clues is
challenging because it is impossible to (locally) force turns on the interior
of the puzzle.  Any rectangular subpuzzle with any set of 1-triangle clues can be
satisfied by a set of disjoint paths containing either every second row of
horizontal edges in that rectangle or every second column of vertical edges in that rectangle, regardless of the configuration of 1-triangle clues.
Therefore, any local arguments about gadgets in the interior of the puzzle
must confront the possibility of local solutions which are comprised of just
horizontal or vertical paths straight through.  2-triangle clues, discussed in
Section~\ref{sec:2-triangles}, present a similar challenge.
Nonetheless, we are able to prove NP-hardness:

\begin{theorem}
\label{thm:triangles1}
It is NP-complete to solve Witness puzzles containing only 1-triangle clues.
\end{theorem}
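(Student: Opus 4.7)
The plan is to apply the Hamiltonicity framework from Section~\ref{sec:Hamiltonicity Reduction Framework}, reducing from Hamiltonian cycle in grid graphs of maximum degree $3$, using only 1-triangle clues. Pick a scale factor $s$ and chamber radius $r$ large enough to accommodate the gadgets, construct a chamber for each vertex and a hallway for each edge of the input graph $G$, and place a 1-triangle clue in a carefully chosen cell of each chamber to force the solution path to enter that chamber. Start circle and end cap are placed at the start/end chamber as usual. The cells outside the chamber–hallway network (the ``filler'') will also be covered with 1-triangle clues, designed so that the only way for a single simple path to satisfy all of them is to route through the network in an intended way.

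The main obstacle, as flagged in the excerpt, is that any rectangular block of 1-triangle clues is satisfied by disjoint paths formed from every other row of horizontal edges, or every other column of vertical edges; thus, no local gadget can force a turn. To overcome this I would make the argument global. First, I would show that any solution path, together with the 1-triangle constraints on cells of a large filler region, must locally look like one of the two striped patterns (alternating horizontal or alternating vertical lines), up to deviations that must themselves be accounted for against a 1-triangle debt. Second, I would choose the outer rectangle boundary and the position of the start circle and end cap so that a globally striped path is impossible: a set of $k$ disjoint horizontal/vertical lines in a rectangle contains $k$ paths, not $1$, whereas we require a single simple path from start to end. Hence any solution must deviate from the stripe pattern, and every such deviation incurs a cost that the 1-triangle clues in the filler force to be ``paid off'' by routing along the chamber–hallway network.

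Having argued that the solution path must traverse the chamber–hallway network, the usual Hamiltonicity-framework analysis applies: the central 1-triangle of each chamber forces a visit, the maximum degree $3$ of $G$ allows only $\binom{3}{2}=3$ routings per chamber, and each hallway is traversed at most once because the filler surrounding the hallway admits no crossing. The converse — any Hamiltonian cycle in $G$ induces a satisfying Witness solution — is routine: route the cycle through hallways and through chambers via any of the three connections, choosing the surrounding filler cells to be covered by a consistent stripe pattern that matches the chosen chamber traversal, and finally cut open the cycle at the start/end chamber.

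The step I expect to be hardest is the local analysis at the interface between a chamber (or hallway) and the surrounding filler: I need to verify that the $(2r+1)\times(2r+1)$ chamber gadget, with its central 1-triangle and with 1-triangle filler on every adjacent cell, admits exactly the three intended routings for each pair of hallway ports and nothing else. I would handle this by casework on the parity of the stripe pattern entering each side of the chamber, using the fact that each cell bordering the chamber has its 1-triangle ``debt'' resolved either by a chamber edge or by a filler edge but not both. Once this local lemma and the global ``single path $\Rightarrow$ no pure stripe solution'' argument are in hand, the reduction and its correctness follow from the Hamiltonicity framework, and membership in NP is given by Observation~\ref{all-but-antibody-np}.
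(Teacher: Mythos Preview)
Your proposal has a genuine gap: the Hamiltonicity framework fundamentally cannot be made to work with only 1-triangle clues, and the ``global'' argument you sketch does not rescue it.

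The framework requires walls that confine the solution path to the chamber--hallway network.  With broken edges or 3-triangles one can build such walls; with 1-triangles one cannot.  You acknowledge that no local gadget forces a turn, but then assert that ``every such deviation incurs a cost that the 1-triangle clues in the filler force to be paid off by routing along the chamber--hallway network.''  This is the step that fails.  Consider a board in which \emph{every} cell carries a 1-triangle clue: a simple boustrophedon path (horizontal stripes joined by unit U-turns at the left and right boundaries) is a single simple path satisfying all clues.  There is no ``debt'' to pay, and nothing distinguishes filler cells from chamber or hallway cells---they all carry the same clue.  Hence the path has no reason whatsoever to respect your chamber--hallway layout; it can zigzag straight through all of it.  Your proposed ``local lemma'' at chamber/filler interfaces cannot exist, because the stripe solution through a chamber is indistinguishable from the stripe solution through filler.

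The paper takes an entirely different route.  It reduces from Exact Cover by 3-Sets, fills almost the whole board with 1-triangles, and uses a carefully designed pattern of empty cells in the leftmost and rightmost columns (together with the start/end placement) to force the ``solution'' to consist of many disconnected horizontal lines joined into $k$ cycles plus one path.  Gadgets are then built by \emph{deleting} 1-triangle clues at specific interior locations; these empty-cell patterns are the only places where adjacent horizontal lines can be merged, and the merges encode set choices in X3C.  The global constraint that the solution be a single connected path forces exactly one merge between each consecutive pair of lines, which translates into the exact-cover condition.  The crucial idea you are missing is that structure comes from the \emph{absence} of 1-triangle clues (empty cells give the path freedom to join lines), not from their presence.
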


\begin{proofsketch}
We reduce from X3C, making use of the fact that the solution path must be a
single closed path.  Refer to Figure~\ref{1-triangles overview}.
Sets and elements are represented by sets of rows in a puzzle
almost completely filled with 1-triangle clues.  Boundary conditions
alone force any solution to form disconnected cycles and one path consisting
largely of horizontal lines, but we can build gadgets by deleting 1-triangle
clues in specific locations.  Set--element connection gadgets allow connecting
the horizontal lines of a chosen set exactly when we connect the horizontal
lines of a specific element, but only if the element is so covered exactly
once, and the 3-set gadget allows connecting the horizontal lines of an
unchosen set.  The cycles can be connected to form a single solution path
exactly when the X3C instance has a solution.
\end{proofsketch}


\begin{figure}
\centering
\includegraphics[width=\linewidth]{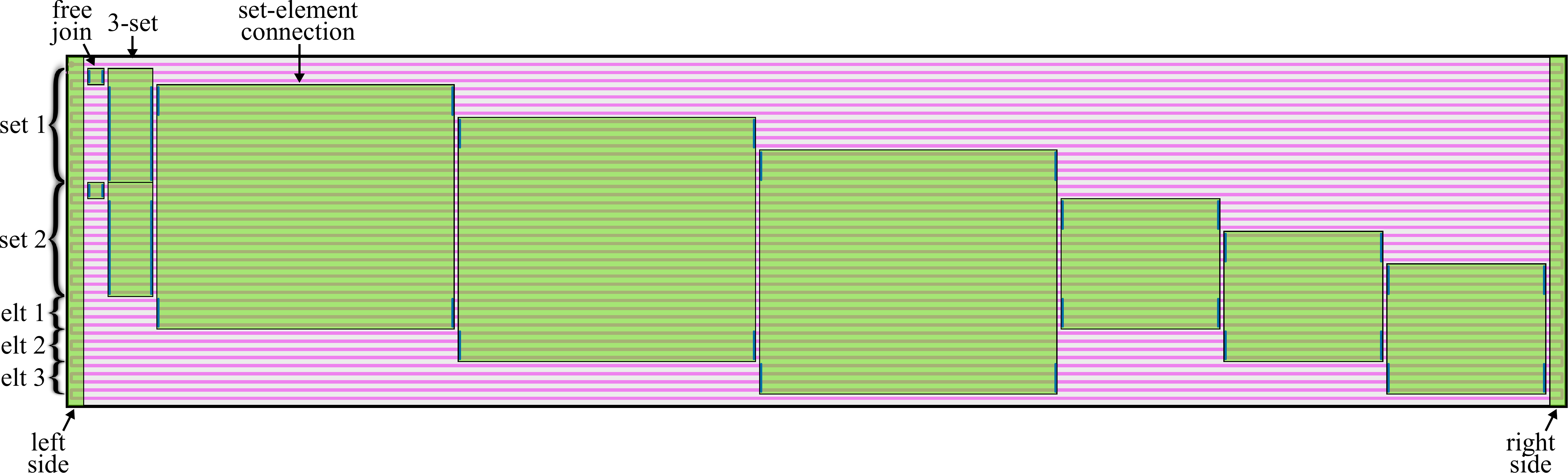}
\caption{Full example schematic (drawn roughly to scale) of the reduction
  from X3C to 1-triangle clues of Theorem~\ref{thm:triangles1},
  with three elements and two 3-sets each consisting of all three elements.
  Pink lines indicate the ``solution'' of the initial puzzle filled with
  1-triangles in all but the leftmost and rightmost columns.
  Each box represents one of the gadgets in the proof, with vertical lines
  to indicate the horizontal lines that could be connected by the gadget.}
\label{1-triangles overview}
\end{figure}

\begin{proof}
  We reduce from Exact Cover by 3-Sets, which we abbreviate X3C
  \cite{Garey-Johnson-1979}:
  given a set $X$ of $n$ elements, and
  given a family $C$ of $m$ cardinality-3 subsets of~$X$,
  decide whether there exists a subset $P \subseteq C$ such that $P$
  is a partition of~$X$ (i.e., $P$ covers every element in $X$ exactly once).
  Without loss of generality, let $X = \{0,1,\dots,n-1\}$.
  Figure~\ref{1-triangles overview} shows how to fit together
  the gadgets we will now describe.

\paragraph{Left and right sides of the construction.}

We start from (and later modify) a Witness puzzle where all cells have a
1-triangle clue except for the leftmost and rightmost columns of cells,
as shown in Figure~\ref{1-triangles edges unsolved}.
The rightmost column follows the repeating pattern
(1-triangle, empty, empty, 1-triangle)$^k$, while the leftmost column
has the same pattern except for the top four rows which instead of
two empty cells have the start circle and end cap,
in the bottom-left of the first and third row of cells respectively.
The number of rows of cells is $4 k = 2 + 28 m + 8 n + 2$.

\begin{figure}
\centering
\subcaptionbox{\label{1-triangles edges unsolved} Unsolved.}{
  \hspace{.5cm}%
  \includegraphics[scale=0.3]{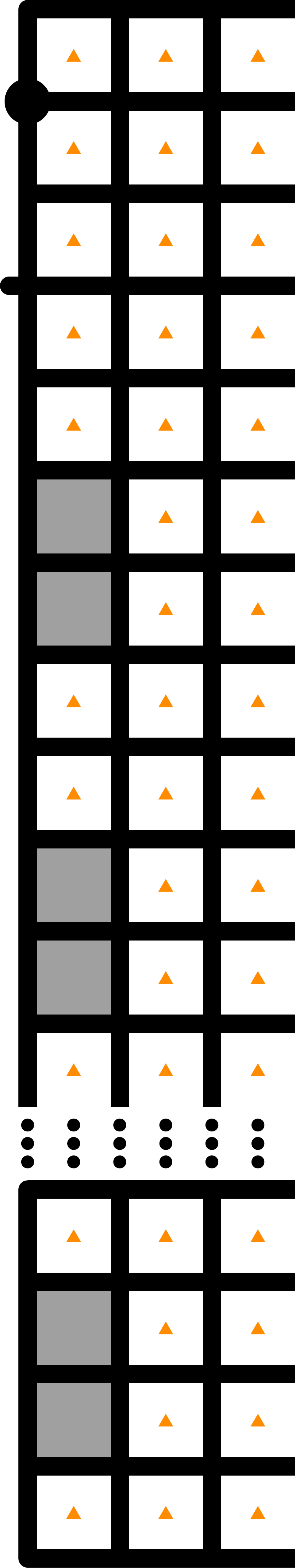}~~~
  \hspace{.25cm}%
  \includegraphics[scale=0.3]{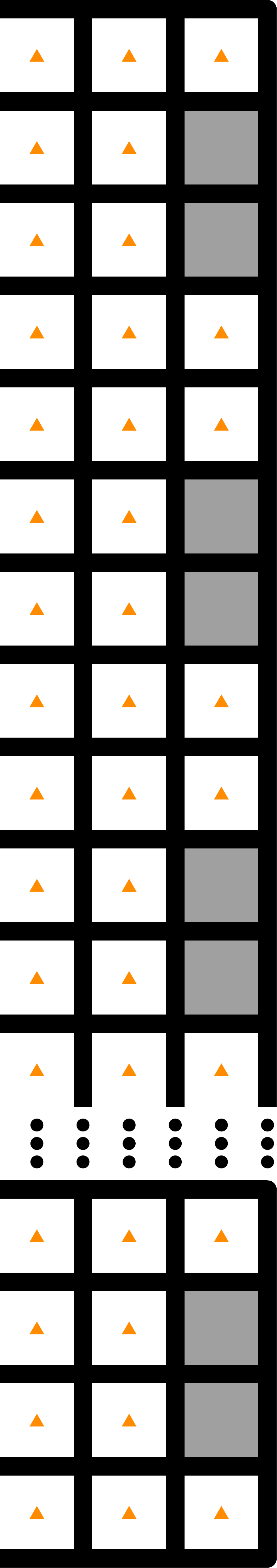}%
  \hspace{.5cm}%
}~~~
\subcaptionbox{\label{1-triangles edges solved} Solved.}{
  \hspace{.5cm}%
  \includegraphics[scale=0.3]{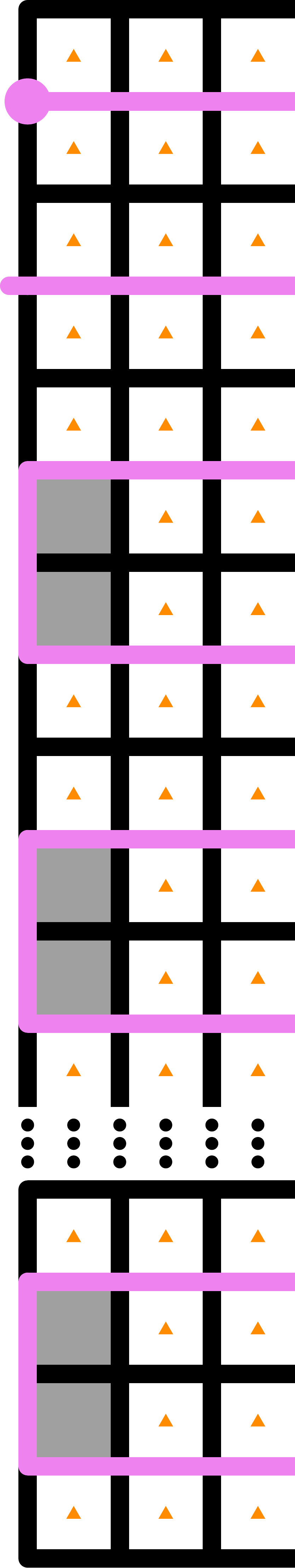}~~~
  \hspace{.25cm}%
  \includegraphics[scale=0.3]{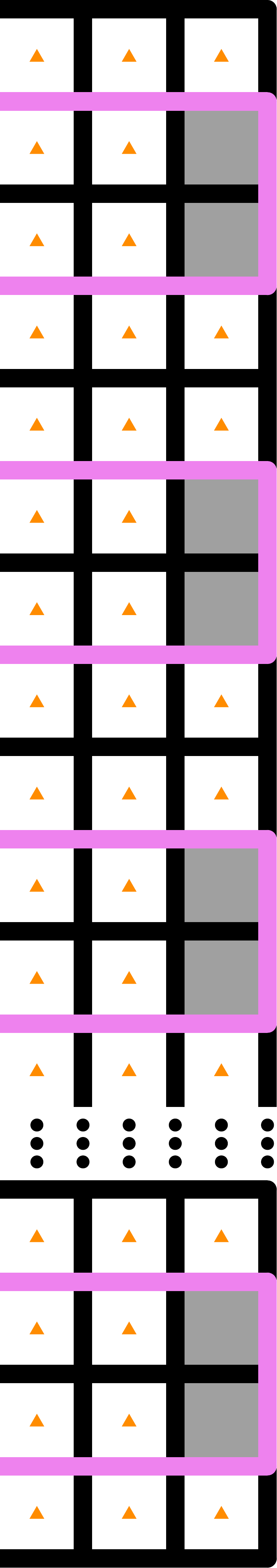}%
  \hspace{.5cm}%
}
\caption{Leftmost and rightmost three columns of the puzzle.
  Empty cells are shaded gray for emphasis.}
\label{1-triangles edges}
\end{figure}

If we relax the ``solution'' to allow multiple paths and/or cycles, then
we claim that this puzzle has the unique solution shown in
Figure~\ref{1-triangles edges solved}.
(Thus, with the usual constraint of a single path, there is no solution.)
Even stronger, for any modified puzzle still having
1-triangles throughout the topmost two rows and the same leftmost three columns
and rightmost three columns as Figure~\ref{1-triangles edges unsolved},
we claim that any solution must look like Figure~\ref{1-triangles edges solved}
in those columns.


First we claim that the path goes right from the start vertex;
refer to Figure~\ref{1-triangles left gadget nonsolutions}.
If the path goes up into the top-left corner, then it must then go right,
violating the 1-triangle clue in the top-left cell.
If the path goes down, the 1-triangle forces it to go down again,
but then it has reached the end cap without
satisfying most of the 1-triangles in the puzzle.
Therefore, the path must initially go right.

\begin{figure}
\centering
\subcaptionbox{Failed attempt to start by going up.}{%
  \hspace{.5cm}%
  \includegraphics[scale=0.3]{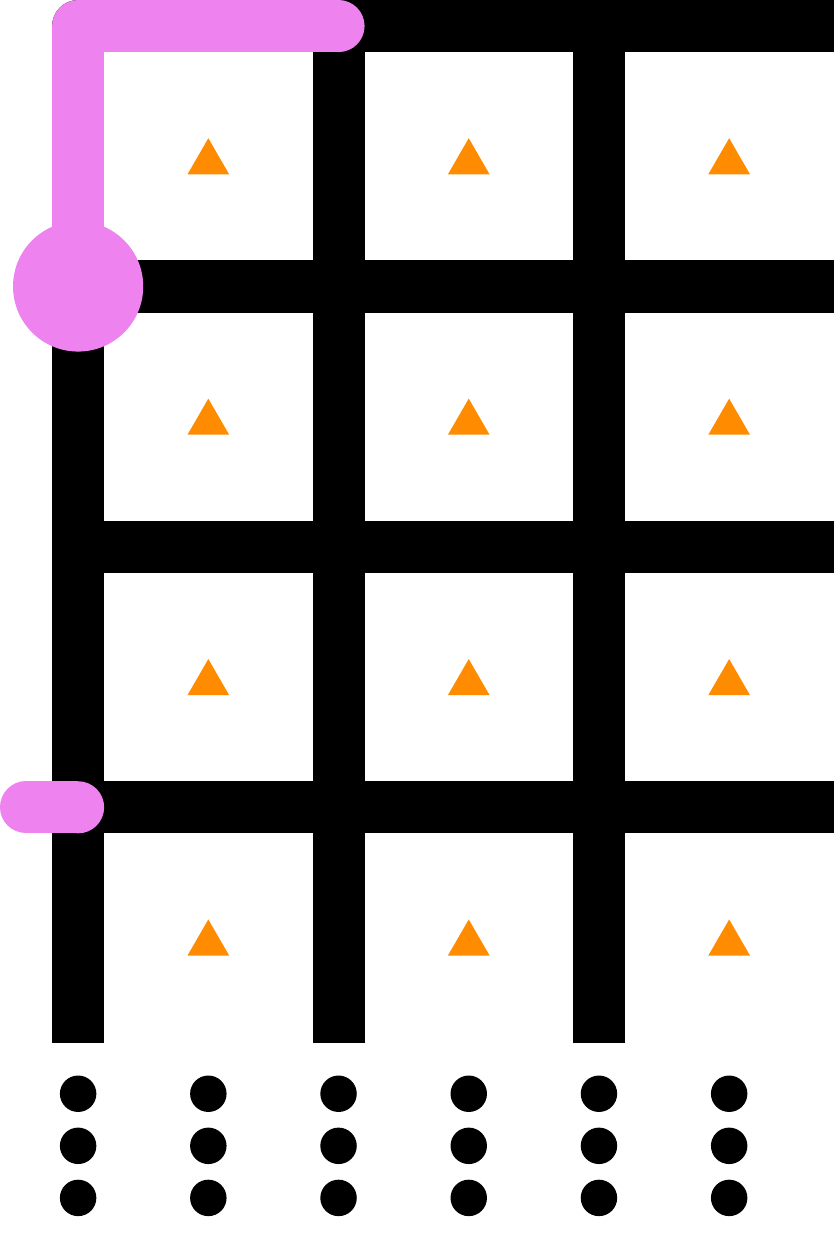}%
  \hspace{.5cm}%
  \label{1-triangles left gadget nonsolution 1}%
}~~~
\subcaptionbox{Failed attempt to start by going down then right.}{%
  \hspace{.5cm}%
  \includegraphics[scale=0.3]{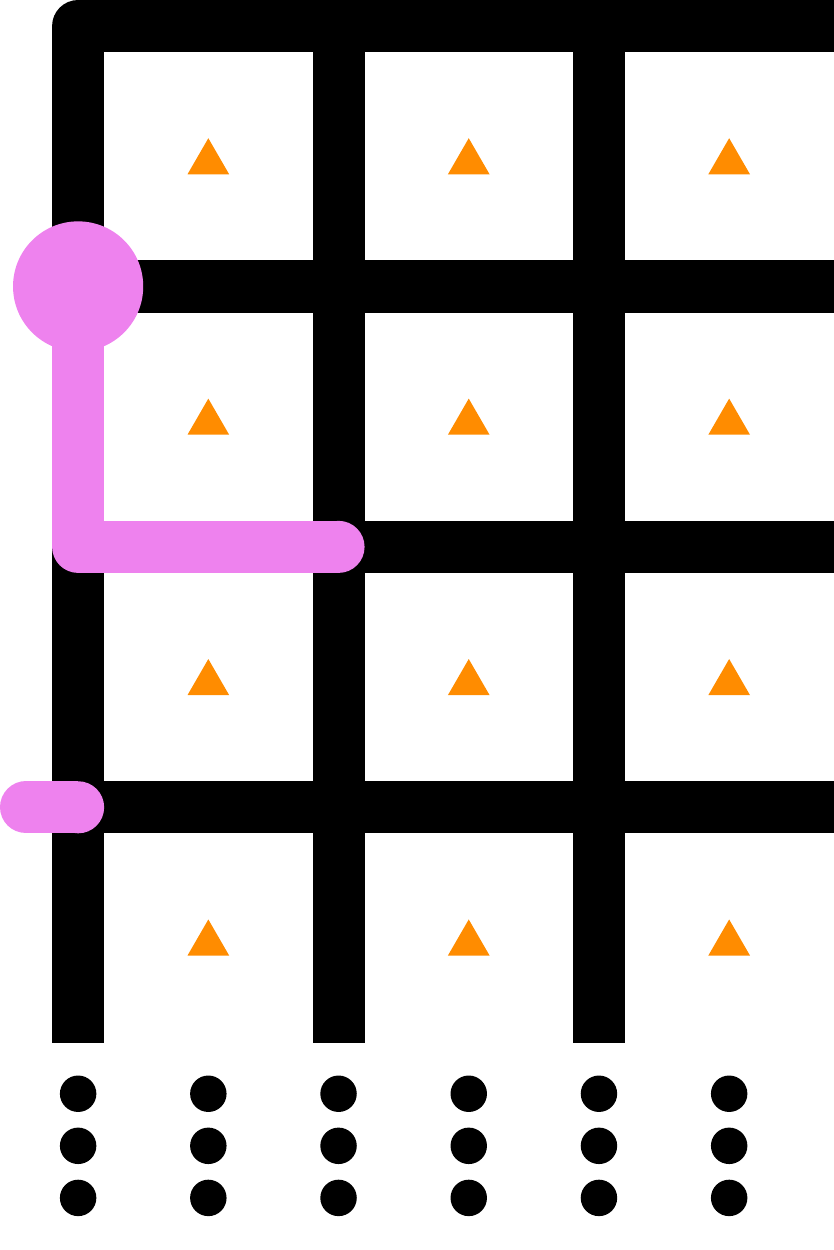}%
  \hspace{.5cm}%
  \label{1-triangles left gadget nonsolution 2}%
}~~~
\subcaptionbox{Failed attempt to start by going down twice.}{%
  \hspace{.5cm}%
  \includegraphics[scale=0.3]{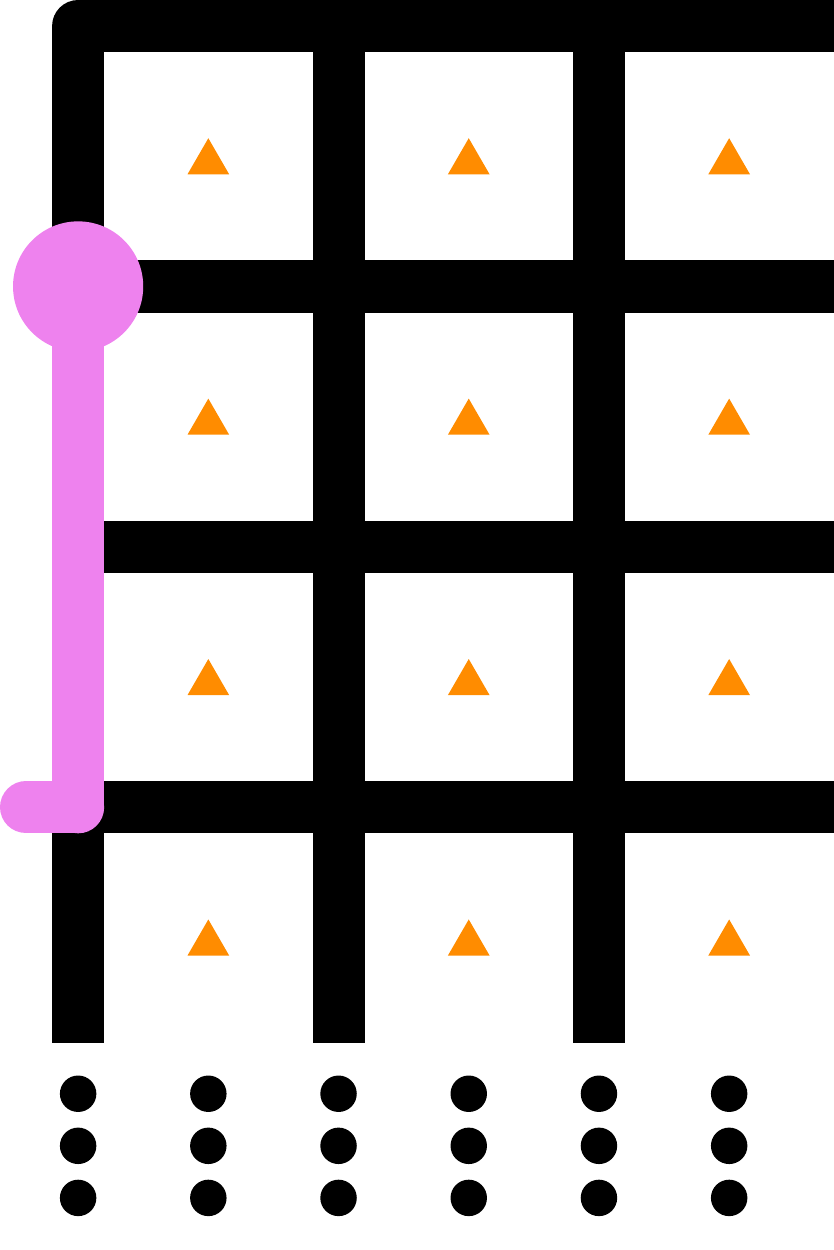}%
  \hspace{.5cm}%
  \label{1-triangles left gadget nonsolution 3}%
}
\caption{Nonsolutions to the leftmost columns.}
\label{1-triangles left gadget nonsolutions}
\end{figure}

Next, by repeated application of Rule~1 of Figure~\ref{1-triangles rules}
in the topmost two rows, the path must continue right from the start circle
until reaching the right side of the puzzle.
By repeated application of Rule~3 of Figure~\ref{1-triangles rules},
rows of cells alternate between having the path on their bottom and having the
path on their top, thereby making $2 k$ \emph{horizontal lines} of path.
If we perfectly pair up adjacent rows of cells
(pairing every odd row, including the first, with the row below it),
then there is exactly one horizontal line in between each pair of rows.
These horizontal lines must join using the edges incident to blank cells
in the leftmost and rightmost columns (except at the start and end cap).

  \begin{figure}
\centering
\subcaptionbox{\label{1-triangles rule 1} Rule 1: When an edge adjacent to two 1-triangle clues is present in the solution path, then it must extend at least one more step in both directions.}{%
  \hspace{.5cm}%
  \includegraphics[scale=0.35]{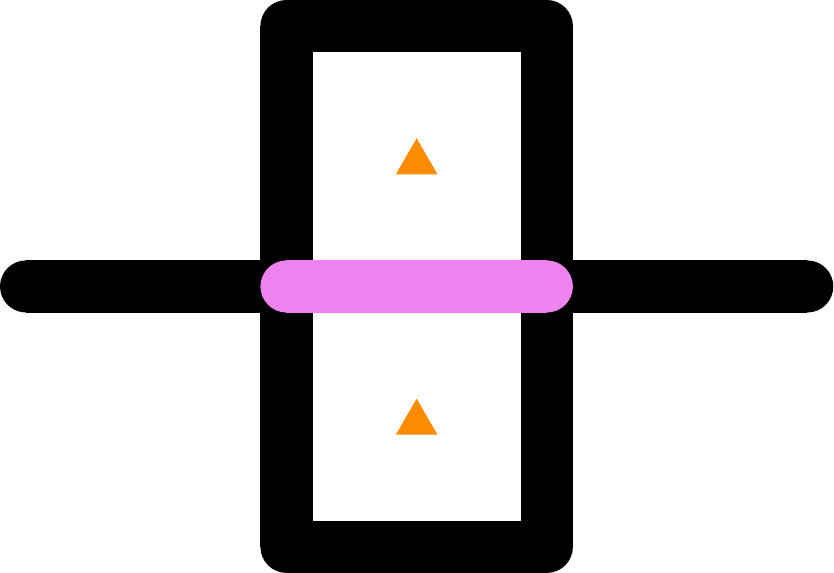}~~~
  \includegraphics[scale=0.35]{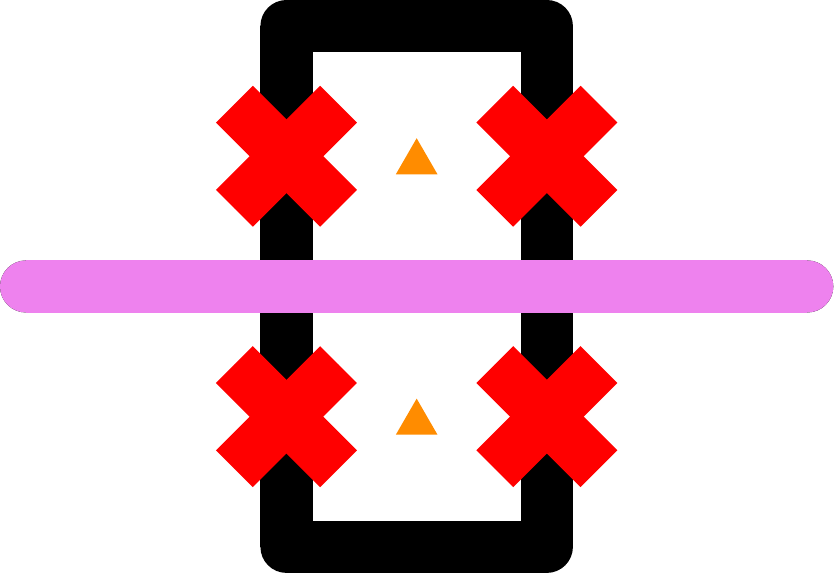}%
  \hspace{.5cm}%
}~~~
\subcaptionbox{\label{1-triangles rule 2} Rule 2: When the three edges along the long side of four 1-triangle clues arranged in a $T$ shape are present in the solution path, then the edge opposite them must also be present.}{%
  \includegraphics[scale=0.35]{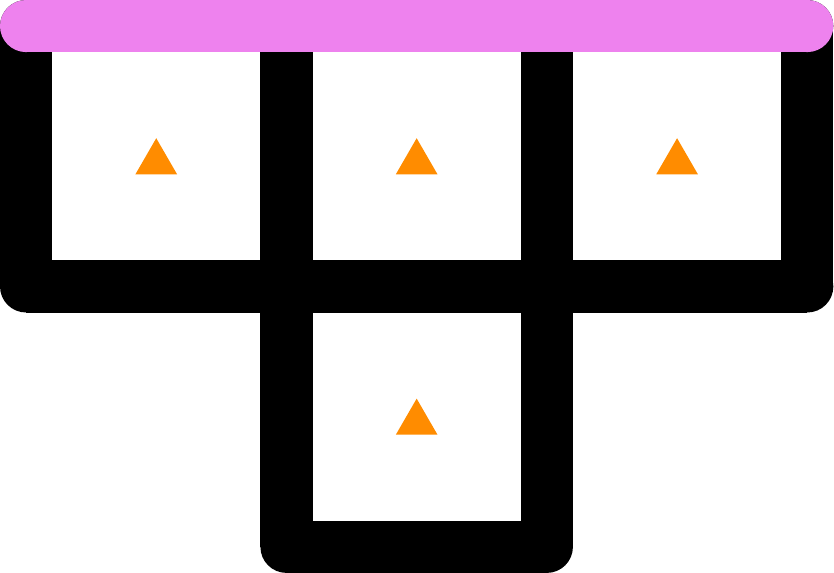}~~~
  \includegraphics[scale=0.35]{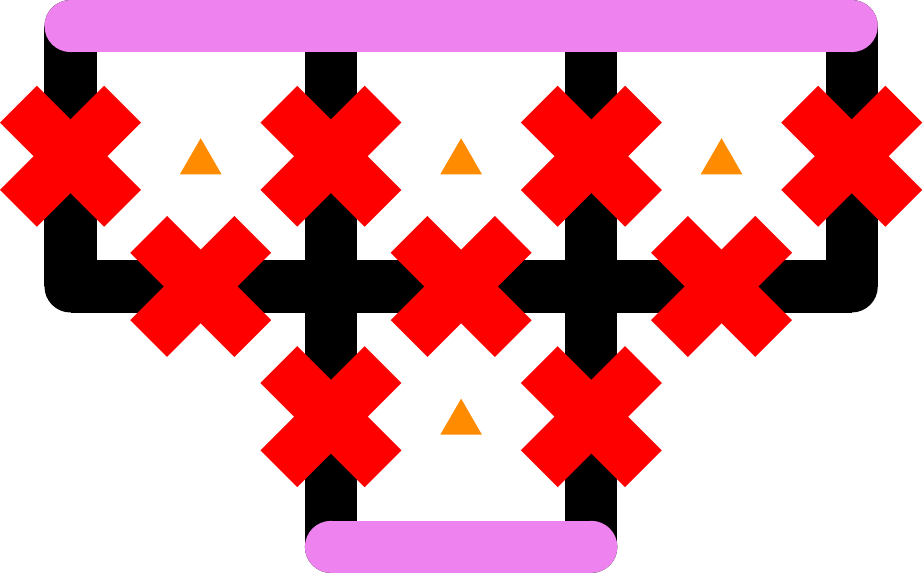}%
}\\
\subcaptionbox{\label{1-triangles rule 3} Rule 3: An application of Rule~2 followed by an application of Rule~1.}{%
  \includegraphics[scale=0.35]{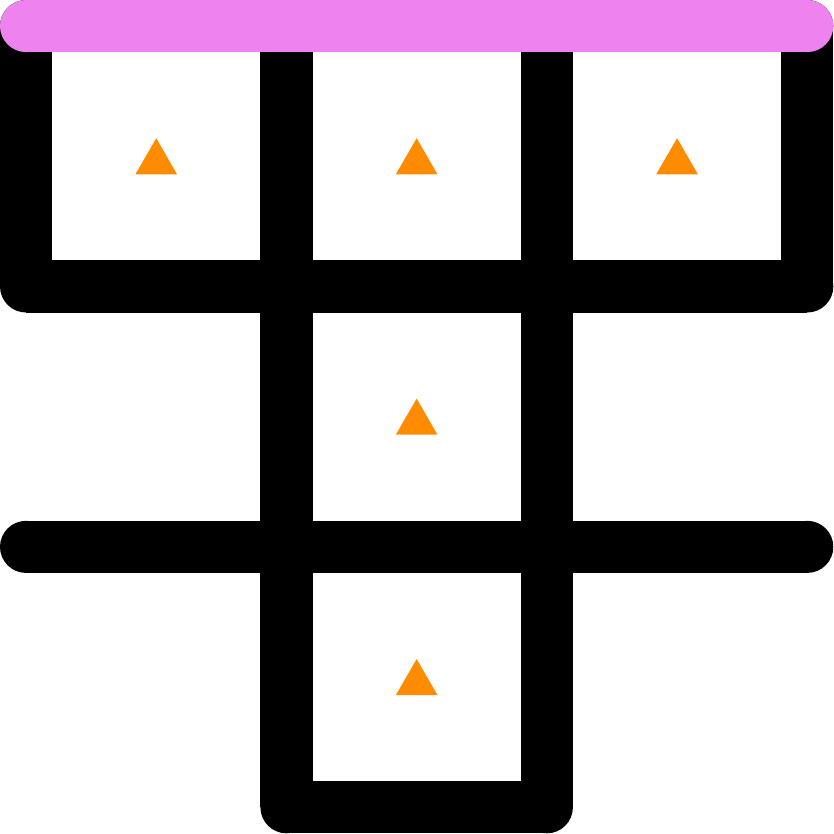}~~~
  \includegraphics[scale=0.35]{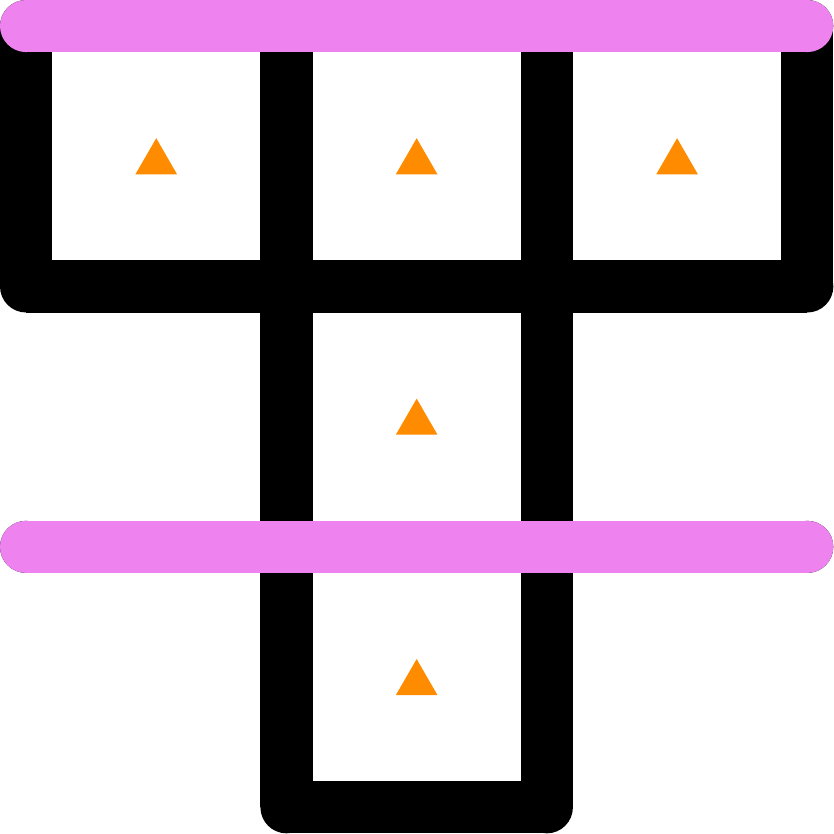}%
}
\caption{Local rules for 1-triangle clues. A red X means that the edge
  cannot be in the solution path due to constraints imposed by triangles.}
\label{1-triangles rules}
\end{figure}

Hence, the unique ``solution'' to this initial puzzle consists of $k$ connected
components, one start-to-end path in the top four rows and one cycle in
every following consecutive four rows.
In the remainder of the proof, we will place gadgets in the interior of the
puzzle (removing some 1-triangle clues) to enable connecting these solution
components together into one path exactly when the X3C instance has a solution.
This modification will not touch the topmost two rows, the bottommost two rows,
the leftmost three columns, or the rightmost three columns,
thereby still forcing those columns to look like
Figure~\ref{1-triangles edges solved}.

\paragraph{Free-join gadget.}
Figure~\ref{1-triangles joining lines} shows one gadget for joining
components of the ``solution'' together, called the \emph{free-join gadget}.
By placing a $4 \times 4$ ``circle'' of empty cells, we allow the solution
to optionally deviate from horizontal lines and thereby connect together
two components by joining the bottom line of one component with the
top line of the next component.
(This gadget, and all future gadgets, can be argued to have exactly two
 local solutions using Rules 1--3 of Figure~\ref{1-triangles rules}.)
Later gadgets will perform similar joins between adjacent pairs of components,
but with more constraints on when the join is allowed.
Figure~\ref{1-triangles lines twice joined} shows that performing two such
joins on the same two lines results in an isolated cycle in the middle of the
puzzle, which in our construction can never be connected to other components.
Therefore, to form one connected solution path,
we will need to make exactly one join between each even-numbered
horizontal line (in particular, excluding the first line)
and the line immediately below it.

  \begin{figure}
\centering
\subcaptionbox{\label{1-triangles lines unjoined} Lines unjoined.}{%
  \hspace{.5cm}%
  \includegraphics[width=.75\textwidth]{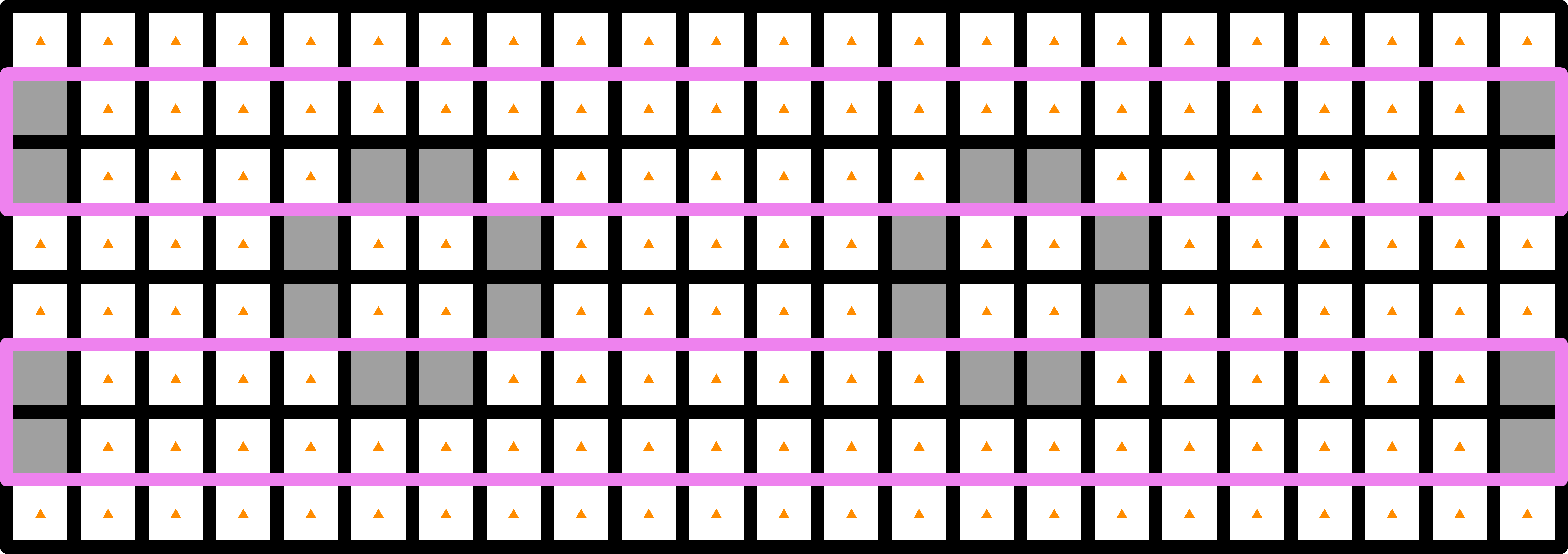}
  \hspace{.5cm}%
}\\
\subcaptionbox{\label{1-triangles lines once joined} Lines joined once.}{%
  \hspace{.5cm}%
  \includegraphics[width=.75\textwidth]{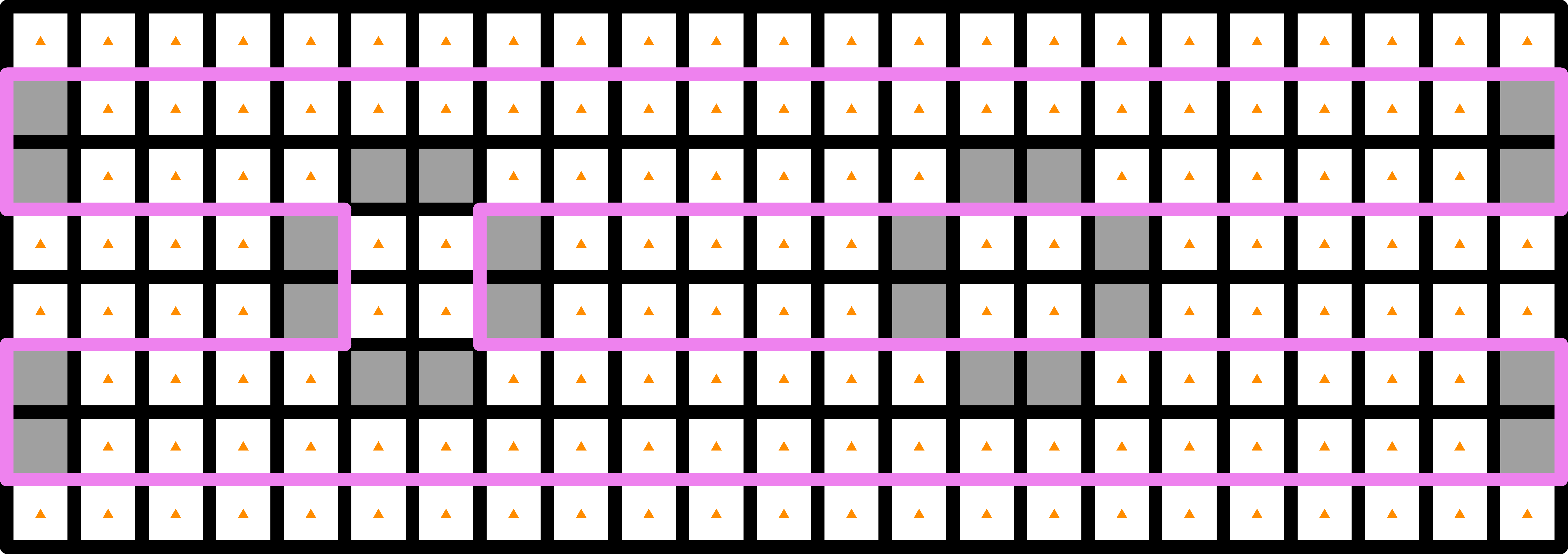}%
  \hspace{.5cm}%
}\\
\subcaptionbox{\label{1-triangles lines twice joined} Lines joined twice (invalid).}{%
  \hspace{.5cm}%
  \includegraphics[width=.75\textwidth]{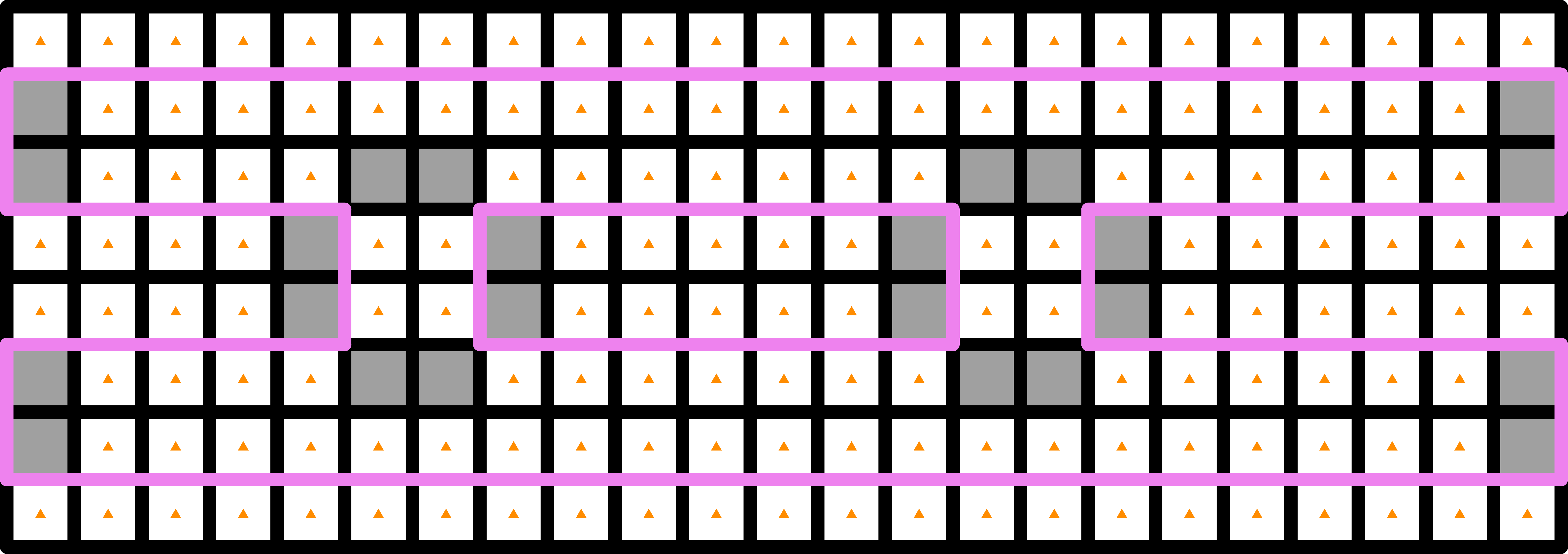}%
  \hspace{.5cm}%
}
\caption{Free-join gadget and how it interacts with the left and right
  boundaries.  In fact, only one of the two $4 \times 4$ circles are needed for
  this gadget, but this form illustrates the negative effect of joining twice.}
\label{1-triangles joining lines}
\end{figure}
  


\paragraph{3-set gadgets.}

We do not place any gadgets in the topmost two rows of cells
(and thus do not touch the topmost horizontal line), effectively shifting
the parity of lines in all gadgets relative to the left and right sides
of the puzzle.
In the remaining rows, the top $28 m$ rows of the puzzle represent the $m$
3-sets, where each group of $28$ consecutive rows represents one $3$-set.
Intuitively, the top $4$ rows of a $3$-set are ``support'' rows
(containing exactly $2$ lines),
while the following $24 = 8 \cdot 3$ rows leave $8$ rows
(containing $4$ lines) for each of the $3$ connections to elements.
The primary \emph{3-set gadget} is shown in Figure~\ref{1-triangles gadget 1};
we place it near the left side of every 3-set.
This gadget will enforce that each set is either entirely unchosen
(Figure~\ref{1-triangles gadget 1 all solution}) or entire chosen
(Figure~\ref{1-triangles gadget 1 nothing solution}).
We also place a free-join gadget to join together the two horizontal lines
in the four support (top) rows of each 3-set.

 \begin{figure}
\centering
\subcaptionbox{\label{1-triangles gadget 1 unsolved} Unsolved.}{%
  \hspace{.1cm}%
  \includegraphics[width=.28\textwidth]{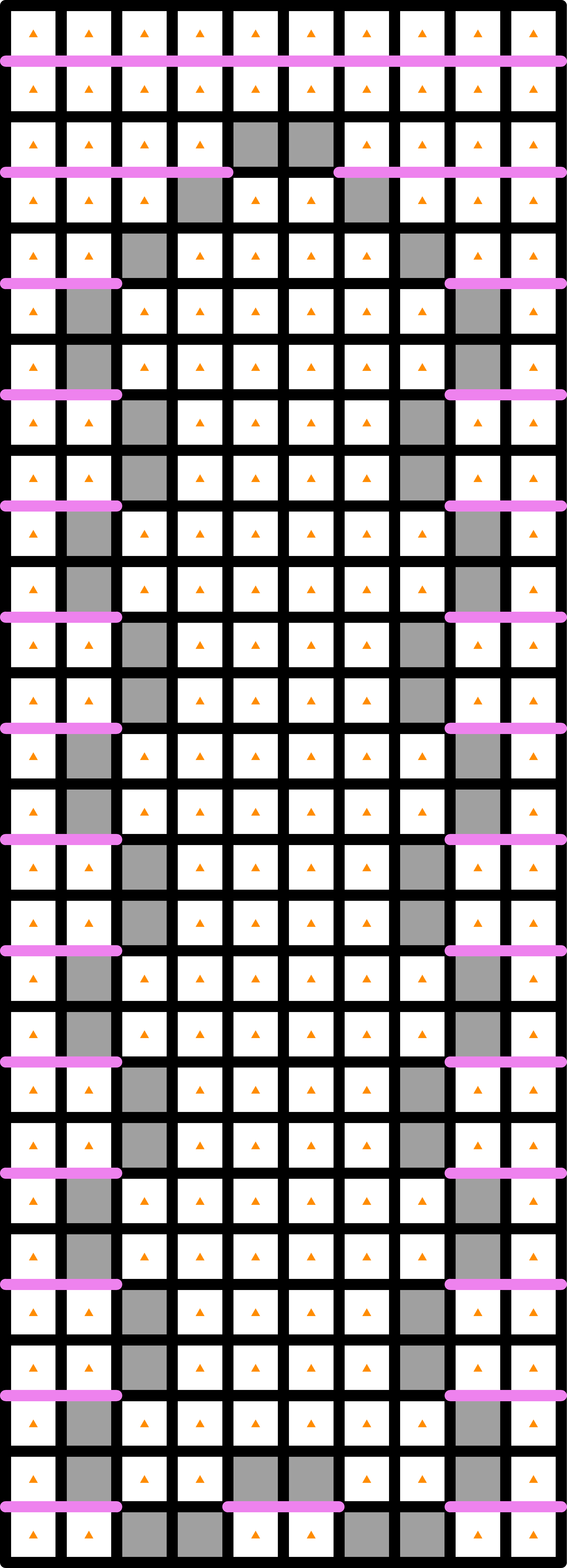}%
  \hspace{.1cm}%
}~~~
\subcaptionbox{\label{1-triangles gadget 1 all solution} Solution when the set is unchosen.}{%
  \hspace{.1cm}%
  \includegraphics[width=.28\textwidth]{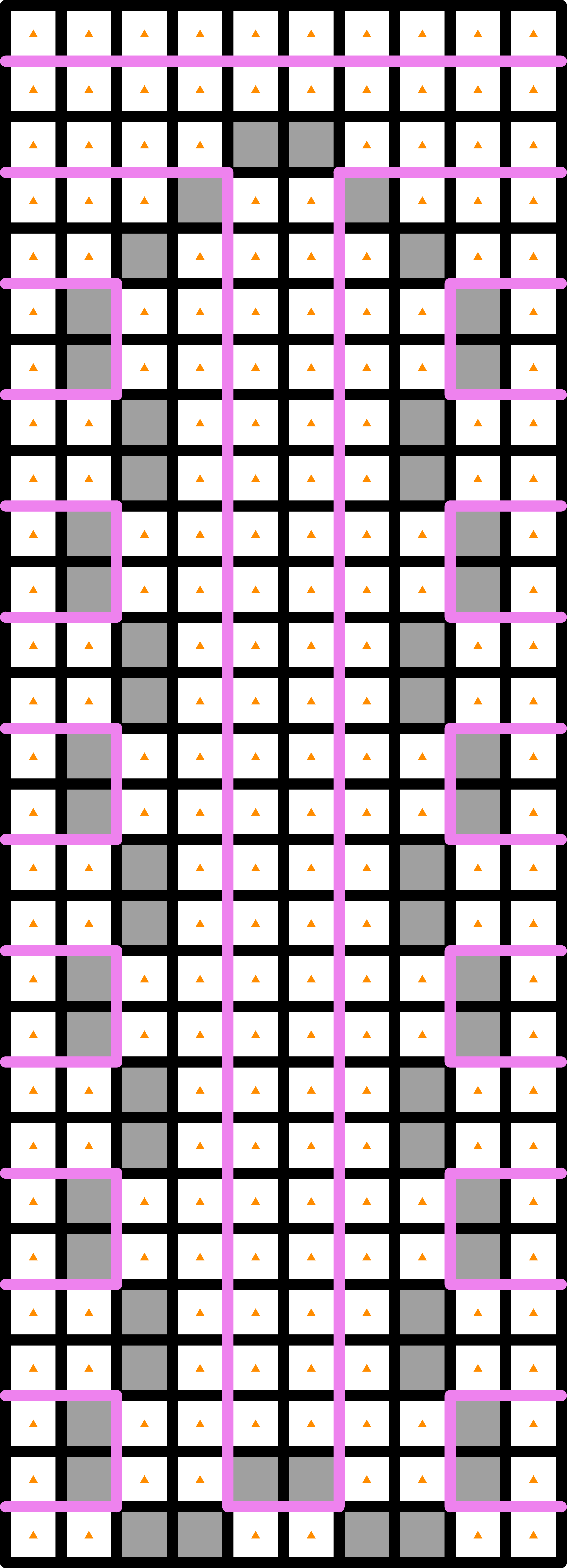}%
  \hspace{.1cm}%
}~~~
\subcaptionbox{\label{1-triangles gadget 1 nothing solution} Solution when the set is chosen.}{%
  \hspace{.1cm}%
  \includegraphics[width=.28\textwidth]{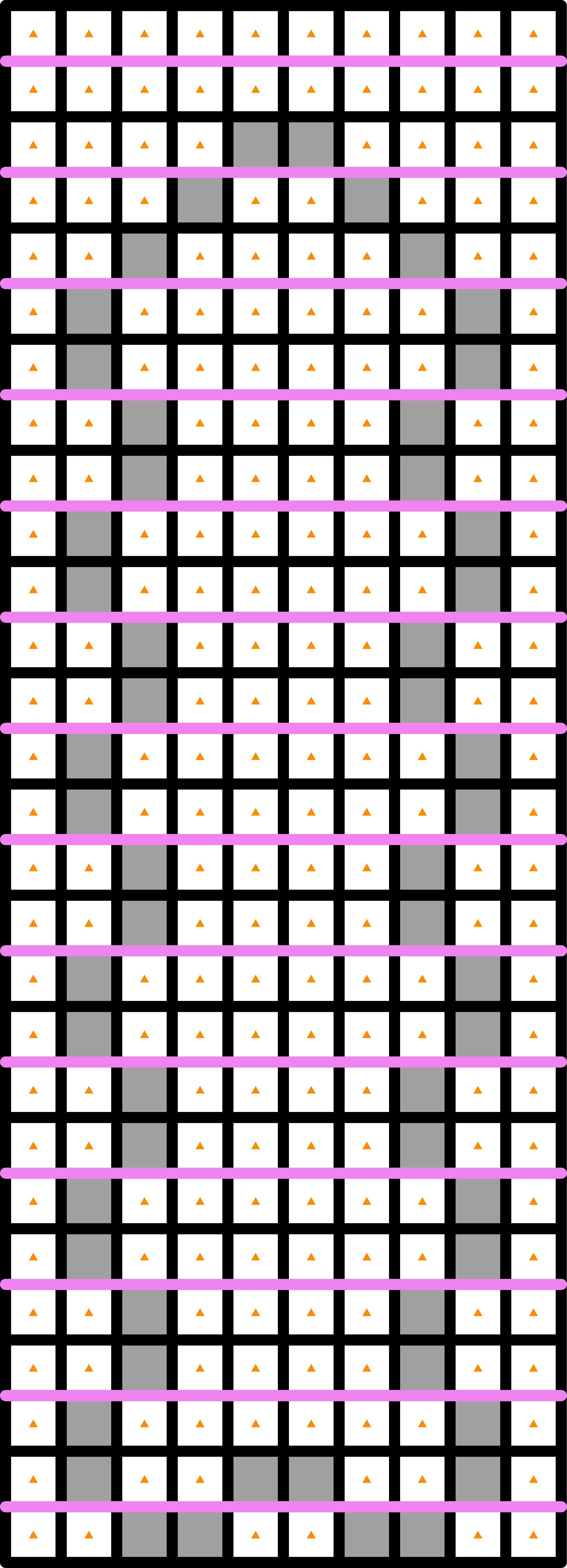}%
  \hspace{.1cm}%
}
\caption{The 3-set gadget consists of $28$ rows of cells:
  $4$ ``support'' on top to enable this gadget, followed by $8 \cdot 3$ rows,
  where each group of $8$ rows connects to one element.}
\label{1-triangles gadget 1}
\end{figure}  

\paragraph{Element gadgets.}

The following $8 n$ rows, which reach down to all but the bottommost two
rows of the puzzle, represent the elements.
Each element is represented simply by $8$ consecutive rows
(containing $4$ horizontal lines).

\paragraph{Connecting sets to elements.}

For each occurrence of an element in a set, we connect the $8$ rows of the
element to the corresponding $8$ rows of the set
(among the $24 = 3 \cdot 8$ nonsupport rows)
using the \emph{set--element connection gadget}
in Figure~\ref{1-triangles_gadget_2}.
This gadget has two solutions.
When the set is not chosen, all lines proceed horizontally through the gadget.
When the set is chosen, the solution is as shown in
Figure~\ref{1-triangles_gadget_2_all_solution}, which connects together
adjacent pairs of horizontal lines among the $4$ lines in the set and
the $4$ lines in the element, while leaving all other horizontal lines
unaffected (topologically in terms of which ends they connect,
even though they bend geometrically).
Each element connects to potentially several sets in this way,
but referring to Figure~\ref{1-triangles joining lines},
the element's lines will be properly connected if and only if
exactly one set containing the element is chosen while the rest are unchosen.

 \begin{figure}
\centering
\subcaptionbox{\label{1-triangles_gadget_2_unsolved} Unsolved.}{%
  \hspace{.1cm}%
  \includegraphics[width=.75\textwidth]{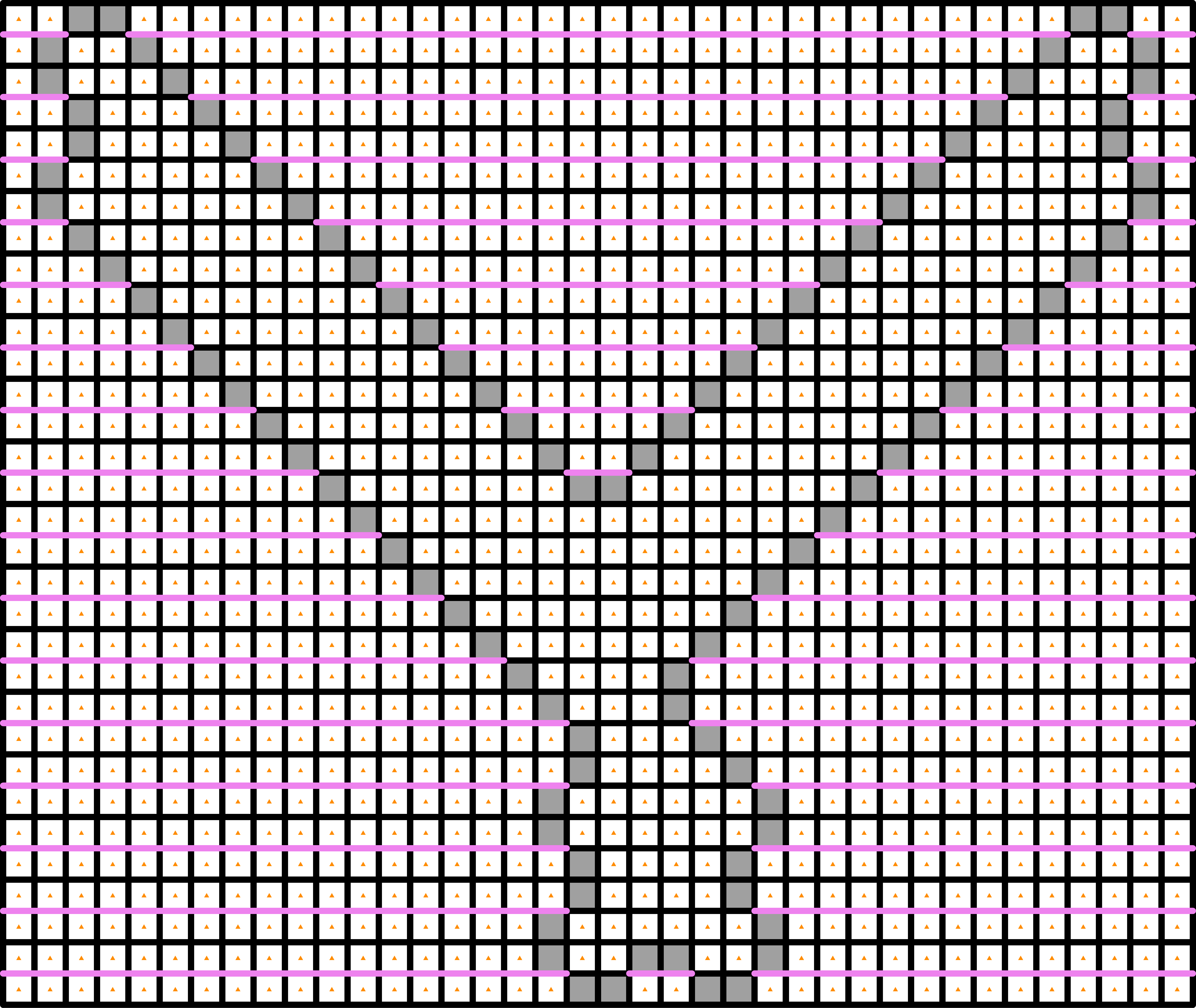}%
  \hspace{.1cm}%
}

\subcaptionbox{\label{1-triangles_gadget_2_all_solution} Solution when the set is chosen.}{
  \hspace{.1cm}%
  \includegraphics[width=.75\textwidth]{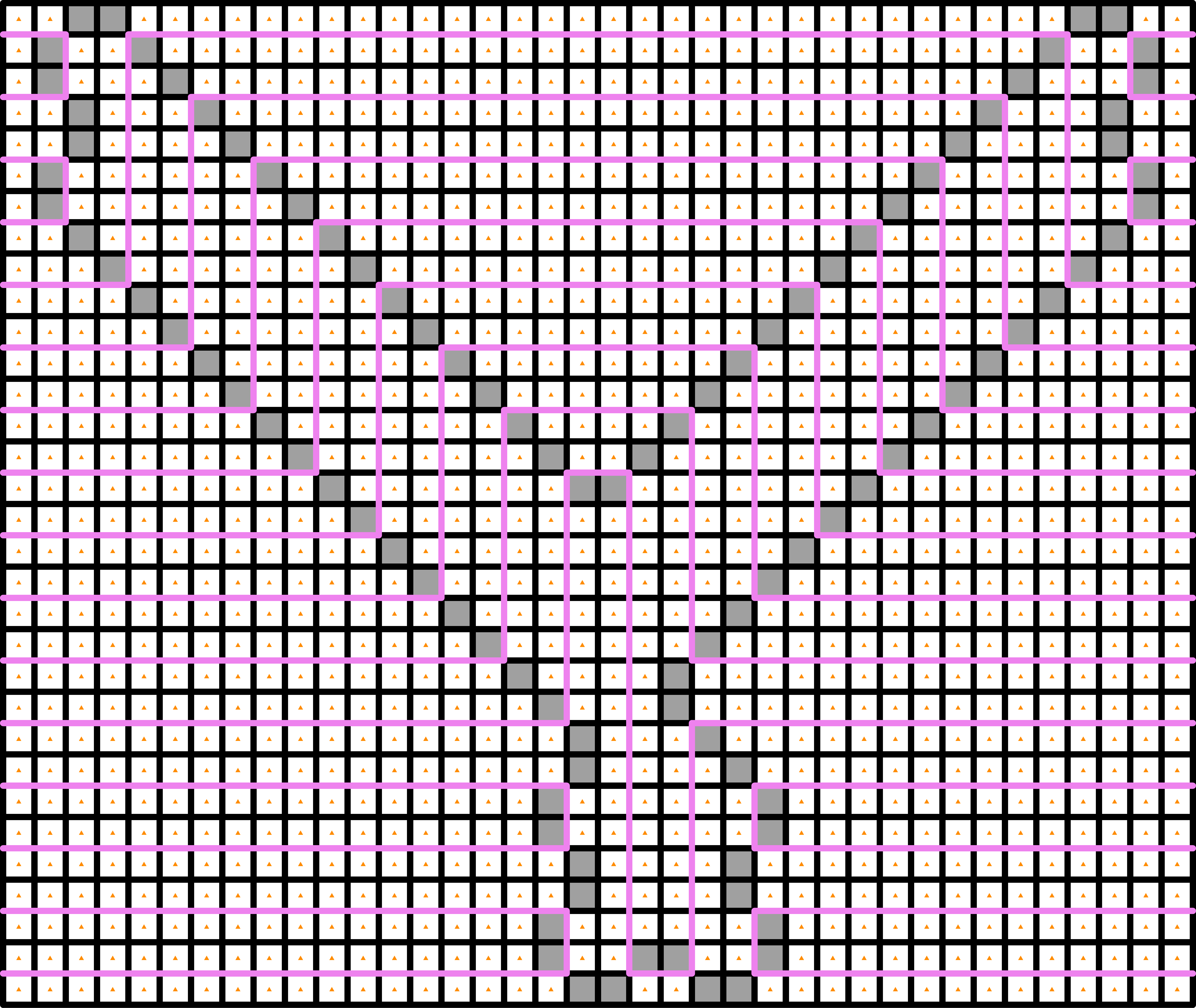}%
  \hspace{.1cm}%
}
\caption{Set--element connection gadget.  When this set is unchosen, the solution simply proceeds horizontally across the gadget.}
\label{1-triangles_gadget_2}
\end{figure}

The set--element connection gadget can be extended to larger sizes than
drawn in Figure~\ref{1-triangles_gadget_2} by adding an equal number of
rows and columns, and extending the pattern of the four ``diagonals'' of empty
cells in the top half of the gadget.  Similarly, it can be reduced in size.
This resizing enables us to connect together any set's $8$ rows
with any element's $8$ rows.
Because the width of the gadget depends upon its height, the nonoverlapping
placement of these gadgets affects the total number of columns in the puzzle.
Because we use only a polynomial number of polynomial-sized gadgets,
the constructed puzzle remains polynomial size.

\paragraph{Equivalence to X3C.}
As argued above, the partial solution in Figure~\ref{1-triangles edges solved}
joins together into a single solution path if and only if
every even-numbered horizontal line (skipping the first line)
has exactly one join with the line immediately below it
(as in Figure~\ref{1-triangles joining lines}).
The top two lines of each 3-set can always join together
(exactly once) via their free-join gadget (and in no other way).
The remaining ten lines of each 3-set are joined exactly once
if and only if the 3-set gadget and all three incident set--element
connection gadgets either all use the ``unchosen'' solution
(in which case the connections are from the 3-set gadget)
or all use the ``chosen'' solution
(in which case the connections are from the set--element connection gadgets).
The four lines of each element are joined exactly once
if and only if exactly one of the incident set--element connection gadgets
uses the ``chosen'' solution.
Therefore, solving the Witness puzzle is equivalent to
solving the X3C instance.
%
\end{proof}

\subsection{2-Triangle Clues}
\label{sec:2-triangles}

As with 1-triangle clues, any rectangle containing only 2-triangle clues can be locally satisfied by a set of disjoint path segments containing either all the vertical edges or all the horizontal edges in the rectangle.  However, compared to 1-triangle-only puzzles, locally-satisfying paths can turn with more flexibility even in areas completely filled with 2-triangle clues.  Accordingly, our proof for puzzles containing only 2-triangle clues is based around connecting concentric cycles instead of straight path segments.

\begin{theorem}
\label{thm:triangles2}
It is NP-complete to solve Witness puzzles containing only 2-triangle clues.
\end{theorem}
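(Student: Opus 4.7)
The plan is to reduce from Exact Cover by 3-Sets (X3C), mirroring the overall skeleton of Theorem~\ref{thm:triangles1} but replacing the role of horizontal ``lines'' with \emph{concentric rectangular cycles}, as the authors hint. I would build a tall rectangular puzzle whose cells are (almost) all filled with 2-triangle clues, and whose forced partial solution --- relaxing to allow a disjoint union of simple closed curves plus one start-to-end path --- consists of a family of nested rectangular cycles analogous to the horizontal lines of Figure~\ref{1-triangles edges solved}. The key starting observation is that a 2-triangle clue on every cell forces exactly two path edges per cell, so locally the path is a 2-regular subgraph, and globally it decomposes into disjoint simple closed curves (plus a single path through the start/end if applicable). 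Nested rectangular cycles are the ``canonical'' such decomposition, and I would shape the puzzle boundary and position the start circle and end cap (in small empty patches on one side, paralleling Figure~\ref{1-triangles edges}) so that this canonical decomposition is the \emph{only} one consistent with the boundary.

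Next I would design gadgets by deleting 2-triangle clues in small patches to allow selectively merging pairs of adjacent concentric cycles. A \textbf{free-join gadget}, analogous to Figure~\ref{1-triangles joining lines}, fuses two adjacent cycles into one, while a double-join between the same pair of cycles creates an isolated inner cycle and is therefore forbidden --- so each adjacent pair of cycles in a given vertical band must be merged exactly once in any valid solution path. I would then lift the X3C encoding of Section~\ref{sec:1-triangles} wholesale: designate bands of cycles for each 3-set and each element; build a \textbf{3-set gadget} enforcing that the four cycles of a 3-set are either all internally merged (``unchosen'') or all merged via external connections (``chosen''); and build a \textbf{set--element connection gadget} that, when the 3-set is chosen, merges its four cycles with the four cycles of the incident element, and otherwise passes all cycles through unchanged. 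An element's cycles then fuse into one component iff exactly one incident 3-set is chosen, which is precisely the exact-cover condition.

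The main obstacle, and what distinguishes this proof from the 1-triangle case, is establishing the structural lemma that any local solution in a clue-filled region must take the concentric-cycle form. Because a single 2-triangle clue admits all six ways of selecting two of its four edges --- including four corner configurations as well as the two straight-through configurations --- the tight local forcing rules of Figure~\ref{1-triangles rules} have no direct analogue. I would attack this by an inductive ``peeling'' argument: use boundary empty cells (near the start circle and end cap, and possibly a designated corner) to pin down the outermost simple closed curve of the decomposition; then argue that the curve immediately interior is itself a rectangular cycle, because the annulus between it and the already-fixed outer curve is a 2-triangle-filled annulus whose only 2-regular, non-self-crossing completion consistent with the gadget interfaces is the next rectangular shell. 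Ruling out ``zig-zag'' or ``spiral'' completions in this annulus is where the bulk of the casework will live, analogous to (but more intricate than) the nonsolution analysis of Figure~\ref{1-triangles left gadget nonsolutions}. Once this structural lemma is in hand, the correspondence between solutions of the constructed puzzle and solutions of the X3C instance follows by the same counting/connectivity argument used in the 1-triangle proof, and membership in NP is provided by Observation~\ref{all-but-antibody-np}.
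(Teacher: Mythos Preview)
Your high-level plan (reduce from X3C, force the ``empty'' solution to be a family of concentric cycles, then merge adjacent cycles via gadgets) matches the paper's approach. The gap is in your structural lemma. You propose an inductive peeling argument to show that the decomposition must be concentric rectangular shells, but this is exactly the step where 2-triangle clues bite: a fully 2-triangle-filled region admits many 2-regular subgraphs (not just concentric rectangles), and there is no local rule analogous to Rules~1--3 of Figure~\ref{1-triangles rules} to stop zig-zags. Your annulus-by-annulus argument has nothing to anchor it, because the inner boundary of one shell does not force the next shell to be rectangular.

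The mechanism the paper actually uses is \emph{diagonal wave propagation}: if the path turns at a 2-triangle cell, then it must also turn (in the same sense) at the diagonally adjacent 2-triangle cell, so any turn emits a ray along a grid diagonal that can terminate only at an empty cell or the board boundary. The four corners of a square board necessarily emit rays along the two main diagonals, partitioning the board into four quadrants and forcing the concentric-square pattern in any clue-filled region. Gadgets (an X3C gadget merging three designated pairs of squares, a diamond gadget merging a single pair, and bulk connector gadgets for the remaining squares) are then placed in \emph{separate quadrants}, with their empty cells positioned so that any ray escaping a gadget crashes into a main-diagonal ray (possibly after reflecting off the boundary), violating the 2-triangle clue at the intersection. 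This ray-containment argument is what replaces your peeling lemma; without it, the construction does not force concentric cycles and the proof does not go through.
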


\begin{proofsketch}
We reduce from X3C, making use of the fact that the solution path must be a single closed path.  The puzzle is almost completely filled with 2-triangle clues.  Local conditions alone force any solution to form disconnected concentric cycles, but we can build gadgets by deleting 2-triangle clues.  Elements are represented by rows in the bottom quadrant of the puzzle, where sets are represented by gadgets intersecting these rows (see Figure~\ref{2-triangles full puzzle}).  Used set gadgets connect the concentric cycles of their element rows, and cleanup gadgets allow connecting the cycles not corresponding to elements.  The cycles can be connected to form a single solution path exactly when the X3C instance has a solution.
\end{proofsketch}

\begin{proof}
As in the 1-triangle reduction of Theorem~\ref{thm:triangles1},
we reduce from X3C.
Consider an instance $(X,C)$ with $|X| = n$ elements and $|C|=m$
cardinality-3 subsets of $X$, and assume without loss of generality that
$X = \{0,1,\dots,n-1\}$.

Similar to the previous reduction, we start by considering a puzzle completely filled with 2-triangle clues together with a clue-satisfying set of disjoint paths, namely concentric squares in a square grid, and placing gadgets (arrangements of empty cells) such that these paths can be joined into a full solution path if and only if the X3C instance has a solution.

\paragraph{Wave propagation.}

\def\2triscale{0.3}
\begin{figure}
\centering
\subcaptionbox{\label{fig:2-triangles-waveprop} A turn between clues forces further turns in both directions along the ray.}{
  \includegraphics[scale=\2triscale]{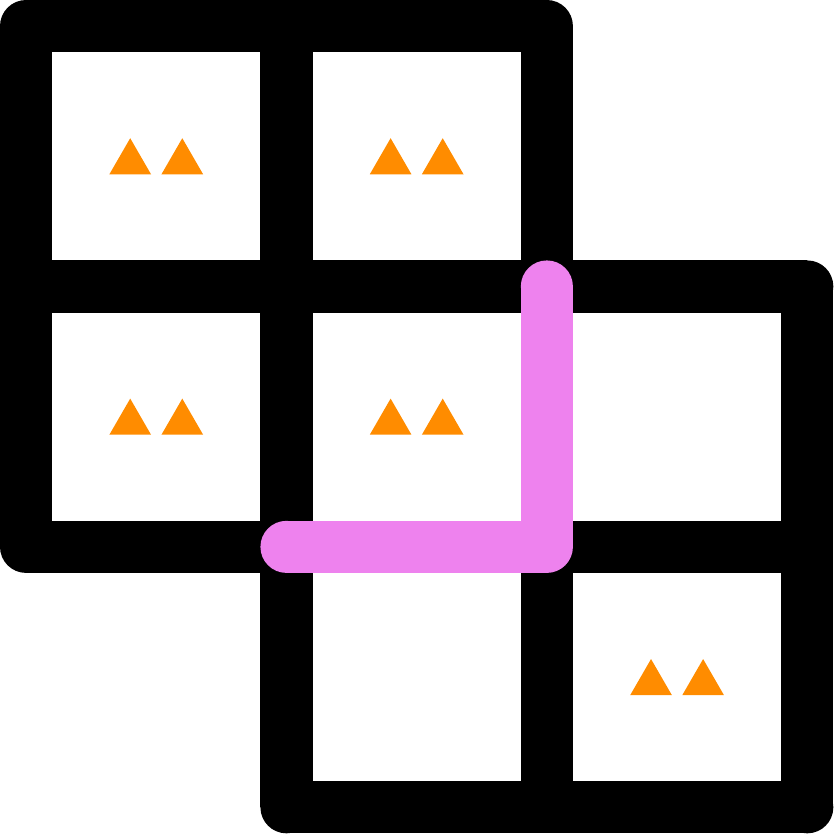}
  \hspace{0.3cm}
  \includegraphics[scale=\2triscale]{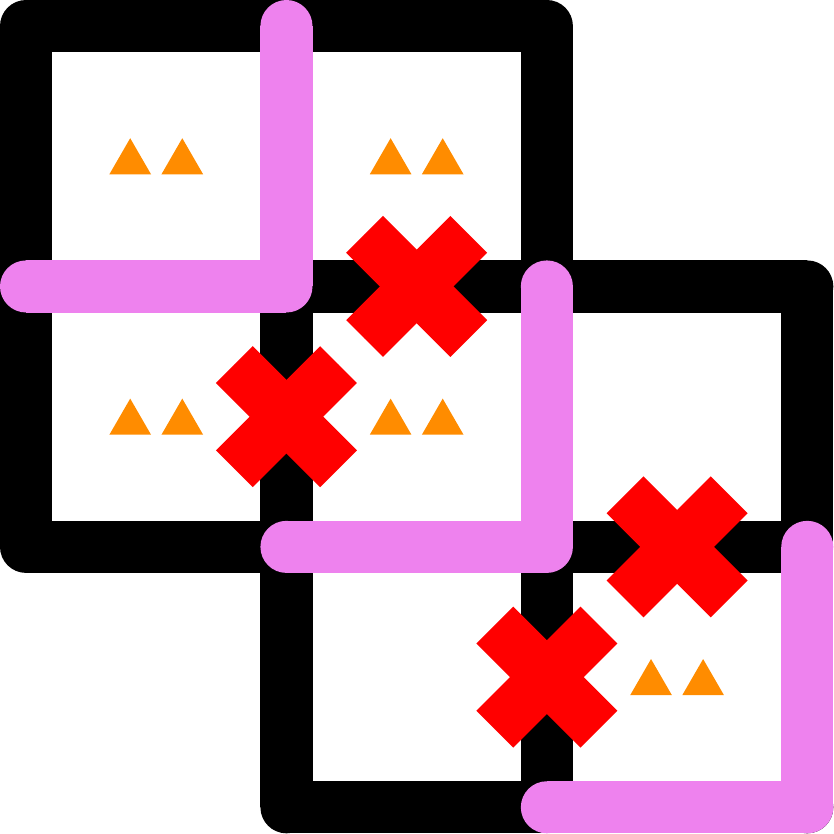}
}~~~~~~~
\subcaptionbox{\label{fig:2-triangles-corner} The corners of the puzzle always emit one or three rays.}{
  \includegraphics[scale=\2triscale]{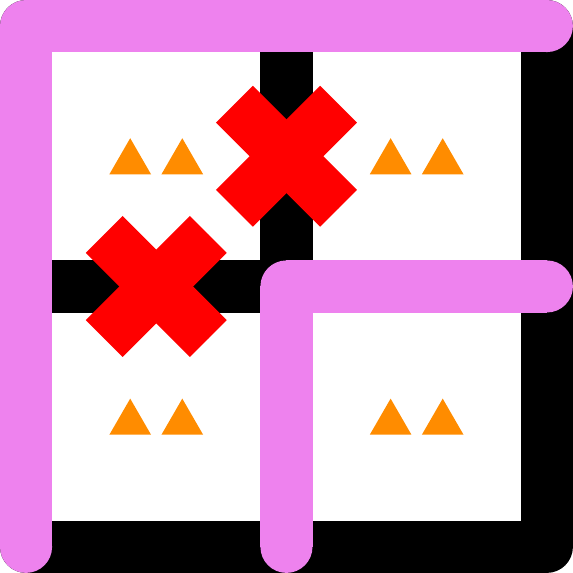}
  \hspace{0.3cm}
  \includegraphics[scale=\2triscale]{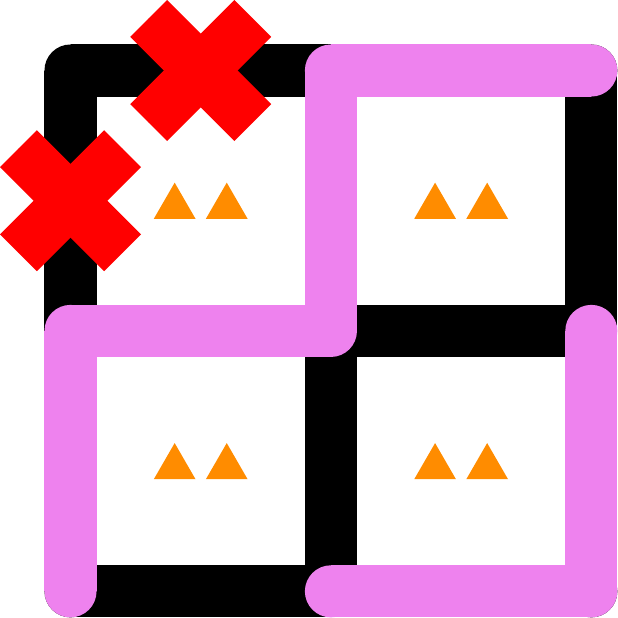}
}

\subcaptionbox{\label{fig:2-triangles-reflection} Rays can reflect off the puzzle boundary in two ways (here, the top boundary).}{
  \includegraphics[scale=\2triscale]{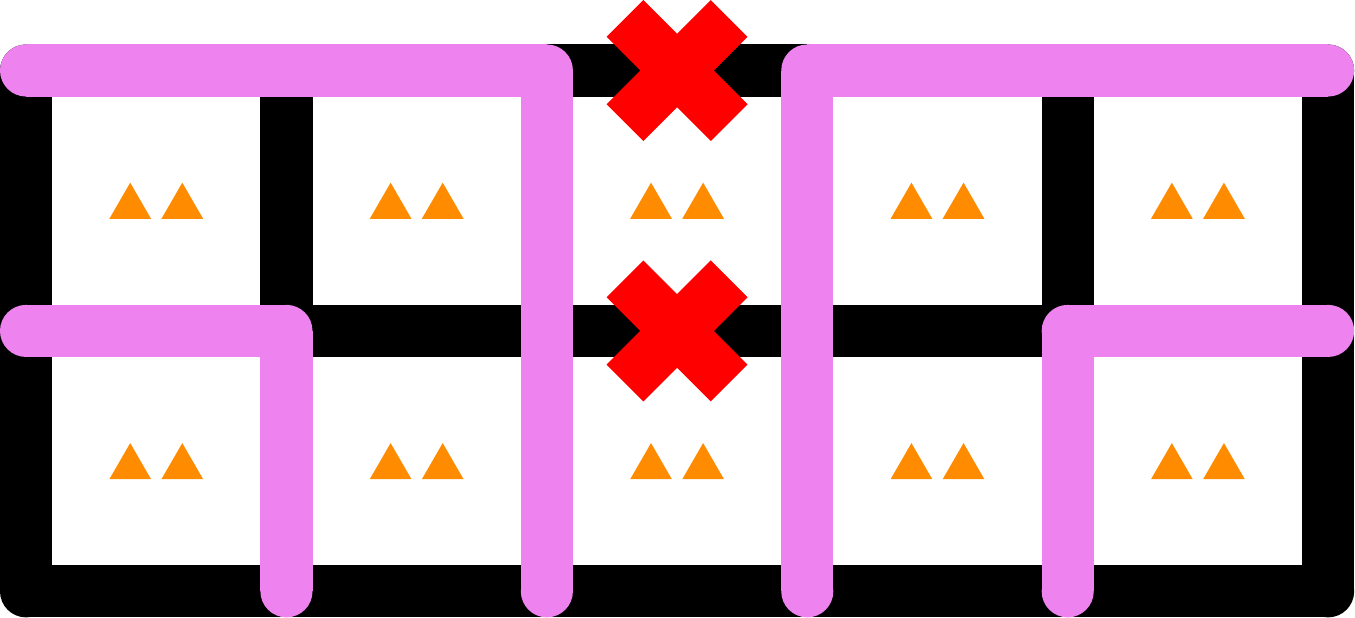}
  \hspace{0.2cm}
  \includegraphics[scale=\2triscale]{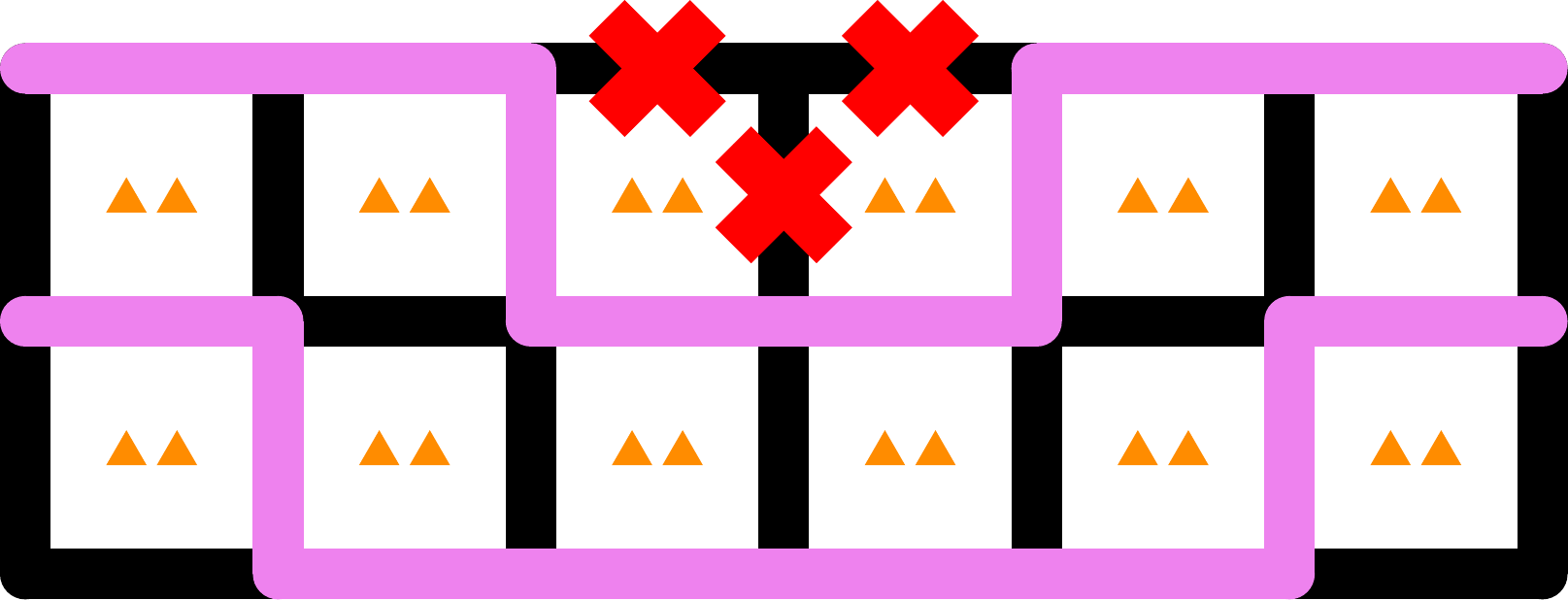}
}

\caption{Local rules for 2-triangle clues regarding wave propagation.}
\label{2-triangles diag prop}
\end{figure}

A path can turn in puzzle areas filled with 2-triangle clues, but when it does so, it becomes a wavefront that continues to propagate, turning along a ray directed based on where the turns ``point'' (see Figure~\ref{fig:2-triangles-waveprop}).  When a path turns on a 2-triangle clue cell, and the ray points to a diagonally adjacent 2-triangle clue cell, then the path must also turn in the same direction on that cell, extending the ray outwards.  Similarly, the wave propagates inward (opposite the ray) when there are only 2-triangle clues nearby, as the path would otherwise produce an isolated $2 \times 2$ square.  The solution path necessarily turns at the corner of the puzzle, so each corner emits one ray (directed into the corner) or three rays (one out and two in) along the main diagonals of the puzzle (see the top-right of Figure~\ref{fig:2-triangles-corner}).

Wave propagation imposes structure on any possible solution path: in an area full of 2-triangle clues, the solution path divides the area into horizontally- and vertically-oriented subareas (containing parallel horizontal or vertical path edges respectively) separated by the rays of the waves.  By the propagation rules, rays can only start and end at either an empty cell or the boundary of the puzzle, and cannot intersect other rays propagating in a different direction except at an empty cell or at the boundary of the puzzle.

\paragraph{Overall layout.}

Figure~\ref{2-triangles full puzzle} shows the overall layout of the produced Witness puzzle. The rays emitted from the corners of the puzzle divides the puzzle into four quadrants.  Because the puzzle is full of 2-triangle clues except in our gadgets, the path cannot turn except along those main diagonal rays, so the initial ``solution'' to the puzzle is a set of concentric square cycles.  In the remainder of the proof, we add gadgets (remove 2-triangle clues) to enable connecting these cycles into a single path exactly when the X3C instance has a solution.

We associate some pairs of concentric squares with elements in the X3C instance and add X3C gadgets that allow connecting the square pairs associated with each of the elements in a 3-set.  The filler diamond gadgets connect the cycles between the element-representing cycles.  The bulk connector gadgets connect the cycles in the rest of the puzzle and resolve the collision of the main diagonal rays at the very center of the puzzle.

\begin{figure}
\centering
\includegraphics[width=.7\textwidth]{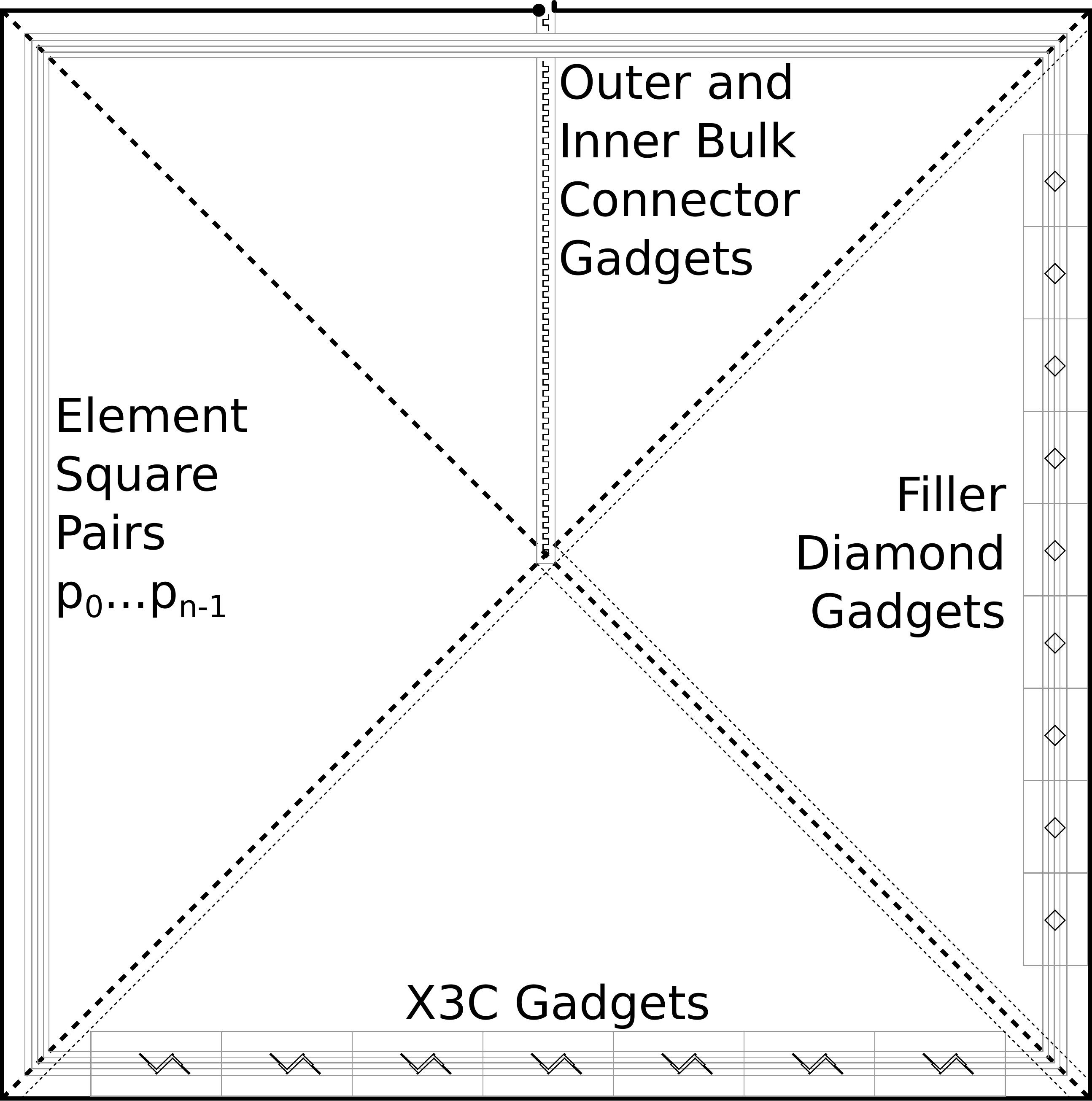}
\caption{The layout of gadgets within the X3C reduction puzzle. Quadrants defined by corner rays separated by thick dashed lines, and ray escape risk areas by thin dashed lines.}
\label{2-triangles full puzzle}
\end{figure}

Each of these gadgets contains some empty cells, so there is a risk that rays could start in one gadget and end in another, or travel between distant empty cells in the same gadget in unintended ways, disrupting the pattern of concentric squares.  To contain the rays, we place gadgets such that there are no empty cells outside the gadget along the diagonals of the gadget's empty cells (see Figure~\ref{2-triangles x3c exterior}).  We also take into account possible reflections off of the puzzle boundary (see Figure~\ref{fig:2-triangles-reflection}) by adding buffer space below the gadget and protecting additional diagonals to its left and right.  By placing gadgets in this way, any ray escaping a gadget will crash into one of the main diagonal rays (possibly after reflecting off of the puzzle boundary); as there are no empty cells along the main diagonals (except the very center cell), there is no way to avoid violating the 2-triangle clue at the ray intersection.

We isolate the different types of gadgets from one another by placing them in different quadrants (separated by a main diagonal).  Besides being convenient for layout, using separate quadrants is necessary because the bulk connector gadget contains empty cells on most diagonals.

\begin{figure}
\centering
\includegraphics[width=.7\textwidth]{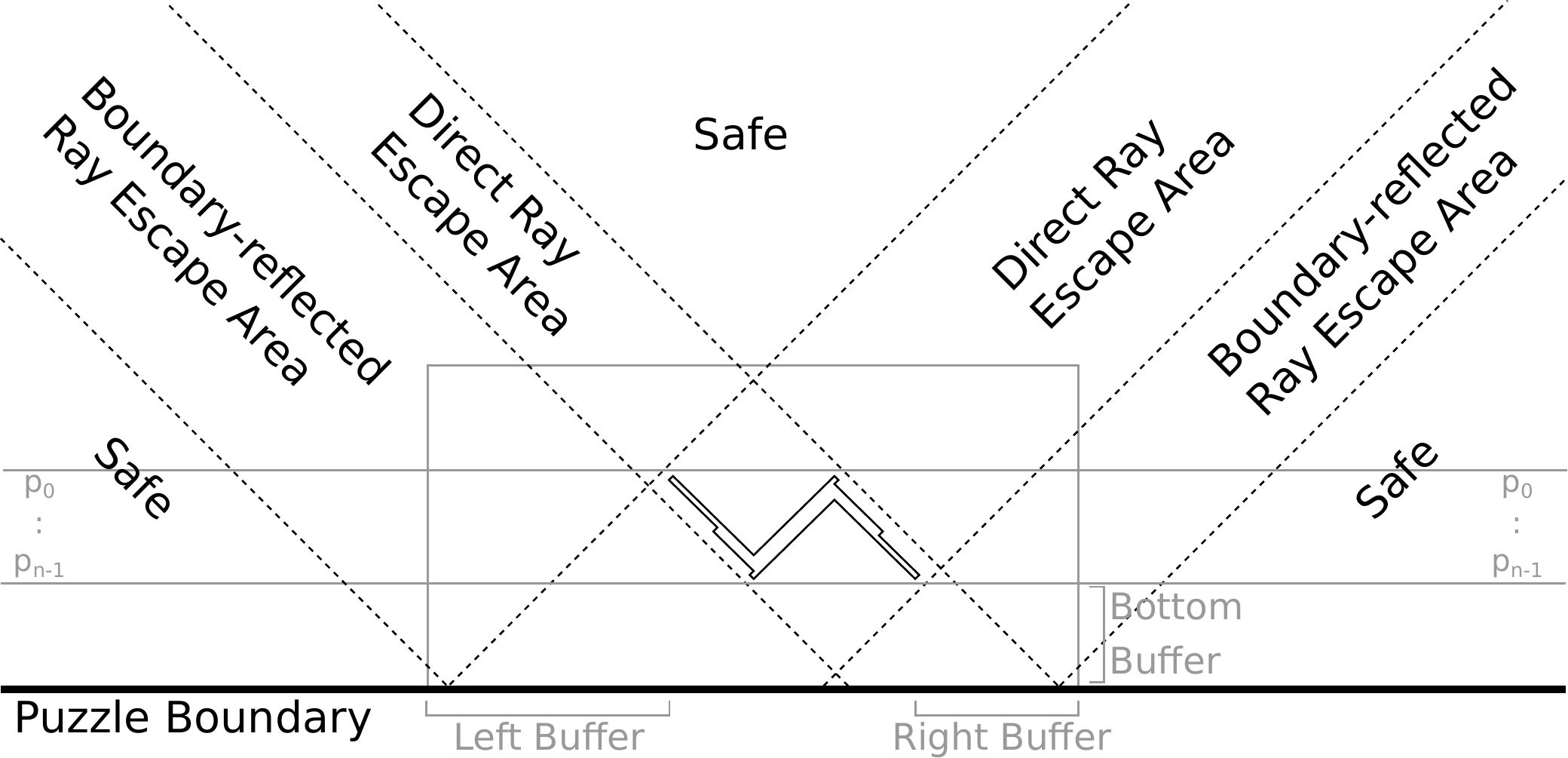}
\caption{How to safely place a gadget (e.g., an X3C gadget) anywhere within $p_0$ to $p_{n-1}$. Possible ray escape areas are delimited by dashed lines, and the reserved space for the gadget by a gray box.}
\label{2-triangles x3c exterior}
\end{figure}

\begin{figure}
\centering
\includegraphics[scale=\2triscale]{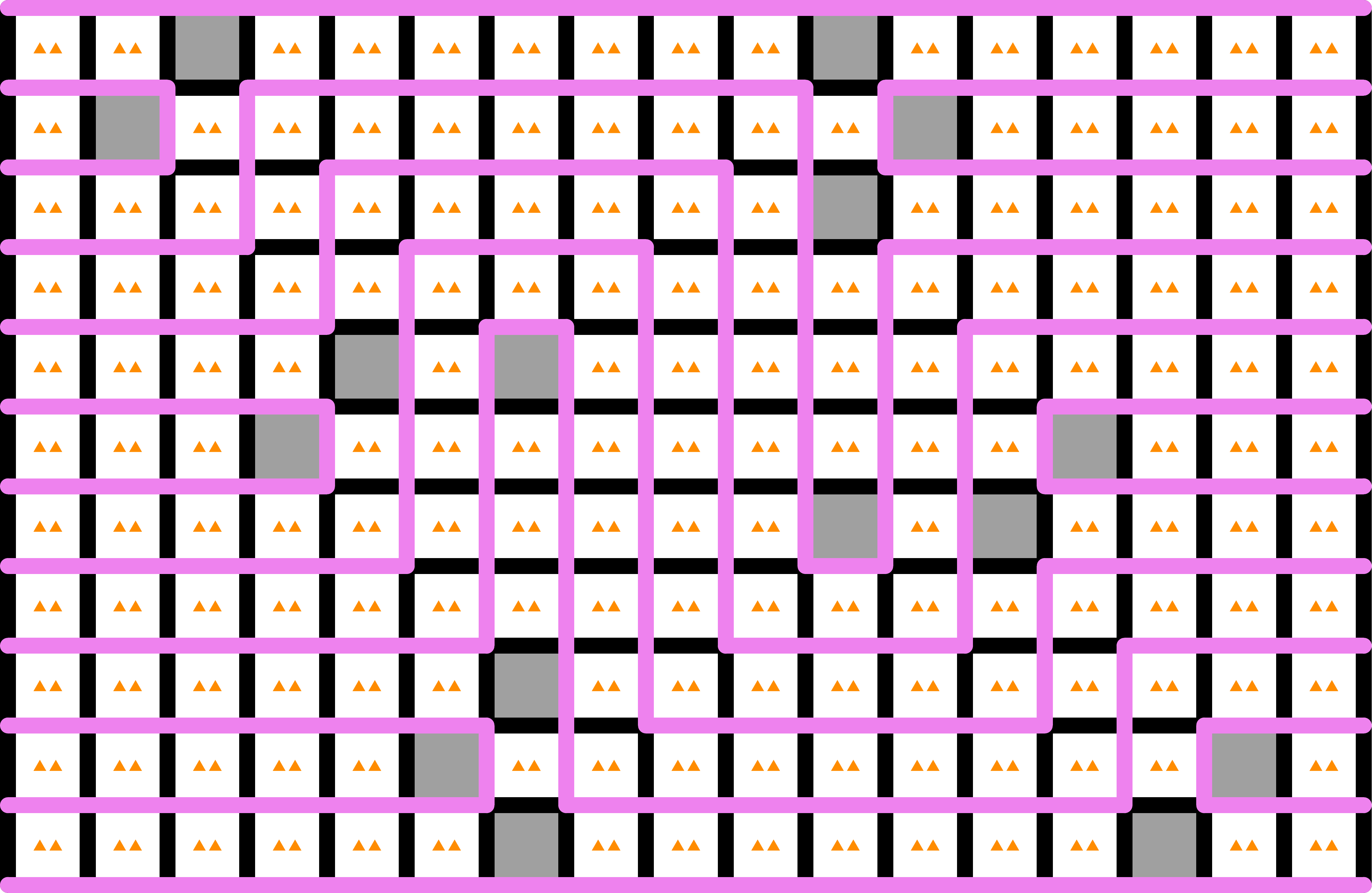}
\caption{The X3C gadget for $\{x,x+1,x+2\}$, solved to join 3 pairs of adjacent paths $p_x$, $p_{x+1}$, and $p_{x+2}$. All other traversing paths' connections are unaffected.  In the other solution, all paths continue horizontally across the gadget.
}
\label{2-triangles x3c used}
\end{figure}

\begin{figure}
\centering
\includegraphics[width=.7\textwidth]{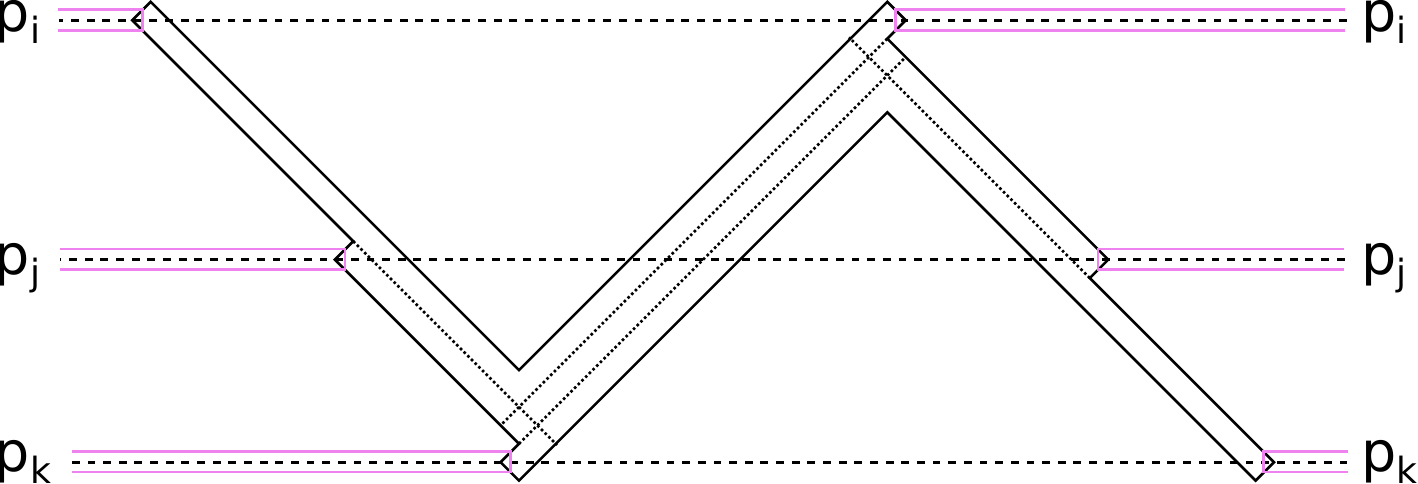}
\caption{A blueprint for constructing an X3C gadget for any $\{i,j,k\}$. The polygon represents the border of the vertically oriented interior of the gadget: its points are the centers of the empty cells, and its edges are the rays, all on unique diagonals of the grid. The dotted lines mark the width of each interior section.}
\label{2-triangles x3c blueprint}
\end{figure}

\paragraph{X3C gadget.} For each $x \in X$, we associate an adjacent pair of concentric squares $p_x$ with radii $r_x$ and $r_x+1$, separated from other pairs such that $r_{x+1} = 4 + r_x$.
For each $\{i,j,k\} \in C$, we place an X3C gadget, which is built to permit one nontrivial solution where each of $p_i$, $p_j$, and $p_k$ merge, representing using the set in the cover.  The trivial solution, with all paths continuing horizontally through the gadget, represents not using the set in the cover.
Figure~\ref{2-triangles x3c used} shows the smallest X3C gadget, and a general X3C gadget can be constructed by stretching it to cover any $p_i$, $p_j$, and $p_k$ pairs, as detailed in Figure~\ref{2-triangles x3c blueprint}.
Each empty cell in an X3C gadget shares a diagonal with exactly two other empty cells, each on a different diagonal.  Because our overall layout leaves no way to terminate a ray that leaves the gadget, when the nontrivial solution is used, the rays are forced to trace out the unique polygon that merges the three pairs of paths and preserves connectivity of other paths.

\paragraph{Diamond gadget.}  The diamond gadget, shown in Figure~\ref{fig:2-triangles-diamond}, is the simplest gadget that can connect adjacent cycles in a controlled way.  Again, due to our overall layout, only the trivial fully-horizontal/vertical local solution and the nontrivial local solution shown in the figure are possible in a global solution.  We place diamond gadgets in the right quadrant to connect the three adjacent pairs between each $p_x$ and $p_{x+1}$.  As these gadgets are the only way to connect those pairs, the nontrivial local solution is always used.

\begin{figure}
\centering
\subcaptionbox{\label{fig:2-triangles-diamond-unused} Unused (not joining any cycles).}{
  \includegraphics[scale=\2triscale]{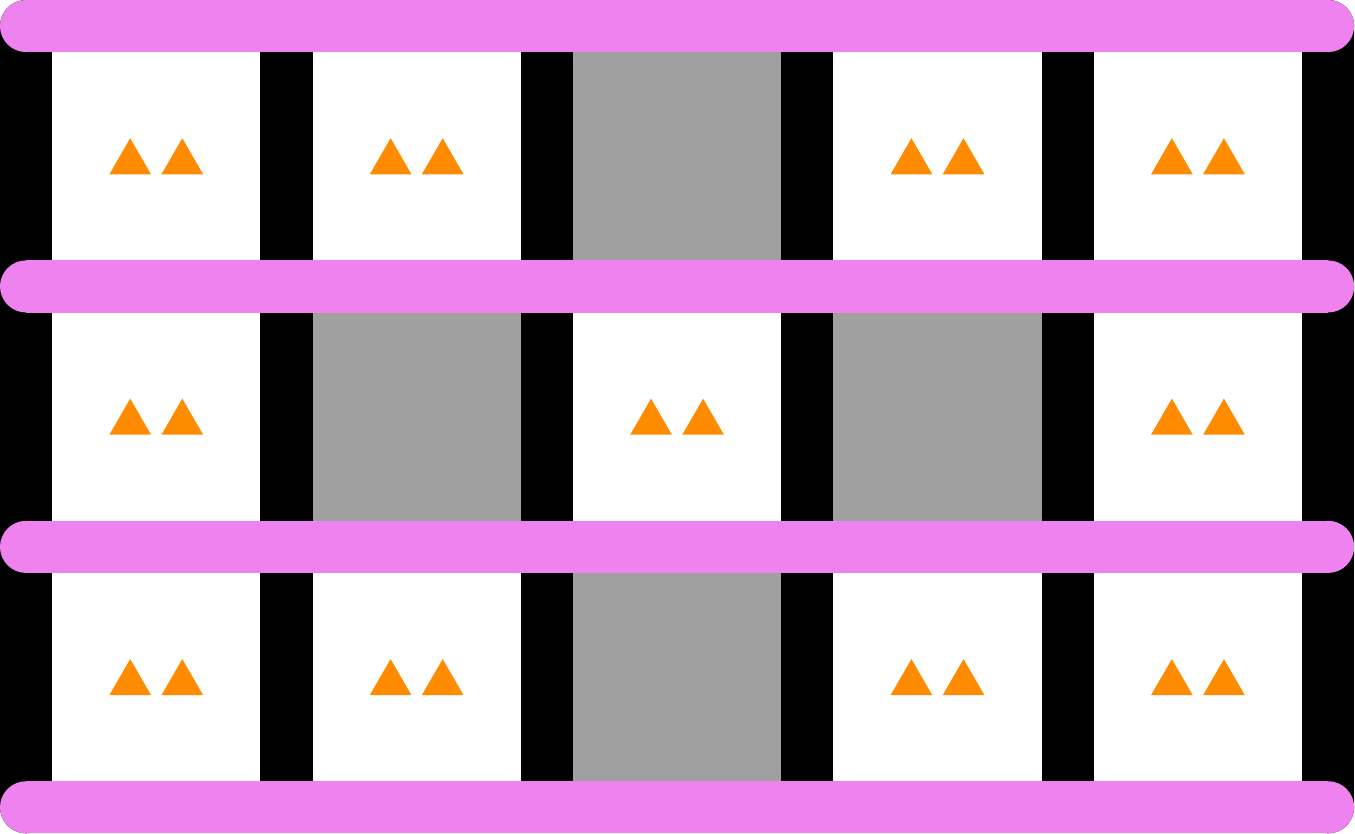}
}~~~~
\subcaptionbox{\label{fig:2-triangles-diamond-used} Used to join the center pair of cycles into a single cycle.}{
  \includegraphics[scale=\2triscale]{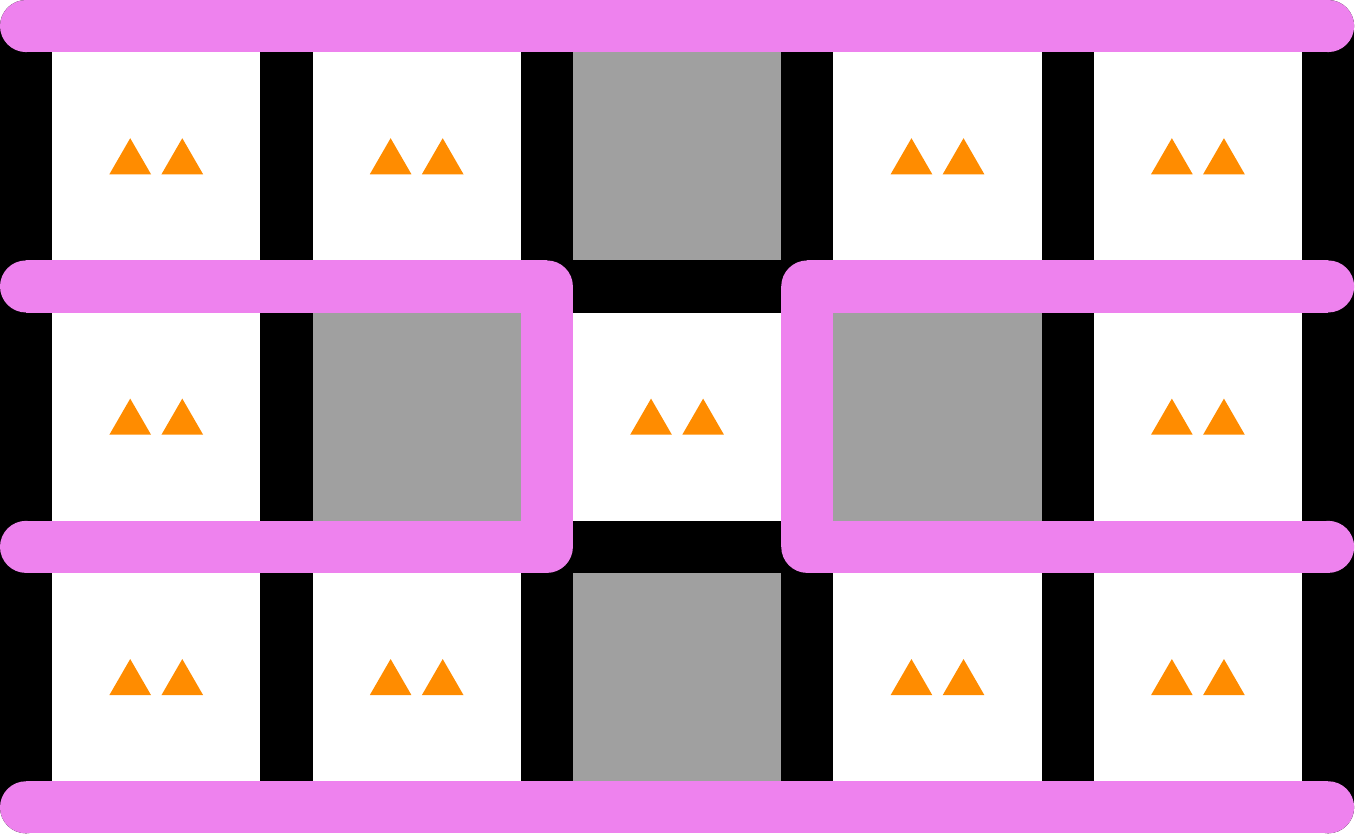}
}
\caption{The two local solutions of a diamond gadget.}
\label{fig:2-triangles-diamond}
\end{figure}

\paragraph{Bulk Connector Gadget.} The bulk connector gadget is an extensible pattern connecting all concentric squares smaller or larger than a desired radius (see Figure~\ref{fig:2-triangles-bulk}).  The X3C gadgets take care of connecting $p_0,\dots,p_{n-1}$, and the diamond gadgets connect the pairs in between; we use bulk connector gadgets in the top quadrant to join all squares with radii from $1$ to $r_0$ and from $r_{n-1}+1$ to the boundary.  This places an empty cell at the center of the puzzle to terminate the main diagonal rays.

\begin{figure}
\centering
\subcaptionbox{\label{fig:2-triangles-bulk-center} The inner bulk connector gadget.  The leftmost empty cell is the puzzle's center cell.}{
  \includegraphics[scale=0.25]{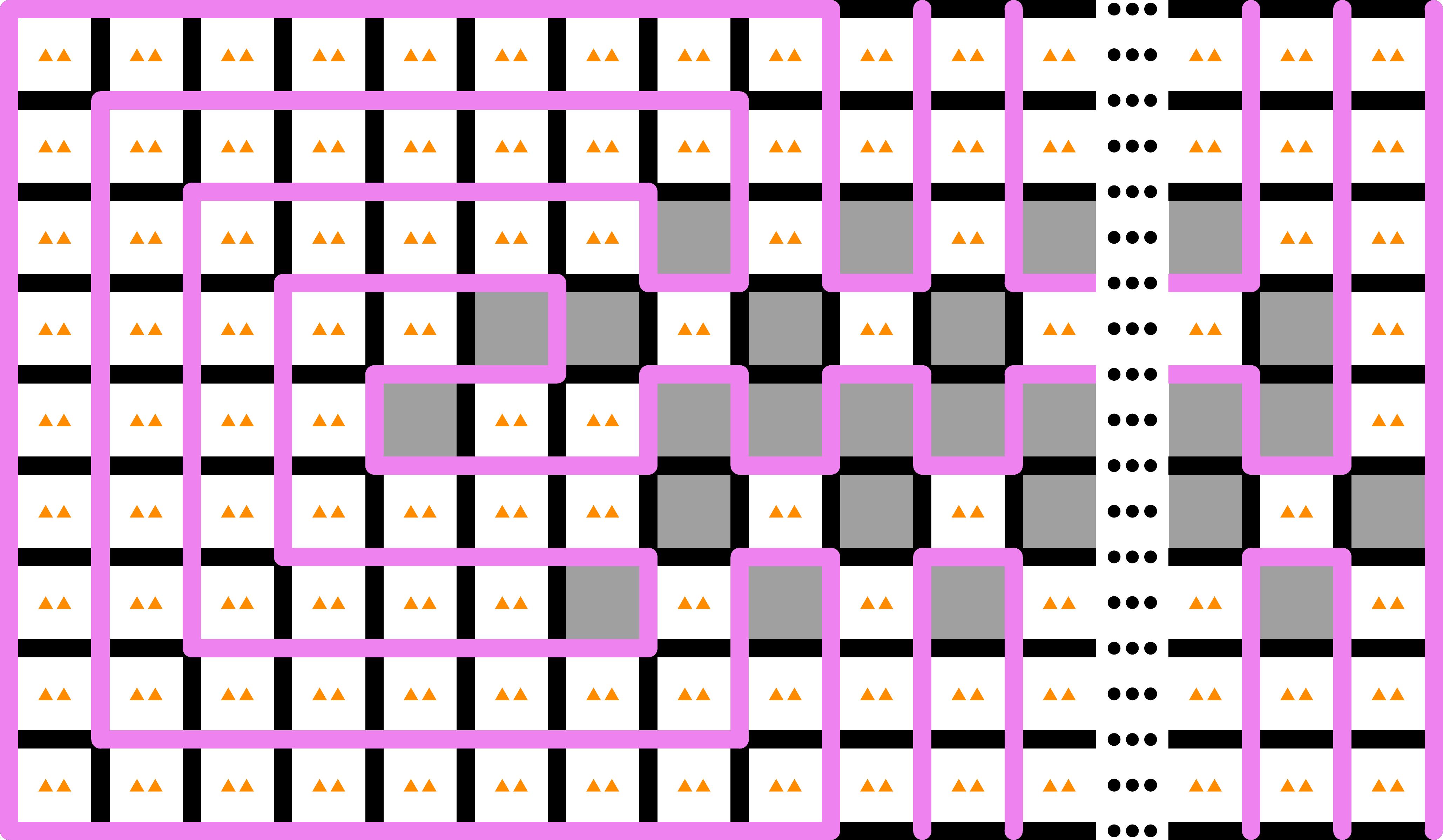}
}~~~~~~~~
\subcaptionbox{\label{fig:2-triangles-bulk-startend} The outer bulk connector gadget, incorporating the start circle and end cap.}{
  \includegraphics[scale=0.25]{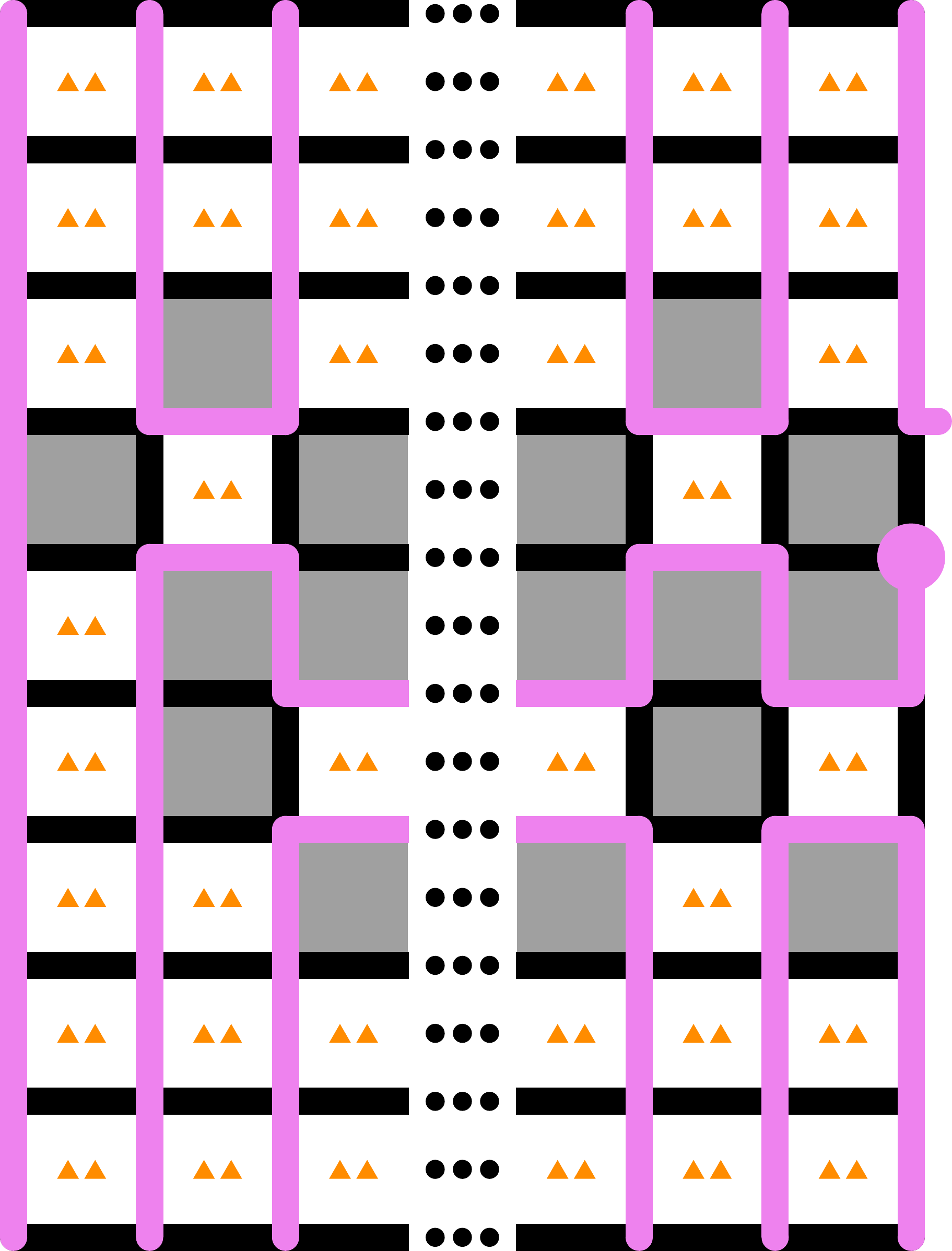}
}
\caption{The bulk connector gadget used to connect all concentric squares smaller or larger than a desired radius.  The bulk connector gadget can be instantiated in any orientation; in the orientation shown here, vertical pairs are connected.  The concentric squares representing elements lie between the inner and outer bulk connector gadgets.}
\label{fig:2-triangles-bulk}
\end{figure}


\paragraph{Analysis.} If $(X,C)$ is a \textsc{yes} instance of X3C, then the puzzle can be solved using the partition $P \subseteq C$ of $X$. Start with the concentric squares set of disjoint paths, which satisfies all two-triangle clues, then use the bulk connector gadget and all filler diamond gadgets to join all adjacent squares except $p_0,\dots,p_{n-1}$. For each $\{i,j,k\} \in P$, use the corresponding X3C gadget to join $p_i$, $p_j$, and $p_k$. Because each $x \in X$ appears in exactly one $S \in P$, each $p_x$ will be joined exactly once. Because using a gadget preserves clue satisfaction, and every concentric square is joined into one path, the path is a solution to the Witness puzzle.

Likewise, if the puzzle has a solution path, then $(X,C)$ must be a \textsc{yes} instance of X3C. As argued, gadgets are sufficiently spaced apart and the main diagonals of the puzzle isolate each quadrant, so any ray leaving a gadget would intersect with a main diagonal ray (possibly after reflecting off of the puzzle boundary), violating the 2-triangle clue at the intersection.
Because X3C and diamond gadgets only have one nontrivial solution not emitting any rays, and using X3C gadgets is the only way to join adjacent concentric squares $p_0,\dots,p_{n-1}$ without violating 2-triangle clues, the solution path defines a $P \subseteq C$ by its use of the X3C gadgets. $P$ must cover every $x \in X$ otherwise the solution path would be disjoint, and it cannot double-cover any $x \in X$ otherwise the solution path would create a disconnected cycle out of the concentric squares in $p_x$, so $P$ is a partition of~$X$.

With respect to the complexity of the reduction, notice that the bounding box of the X3C gadget for $\{i,j,k\}$ is $\Theta(k-i) \times \Theta(k-i)$ cells, for $i < j < k \leq n-1$. The size of the buffer space surrounding each X3C gadget is $\Theta(n) \times \Theta(n)$ cells and the total lineup of X3C gadgets is $\Theta(mn)$ cells wide and $\Theta(n)$ tall. Similarly, each of the $3n$ filler diamond gadgets takes up $\Theta(n) \times \Theta(n)$ cells of buffer space, for a total of $\Theta(n^2)$ cells tall and $\Theta(n)$ cells wide. Thus, the side length of the puzzle is $\Theta(nm+n^2)$ cells to fit the simple-to-construct gadgets along the border without interference, and therefore the puzzle can be constructed in polynomial time.
\end{proof}

\subsection{3-Triangle Clues}

3-triangle clues admit a much simpler construction than those needed for 1-triangles and 2-triangles, as 3-triangles are constraining enough to build gadgets whose properties can be verified purely locally. The proof roughly follows the Hamiltonicity framework of Section~\ref{sec:Hamiltonicity Reduction Framework}, using adjacent 3-triangle clues to force the solution path to form impassable ``walls''.

\both{
\begin{theorem}
\label{thm:triangles3}
It is NP-complete to solve Witness puzzles containing only 3-triangle clues.
\end{theorem}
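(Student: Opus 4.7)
The plan is to reduce from Hamiltonian cycle in grid graphs of maximum degree~$3$ using the Hamiltonicity framework of Section~\ref{sec:Hamiltonicity Reduction Framework}. The hallways (which in the broken-edge version of Corollary~\ref{cor:broken edges+triangles} are forced by broken edges) will be bounded by \emph{walls} built out of 3-triangle clues whose only satisfying local configuration leaves the solution path tracing the wall's outer boundary, so that walls act as impassable barriers analogous to broken edges.

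First, I would work out the purely local behavior of 3-triangle clues. A single 3-triangle cell requires exactly three of its four incident edges to lie on the path, so precisely one incident edge is missing, forcing a U-shaped path segment around the cell with two interior corner turns. Extending this analysis to a pair of 3-triangle cells sharing an edge~$e$ yields two local cases depending on whether $e$ is on the path; I would use these cases to show that a strip or block of 3-triangle clues, with empty cells around it, forces the solution path to trace its full outer boundary, giving the desired wall gadget.

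Next, I would instantiate the Hamiltonicity framework with a suitable scale factor $s$ and chamber radius~$r$. Hallways become thin rectangular corridors of empty cells bounded by such walls, and chambers become larger rectangular empty regions bounded by walls with openings at the hallway connections. To force each chamber to be visited, I would place a small 3-triangle cluster inside every chamber whose clue can only be satisfied if the path detours through it. The start circle and end cap are placed against one boundary chamber in the standard way. A chamber can be traversed between any two of its (at most three) incident hallways by an explicit routing inside, so the usual framework argument gives the bijection between Witness solutions and Hamiltonian cycles of the underlying grid graph $G$. Membership in NP follows from Observation~\ref{all-but-antibody-np}, so only NP-hardness needs to be established.

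The main obstacle will be the wall analysis. Because each 3-triangle clue has one degree of freedom (the single missing incident edge), I must rule out combinations of such freedoms in neighboring cells that would allow the path to puncture a wall, short-circuit the hallway/chamber structure, or visit a wall cell's surroundings in an unintended way. The payoff of the 3-triangle setting is that the constraint is strong enough that this analysis can be carried out purely locally, in contrast to the global concentric-cycle reasoning required for 2-triangle clues in Theorem~\ref{thm:triangles2} and the global row-structure reasoning required for 1-triangle clues in Theorem~\ref{thm:triangles1}; once the wall gadget is established, the remainder of the correspondence is essentially the standard Hamiltonicity framework proof.
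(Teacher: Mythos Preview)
Your proposal is essentially the paper's own approach: the Hamiltonicity framework with chamber walls built from rows of 3-triangle clues and a single interior 3-triangle clue to force visitation of each chamber. One small correction to keep in mind when you carry out the local analysis: a strip of consecutive 3-triangle clues does not force the path to trace its full outer boundary, but rather one of two zig-zag patterns (all edges perpendicular to the strip plus alternating parallel edges); the paper then uses the fact that two such strips meeting diagonally at a chamber corner pin down which zig-zag occurs, which is what makes the walls rigid.
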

}

\ifabstract
\begin{proofsketch}
We use the Hamiltonicity framework. Adjacent 3-triangle clues must be traversed consecutively by the solution path, so we can use them to for the solution path to trace the boundary of each chamber. Figure~\ref{3-triangles chamber} shows the construction of a chamber.
\end{proofsketch}
\fi

\begin{figure}
\centering
\subcaptionbox{Unsolved.}{%
  \hspace{.5cm}%
  \includegraphics[width=.4\textwidth]{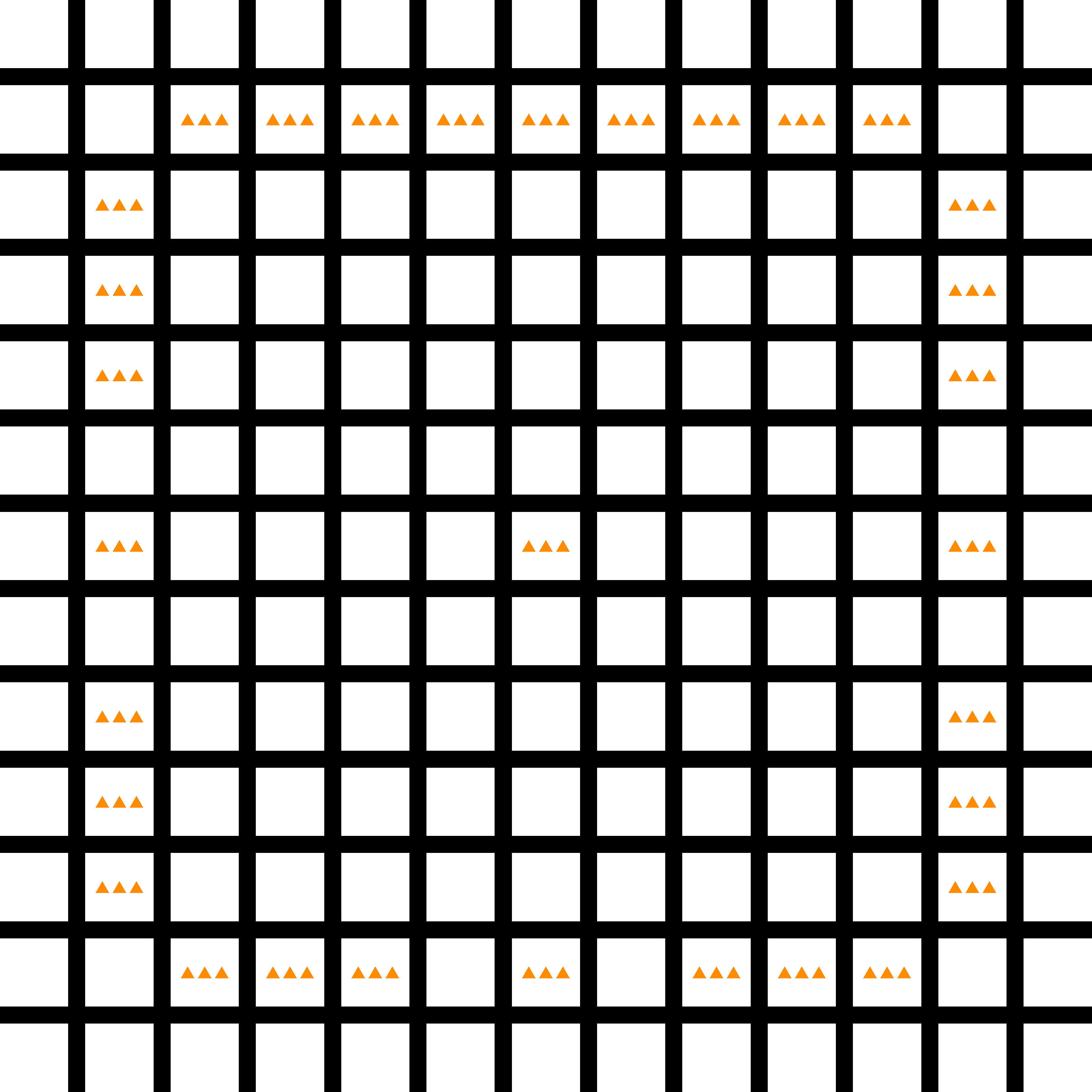}%
  \hspace{.5cm}%
  \label{3-triangles chamber unsolved}%
}~~~
\subcaptionbox{One possible solution, using the left and bottom edges.}{%
  \hspace{.5cm}%
  \includegraphics[width=.4\textwidth]{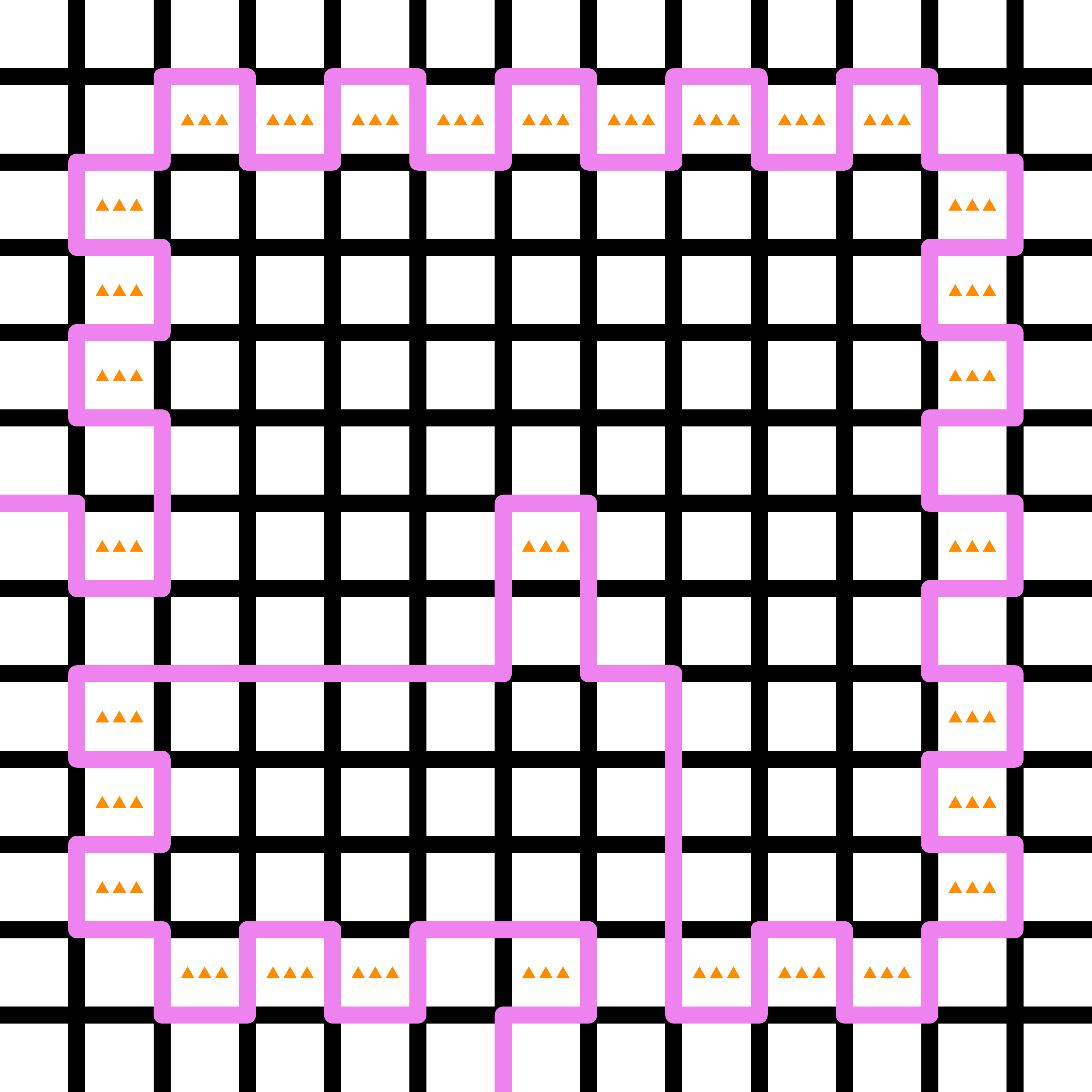}%
  \hspace{.5cm}%
  \label{1-triangles chamber example sol}%
}
\caption{A chamber with edges to the left, right, and below.}
\label{3-triangles chamber}
\end{figure}

\later{
\begin{proof}
We use a variation of the Hamiltonicity framework---reducing from
Hamiltonicity in a maximum-degree-$3$ grid graph, $G$, by scaling by a constant
factor $12$, and representing each vertex by a square-region
\emph{chamber}---but we will represent edges slightly differently than hallways.
Each chamber is a $11 \times 11$ grid of cells.
Initially, we place 3-triangle clues in the center cell
and in every boundary cell except the four corners.
Because the scale factor is $12$, there is exactly one row or column of
empty cells between adjacent chambers.
If there is an edge between vertices in $G$ corresponding to two chambers,
then we remove the 3-triangle clues from the two cells adjacent to the center
cell of the walls of both chambers (i.e., the fourth and sixth cells of each
wall), as depicted in Figure~\ref{3-triangles boundary}.

The rightmost bottommost chamber is the special start/end chamber.
We remove the center 3-triangle clue on its bottom edge, and place the
start circle and end cap just below, as can be seen in the full graph instance
shown in Figure~\ref{3-triangles whole solution}.

Now we prove that every solution to this Witness puzzle corresponds to a
Hamiltonian path.  The key observation is that $k$ consecutive 3-triangle clues
(in a row or column) force the solution path to have one of two local zig-zag
patterns, shown in Figure~\ref{3-triangles local}.  

  \begin{figure}
\centering
\subcaptionbox{One possible local solution.}{%
  \hspace{.5cm}%
  \includegraphics[width=.4\textwidth]{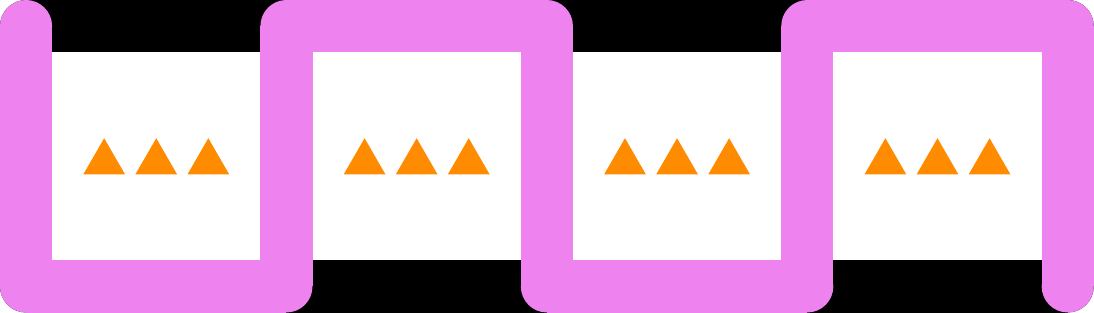}%
  \hspace{.5cm}%
  \label{3-triangles string of threes 1}%
}~~~
\subcaptionbox{The other possible local solution.}{%
  \hspace{.5cm}%
  \includegraphics[width=.4\textwidth]{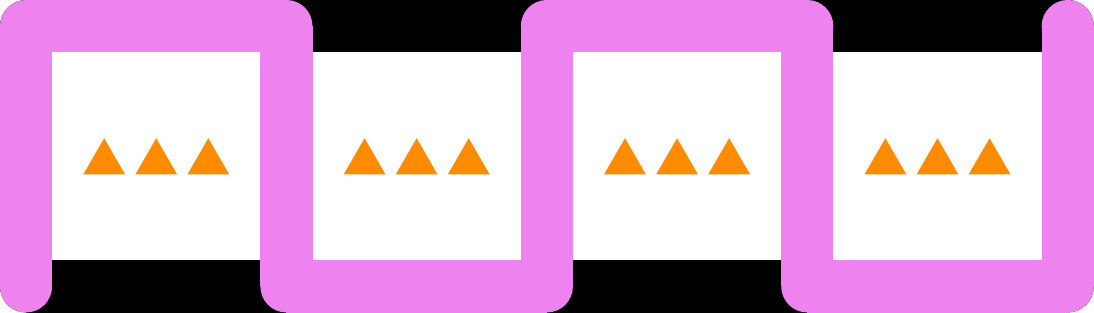}%
  \hspace{.5cm}%
  \label{1-triangles string of threes 2}%
}
\caption{The only local solutions to a string of contiguous 3-triangle clues.}
\label{3-triangles local}
\end{figure}

Each chamber gadget is surrounded by such consecutive sequences of
3-triangle clues.  Pairs of these sequences meet diagonally adjacently
at each corner of the chamber, which in fact forces the solution path to use
a particular zig-zag for these two sequences.  As a result, the solution path's
behavior is forced around all 3-triangle clues except the lone clues in
the center of each side of a chamber corresponding to an edge in $G$. 

Every chamber must be visited in order to satisfy its center 3-labeled cell. When a chamber is visited, all of its 3-labeled wall cells can be satisfied by following the walls around from the entry point to just before the intended exit point, then traversing through the center 3-labeled cell back to the other side of the entry point, and then around to and out the exit point, as shown in each chamber of Figure~\ref{3-triangles whole solution}.

If there is an edge in $G$ between vertices corresponding to two chambers $a$ and $b$, then the solution path can travel from $a$ to $b$ (or vice versa) as shown in Figure~\ref{3-triangles boundary used edge}. 

\begin{figure}
\centering
\subcaptionbox{\label{3-triangles boundary unsolved} Boundary between two chambers (unsolved).}{%
  \includegraphics[width=.28\textwidth]{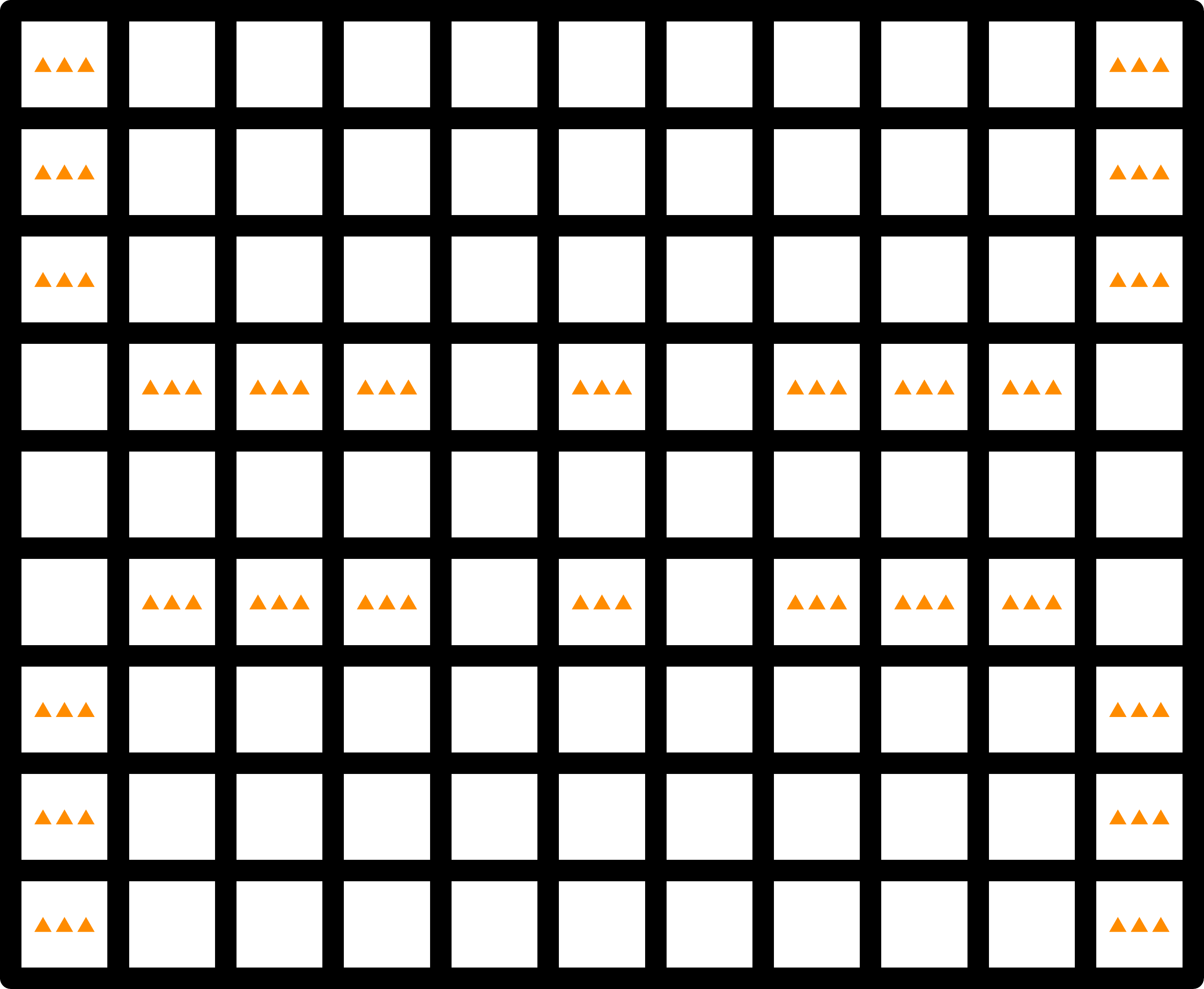}%
}
\hfil\hfil
\subcaptionbox{\label{3-triangles boundary unused edge} Solution when edge is unused.}{%
  \includegraphics[width=.28\textwidth]{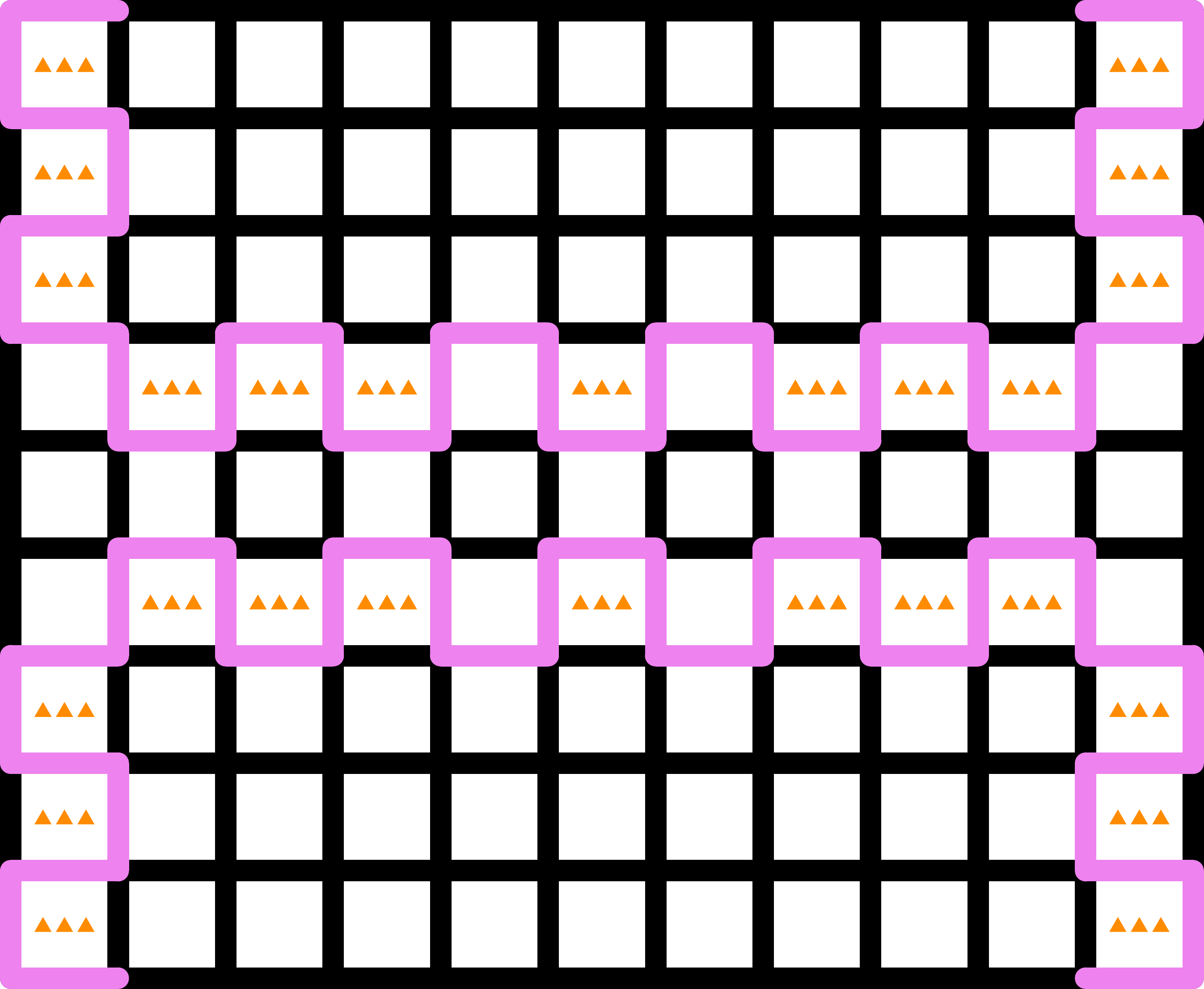}%
}
\hfil\hfil
\subcaptionbox{\label{3-triangles boundary used edge} Solution when edge is used.}{%
  \includegraphics[width=.28\textwidth]{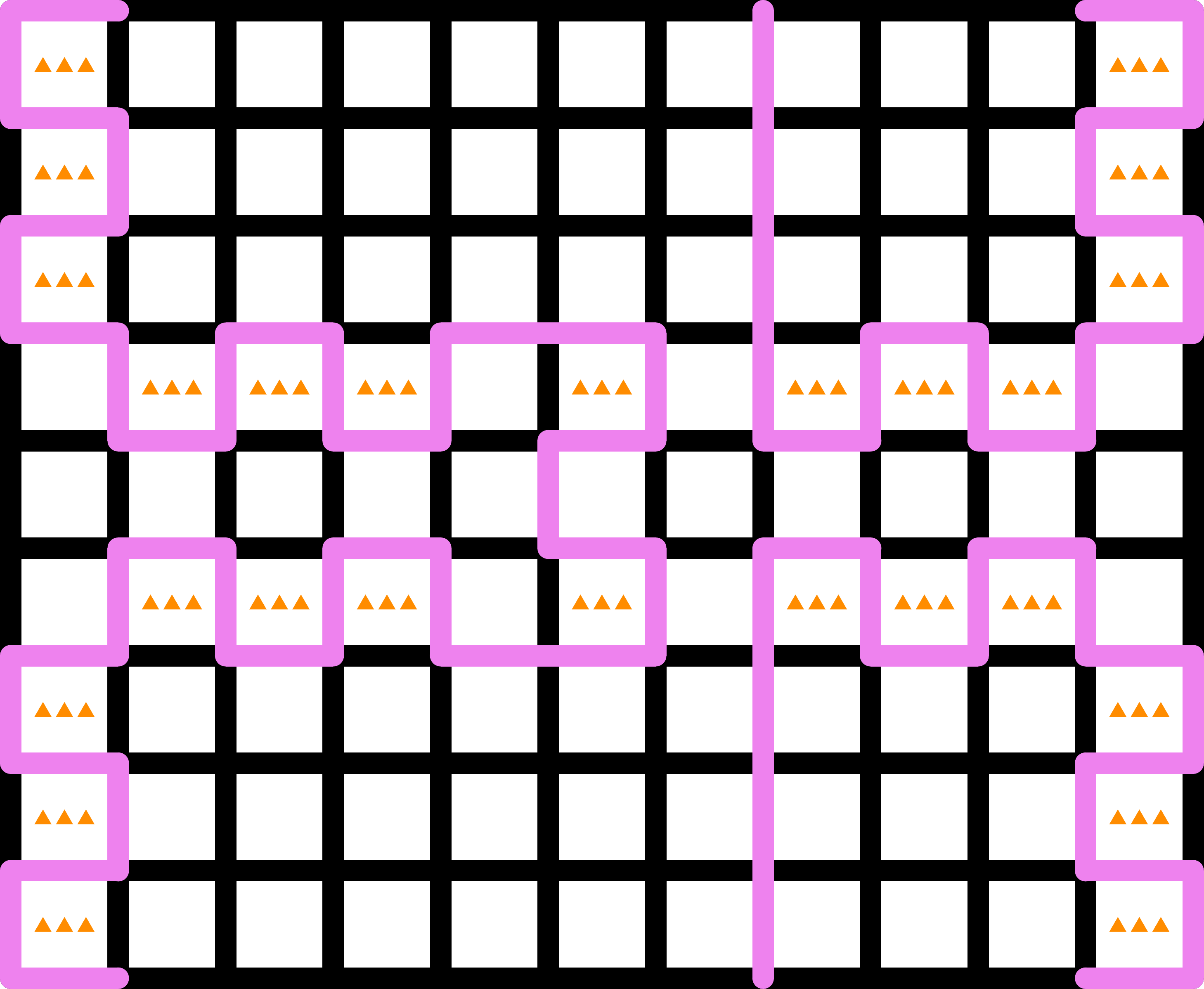}
}
\caption{The boundary between two chambers with an edge between them.}
\label{3-triangles boundary}
\end{figure}

Examining Figure~\ref{3-triangles chamber}, a simple case analysis shows that the path satisfying the two 3-triangle cells comprising each corner of a chamber is completely forced, and this in turn forces the path around the rest of the 3-labeled cells comprising the walls.  Similarly, case analysis on the middle of the walls in Figure~\ref{3-triangles boundary} shows that the path cannot escape to another chamber, so the path can only travel between adjacent chambers.  Finally, there isn't enough space to use an edge more than once, and therefore exactly two edges incident to each chamber must be traversed, because $G$ has max degree 3 and if the solution path traverses only one edge incident to a chamber, it could not have left that chamber and therefore could not have made it to the end cap.
\end{proof}

\begin{figure}[t]
\centering
\includegraphics[width=.9\textwidth]{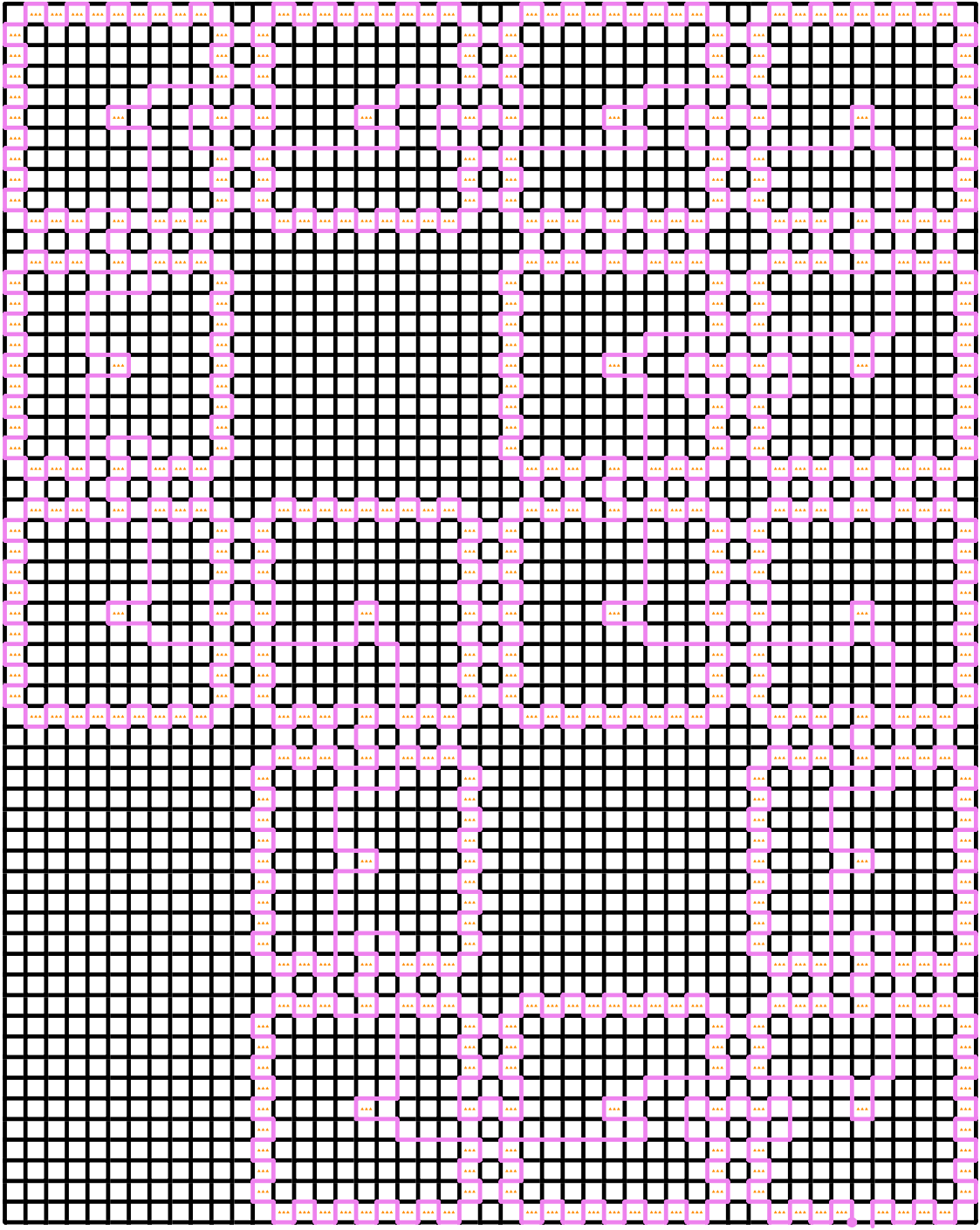}
\caption{A complete 3-triangles Hamiltonicity framework instance with solution.}
\label{3-triangles whole solution}
\end{figure}
}

\section{Polyominoes}

\abstractlater{
  \section{Proofs: Polyominoes}
  \label{appendix:polyominoes}
}

This section covers various types of \emph{polyomino} and \emph{antipolyomino} clues, giving both positive and negative results. Polyomino clues can generally be characterized
by the size and shape of the polyomino and whether or not they can be
rotated (\dominovert{} vs.\ \dominofree{}). For each region, it must be
possible to place all polyominoes and antipolyominoes depicted in that region's clues (not necessarily
within the region) so that for some $i \in \{0,1\}$, each cell inside the region is covered by exactly $i$
more polyomino than antipolyomino and each cell outside the region is
covered by the same number of polyominoes and antipolyominoes.
In Section~\ref{sec:Monominoes and Broken Edges}, we prove that monomino
clues alone can be solved in polynomial time, even in the presence of
broken edges, generalizing a result of \cite{eurocg}.
In Sections~\ref{sec:Rotatable Dominoes}, \ref{sec:Monominoes + Antimonominoes}, and \ref{sec:Nonrotatable Dominoes} we give several negative results showing that some of the simplest (anti)polyomino clues suffice for NP-completeness.


\subsection{Monominoes and Broken Edges}
\label{sec:Monominoes and Broken Edges}

In this section, we prove the following theorem:

\begin{theorem}
\label{MonominoesAndBrokenEdgesTheorem}
Witness puzzles containing only monominoes and broken edges are in P.
\end{theorem}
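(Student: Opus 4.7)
The plan is to reduce the problem to a boundary-hexagon instance solvable by Theorem~\ref{thm:monominoes-dp}. The critical structural observation is that because each monomino clue covers exactly one cell, any region containing $k \geq 1$ monominoes must have exactly $k$ cells, forcing every cell in such a region to bear a monomino clue. Therefore the solution path must completely separate the set $M$ of monomino cells from the set $N$ of remaining cells; equivalently, the path must traverse every interior grid edge whose two incident cells lie in different sets. I will call such edges the \emph{required} edges.

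First, I would check feasibility by examining the structure of the required edges. They must form an acyclic subgraph of the grid of maximum degree two, since no simple path can include a cycle or exhaust a vertex of degree at least three. A cycle arises exactly when some maximal connected island of $M$ (or dually, some component of $N$) is strictly interior to the puzzle, never touching the outer boundary; a degree-four vertex arises exactly when $M$ and $N$ cells alternate around a single grid vertex (two islands meeting diagonally at a point). If either condition fails, output \textsc{no}.

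Second, I would argue that the solution path cannot use any grid edge strictly internal to a monomino island (both incident cells in $M$). By an adaptation of the consecutive-hexagons argument from the proof of Theorem~\ref{thm:edge hexagons}, at every vertex incident to two required edges the path exhausts its two allowed pass-through edges on those required edges, precluding detours into the island's interior. Vertices strictly interior to an island are consequently unreachable, being incident only to internal edges, and each such edge lies between two required-saturated boundary vertices which have no capacity to traverse it.

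Finally, I would construct the boundary-hexagon instance: define a simply connected cell region $C$ by treating each monomino island as a hole (valid because each island must touch the outer boundary, by feasibility), place hexagon clues on the required edges now lying on the new outline of $C$, and retain all original broken edges inside $C$. By the preceding analysis, solutions to the original puzzle correspond to $(C,F)$-paths from $s$ to $t$ in this reduced instance, which Theorem~\ref{thm:monominoes-dp} decides in polynomial time. The main obstacle is handling monomino islands that touch the outer puzzle boundary in multiple arcs: such an island can partition $N$ into disconnected components, yet the original solution path may need to bridge them along outer-boundary edges adjacent to the island. Handling this cleanly likely requires augmenting $C$ with virtual boundary segments along the multi-arc portions of each island so that all valid bridging routes are preserved and captured on the outline where hexagon clues can enforce them.
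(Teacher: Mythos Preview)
Your overall plan matches the paper's: force every edge on the $M$/$N$ interface, check local feasibility, and hand the remainder to Theorem~\ref{thm:monominoes-dp}. Two points deserve comment.

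Your second step is not correct as stated. The claim that every edge interior to an island ``lies between two required-saturated boundary vertices'' fails at vertices where the island meets the puzzle boundary: such a vertex may be incident only to $M$-cells and hence carry zero required edges, leaving capacity for the path to slip into the island's interior along puzzle-boundary edges. (Concretely, a $2\times 2$ monomino block in a corner admits solution paths that visit its central vertex.) This is not fatal to the reduction, since any such detour is useless and can be shortcut, but the argument you give does not establish it. The paper avoids this issue entirely by never asserting the path stays out of islands; it simply restricts attention to the empty cells when invoking Theorem~\ref{thm:monominoes-dp}, which suffices because that theorem is a decision procedure and we only need \emph{some} solution.

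The obstacle you flag at the end is exactly the technical heart of the paper's proof, and your suggestion of ``virtual boundary segments'' is not how it is resolved. The paper's mechanism: call the connected components of $N$ \emph{divisions} and the separating islands \emph{dividers}. The key observation is that once the path commits to a divider's forced edges, it must traverse all of them contiguously before entering any incident division, and it can then never leave that division except via a different divider. Hence the nonempty divisions, with dividers as edges, must form a simple path in this adjacency graph (reject if any divider borders three or more nonempty divisions, or if the graph is not a path with the start and end in its endpoint divisions). Each division is then an independent instance of Theorem~\ref{thm:monominoes-dp} with a small set of candidate entry/exit vertex pairs determined by where the adjacent dividers meet the puzzle boundary, and the pieces are stitched together by an ordinary graph search over these choices.
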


Concurrent work~\cite{eurocg} gives a polynomial-time algorithm for
Witness puzzles where \textit{every} cell has a square clue of one of two colors.
Such puzzles are equivalent to puzzles with only monomino clues, by replacing
one color of square with monominoes and the other color with empty cells
(or vice versa for the reverse reduction).
The only constraint on the two puzzle types is that there can be no region
with a mix of square colors or, equivalently, monomino clues and empty cells.
Accordingly, our proof of Theorem~\ref{MonominoesAndBrokenEdgesTheorem}
is similar.
In our case, we reduce to the result from Section~\ref{sec:boundary hexagons}
that path finding is easy with broken edges and forced edges on the boundary
of the puzzle, which we considered in Section~\ref{sec:boundary hexagons}
in the context of boundary edge hexagons.

%

\begin{proof}
When a puzzle contains only monomino clues and broken edges, every region
outlined by the solution path that contains a monomino clue must have a
monomino clue in every cell in the region to satisfy the polyomino area
constraint.  Any solution path must therefore include all of the edges shared
between an empty cell and a cell containing a monomino clue.
If any of these forced edges is broken, or the forced edges form a cycle,
then we can immediately reject the puzzle as a \textsc{no} instance.
In particular, we reject if any connected component of monomino clues
either does not touch the boundary or has a hole of empty cells.
The solution path is forced to trace the outline of each clue-containing
component except where it meets the boundary.
Thus, the problem reduces to connecting the endpoints of these forced
outline paths into a path from the start circle to the end cap.

Some connected components of monomino clues divide the puzzle into disconnected groups of cells.  We call these groups \emph{divisions},
and the connected components of clues that give rise to them \emph{dividers}.
A division may contain more than one region,
but a region is never split across divisions.
Call a division \emph{empty} if it does not contain a monomino clue,
the start circle, or the end cap, and call it \emph{nonempty} otherwise.
Because of our rejection rule above, every division is simply connected.

When a solution path visits the forced outline edges of a divider
(from one of the incident divisions),
the path must visit all of the forced edges
before entering one of the incident divisions,
because it can never exit that division without revisiting a vertex.
In particular, if any divider neighbors three or more nonempty divisions,
then we reject the puzzle as a \textsc{no} instance:
only two divisions can be visited by a solution path
(before and after the forced edges, respectively), and
each nonempty division must be visited to visit either some forced edge,
the start circle, and/or the end cap.

Because each divider neighbors at most two nonempty divisions, we can define
a graph where nodes correspond to the nonempty divisions and
edges correspond to dividers.
Each edge in this graph can only be traversed once by the solution path
because doing so consumes all the forced edges surrounding that divider.
Therefore if this graph is not a path, or if the endpoints of this path
are not the divisions containing the start vertex and end cap, respectively,
then we reject the puzzle as a \textsc{no} instance,
because no path from the start vertex to the end cap can visit all of the
nonempty divisions while using each adjacency graph edge only once.

For each division $D$, let $F_D$ be forced edges between an
empty cell and a cell containing a monomino clue within the division, that is,
the subset of the edge outline separating $D$ from its neighboring dividers
and the nonboundary edge outline of any nondivider components of
monomino clues in the division.
Because there is a forced edge where each divider's outline touches the puzzle
boundary, the path can enter and exit this division at each of two possible
vertices (at opposite ends of the dividers).  The start circle is the sole
entry vertex for the first division and the end cap is the sole exit vertex
for the last division.


We build a connectivity graph $G$ whose nodes represent
the entry and exit vertices of all nonempty divisions
(including the start circle and end cap).
For each division $D$, let $C_D$ be the subset of empty cells in~$D$.
Set $C_D$ is connected because any set of cells (containing monominoes)
separating $C_D$ into two components would be a divider.
By our first rejection rule, $C_D$ is also simply connected.
Every edge in $F_D$ is incident to a single empty cell,
so $F_D$ is a subset of the edge outline of~$C_D$.
Hence, $(C_D,F_D)$ is a forced division
as defined in Section~\ref{forced division}.
For each pair $(s,t)$ of entry and exit vertices of~$D$, respectively,
we apply Theorem~\ref{thm:monominoes-dp} to decide whether there is a path
from $s$ and $t$ entirely within $C$ and traversing all forced edges in~$F_D$.
Whenever there is such a path, we add an edge in $G$ between the nodes
corresponding to $s$ and~$t$.

The forced edges of each divider, plus some puzzle boundary edges interior to
the divider, form paths connecting each possible exit vertex to an entry
vertex in the next division, and we add the corresponding edges to~$G$.  Then
we can find a puzzle solution by searching for a path in $G$ from the start
circle to the end cap, and replacing each edge from that path in $G$ with the
Witness puzzle path it represents.  If there is no such path in $G$, then
the Witness puzzle has no solution.
\end{proof}

\subsection{Rotatable Dominoes}
\label{sec:Rotatable Dominoes}

\both{
\begin{theorem} \label{rotating dominoes}
It is NP-complete to solve Witness puzzles containing only rotatable dominoes.
\end{theorem}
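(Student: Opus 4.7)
The plan is to apply the Hamiltonicity framework of Section~\ref{sec:Hamiltonicity Reduction Framework}, reducing from Hamiltonian cycle in grid graphs of maximum degree~$3$. The engineering task is to design a chamber gadget and a hallway gadget whose only local solutions are the intended ones, using solely rotatable domino clues (no broken edges and no other clue types).

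For the chamber, I would place a cluster of rotatable domino clues inside a small subgrid so that the region containing those clues is forced to have a specific combined area and perimeter. Since a region containing $k$ domino clues must have area exactly $2k$ and admit a perfect tiling by $k$ dominoes, packing enough clues inside a chamber forces that region to fill (nearly) the whole chamber interior. This in turn forces the solution path to trace the chamber's boundary rather than slice through the interior, leaving only the hallway-incident edges as possible entry/exit points. The residual cells along the chamber's outer ring and at the hallway mouths must themselves form domino-tileable regions when joined to the adjacent hallway; I would tune the chamber's shape and clue count so that the unique tilings compatible with the Hall / checkerboard parity condition correspond to the path entering through one hallway and exiting through another. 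For each hallway, a uniform row of cells of even length with the matching number of domino clues admits two natural modes: the ``untraversed'' mode in which the hallway is a single untouched region tileable end-to-end, and the ``traversed'' mode in which the side-cells are absorbed into adjacent chamber regions whose tileability is already arranged by the chamber design. Partial traversal of a hallway will be ruled out by a local parity mismatch in the resulting region.

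Given such gadgets, the standard Hamiltonicity argument applies: the chamber clues force every chamber to be visited; the path can transit between chambers only through hallways; the maximum-degree-$3$ constraint then forces each chamber to be entered and exited exactly once; and the set of used hallways induces a Hamiltonian cycle in the underlying grid graph. The converse direction is established by exhibiting, for each of the $\binom{3}{2}$ pairs of incident hallways, a canonical local routing of the chamber that simultaneously satisfies all internal and hallway-adjacent domino clues, plus an ambient routing on the rest of the board that does the same. Membership in NP is inherited from Observation~\ref{all-but-antibody-np}.

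The main obstacle will be ruling out unintended solutions. Rotatable dominoes are considerably more permissive than monominoes: any two adjacent cells form a valid single-domino region, and a region of even area with equal checkerboard color counts is very often domino-tileable (via Hall's condition). So the space of region shapes compatible with a given multiset of domino clues is rich, and a careless gadget could admit degenerate solutions in which the path shortcuts a chamber, merges two chambers into one oversized but still tileable region, or enters a hallway only partway. Excluding these scenarios will rely on a careful checkerboard/parity bookkeeping combined with local case analysis at the chamber--hallway interfaces, ensuring that any deviation from the intended routing leaves behind an untileable region.
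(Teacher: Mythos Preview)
Your approach has a genuine gap, and the paper takes a completely different route.

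The Hamiltonicity framework works in the paper's other proofs because those clue types supply \emph{local} path constraints: broken edges build walls, hexagons force specific edges, 3-triangles force zig-zags. Rotatable domino clues give nothing of the kind. A region containing $k$ domino clues is constrained only to have area $2k$ and to be domino-tileable, and for simply connected regions with balanced checkerboard color counts the latter is almost automatic. You note this yourself (``the space of region shapes compatible with a given multiset of domino clues is rich'') but do not actually close the gap. Concretely: in your scheme the cells outside the chambers and hallways carry no clues, so the path is free to wander anywhere there. Nothing stops the path from enclosing \emph{all} chambers in a single region, or from slicing a chamber in two in any of many balanced ways, or from merging two chambers through the empty space between them while bypassing the hallway entirely. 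Your proposed parity bookkeeping cannot rule these out, because the union of two balanced chambers is itself balanced, and local case analysis at chamber--hallway interfaces does not constrain what the path does in the unconstrained empty space elsewhere. Without broken edges there is no wall-building primitive here, and you have not exhibited one.

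The paper sidesteps this entirely by reducing from \emph{Rectilinear Steiner Tree} rather than Hamiltonicity. It places a total of $2k+1$ domino clues on a large board: one at a scaled copy of each input point, the rest in a ``trunk'' column reaching the boundary. Any solution must put all clues in a single region (only one domino can touch the boundary), and that region has area exactly $2(2k+1)$. Since the clue cells are spread across a $(2m+1)\times(2m+1)$ center and the region must reach each of them from the boundary, the region is forced to be a thin tree whose scaled-down version is a Steiner tree of length~$\le k$. The argument is global and area-based, not gadget-based, and that is precisely what makes it work for a clue type that only constrains area.
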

}

\ifabstract
\begin{proofsketch}
We reduce from Rectilinear Steiner Tree:
given $n$ points with integer coordinates $(x'_i, y'_i)$ in the plane,
$i \in \{1, 2, \dots, n\}$, and given an integer~$k$,
decide whether there exists a rectilinear tree
connecting the $n$ points having total length at most~$k$.
As illustrated in Figure~\ref{DominoesOverview},
we embed the tree in the cells of a Witness puzzle, putting a domino
clue at each vertex of the tree, which the solution path must therefore visit.
The total number of dominoes is proportional to $k$, such that with careful counting,
the area enclosed by the solution path must ``look like'' a tree of length exactly $k$
in the original Steiner tree instance.
\end{proofsketch}
\fi

\later{
\begin{proof}
We reduce from the rectilinear Steiner tree problem
\cite{Garey-Johnson-1977-steiner}:
given $n$ points with integer coordinates $(x'_i, y'_i)$ in the plane,
$i \in \{1, 2, \dots, n\}$, and given an integer~$k$,
decide whether there exists a rectilinear tree
connecting the $n$ points having total length at most~$k$. Hanan's Lemma
\cite[Theorem 4]{Hanan-1966} states that this is equivalent to the
existence of a rectilinear tree of length at most $k$ on the integer
grid, or even on the grid formed by a horizontal and a vertical line
through each of the $n$ points.
By globally translating the points by $(-\min_i x'_i, -\min_i y'_i)$,
we can find an equivalent instance of $n$ points $(x_i, y_i)$,
$i \in \{1, 2, \dots, n\}$, with all coordinates nonnegative
and where $\min_i y_i = 0$.
Let $m = \max_i \{x_i,y_i\}$.
Because this problem is strongly NP-hard
\cite{Garey-Johnson-1977-steiner},
\cite[Problem ND13, p.~209]{Garey-Johnson-1979},
we can assume $m = n^{O(1)}$.

We construct a $(2m+4k+3) \times (2m+4k+3)$ Witness puzzle
with a $(2m+1) \times (2m+1)$ center~$C_m$;
refer to Figure~\ref{fig:rotatable-dominoes}.
Within $C_m$, we include exactly $n$ dominoes:
for every point $(x_i, y_i)$ in the rectilinear Steiner
tree input, we include a domino in cell $(2 x_i, 2 y_i)$,
where these coordinates are relative to the lower-left corner of~$C_m$.
Consider the leftmost point with a $y$ coordinate of zero,
whose corresponding domino clue is on the bottom row of~$C_m$;
call this cell the \emph{root}.
Define the \emph{trunk} to be the $2 k+1$ cells in the same column as
and below the root.  We include exactly $(2 k+1) - n$ dominoes in the trunk,
starting at the bottom of the board.
We place the start circle and end cap on the boundary vertices adjacent
to the trunk (in either order).


\begin{figure}
\centering
\subcaptionbox{Unsolved.}{
  \begin{overpic}[width=0.48\linewidth]{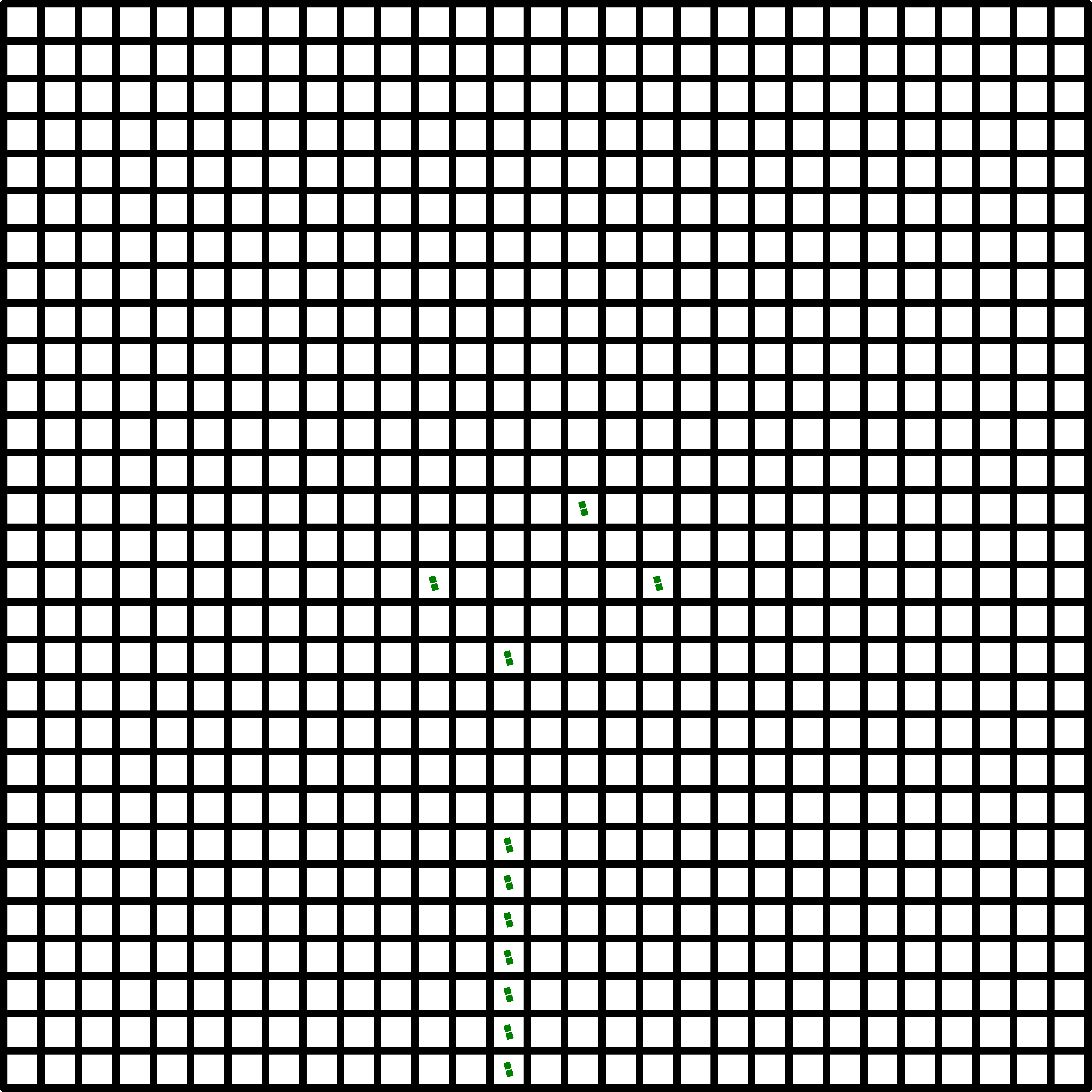}
    \put(0,0){\includegraphics[width=0.48\linewidth]{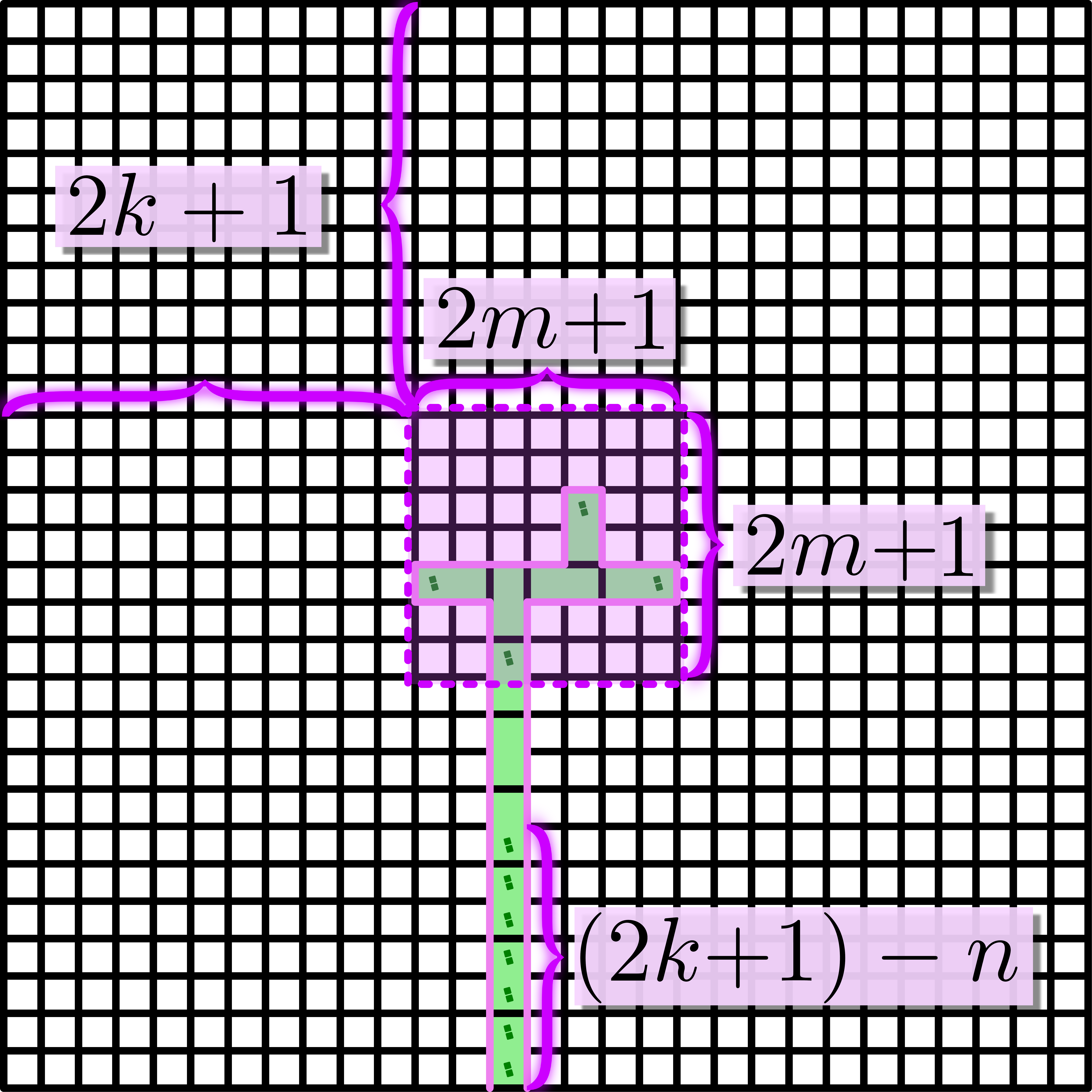}}
  \end{overpic}
}\hfil\hfil
\subcaptionbox{Solved.}{
  \begin{overpic}[width=0.48\linewidth]{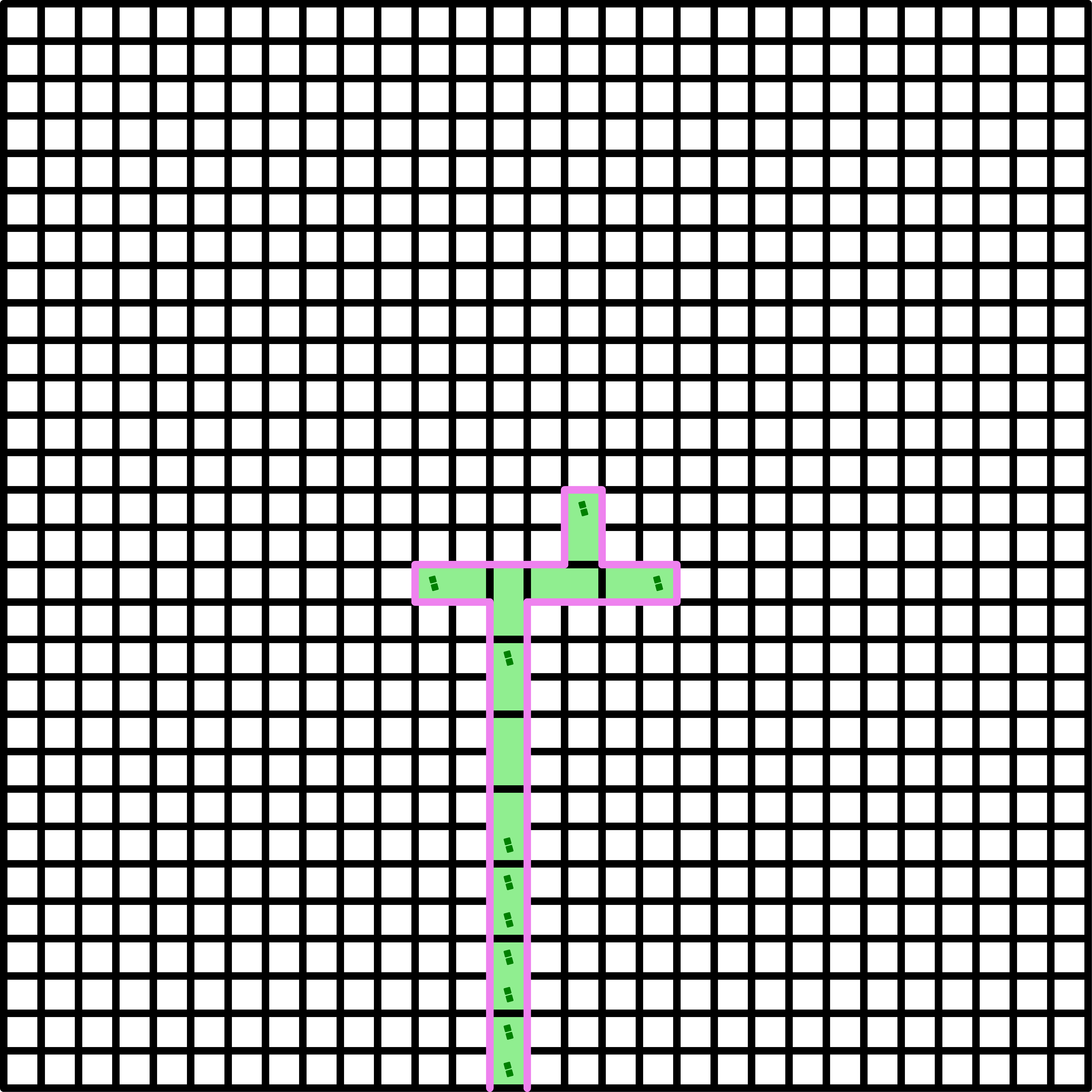}
  \end{overpic}
}
\caption{Example of the rotatable dominoes NP-completeness proof,
  where $k=5$, $n=4$, and $m=3$.}
\label{fig:rotatable-dominoes}
\end{figure}

\paragraph{Steiner tree $\to$ Witness solution.}
First suppose that the rectilinear Steiner tree instance has
a solution $S$ with total length at most~$k$.
By Hanan's Lemma, we can assume that the vertices and horizontal/vertical
segments of $S$ lie on the integer grid, with all $y$ coordinates nonnegative.
Scale the tree $S$ by a factor of~$2$, and consider the set $P$ of
integer grid points that lie on the vertices or segments of this scaled tree.
If two points in $P$ have distance $1$ on the grid, then (by the scaling)
they must have come from a common segment of the tree~$S$.
Thus, $P$ induces a grid graph that is a tree.
This grid tree is the dual of a set $C \subseteq C_m$ of cells.
Because $S$ is a Steiner tree on the $n$ points,
$C$ contains all the cells with domino clues in~$C_m$.
In particular, one vertex of $S$ corresponds to the root clue,
and we define this vertex to be the root of the tree~$S$.

Consider a segment $(u,v)$ of the tree $S$ of length~$\ell$, 
oriented so that $u$ is closer than $v$ to the root.
Vertices $u$ and $v$ map to cells $c_u$ and $c_v$ in $C$
in the same row or column,
with $2\ell-1$ cells from $C$ between them.
We can tile these $2\ell-1$ cells together with~$c_v$
with exactly $\ell$ dominoes.
By our orientation of segments, dominoes from different segments do not
overlap, so we obtain a tiling of $C$ except the root
using exactly $L$ dominoes.


Next we construct a set $C'$ of cells, by adding to $C$ the $2 k+1$ cells
in the trunk (which includes all remaining domino clues).
The trunk has an odd number of cells in a single column,
so together with the root it can be tiled by exactly $k+1$ dominoes.
In total, we have tiled $C'$ with $L+k+1$ dominoes.

Finally, we construct a set $C''$ by adding to $C'$
the cells of $k-L$ disjoint dominoes immediately right of the trunk
(e.g., starting from the bottom).
Set $C''$ can be tiled by exactly $2 k+1$ dominoes,
is connected and hole-free, and contains all domino clues.
The border of $C''$ is therefore a cycle on the integer grid;
removing the bottommost edge of the trunk leaves a path
which solves the Witness puzzle.

\paragraph{Witness solution $\to$ Steiner tree.}
Now suppose that the Witness puzzle has a solution. 

Consider the cells containing domino clues in $C_m$. Every
such cell is in a region 
that borders the boundary of the board and can be tiled by dominoes equal in
number to the number of domino clues inside that region. Each such
region has area strictly greater than $2 k + 1$
because the region must reach from the boundary of the board into~$C_m$.
But the total number of domino clues in the whole board is $2 k + 1$,
so the total area in all such regions is $2 (2 k + 1)$.
Therefore, there is exactly one region $A$ that contains all the domino clues
in~$C_m$.

We now show that there is a connected subset $B_m$ of $C_m$,
of size at most $2 k+1$, containing all the
domino clues of $C_m$. We start with $B_{m+(2k+1)} = A$ and
inductively define sets $B_i$ for all $m \le i \le m+(2k+1)$
such that $B_i$
(1)~is a subset of the $(2i+1) \times (2i+1)$ center $C_i$ of the board,
(2)~is connected,
(3)~contains all the domino clues in $C_m$,
(4)~has size at most $i-m+(2k+1)$, and
(5)~is connected to one of the four boundary sides of~$C_i$.

Given $B_{i+1}$, define $B_i$ to include all the cells of $B_{i+1}$
that are in $C_i$
plus, for each cell of $B_{i+1}$ on the boundary of $C_{i+1}$,
the nearest cell on the boundary of $C_i$.
(1) $B_i$ is, by definition, a subset of $C_i$.
(2) To prove $B_i$ connected, consider the $(2i +3) \times (2i+3)$
grid graph representing the dual of $C_{i+1}$,
with cells of $B_{i+1}$ \emph{marked}. Then transforming from $B_{i+1}$ to
$B_i$ is equivalent to
contracting dual edges with exactly one endpoint on the boundary of $C_{i+1}$
and the eight dual edges with one endpoint at a corner of $C_{i+1}$
and then marking each contracted vertex
if and only if at least one of the endpoints of the contracted edge was marked.
This transformation preserves
connectivity of the marked vertices, so $B_i$ is connected.
(3) $B_{i+1}$ contained all the domino clues in $C_m$,
so because those are all in $C_i$,
$B_i$ still contains them. (4) Because $B_{i+1}$ was
connected to a boundary of the board and $C_m$, it
contains at least one adjacent pair of squares on the boundaries of
$C_{i+1}$ and $C_i$, so at least one dual edge with both endpoints marked is
contracted, making $B_i$ smaller than $B_{i+1}$ by at least
one square. Hence $B_i$ has size at most $|B_{i+1}|-1 \le
(i+1-m+(2k+1))-1 = i-m+(2k+1)$. (5) By the existence of the same edge,
$B_{i+1}$ and hence also $B_i$ has a cell on
the boundary of $C_i$, completing the induction.

We claim that any spanning tree $T$ of the dual graph of $B_m$ is a
factor-2 scaling of a Steiner tree of length at most $k$.
Any such $T$ has $|B_m| \le 2 k+1$ vertices and hence at most
$2 k$ edges. Each edge has length $1$ in the grid, so $T$ is a
tree of total rectilinear length at most $2 k$. In addition, we know that
$T$ touches the points with coordinates $(2 x_i, 2 y_i)$
for $i \in \{1, 2, \dots, n\}$.
Scaling $T$ by a factor of $\frac12$ yields a tree of
total length at most $k$ which touches the points with coordinates
$(x_i, y_i)$ for $i \in \{1, 2, \dots, n\}$,
solving the given rectilinear Steiner tree instance.
\end{proof}
} 
\subsection{Monominoes + Antimonominoes}
\label{sec:Monominoes + Antimonominoes}

\both{
\begin{theorem} \label{anti-monominoes}
It is NP-complete to solve Witness puzzles containing only monominoes and
antimonominoes.
\end{theorem}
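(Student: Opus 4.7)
The plan is to reduce from Hamiltonian cycle in grid graphs of maximum degree $3$ \cite{Itai-Papadimitriou-Luiz-1982}, using a variant of the Hamiltonicity Reduction Framework of Section~\ref{sec:Hamiltonicity Reduction Framework}. Two structural facts about monomino and antimonomino clues drive the construction. First, a single antimonomino in a region is never satisfiable, since $m - a = -1 \notin \{0, R\}$ for any $R \geq 0$; therefore every antimonomino must be paired with a monomino in the same region. Second, a monomino in an \emph{interior} cell of the puzzle cannot be isolated as a size-$1$ region by a simple path, because doing so would require the path to traverse all four edges of the $4$-cycle surrounding the cell---impossible for a simple path, which would have to close that cycle. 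Hence every interior monomino must either be balanced by an antimonomino in its region (the $m = a$ case) or lie in a region where $m - a = R$.

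Applying the framework, I would scale the input grid graph $G$ with suitable constants $r$ and $s$ and construct chambers and hallways. In each chamber I would place a monomino at a designated interior cell together with a matching antimonomino at a nearby cell, so that the forced-pairing constraint dictates that the solution path must route through the chamber in a way that places the pair in a common region---this is how the framework's ``visit every chamber'' requirement is enforced. Because broken edges are not allowed, to prevent the path from wandering outside the chamber/hallway network I would fill the complement cells with monominoes (possibly paired with additional antimonominoes within the complement so that the complement regions themselves are satisfiable); the resulting outer regions then satisfy their clue constraints precisely when the solution path correctly traces the interface between the chamber/hallway area and its complement.

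The main obstacle is coordinating the local pairing constraints with the global region structure induced by the path. Because a Witness solution path generally creates multiple regions rather than just two, the gadgets must force enough structure that valid solution paths are in bijection with Hamiltonian cycles in $G$. Particular care is needed to rule out spurious solutions in which the path splits a monomino/antimonomino pair across two different regions or exploits the flexibility of the $m - a = R$ case to produce ``filled'' regions that bypass some chambers or hallways. Designing chamber and hallway gadgets whose region decompositions precisely encode the Hamiltonicity requirement, and then carefully arguing that every valid partition of clues into satisfiable regions corresponds to a true Hamiltonian cycle, is the principal technical challenge.
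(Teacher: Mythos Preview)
Your proposal has a genuine gap, and it is precisely the one you flag as ``the principal technical challenge'' without resolving. Monomino and antimonomino clues impose only a region-level counting constraint: a region is satisfied exactly when its monomino count minus its antimonomino count equals either $0$ or the region's area. This constraint is \emph{global} to the region and insensitive to the path's local geometry. The Hamiltonicity framework, by contrast, needs to force the path to visit each chamber locally. Placing a monomino--antimonomino pair inside a chamber does not force the path to enter that chamber: if the path simply avoids the chamber, the pair lands together in whatever large region contains the unvisited chamber, and as long as that region's overall counts balance (which they easily can, since you have populated the complement with monominoes), the pair is satisfied. Nothing in your sketch explains how to prevent the path from bypassing chambers while still achieving a valid count balance elsewhere; the ``fill the complement with monominoes'' idea only ensures that complement regions are satisfiable, not that the path is forced through any particular structure.

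The paper's proof takes a completely different route that sidesteps this difficulty. It reuses the rotatable-dominoes reduction from rectilinear Steiner tree (Theorem~\ref{rotating dominoes}), replacing each domino clue by an antimonomino and filling every previously empty cell with a monomino. The crucial observation is that the resulting constraint---each region containing an antimonomino must have equal numbers of monominoes and antimonominoes, i.e., area equal to twice the number of antimonominoes---is exactly the same area condition used in the domino proof. Hence the Witness-solution-to-Steiner-tree direction carries over verbatim, and the converse is immediate because the domino constraint was already at least as strong. The Steiner tree reduction works precisely because it is built on a global area argument, which is what monomino/antimonomino clues naturally encode; your Hamiltonicity approach fights against the grain of these clues.
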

}

\ifabstract
\begin{proofsketch}
The reduction is very similar to that of Theorem~\ref{rotating dominoes}, except
that the vertices of the Steiner tree contain antimonomino clues, and most of the
other cells contain monomino clues. We force the solution path to partition the puzzle
into two regions, an ``outside'' region which is entirely covered by monominoes,
 and an ``inside'' region which contains exactly as many
antimonominoes as monominoes, thereby satisfying both. We show that doing this
corresponds to a solution to the Steiner tree source instance. 
\end{proofsketch}
\fi

\later{
\begin{proof}
  Like Theorem~\ref{rotating dominoes}, we reduce from the rectilinear
  Steiner tree problem; refer to Figure~\ref{fig:(anti)monominoes}.
  Given an instance $I$ of this problem, let $W$
  be the Witness puzzle that is constructed from $I$ by the reduction in
  Theorem~\ref{rotating dominoes}; $W$ contains only domino
  clues. Modify $W$ as follows: in each cell with a domino clue, instead
  place an antimonomino clue, and in each cell that was previously
  empty, add a monomino clue. This modified puzzle, $W'$, which contains
  only monomino and antimonomino clues, is the output of our reduction.

\begin{figure}
\centering
\subcaptionbox{Unsolved.}{
  \begin{overpic}[width=0.48\linewidth]{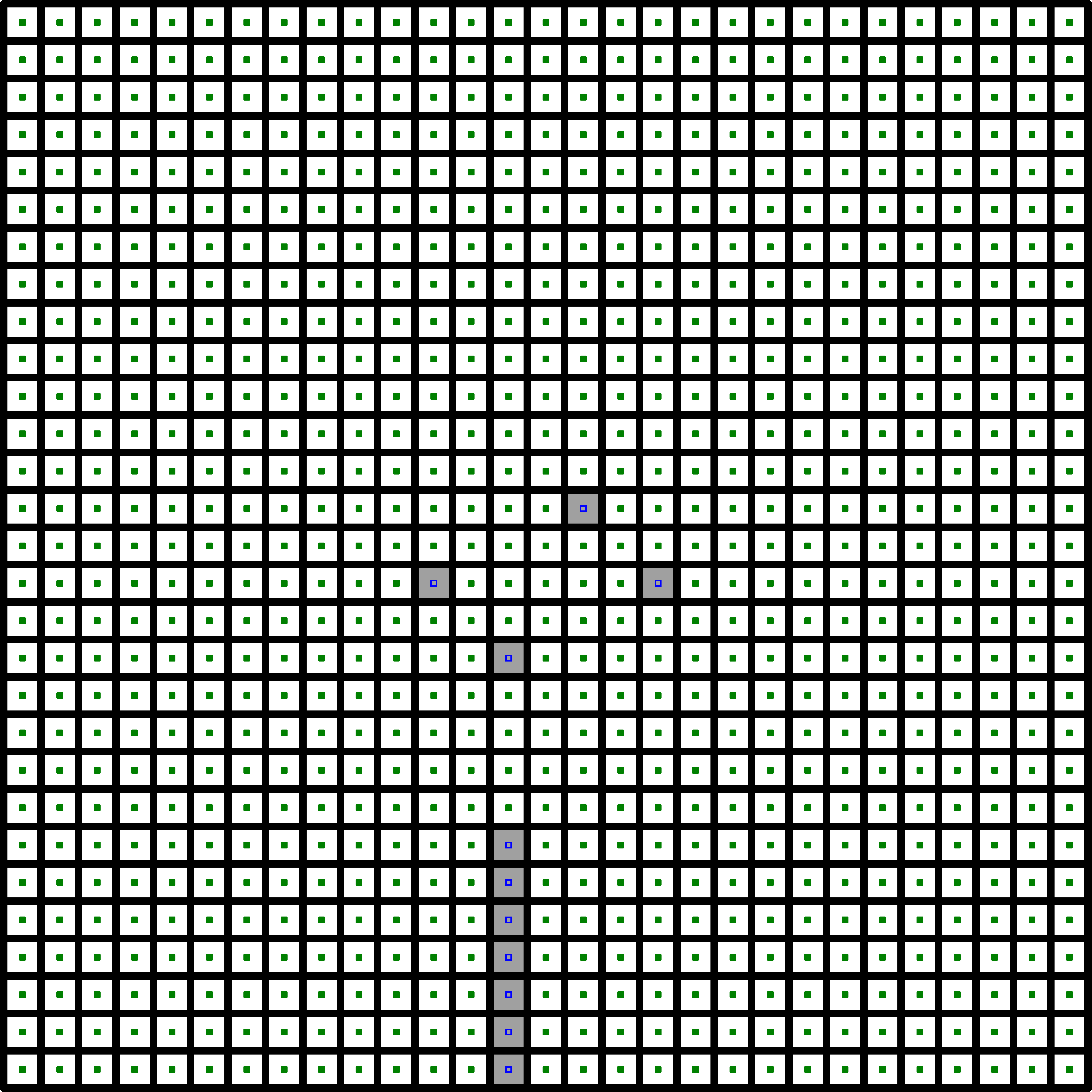}
    \put(0,0){\includegraphics[width=0.48\linewidth]{figures/steinertree_overlay}}
  \end{overpic}
}\hfil\hfil
\subcaptionbox{Solved.}{
  \begin{overpic}[width=0.48\linewidth]{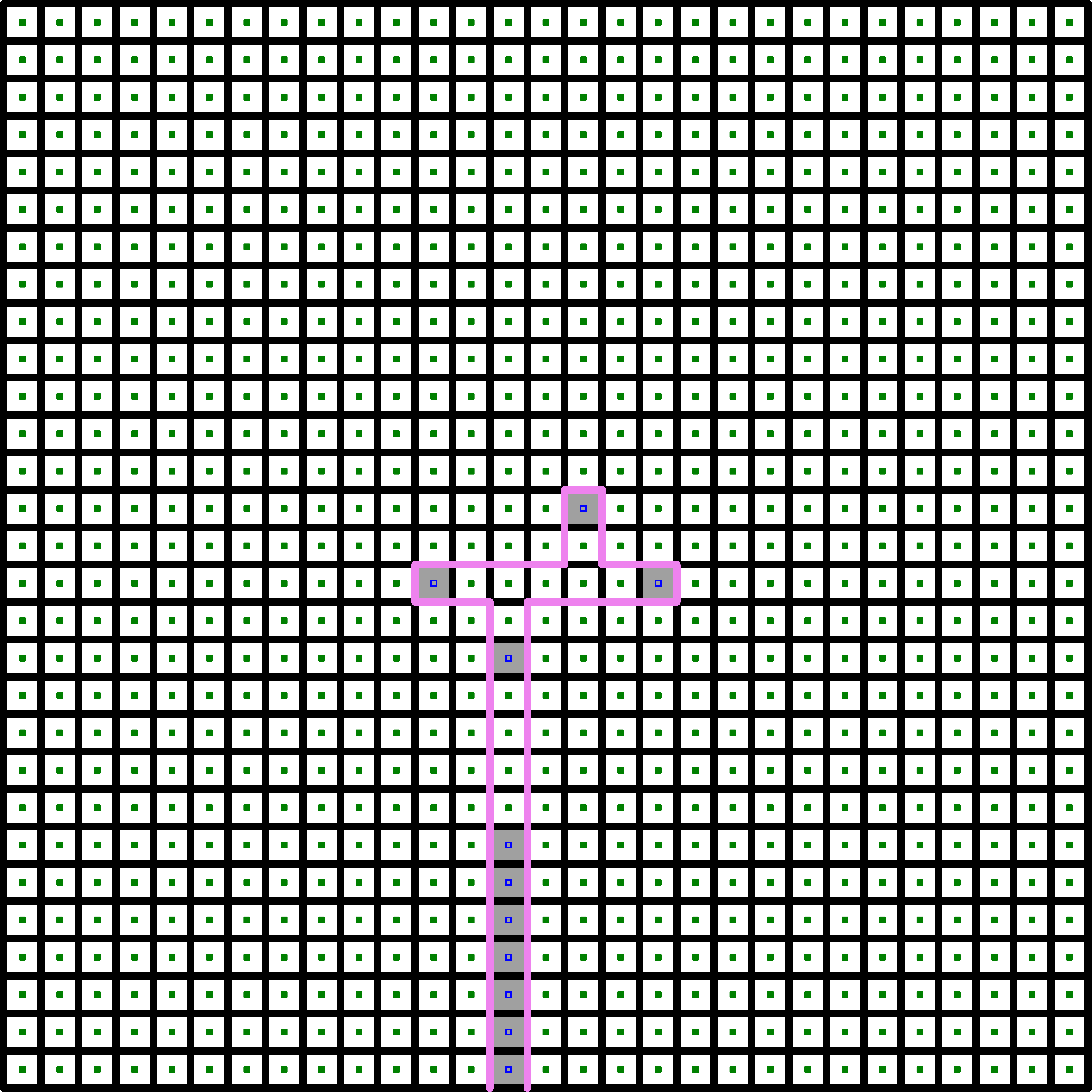}
  \end{overpic}
}
\caption{Example of the monominoes + antimonominoes NP-completeness proof,
  converted from Figure~\ref{fig:rotatable-dominoes}.}
\label{fig:(anti)monominoes}
\end{figure}

  Consider any region in a solution of $W'$ which contains an antimonomino. The total area of the antipolyomino clues in this region is less than the total area of the region. Thus, the total area of the clues must be exactly zero in order to satisfy the polyomino constraint. In other words, the area of the region must be exactly double the number of antimonominoes in the region.

  Therefore every region must either contain only monominoes, or its area must equal twice the number of antimonomino clues (i.e., the region must contain an equal number of monominoes and antimonominoes). In fact, as long as a path satisfies this area condition, it is a solution.

  This condition is strictly weaker than that imposed by $W$, so any solution to $W$ is also a solution to $W'$. So if the Steiner tree instance is solvable, so is~$W'$.

  Conversely, suppose $W'$ is solvable. In the  \textbf{Witness solution} $\to$ \textbf{Steiner tree} direction of the proof of Theorem~\ref{rotating dominoes}, we relied only on area conditions: that the area of every region with a domino equals twice the number of domino clues in the region. This exactly corresponds to the area conditions imposed by $W'$, so an identical proof shows that any solution to $W'$ can be converted into a Steiner tree that solves $I$.
\end{proof}
} 

\subsection{Nonrotatable Dominoes}
\label{sec:Nonrotatable Dominoes}

\both{
\begin{theorem} \label{vertical dominoes}
It is NP-complete to solve Witness puzzles containing only nonrotatable
vertical dominoes.
\end{theorem}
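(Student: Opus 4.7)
The plan is to reduce from the rectilinear Steiner tree problem, mirroring the structure of Theorem~\ref{rotating dominoes} but modifying the geometry so that every enclosed region in any valid solution is tileable by vertical dominoes alone. The governing observation is that a simply connected region containing $k$ vertical-domino clues must be exactly tileable by $k$ non-overlapping $1\times 2$ vertical dominoes, which (for a simply connected region) is equivalent to requiring that every column of the region have even cardinality in each of its maximal vertical runs. This column-parity condition is precisely what our gadgets will exploit and enforce.

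First I would take a rectilinear Steiner tree instance with points $(x_i,y_i)$ and budget $k$, and build a Witness puzzle on a grid scaled by a constant factor (at least $2$ horizontally and $2$ vertically) with a ``trunk'' below an anchored root point carrying the start circle and end cap, exactly as in Theorem~\ref{rotating dominoes}. The key modification is that horizontal tree edges are represented by corridors two rows tall, so that an edge of length $\ell$ is realized by $\ell$ vertical dominoes placed side-by-side in $\ell$ adjacent columns, while vertical tree edges are realized by stacks of vertical dominoes in a single column. At each scaled input point I place exactly one vertical domino clue, and I fill the trunk with enough vertical domino clues to make the total clue count equal to the magic number ($2k+1$, as before) relative to the total enclosed area.

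Next I would verify both directions of the reduction. For the forward direction (Steiner tree $\to$ Witness solution), I would, given a Steiner tree of length at most $k$, take the corresponding union of cells in the puzzle, observe that by the doubled scaling every maximal vertical run in every column of this region has even length, and hence the region is tileable by exactly $k+O(1)$ vertical dominoes; cutting the bottom of the trunk yields the required simple path. For the reverse direction (Witness solution $\to$ Steiner tree), I would adapt the area-counting and ``shrinking region'' argument from Theorem~\ref{rotating dominoes} to show that any solution path encloses a single connected region containing all clues with area equal to twice the clue count, and then use the column-parity constraint to deduce that this region, when descaled, contains a rectilinear tree of total length at most $k$ passing through all input points.

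The main obstacle will be ruling out spurious Witness solutions in the converse direction: because vertical dominoes impose column-parity but not a Steiner-tree shape, in principle an enclosed region could be tileable yet fail to descale to a tree of bounded length. I expect the doubled scaling to handle this: any tileable enclosed region of the budgeted area must occupy vertically paired rows of each visited column, so the set of occupied descaled cells has total perimeter at most a constant times $k$, forcing any spanning subgraph of its dual to be a Steiner tree of length at most $k$ connecting the given points. A careful bookkeeping analogous to the argument with nested centers $C_i$ in Theorem~\ref{rotating dominoes} should close this gap. Membership in NP follows from Observation~\ref{all-but-antibody-np}.
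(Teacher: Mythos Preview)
Your approach differs substantially from the paper's: the paper reduces from \emph{planar rectilinear monotone 3SAT}, not from rectilinear Steiner tree. Variable gadgets are diagonal chains of domino clues whose covering dominoes must be uniformly shifted ``up'' (\textsc{true}) or ``down'' (\textsc{false}); clause gadgets are horizontal rows of clues that can be connected into the single clue-containing region only through a vertical connection reaching a correctly-set variable block. No area-versus-length bookkeeping is needed at all.

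Your Steiner-tree plan has a genuine gap that your own proposal identifies but does not resolve. The converse direction in Theorem~\ref{rotating dominoes} uses only the area bound together with the shrinking argument through nested centers~$C_i$; tileability plays no role there. So if you kept that construction verbatim the converse would still go through---but the \emph{forward} direction would not: any horizontal segment of the scaled Steiner tree contributes single-cell columns, which cannot be tiled by vertical dominoes. Your proposed fix (two-row corridors for horizontal edges, one-column stacks for vertical edges) destroys the uniform area-to-length relation on which the converse relies: the enclosed area now depends on the split between horizontal and vertical tree length, so an area bound no longer yields a bound on total Steiner length. You also do not specify how two-row corridors meet one-column stacks at junctions without creating odd runs. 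Finally, the specific converse mechanism you sketch---that column parity forces cells into vertically aligned pairs and hence bounds the perimeter of a ``descaled'' region---is incorrect: vertical dominoes need not align to a fixed row parity, so there is no well-defined row-halving descaling, and in any case tileability by vertical dominoes does not bound the perimeter of a region of given area.
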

}

\ifabstract
\begin{proofsketch}
We reduce from planar rectilinear monotone 3SAT \cite{PlanarRectilinearMonotone}.
We construct variable ``wires'' which are comprised of dominoes arranged on a diagonal
which the solution path must enclose in one of two settings. Each clause needs to ``connect''
to at least one of its literals, but can only get close enough to do so if the corresponding
variable is set appropriately. 
\end{proofsketch}
\fi

\later{
\begin{proof}
We reduce from planar rectilinear monotone 3SAT \cite{PlanarRectilinearMonotone},
which is a special case of
3SAT where the formula can be represented by the following planar structure.
Each variable has an associated \emph{variable segment} on the $x$ axis,
and the variable segments are disjoint.
Each clause has an associated \emph{clause segment} which is horizontal
but not on the $x$ axis.
Each clause segment is connected by vertical segments to the variable segments
corresponding to the literals in the clause.
Clause segments above the $x$ axis represent \emph{positive} clauses whose
literals are all positive (not negated), and clause segments below the
$x$ axis represent \emph{negative} clauses whose literals are all negated.

We will convert such an instance into a Witness puzzle with nonrotatable
vertical domino clues.  Our construction will have the following properties:

\begin{property} \label{no vertical domino clues}
  No two domino clues are in vertically adjacent cells.
\end{property}

\begin{property} \label{boundary domino clues}
  There is exactly one cell $\ell$ in the leftmost column that
  contains a domino clue; and there are no other domino clue cells
  in the rightmost column, top two rows, or bottom two rows.
\end{property}

\begin{property} \label{domino start end}
  The start circle is the bottom-left corner of $\ell$,
  and the end cap the top-left corner of~$\ell$.
\end{property}

\paragraph{Domino tiling problem.}
Before describing the construction, we show how solving a Witness puzzle
satisfying the properties above can be rephrased as solving a special kind
of domino tiling problem: choosing one domino tile to cover each domino-clue
cell (and the cell either above or below it) such that the boundary of the
union of the domino tiles forms a simple cycle.

Any solution path to the constructed Witness puzzle must divide
the board into regions such that each region containing domino clues can be
exactly tiled by a number of domino tiles equal to the number of contained
domino clues.  In such a tiling, every domino-clue cell
must be covered by exactly one domino tile.
By Property~\ref{no vertical domino clues},
no domino tile can cover multiple cells with domino clues.
Therefore, there is a one-to-one correspondence between domino clues and
domino tiles in any solution, where each domino tile covers the cell with
the corresponding domino clue.  Equivalently, a valid such tiling can be
specified by choosing whether the domino tile covering a domino-clue cell
also covers the cell above or below.

By Property~\ref{boundary domino clues}, exactly one
domino tile can be adjacent to the board boundary.  Every tiled region in
the solution must touch the board boundary, and must therefore contain
the unique domino tile adjacent to the board boundary.  Thus, there is
exactly one region containing all the domino clues, which must be tiled by
all corresponding domino tiles.  The solution path must (as a subpath)
outline the part of the boundary of this region that lies interior to the
board.

We claim that solving the Witness puzzle is equivalent to choosing one domino
tile to cover each domino-clue cell (and the cell either above or below it)
such that the boundary of the union of the domino tiles forms a simple cycle.
On the one hand, any simple boundary cycle can be converted into a
solution path by removing the left edge of~$\ell$,
by Property~\ref{domino start end}.
On the other hand, for any solution path, we can take the subpath outlining
the tiled region and add two edges (the left edges of the domino tile
covering~$\ell$) to form a simple cycle that is the boundary of the tiled
region.

\paragraph{Variable gadget.}
For a variable appearing in $k$ clauses,
its \emph{variable gadget} consists of $4k$ domino clues in a positive-slope
diagonal line, with each clue immediately above and to the right of the
previous one.
Divide these clues into $k$ \emph{blocks} of four contiguous clues each.
Each block will be used to connect to one of the clause gadgets
corresponding to a clause in which the variable appears.
Suppose the variable appears in $k^+$ positive clauses
$C^+_1, C^+_2, \dots, C^+_{k^+}$
and in $k^-$ negative clauses
$C^-_1, C^-_2, \dots, C^-_{k^-}$
(so $k^+ + k^- = k$).
We assign the leftmost $k^-$ blocks for connections to the negative clauses, followed by the remaining $k^+$ blocks for connections to the positive clauses.

Dominoes in the variable gadget that have the path extending one cell above them (``up'' dominoes) represent an assignment of \textsc{true} to the corresponding variable and dominoes with the path extending one cell below them (``down'' dominoes) represent an assignment of \textsc{false}.  We implement the connections to the clause gadgets such that all other domino clues are at least three cells away from the diagonal.  Under this property, the path may switch mid-variable-gadget from up to down dominoes, but not from down to up (the path would intersect itself).  Because connections to the negative clauses precede connections to the positive clauses, switching from up to down is never beneficial. See Figure~\ref{nonrot-domino-var}.

\begin{figure}
\centering
\subcaptionbox{Unset.}{%
  \hspace{.3cm}%
  \includegraphics[width=.2\textwidth]{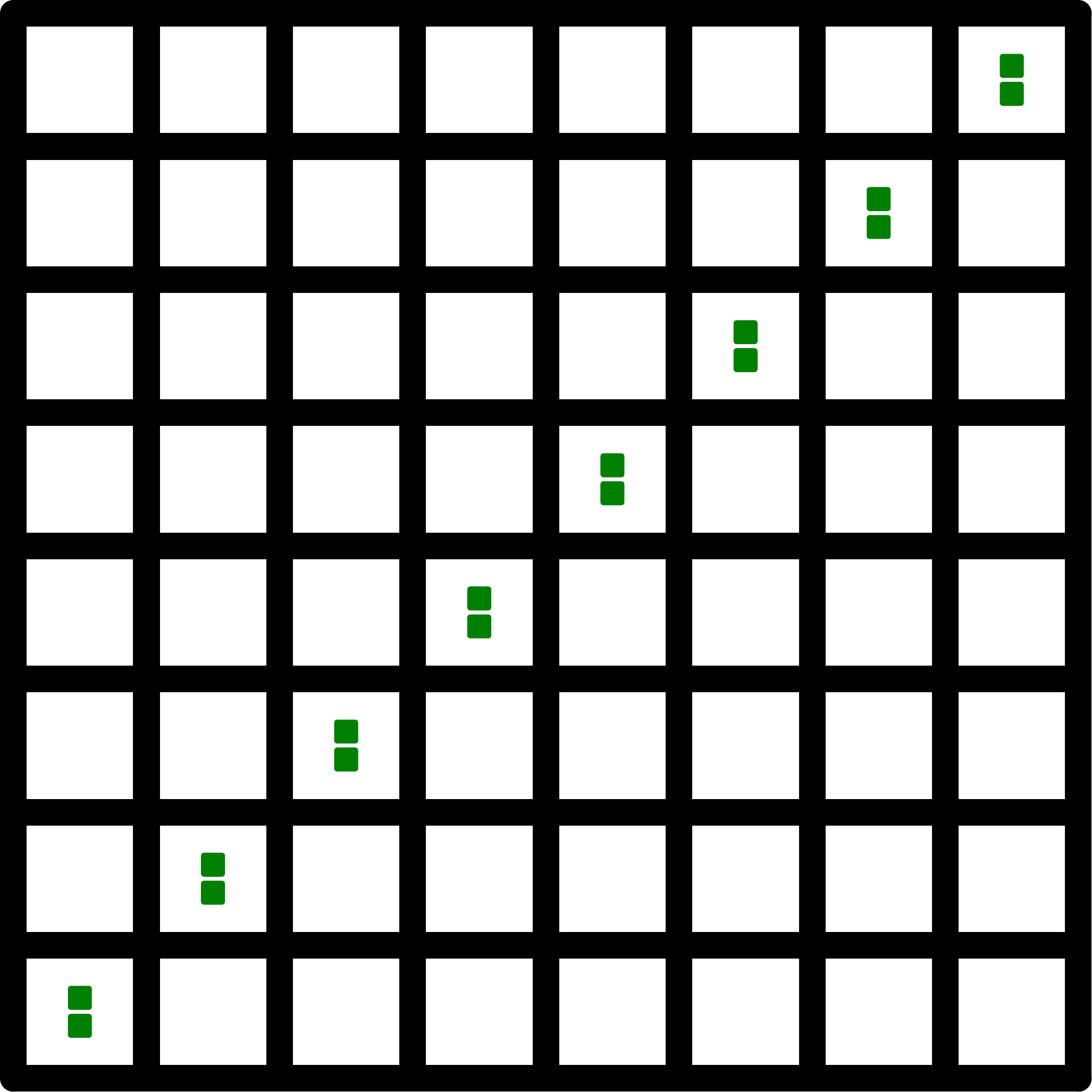}%
  \hspace{.3cm}%
  \label{nonrot-domino-var-unset}%
}~~~
\subcaptionbox{True.}{%
  \hspace{.3cm}%
  \includegraphics[width=.2\textwidth]{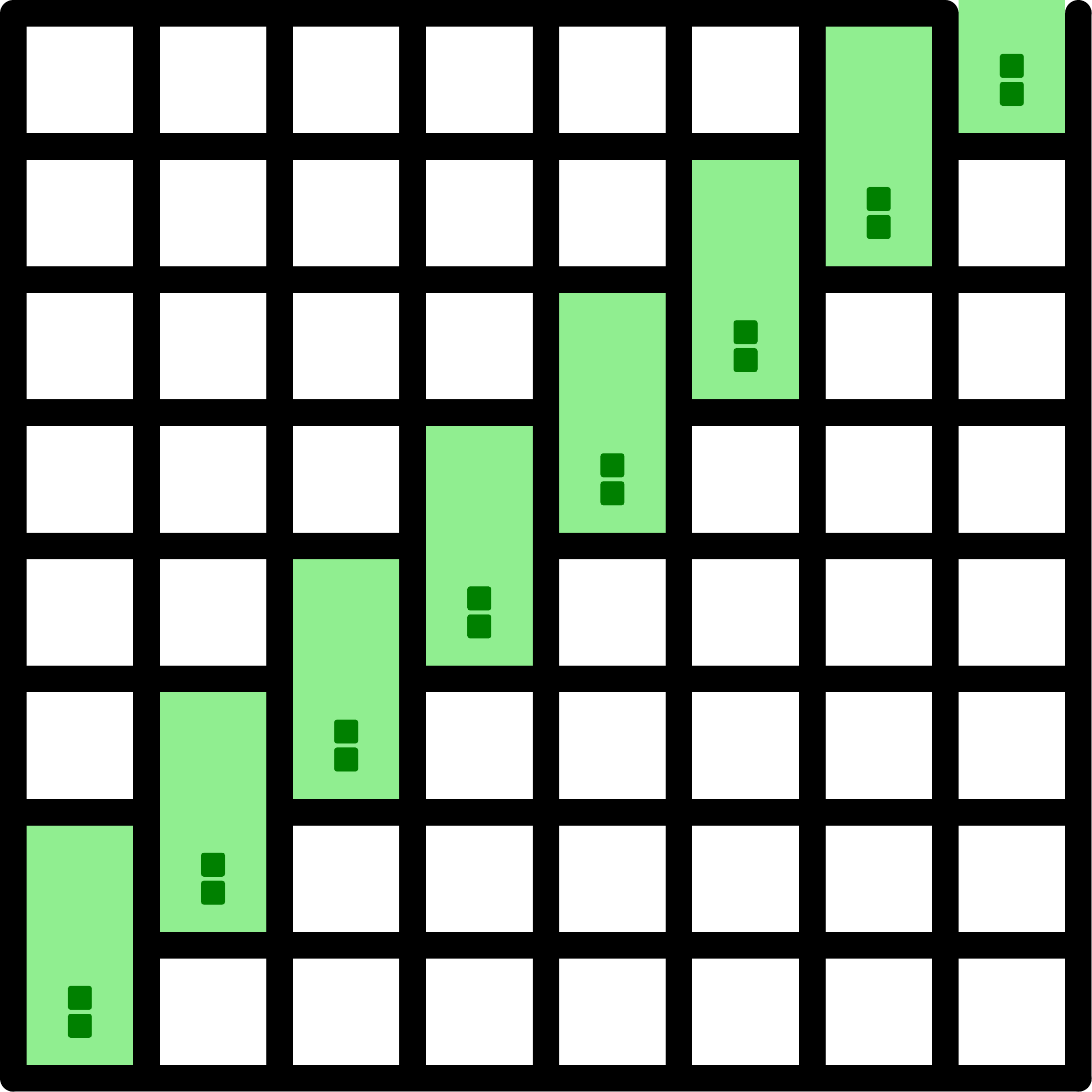}%
  \hspace{.3cm}%
  \label{nonrot-domino-var-true}%
}~~~
\subcaptionbox{False.}{%
  \hspace{.3cm}%
  \includegraphics[width=.2\textwidth]{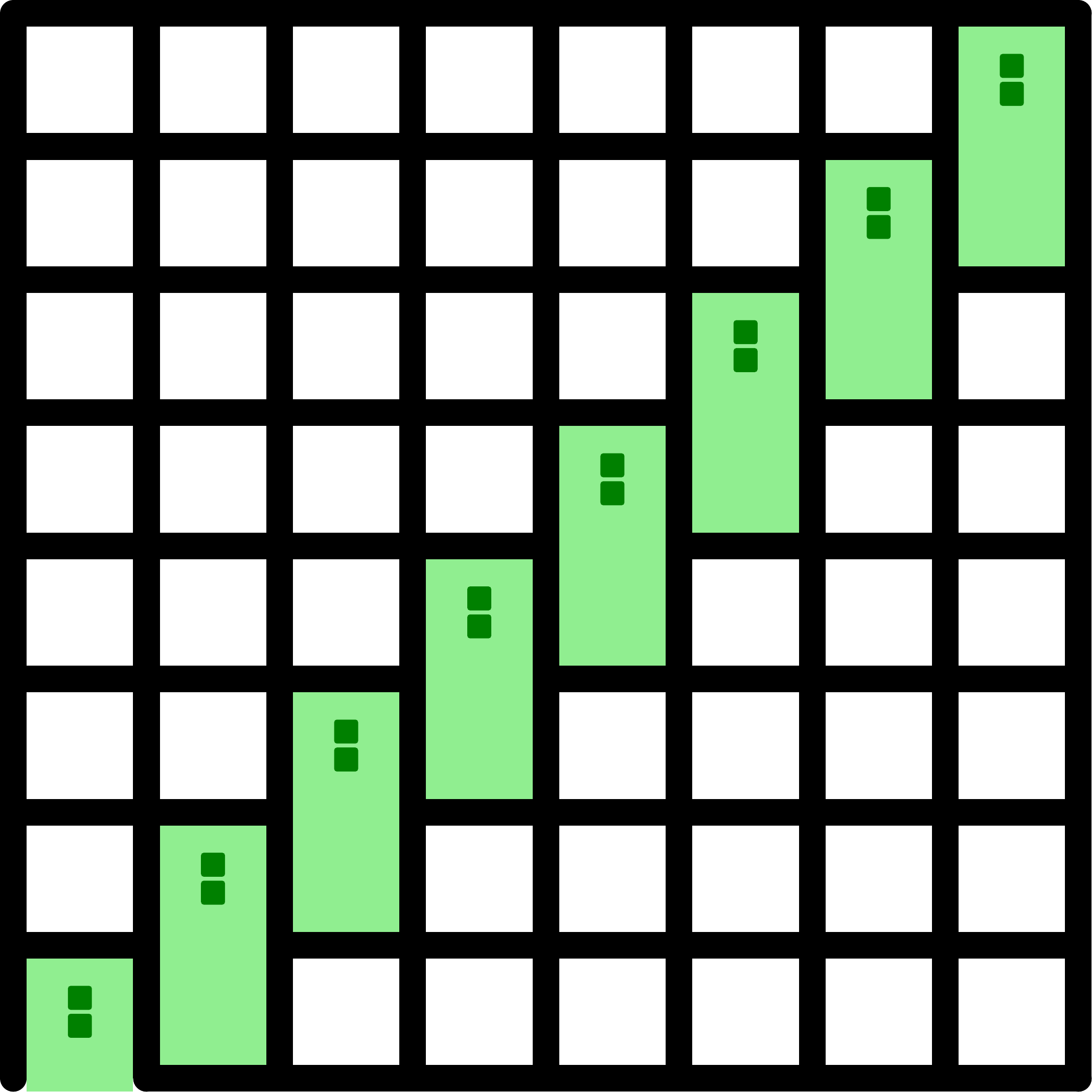}%
  \hspace{.3cm}%
  \label{nonrot-domino-var-false}%
}~~~
\subcaptionbox{False to true (invalid).}{%
  \hspace{.3cm}%
  \includegraphics[width=.2\textwidth]{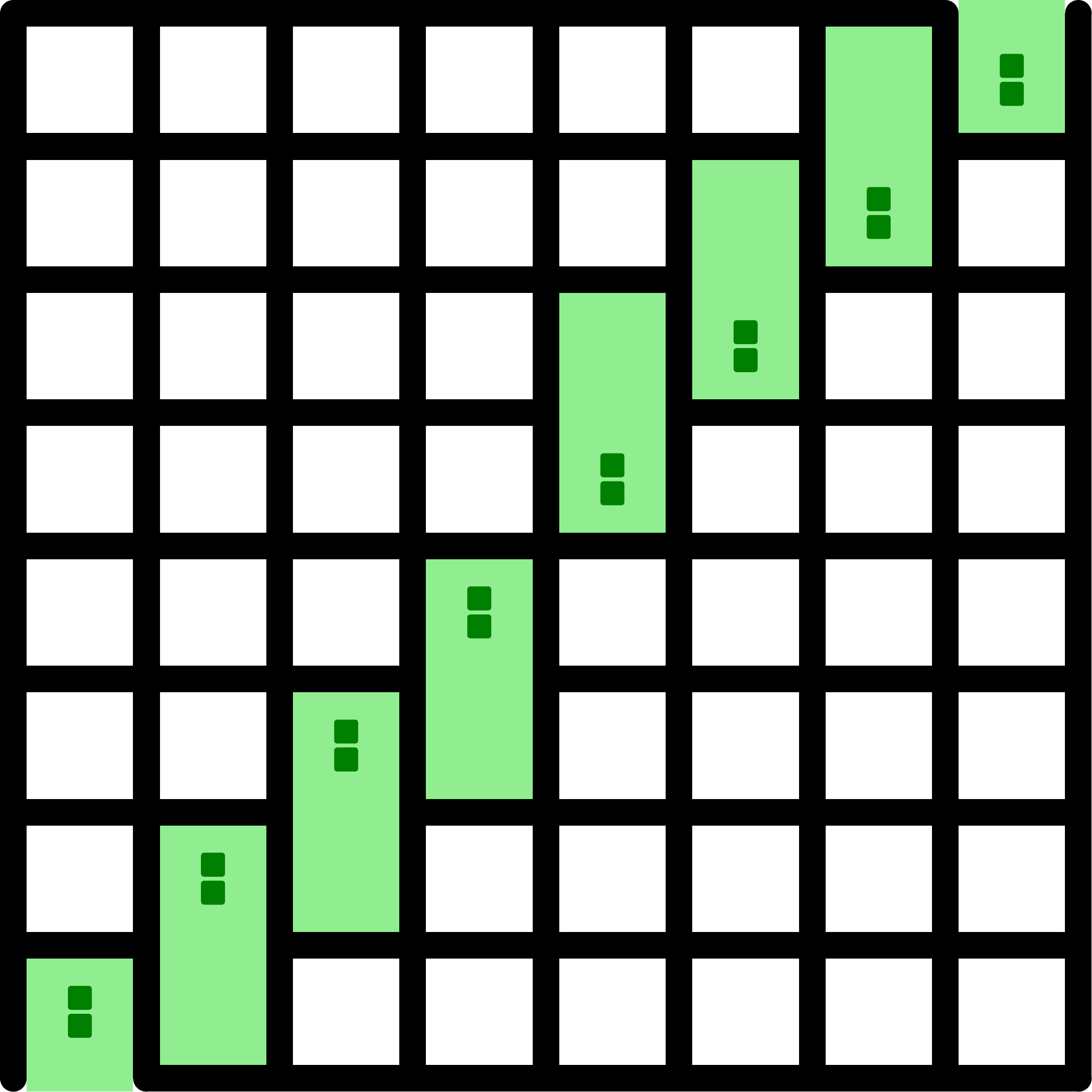}%
  \hspace{.3cm}%
  \label{nonrot-domino-var-inconsistent}%
}~~~
\caption{A 2-block-wide variable gadget and its possible domino tilings.  Switching from false to true mid-gadget is impossible.}
\label{nonrot-domino-var}
\end{figure}

\paragraph{Clause gadget.}
The clause gadget is a horizontal line of domino clues with horizontal extent exactly matching the extent of the blocks allocated from variable gadgets for connections to this clause.

\begin{figure}
\centering
\includegraphics[width=.9\textwidth]{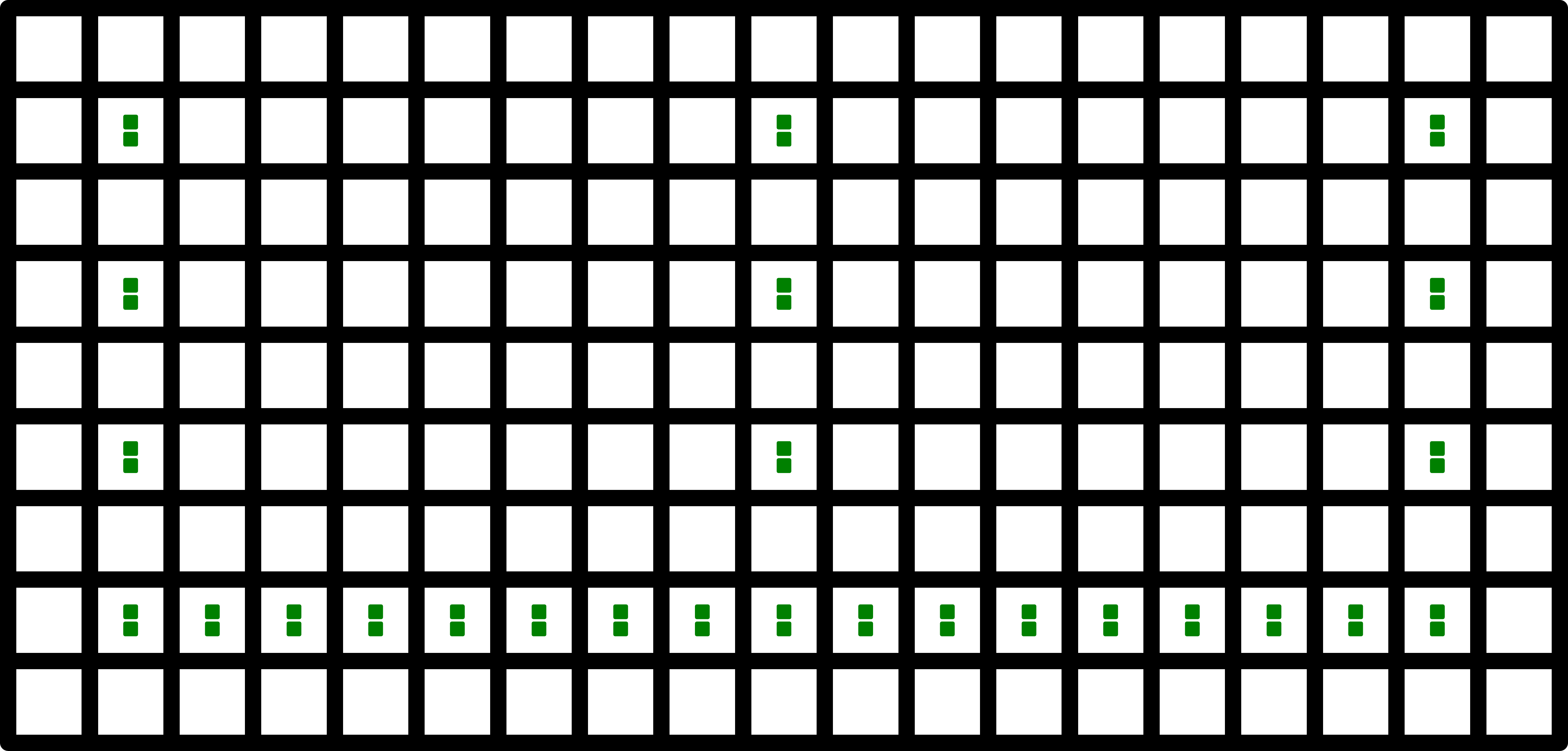}
\caption{A negative clause gadget and its three literal connections.}
\label{nonrot-domino-clause-unset}
\end{figure}

\paragraph{Layout and connections.}
Given an instance $R$ of planar rectilinear monotone 3SAT, we construct a Witness puzzle $W$ whose only clues are $1 \times 2$ nonrotatable dominoes.

We place the variable gadgets in the same order as the corresponding variable intervals appear on the $x$-axis in $I$.  The bottom-left clue of each variable gadget is two cells to the right of the top-right clue of the previous variable gadget, with a domino clue in the cell between allowing the path to switch freely from up to down or down to up at the transition point between gadgets.

We place the clause gadgets with their left and right ends aligned with the left and right ends of the outermost blocks in the variable gadgets they connect to and sorted vertically with respect to the variable gadgets and other clause gadgets in the same way as the corresponding clause line segments in $I$.  We scale the vertical distance between clause line segments such that every pair of clause gadgets is separated by a vertical distance of at least five cells.  (In particular, positive (negative) clause gadgets are above (below) the variable gadgets.)

Each connection between a variable and a clause is allocated to one block of four clues in the corresponding variable gadget.  Of the two center columns of that block, we choose the column in which the variable's clue's vertical parity is opposite to the vertical parity of the line of clues that constitute the clause gadget.  We place domino clues in every square of the clause line's vertical parity in that column between the clause and variable gadgets, except the one such square adjacent to a clue in the variable.

\begin{figure}
\centering
\subcaptionbox{Satisfied.}{%
  \hspace{.3cm}%
  \includegraphics[width=.2\textwidth]{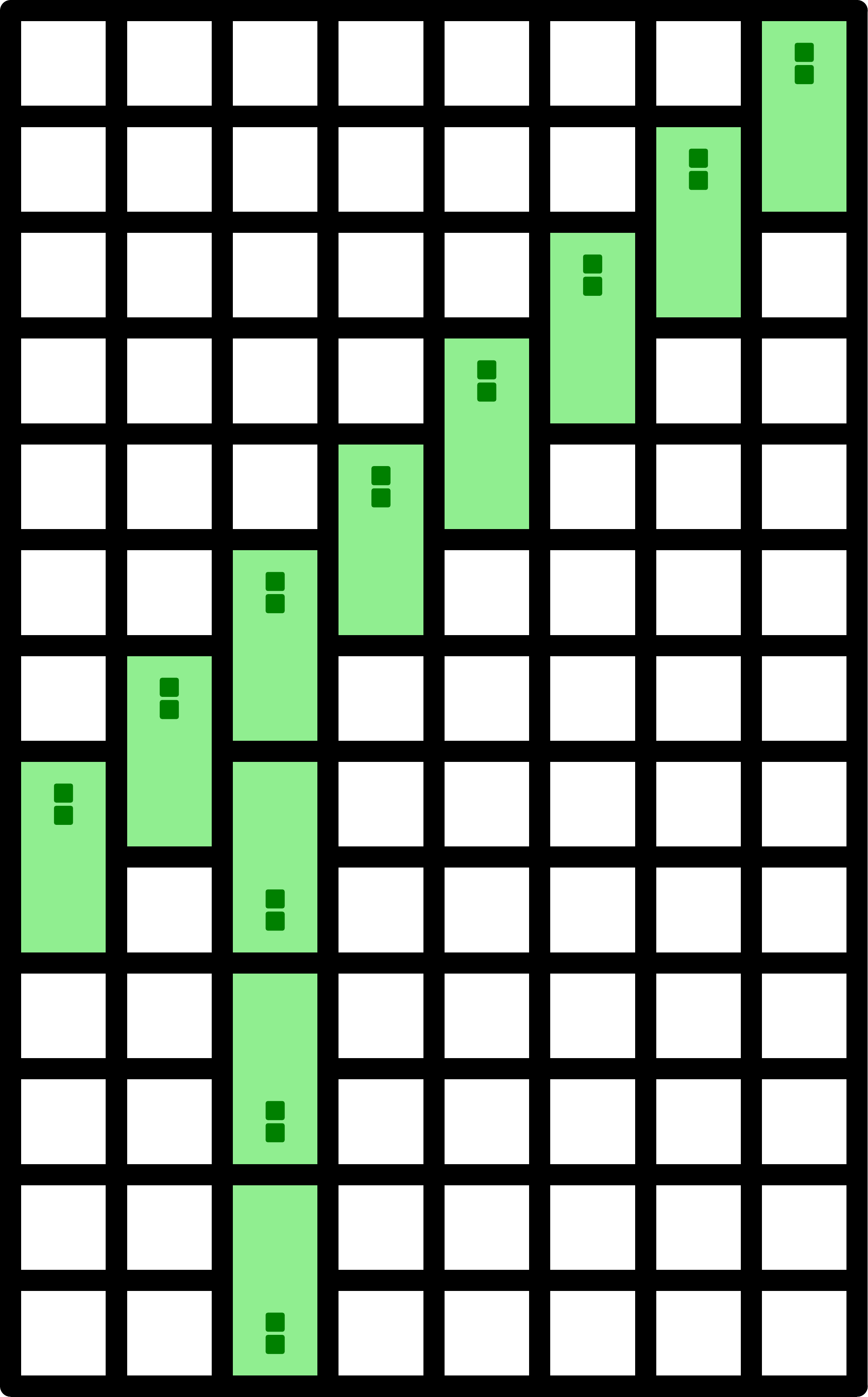}%
  \hspace{.3cm}%
  \label{nonrot-domino-neg-lit-sat}%
}~~~
\subcaptionbox{Unsatisfied.}{%
  \hspace{.3cm}%
  \includegraphics[width=.2\textwidth]{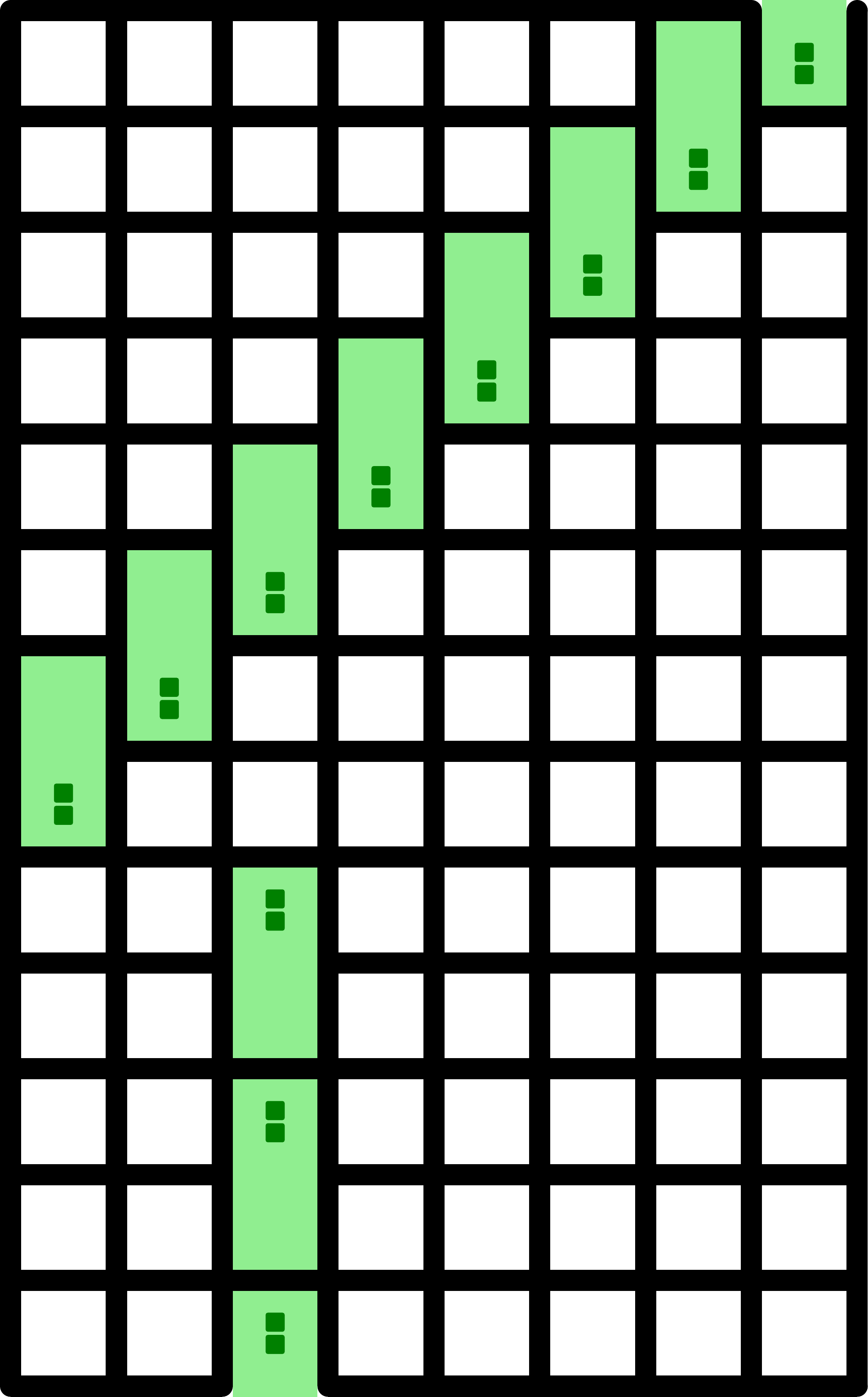}%
  \hspace{.3cm}%
  \label{nonrot-domino-neg-lit-unsat}%
}~~~
\subcaptionbox{Invalid.}{%
  \hspace{.3cm}%
  \includegraphics[width=.2\textwidth]{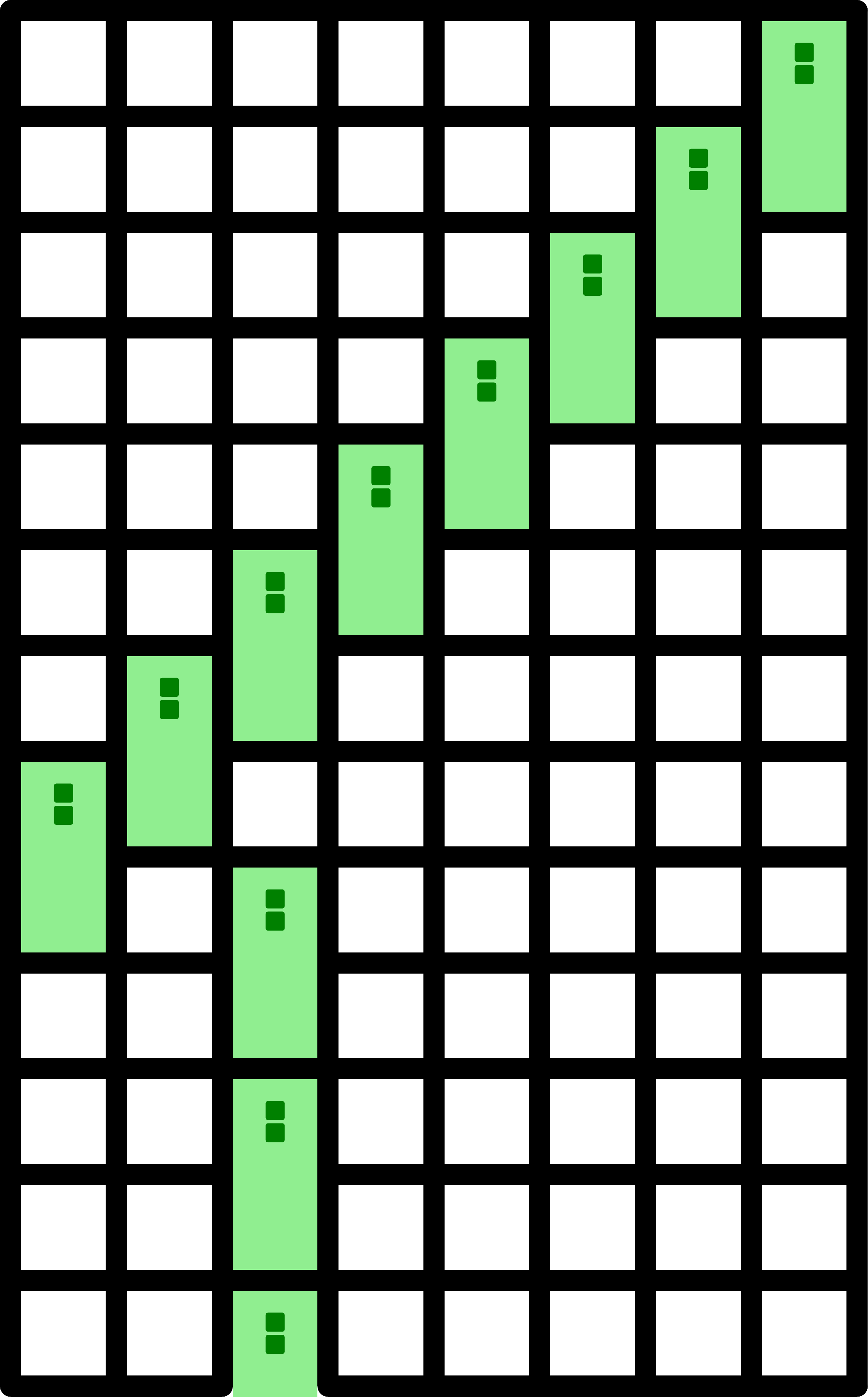}%
  \hspace{.3cm}%
  \label{nonrot-domino-neg-lit-inconsistentf}%
}~~~
\subcaptionbox{Invalid.}{%
  \hspace{.3cm}%
  \includegraphics[width=.2\textwidth]{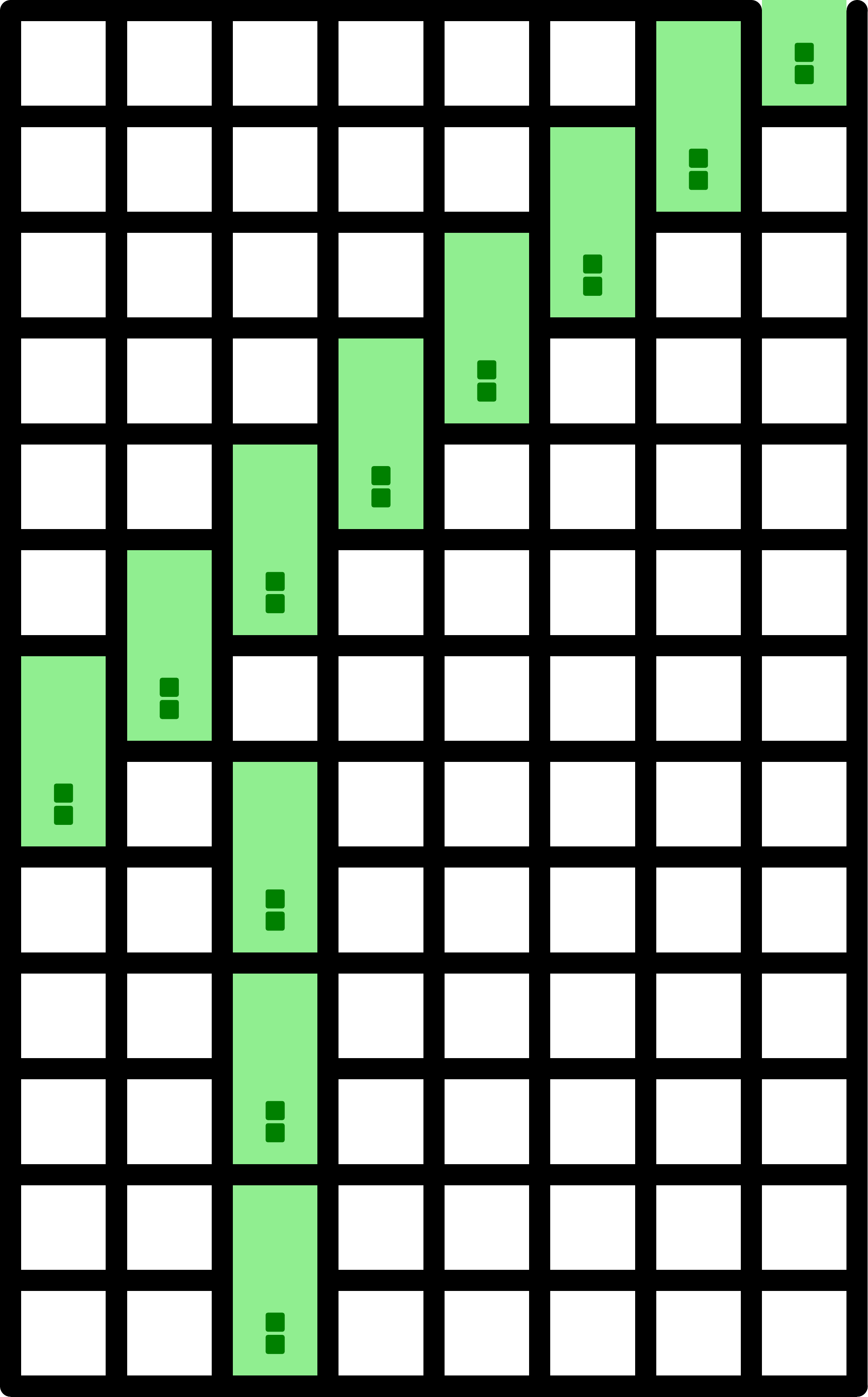}%
  \hspace{.3cm}%
  \label{nonrot-domino-neg-lit-inconsistentt}%
}~~~
\caption{A negative literal connection to a variable gadget.  The other two possibilities are invalid because the solution path would touch itself.}
\label{nonrot-domino-neg-lit}
\end{figure}

We place one domino clue in the square $S$ immediately to the left of the bottom-left clue in the leftmost variable gadget.  The left boundary of the board is immediately left of $S$, while the other three boundary are three empty cells away from the most extreme clue in that direction.  We put the start circle and end cap at the bottom-left and top-left vertices of $S$.

\begin{figure}
\centering
\subcaptionbox{The puzzle.}{%
  \includegraphics[width=.49\textwidth]{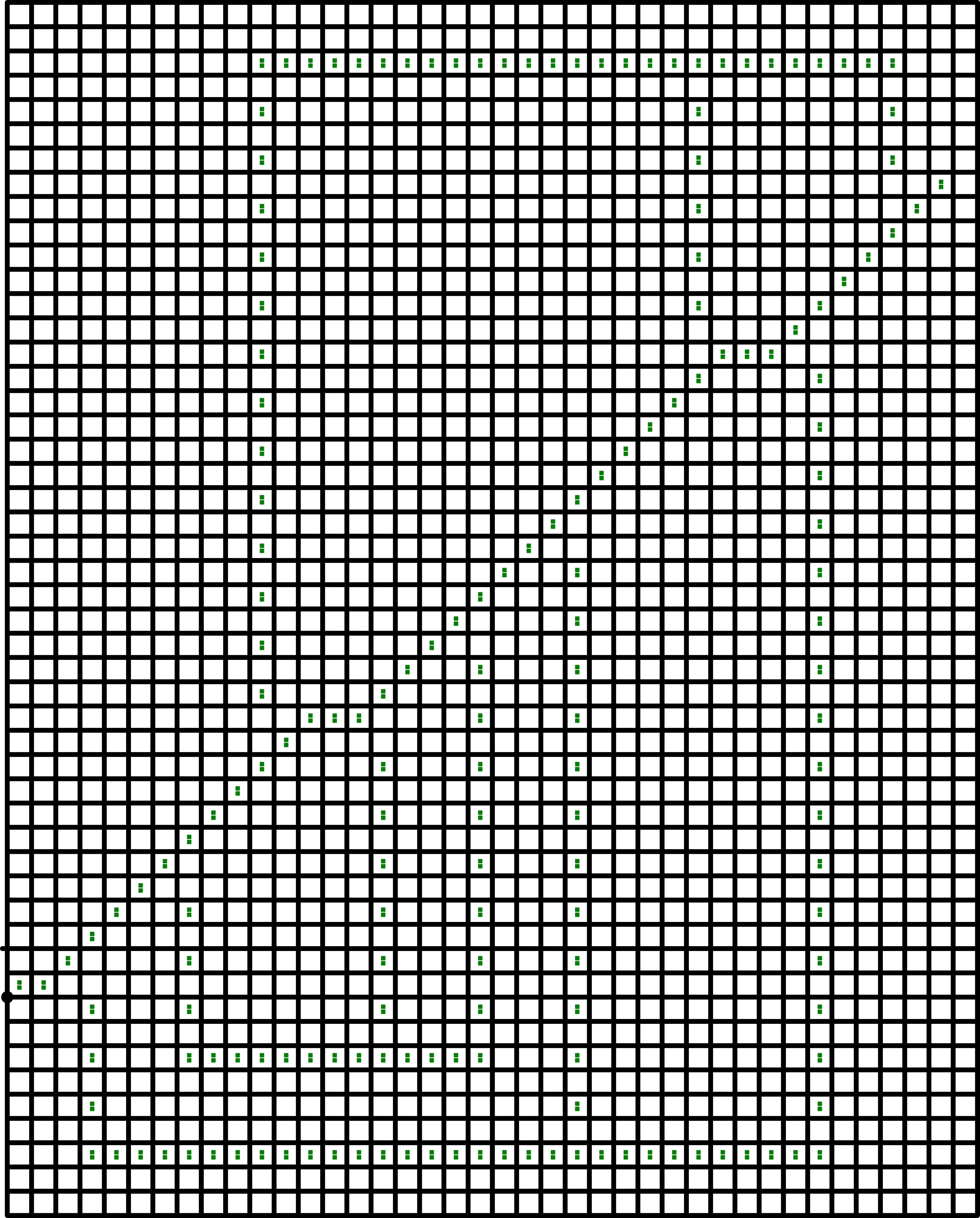}%
  \label{nonrot-domino-worked-unsolved}%
}
\subcaptionbox{The solution.}{%
  \includegraphics[width=.49\textwidth]{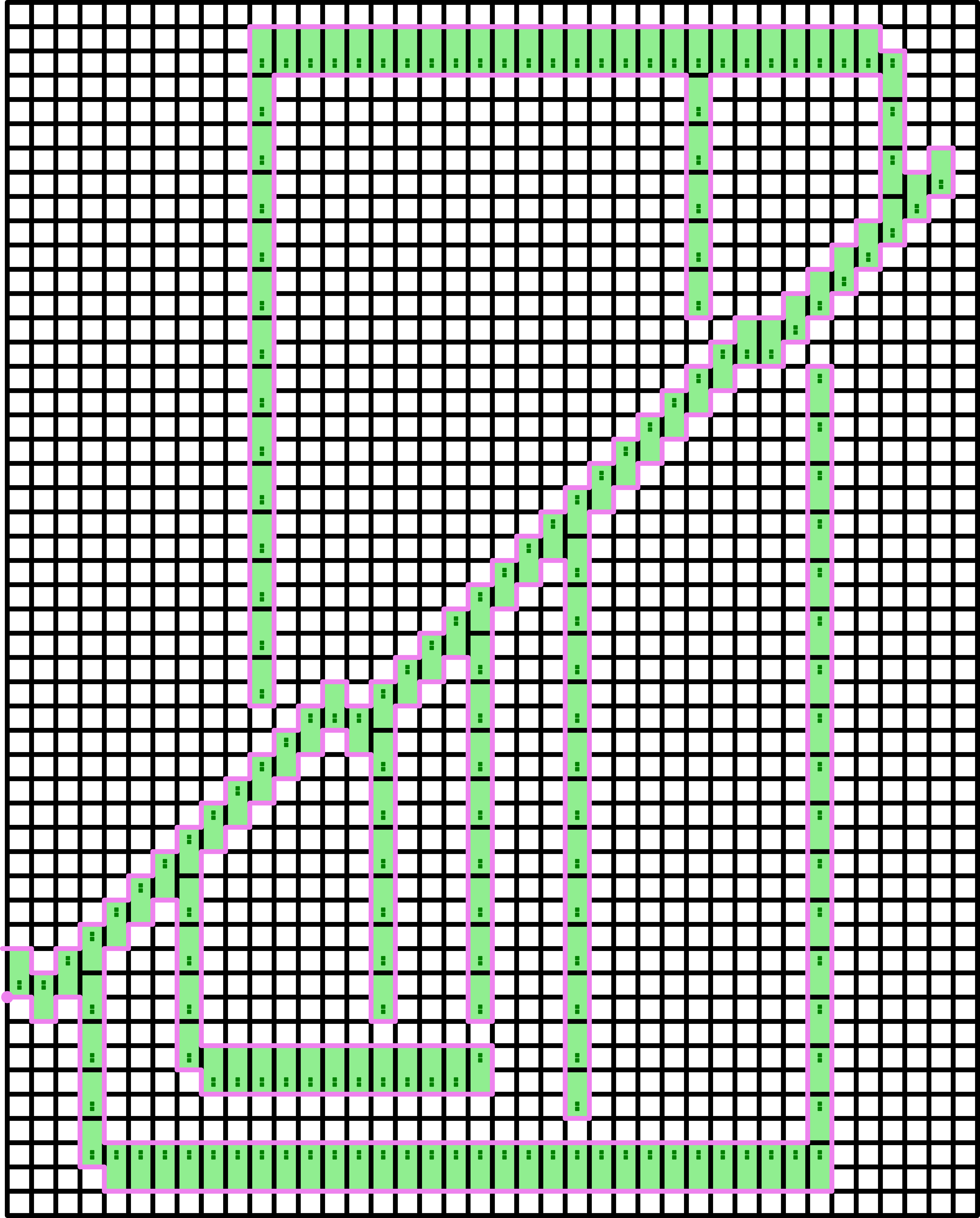}%
  \label{nonrot-domino-worked-solved}%
}
\caption{A Witness puzzle produced from $(x \lor y \lor z) \land (\lnot x \lor \lnot y \lor \lnot z) \land (\lnot x \lor \lnot y \lor \lnot y)$ and its solution ($x$ and $y$ are \textsc{false}, $z$ is \textsc{true}).  Shaded cells show the domino tiling on the interior of the path.}
\label{nonrot-domino-worked}
\end{figure}

\paragraph{Witness solution $\to$ Satisfying assignment.}
Suppose the Witness puzzle produced by our reduction has a solution: a path from the start circle to the end cap such that each induced region that contains domino clues is exactly tiled by that many (vertical) dominoes.  As a result, every square containing a domino clue is covered by a domino.  Because no two domino clues are vertically adjacent, we can associate each clue with the domino that covers it.  By this correspondence, those are the only dominoes in the tiling.

By construction, the domino covering $S$ is the only domino touching the boundary of the board (all other domino clues are too far from the boundary), and so only one region containing domino clues touches the boundary.  Because solution paths are simple paths, every region in a Witness puzzle touches the boundary, so all domino clues are in a single region.

We can read off a truth assignment from the position of the dominoes covering the clues in the variable gadget: \textsc{true} if the domino is up and \textsc{false} if the domino is down.  As explained above, the path may switch from up to down in the middle of a variable gadget, but such switches do not help in satisfying clauses, so we take the first domino in each variable gadget to determine the assignment of that variable.

It remains to show that this assignment satisfies the formula.  Because all domino clues are in a single region, the clause gadgets are all in the region.  For the path to reach and encircle a positive (negative) clause gadget, it must follow a connection up (down) from a variable gadget domino that is up (down) and then return along the other side of that connection; any other path to or from a clause gadget results in a region too large to cover with the domino clues.   That variable gadget, connection and clause gadget correspond to the variable that satisfies the literal in that clause.  Thus, the recovered truth assignment satisfies the formula.

\paragraph{Satisfying assignment $\to$ Witness solution.}
Given a satisfying assignment, we can construct a domino tiling of a tree of cells rooted at $S$, the boundary of which is a solution path.  We begin by covering each domino clue in the variable gadgets with a domino positioned according to the value of that variable (up for \textsc{true}, down for \textsc{false}).  We cover the connections corresponding to satisfied positive literals with up dominoes and satisfied negative literals with down dominoes, such that the connection connects the variable and clause gadgets.  Then for all but one of the satisfying literals, we move the two dominoes furthest from the clause gadget to the other orientation (further away from the variable gadget), disconnecting the connection to avoid forming a closed loop that would make drawing the solution path impossible.  We cover the unsatisfied positive literals with down dominoes and satisfied negative literals with up dominoes, leaving them connected to the clause gadget but not the variable gadget.  Then we cover the remaining clause gadget domino clues with up dominoes for positive clauses and down dominoes for negative clauses.

\begin{figure}
\centering
\includegraphics[width=.9\textwidth]{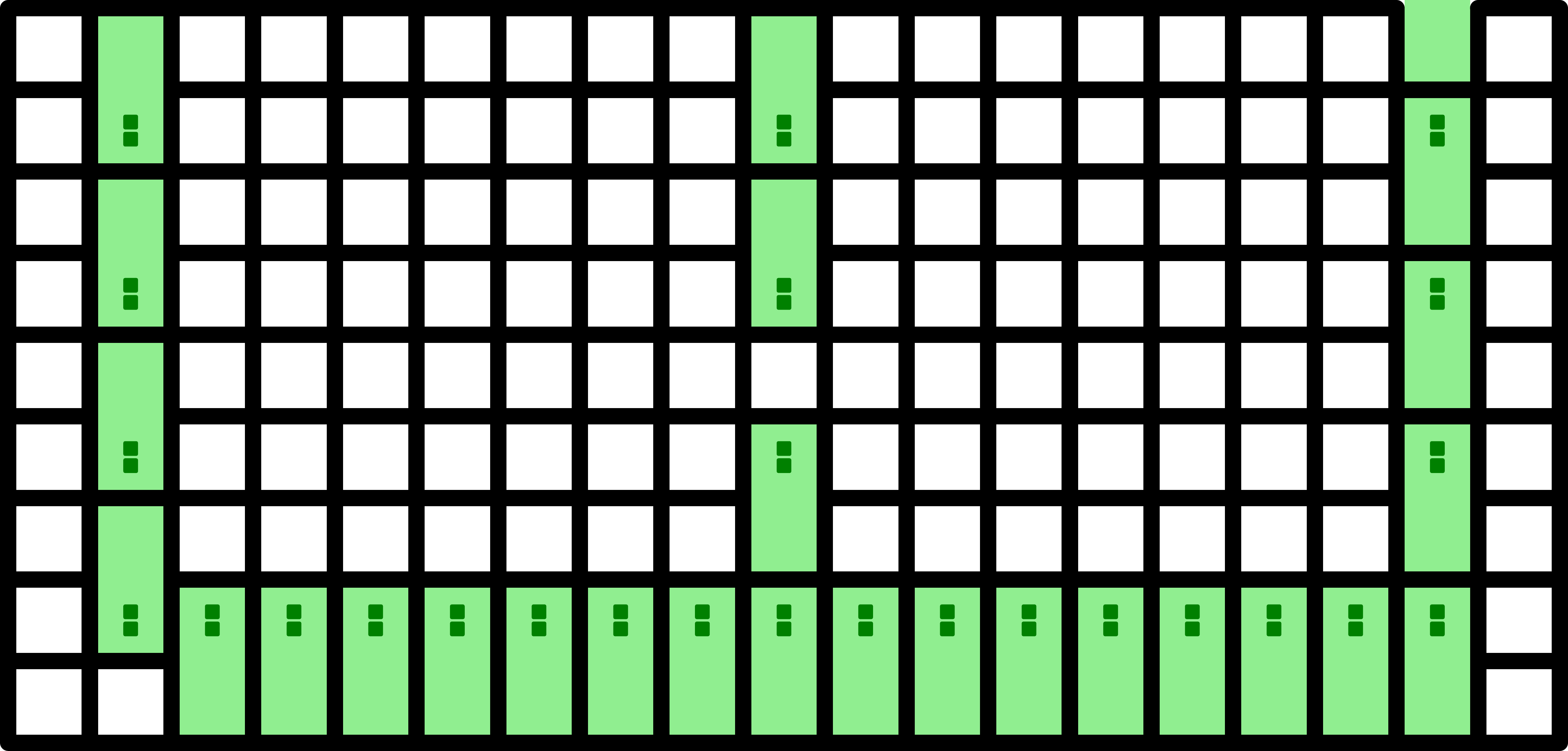}
\caption{A negative clause gadget with the left two literals satisfied and the right literal unsatisfied.  The left and middle literal connections connect to their variable gadgets, while the right literal does not.  The middle literal could be used to satisfy this clause (instead of the left literal) by moving the bottom two dominoes in that column up and the bottom two dominoes of the left literal's connection column down.}
\label{nonrot-domino-clause-sat}
\end{figure}

This domino tiling is a tree: the variable gadgets are connected in a line because they have adjacent domino clues, exactly one connection connects each clause gadget to exactly one variable gadget, and the remaining connections connect to either their variable gadget or their clause gadget (not both).  $S$ is covered by the tiling because it contains a domino clue.  Then the boundary of this domino tiling (deleting $S$'s left edge) is a solution path in the Witness puzzle.
\end{proof}
}

\section{Antibodies}
\label{Antibodies}

\abstractlater{
  \section{Proofs: Antibodies}
  \label{appendix:antibodies}
}


An antibody ($\antibody$) eliminates itself and one other clue in its region.  For
the antibody to be satisfied, this region must \textit{not} be satisfied
without eliminating a clue; that is, the antibody must be necessary.  An antibody may be colored, but its color does not restrict which clues it can eliminate.\footnote{Antibody color matters when checking if the antibody is necessary; a region containing only a star and an antibody of the same color is unsatisfied because the antibody is not necessary.}
Very few Witness puzzles contain multiple antibodies, making the
formal rules for the interactions between antibodies not fully determined by the
in-game puzzles. We believe the following interpretation is a natural
one: each antibody increments a count of clues that must
necessarily be unsatisfied for their containing region to be
satisfied. If there are $k$ antibodies in a region, then there must be
$k$ clues which can be eliminated such that those $k$ clues were
unsatisfied and all other clues were satisfied; furthermore, there must
not have been a set of fewer than $k$ unsatisfied clues such that all
other clues are satisfied\footnote{Whether or not a clue is satisfied is
  usually determined only by the solution path; however, in the case of
  polyominoes and antipolyominoes, there might be several choices of
  packings which satisfy different sets of clues.}. Antibodies cannot
  eliminate other antibodies. The choice of clue to
  eliminate need not be unique; for instance, a region with three white
  stars and one antibody is satisfied, even though the stars are not 
  distinguished. Formally:

\begin{definition}[Simultaneous Antibodies]
A region with $k$ antibody clues is
satisfied if and only if there exists a set $S$ of $k$ non-antibody clues
such that eliminating all clues in $S$ and all $k$ antibodies leaves the region satisfied,
and there does \textit{not} exist a set $S'$ of non-antibody clues with
$|S'| < k$ such that eliminating all clues in $S'$ and only $|S'|$ of the antibodies leaves the region satisfied.
\end{definition}

\ifabstract
For proofs omitted from this section, see Appendix~\ref{appendix:antibodies}.
\fi







\subsection{Positive Results}
This section gives algorithms and arguments showing various types of Witness puzzles can be solved in NP or $\Sigma_2$. We begin with a simple case: puzzles with no antibodies are in NP. Then we show that puzzles with antibodies but without polyominoes or antipolyominoes is also in NP. Next we restrict to a single antibody and add polyomino clues back in, which remains in NP. Finally, we show that puzzles with all clue types are in $\Sigma_2$. These last three results provide tight matches for our lower bounds in Section~\ref{sec:antibody hardness}.

\begin{observation}\label{all-but-antibody-np}
  Witness puzzles containing all clue types except antibodies are in NP.
\end{observation}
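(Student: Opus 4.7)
The plan is to exhibit an explicit polynomial-size certificate and a polynomial-time verifier. The certificate consists of (a)~the solution path, described as a sequence of grid edges from some start circle to some end cap, and (b)~for every region that contains polyomino or antipolyomino clues, an integer $i\in\{0,1\}$ together with a placement (translation and, for rotatable clues, a rotation) of each (anti)polyomino clue in that region. Both parts clearly have size polynomial in the puzzle description.

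Given such a certificate, the verifier proceeds clue type by clue type. First, it checks that the path is simple, starts at a start circle, ends at an end cap, and uses no broken edge; this is a single linear scan. The hexagon constraints (on vertices or edges) and the triangle constraints (counting the number of path edges incident to each clue cell) are similarly local and checkable in linear time. Next, the verifier computes the decomposition of the puzzle into regions by a standard flood fill on the planar subdivision defined by the puzzle boundary and the path, labeling each cell with its region in polynomial time. Given this labeling, the square and star constraints reduce to bucketing clues by region and color: a region is valid for squares if all its square clues share a color, and a region is valid for stars if, for every color of star appearing there, the total number of stars plus squares of that color is exactly two (antibodies contribute zero here because there are none).

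The remaining task, and the only place the polyomino/antipolyomino placements are used, is to verify the packing constraint in each region. For a region $R$ that contains polyomino clues but no antipolyomino clues, the verifier checks that each guessed placement of a clue in $R$ lies entirely inside $R$, that the placements are pairwise disjoint, and that their union equals $R$; this is $O(|R|)$ per region. For a region $R$ containing antipolyomino clues, the verifier computes, for each cell $c$ of the puzzle, the signed sum (number of polyomino placements covering $c$ minus number of antipolyomino placements covering $c$), and checks that this sum equals the guessed $i$ for every $c\in R$ and equals $0$ for every $c\notin R$; this automatically enforces that no coverage spills outside $R$ and that polyominoes and antipolyominoes cancel correctly. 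Each check is again polynomial in the puzzle size.

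The only step that might look like an obstacle is the polyomino packing, since testing existence of a packing is in general hard; but here the packing witness is supplied as part of the NP certificate, so the verifier merely checks one specific proposed placement, which is trivially polynomial. Since all constraints are local and are verified in polynomial time from a polynomial-size certificate, the problem lies in NP.
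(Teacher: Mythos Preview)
Your approach is essentially the same as the paper's: certificate = path + packing witness, verifier checks each clue type in polynomial time. However, there is one genuine gap that the paper addresses and you do not.

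You assert that the packing part of the certificate ``clearly has size polynomial in the puzzle description.'' For regions with only polyomino clues this is indeed clear, because each polyomino must lie inside the region. But in a region containing antipolyomino clues, the semantics only require that every cell (inside or outside the region) have the correct \emph{net} coverage; nothing in the rules forbids a polyomino and an antipolyomino from being placed on top of each other arbitrarily far from the board, cancelling perfectly. Thus a valid placement could in principle use coordinates with exponentially many bits, and ``a translation'' is not automatically a polynomial-size object. What needs to be argued is that whenever a valid packing exists, a valid packing with polynomially bounded coordinates also exists. The paper handles this explicitly: starting from any solution packing, delete every row and column that meets neither the board nor any polyomino's bounding box; since some polyomino must touch the region, the remaining pieces cannot be farther from the board than (number of pieces) times (maximum of board size and polyomino bounding-box size), which is polynomial. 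Once you have this bound, your verifier is also fixed: you should check the signed-sum condition on every cell covered by some placed (anti)polyomino, not merely on the cells of the puzzle, but with bounded coordinates that set is polynomial as well.
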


\begin{proof}
  For all clues except polyominoes and antipolyominoes,
  a witness to such puzzles is the solution path.
  (Because the solution path must be simple, it has linear size.)
  For each region defined by the solution path that has a polyomino or
  antipolyomino clue, we also include the (anti)polyomino packing as part
  of the witness.  This packing can be encoded as $(x,y)$ coordinates for
  the topmost leftmost pixel of each polyomino and each antipolyomino.
  These coordinates can be assumed to be polynomial in the input size:
  given any solution packing, delete any rows or columns not intersecting
  the board nor the bounding box of any polyomino.  Because some polyomino
  in the packing covers a cell of the region in the board, the other
  polyominoes can be no farther away than the number of polyominoes times the
  maximum size of the board or a polyomino's bounding box.
\end{proof}

\both{
\begin{theorem}\label{all-but-poly-antipoly-np}
  Witness puzzles containing all clue types except polyominoes and antipolyominoes are in NP.
\end{theorem}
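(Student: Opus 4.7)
The plan is to take the solution path as the sole NP witness and verify both its validity and the simultaneous-antibodies condition in every region in polynomial time. Path validity is straightforward: check that the path is simple, starts at a start circle, ends at an end cap, avoids every broken edge, and visits every hexagon clue. After that, the remaining work is to verify the antibody accounting region by region.

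By the formal definition of simultaneous antibodies, a region $R$ containing $k_R$ antibody clues is satisfied if and only if the minimum number $m_R$ of non-antibody clues whose elimination (together with all $k_R$ antibodies in $R$) leaves $R$ satisfied equals exactly $k_R$. The central claim I would prove is that, in the absence of polyomino and antipolyomino clues, $m_R$ is computable in polynomial time given the fixed solution path. Once this is established, the verifier simply computes $m_R$ for each region and checks $m_R = k_R$.

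To compute $m_R$, first observe that every hexagon and triangle clue attributable to $R$ is \emph{local}: whether it is satisfied depends only on the path, and eliminating it removes the constraint entirely. Any feasible elimination set must therefore contain every unsatisfied local clue in $R$, and gains nothing by including a satisfied one; let $u_R$ denote the number of unsatisfied local clues in $R$. The only remaining constraints are the square-color rule (all retained squares in $R$ share one color) and the star rule (each color retaining at least one star must have exactly two retained non-antibody clues of that color). I would minimize additional eliminations by enumerating the retained square color $c^* \in C_R \cup \{\bot\}$, where $C_R$ is the (polynomially many) colors appearing in $R$ and $\bot$ means ``no squares retained''. Given $c^*$, every square of color $\neq c^*$ must be eliminated, and for each color $c$ with at least one star one independently picks the cheaper of (i)~eliminating all stars of color $c$, or (ii)~retaining one or two stars and, only when $c=c^*$, adjusting the number of retained squares of color $c^*$ so that exactly two non-antibody clues of color $c$ remain.

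The main obstacle I anticipate is handling the coupling between the square-color choice and the star-count constraint for the retained color $c^*$: eliminating a square of color $c^*$ simultaneously pays a cost and decreases the count of clues of color $c^*$ toward two. Fixing $c^*$ decouples the optimization across the remaining colors, and for each color the optimum is a constant-size case analysis, so there is no combinatorial blowup. Summing $u_R$ with the optimal further eliminations and minimizing over $c^*$ yields $m_R$ in polynomial time. Combined with the polynomial-time path validity check, this gives an NP verifier whose only witness is the solution path, proving the theorem.
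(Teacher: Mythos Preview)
Your reformulation of the simultaneous-antibodies definition is not equivalent to the paper's definition, and this is where the argument breaks. You define $m_R$ as the minimum number of non-antibody eliminations that, \emph{together with all $k_R$ antibodies removed}, satisfies the region, and then check $m_R = k_R$. But the definition's ``no smaller $S'$'' clause removes only $|S'|$ of the antibodies, leaving $k_R - |S'|$ antibodies in the region. Those leftover antibodies are colored clues that count toward the star rule (a star is satisfied by exactly one other star, square, or \emph{antibody} of its color). So the improper-satisfaction check is not the same minimization problem with a smaller budget; the constraint set itself changes when antibodies are left in place.

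A concrete counterexample: a region containing exactly one red star and one red antibody. With the antibody removed, the lone star is unsatisfied, so your $m_R = 1 = k_R$ and you would accept. But with $S' = \emptyset$ and zero antibodies eliminated, the star is paired with the antibody and is satisfied, so the region is improperly satisfied and the antibody is unnecessary --- the region must be \emph{rejected} (this is exactly the paper's footnote example). The paper's verifier handles this by first marking forced eliminations under the assumption that all antibodies are used, and then explicitly discharging that assumption via a case analysis on unused antibodies sharing a color with surviving or marked stars. Your per-color star optimization is on the right track for the ``all antibodies removed'' subproblem, but you need an additional pass that checks whether leaving some antibodies unused (and hence visible to stars) yields an improper satisfaction.
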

}

\ifabstract
\begin{proofsketch}
Other than antibodies, polyominoes, and antipolyominoes, whether or not a clue is satisfied can be easily determined from the solution path. Thus, checking
whether an antibody which eliminates such a clue is necessary is easy. 
\end{proofsketch}
\fi

\later{
\begin{proof}
We give a polynomial-time algorithm to verify a claimed solution path.  Each region induced by the path can be verified separately because no clues can interact with clues in other regions.  For regions not containing antibodies, we simply check that all clues in the region are satisfied, which can be done in polynomial time.

An \emph{elimination set} is a set of eliminated non-antibody clues.  An elimination set is \emph{proper} if it has size equal to the number of antibodies in the region and \emph{improper} if it is smaller.  A region is satisfied under an elimination set if the non-eliminated non-antibody clues are satisfied; we say the region is \emph{properly satisfied} if the elimination set is proper and \emph{improperly satisfied} otherwise.  Then the region is satisfied if it is properly satisfied under at least one elimination set (all of the antibodies can be used) and not improperly satisfied under any elimination set (all of the antibodies are necessary).

We proceed by \emph{marking} non-antibody clues that must be in any elimination set that satisfies the region (properly or improperly).  When we have a choice of clues to mark, we mark as few clues as possible, or if it cannot be known how many clues to mark, we evaluate the subproblem resulting from each choice.  After processing all clue types, we check whether the elimination set of the marked clues properly or improperly satisfies the region (or neither).

To begin, we mark all edge and vertex hexagons in the interior of the region (i.e., not visited by the solution path) and all triangle clues not adjacent to the corresponding number of edges in the solution path.  Hexagons and triangles do not interact with each other or other clue types, so we have no choices to make here.

The remaining allowed clue types are squares, stars and antibodies, which interact by their color.  Within a region, clues of these types with the same color are interchangeable because their location is irrelevant to their satisfaction, so we record only how many clues of each type and color have been marked.  We assume for the moment that all antibodies are used and thus eliminate themselves.  Of stars and squares, a star can be satisfied by pairing with a square of the same color, but the presence or absence of stars has no effect on squares, so we consider squares first.  Squares are only satisfied if the region is monochromatic in squares, so we must mark all clues of all but one color of square.  We evaluate the subproblems resulting from each choice of surviving square color.

Only clues of the same color are relevant to satisfying stars, so we process each star color independently.  For each color of star, we have a choice between marking all stars of that color (zero stars survive), all but one star of that color and all but one square of that color (only possible if the star matches the surviving square color), or all but two stars of that color and all squares of that color (if any).  We choose the number of surviving stars that results in marking the fewest clues; if there is a tie, we choose the smaller number of surviving stars.\footnote{We must choose zero surviving stars if possible; between one and two surviving stars there is no difference.}

At this point, all unmarked clues are satisfied.  We now check whether the region is satisfied under the elimination set consisting of the marked clues.

\begin{itemize}
\item If the region contains more marked clues than antibodies, then the region is not satisfied under this elimination set.  We continue with the next choice of square color.

\item If this elimination set is improper, then some of the antibodies are not used, contrary to our assumption.  Unused antibodies do not eliminate themselves, so they may cause a star to be unsatisfied.  Let $u$ be the number of unused antibodies and $v$ be the number of antibody clues not sharing a color with a surviving star clue.  If $u > v$, at least one of the unused antibodies causes a star to be unsatisfied, so the region is not satisfied under this elimination set, and we continue with the next choice of square color.  Otherwise ($u \le v$), the region is improperly satisfied under the marked clues, so we reject the solution path.  Note that our tie-breaking preference for zero surviving stars maximizes $v$ without decreasing $u$, ensuring that we detect improper satisfaction.

\item If this elimination set is proper, then we need to check one special case to discharge our assumption that all antibodies are used: if the region contains an antibody of the same color as a marked star and all other non-antibody clues of that color (if any) are marked, then the region is also satisfied if that star survives and that antibody is unused.  In that case, the region is improperly satisfied after that star is removed from the elimination set, so we reject the solution path.  Our tie-breaking preference for zero surviving stars ensures that we detect this.  That is the only way an unused antibody can satisfy another clue, so otherwise our assumption that all antibodies are used holds, and we record that the region can be properly satisfied.  We still have to evaluate any remaining choices of square color to verify they do not allow the region to be improperly satisfied.
\end{itemize}

If, after checking all choices of surviving square color, we did not reject the solution path, and the region was properly satisfied under at least one elimination set, then this region is satisfied.  The path is a solution exactly when all regions are satisfied.

It remains to show this algorithm runs in polynomial time.  There are only polynomially many regions to check.  For each region, there are only polynomially many squares in the region, and thus at most polynomially many colors of squares.  For each color of square, we do work proportional to the number of colors of stars, but similarly there are only polynomially many stars in the region and so only polynomially many colors of stars.  The work done at the end of each iteration requires only counting the number of clues of a particular type or color that are marked or unmarked.  Then an NP algorithm to solve Witness puzzles containing all clue types except polyominoes and antipolyominoes follows directly from this polynomial-time verification algorithm.
\end{proof}
}

\both{
\begin{theorem} \label{one antibody in NP}
\label{thm:antibody-np}
  Witness puzzles containing all clue types except antipolyominoes and for which at least one solution eliminates at most one polyomino in each region are in NP. 
\end{theorem}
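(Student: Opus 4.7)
The plan is to give an NP witness and a polynomial-time verifier, extending the strategy of Theorem~\ref{all-but-poly-antipoly-np} to handle polyomino clues. The witness will consist of the solution path together with, for each region, a valid polyomino packing of that region (if it contains any polyominoes) and, if the region has $k \geq 1$ antibodies, a designated proper elimination set of exactly $k$ non-antibody clues of which at most one is a polyomino (such a set exists by the theorem's promise). The verifier first checks in polynomial time that the path is a valid simple path from the start circle to the end cap, that each proposed packing exactly tiles its region using the non-eliminated polyominoes, and that every non-eliminated non-polyomino clue in the region is satisfied. This handles proper satisfaction of every region.

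The core difficulty is verifying that no region is \emph{improperly} satisfied --- that no elimination set of size strictly less than $k$ satisfies the region. For the non-polyomino component I would follow Theorem~\ref{all-but-poly-antipoly-np} and enumerate the polynomially many possible \emph{signatures} of improper elimination sets (by counts of eliminated clues of each type and color), checking each using that theorem's marking-style algorithm. The new ingredient is that a signature eliminating $i$ polyominoes is realizable only if a valid packing of the remaining polyominoes exists, which is in general NP-complete to decide from scratch and so cannot be left to another layer of nondeterminism inside the verifier.

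To sidestep this, I would enlarge the witness to include, for each region, one extra packing per choice of excluded polyomino (plus the full-packing case, when it exists), for a total of $O(n)$ auxiliary packings per region; if a packing for some choice does not exist, it is simply omitted from the witness and treated as infeasible by the verifier. The step I expect to be most delicate is arguing these auxiliary packings suffice: namely, a structural lemma that any improper elimination set witnessing improper satisfaction can be taken to eliminate at most one polyomino. Intuitively, an improper set with two or more eliminated polyominoes, together with one unused antibody, can be rebalanced into an alternative satisfying elimination set with one fewer excluded polyomino; iterating, any improper satisfaction certifies an improper set with at most one excluded polyomino, which the auxiliary packings in the witness cover directly. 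With this lemma in hand, every dangerous improper signature involves at most one eliminated polyomino, the relevant packing is always available from the witness, and each signature check reduces to a polynomial-time non-polyomino verification as in Theorem~\ref{all-but-poly-antipoly-np}, placing the problem in NP.
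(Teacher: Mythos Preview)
Your proposal lands on the right key step --- what you call the structural lemma --- but your sketched proof of it is wrong, and the auxiliary-packing machinery you build around it is both unnecessary and, as stated, unsound.

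The ``rebalancing'' intuition (reduce the number of eliminated polyominoes one at a time by spending an unused antibody) does not work. The polyomino constraint is an \emph{area} constraint: any satisfying elimination set must remove polyominoes whose total area equals the excess of the total clue area over the region area. If an improper set eliminates three dominoes (total area $6$) and the only other polyomino is a hexomino $P$, you cannot step down to two eliminated polyominoes, because no two of the available polyominoes sum to area $6$; the only move is to jump directly to $\{P\}$.

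That jump is precisely the correct argument, and it is what the paper uses. If the certificate's packing omits exactly the polyomino $P$, then the total polyomino area exceeds the region area by $|P|$, so \emph{every} satisfying elimination set --- proper or improper --- must eliminate polyominoes of total area exactly $|P|$. Given any improper set $S\cup T$ (polyominoes $S$, non-polyominoes $T$, with $|S|+|T|<k$), replacing $S$ by $\{P\}$ and keeping $T$ yields another satisfying set of size $1+|T|\le |S|+|T|<k$; the required packing is already the certificate's main packing. (If the certificate eliminates no polyomino, the same area argument forces every satisfying set to eliminate no polyomino.)

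With this in hand, the auxiliary packings vanish: the only packing the verifier ever needs is the one already supplied for proper satisfaction. This also resolves a soundness hole in your scheme as written --- a dishonest prover could omit an auxiliary packing that does exist, leading the verifier to treat a feasible improper signature as infeasible and wrongly accept. Once the area argument is in place, the whole proof collapses to the paper's: delete all polyomino clues and the (at most one) antibody recorded as eliminating a polyomino, then apply Theorem~\ref{all-but-poly-antipoly-np} to the residual puzzle.
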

}

\ifabstract
\begin{proofsketch}
If at least one polyomino is eliminated in a region containing at least two polyominoes and the region is satisfied as a result, 
then the region can't be satisfied without deleting at least one polyomino because the total area of the polyominoes is greater than that of the region, and therefore there is no packing.
\end{proofsketch}
\fi

\later{
\begin{proof}
We give a polynomial-time algorithm to verify a certificate.
Similar to Observation~\ref{all-but-antibody-np}, the certificate includes
a solution path and a polyomino packing for each region defined by the solution
path that has an uneliminated polyomino.  In addition, the certificate
includes which clue gets eliminated by each antibody, satisfying the
promise that at most one polyomino from each region gets eliminated.
(The NP-hardness of polyomino packing \cite{Jigsaw_GC} necessitates this promise; otherwise, verifying a packing resulting from using two antibodies to eliminate two polyominoes would require checking that there is no packing resulting from using just one of the antibodies to eliminate one (larger) polyomino.)

We first verify the packing witnesses.  The certificate is invalid if it contains a packing witness for a region with no polyomino clues or if the specified arrangement of the polyominoes does not actually pack the region.  Otherwise, for each region containing polyomino clues:
\begin{itemize}
\item If the certificate does not provide a packing witness for this region, this region must contain exactly one polyomino clue and one antibody, and the polyomino clue must not have the shape of the entire region (because then the polyomino is satisfied and can't be eliminated); otherwise, the certificate is invalid.  We record the antibody as eliminating the polyomino clue.
\item If the certificate contains a packing witness omitting the position of more than one polyomino clue (requiring two polyomino clues to be eliminated), the certificate is invalid.  (The solution path in the certificate may be a valid solution to the Witness puzzle, but this algorithm cannot verify it.  By the promise that there exists a solution eliminating at most one polyomino per region, there is some other certificate attesting that this Witness puzzle is a \textsc{yes} instance that this algorithm can verify.)
\item If the certificate contains a packing witness omitting the position of exactly one polyomino clue, this region must contain an antibody; otherwise, this certificate is invalid.  We record the antibody as eliminating the omitted polyomino clue.  (We know the antibody necessarily eliminates the omitted polyomino clue because otherwise the total area of the polyominoes would be greater than the size of the region.)
\item If the certificate contains a packing witness specifying the position of every polyomino clue, the polyominoes enforce no further constraint.
\end{itemize}

Now consider the Witness puzzle obtained by taking the initial instance and removing all polyomino clues and the antibodies we recorded as eliminating polyomino clues.  The resulting puzzle contains neither polyominoes nor antipolyominoes, so we can apply the algorithm given in the proof of Theorem~\ref{all-but-poly-antipoly-np} to verify that the remaining clues in each region are satisfied under the solution path.  The certificate is valid exactly when the solution path is valid for the resulting puzzle.  Then an NP algorithm to solve Witness puzzles containing all clue types except antipolyominoes and for which at least one solution eliminates at most one polyomino in each region follows immediately from this polynomial-time verification algorithm.
\end{proof}
}

\begin{theorem} \label{in Sigma_2}
Witness puzzles containing any set of clue types (including polyominoes, antipolyominoes, and antibodies) are in $\Sigma_2$.
\end{theorem}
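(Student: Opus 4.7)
The plan is to express the problem in the $\Sigma_2$ normal form
$\exists y\,\forall z\;\phi(x,y,z)$, where $|y|,|z|$ are polynomial in $|x|$ and $\phi$ is polynomial-time computable.

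\textbf{Existential witness $y$.} The witness $y$ consists of:
(i)~a simple path from a start circle to an end cap;
(ii)~for each region containing $k\ge 1$ antibodies, a \emph{proper elimination assignment} $S$ naming $k$ non-antibody clues to be eliminated;
(iii)~for each region containing polyomino and/or antipolyomino clues, a packing witness (coordinates for each surviving (anti)polyomino) certifying that the non-eliminated (anti)polyomino clues admit a legal packing of that region. As in Observation~\ref{all-but-antibody-np}, the packing witnesses can be encoded in polynomial size.

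\textbf{Universal challenge $z$.} The adversary names a single region $R$ containing $k\ge 1$ antibodies, together with an \emph{improper elimination set} $S'$ of some size $j<k$ specifying which $j$ non-antibody clues of $R$ are eliminated, which $j$ of the $k$ antibodies are the ones being used, and (if $R$ contains (anti)polyomino clues) a candidate packing witness of polynomial size for the surviving (anti)polyominoes of $R$.

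\textbf{Verifier $\phi$.} The verifier accepts iff both of the following hold.
(a)~The existential witness $y$ is internally valid: the path is simple from start to end; for every region all non-antibody, non-(anti)polyomino clues not in $S$ are satisfied (checkable locally and in polynomial time as in the proof of Theorem~\ref{all-but-poly-antipoly-np}); and each provided packing is a correct packing of its region given the specified eliminations.
(b)~The adversary's challenge fails: either $S'$ is not a syntactically valid improper elimination set for $R$, or the supplied packing for $R$ is not a valid packing, or some non-eliminated non-antibody clue of $R$ is not satisfied. All of these checks are polynomial time. Correctness follows because, by the definition of Simultaneous Antibodies, the puzzle has a solution iff some $y$ as above exists for which \emph{no} strictly smaller $(S',\text{packing})$ satisfies any region.

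\textbf{Where the work lies.} The only delicate point is correctly handling (anti)polyomino packings in the necessity check. Because deciding whether an (anti)polyomino region can be packed is itself NP-hard~\cite{Jigsaw_GC}, we cannot afford to let the verifier search for a packing under $S'$; instead we force the adversary to \emph{supply} a packing witness alongside $S'$, which reduces that inner check to polynomial-time verification. This is precisely what buys us the $\Sigma_2$ upper bound (rather than some higher level of the hierarchy): one existential layer over $(y)$, one universal layer over $(z)=(R,S',\text{packing})$, and a deterministic polynomial-time $\phi$. Hence the problem lies in $\Sigma_2$.
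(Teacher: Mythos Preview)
Your proposal is correct and follows essentially the same approach as the paper: an existential certificate consisting of a path, an elimination assignment, and packing witnesses, checked against a universal counter-certificate consisting of a strictly smaller elimination together with its own packing witness, all verified in polynomial time. The one detail you gloss over that the paper spells out is that, when verifying the adversary's improper elimination, the $k-j$ \emph{unused} antibodies remain in the region and can interact with star clues of the same color; this is still polynomial-time checkable and does not affect the $\Sigma_2$ bound, but you should account for it when you say ``some non-eliminated non-antibody clue of $R$ is not satisfied.''
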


\begin{proof}
Solving this Witness puzzle requires picking clues for antibodies to eliminate and finding a path which respects the remaining clues, such that the regions cannot be satisfied if only a subset of antibodies are used to eliminate clues.
Membership in $\Sigma_2$ requires an algorithm which accepts only when there exists a certificate of validity for which there is no certificate of invalidity (i.e., one alternation of $\exists x \forall y$). A certificate of invalidity allows a polynomial-time algorithm to check whether an instance of a given problem is false.
Our certificate of validity is a solution path, a mapping from antibodies to eliminated clues, and a packing witness for any region with at least one uneliminated polyomino.
Our certificate of \emph{invalidity} is the solution path (from the certificate of validity),  a mapping of a \emph{subset} of the antibodies to eliminated clues, and a packing witness for any region with at least one uneliminated polyomino.

Our verification algorithm begins checking the certificate of validity by verifying the packing witnesses and checking that the antibody mapping specifies distinct eliminated clues in the same region as each antibody.  Then we remove all antibody clues, polyomino and antipolyomino clues, and eliminated clues from the Witness puzzle and
(like the proof of Theorem~\ref{all-but-poly-antipoly-np})
verify in polynomial time that the remaining clues in each region are satisfied under the solution path.

To verify the certificate of invalidity, we again check its packing witnesses and its (partial) antibody mapping.  Then we remove the used antibody clues, polyomino and antipolyomino clues, and eliminated clues from the Witness puzzle.  Because unused antibodies interact with stars (and only stars), we replace any unused antibodies with stars of their color if they are in the same region as an (uneliminated) star of that color, then remove any remaining antibodies.  We verify in polynomial time that the resulting Witness puzzle is satisfied by the solution path.  Our algorithm accepts if and only if the certificate of validity is valid and all certificates of invalidity are invalid.
\end{proof}

\subsection{Negative Results}
\label{sec:antibody hardness}

In this section, we prove that Witness puzzles in general are $\Sigma_2$-complete. We will proceed in two steps, first considering puzzles which have two
(or more) antibodies which might be eliminating polyominoes in the same region, then considering puzzles which have only one antibody but both polyominoes and antipolyominoes. In both cases,
we reduce from \emph{Adversarial-Boundary Edge-Matching}, a one-round two-player game defined as follows:

\begin{problem}[Adversarial-Boundary Edge-Matching]

  A \emph{signed color} is a sign ($+$ or $-$) together with an element of
  a set $C$ of colors.  Two signed colors \emph{match} if they have the same
  element of $C$ and the opposite sign.  A \emph{tile} is a unit square with
  a signed color on each of its edges.
 
  An $n\times (2m)$ \emph{boundary-colored board} is an $n\times (2m)$
  rectangle together with a signed color on each of the unit edges
  along its boundary.  Given such a board and a multiset $T$ of $2nm$ tiles,
  a \emph{tiling} is a placement of the tiles at integer locations
  within the rectangle such that two adjacent tiles have matching colors along
  their shared edge, and a tile adjacent to the boundary has a matching
  color along the shared edge.  There are two types of tiling according to
  whether tiles can only be translated or can also be rotated.

  The \emph{adversarial-boundary edge-matching game} is a one-round two-player
  game played on a
  $2n \times m$ boundary-colored board $B$ and a multiset $T$ of $2nm$ tiles.
  Name the unit edges along $B$'s top boundary $e_0, e_1, \ldots, e_{2n}$
  from left to right.  During the first player's turn, for each even $i = 0, 2, 4,
  \dots, 2n-2$, the first player chooses to leave alone or swap the signed
  colors on $e_i$ and $e_{i+1}$.
  During the second player's turn, the second player attempts to tile the resulting
  boundary-colored board $B'$ such that signed colors on coincident edges
  (whether on tiles or on the boundary of $B'$) match.
  If the second player succeeds in tiling, the second player wins;
  otherwise, the first player wins.

  The \emph{adversarial-boundary edge-matching problem} is to decide whether the first player has a winning strategy for a given adversarial-boundary edge-matching game; that is, whether there exists a choice of top-boundary swaps such that there does \textit{not} exist an edge-matching tiling of the resulting boundary-colored board.
\end{problem}

\begin{lemma} \label{Adversarial-Boundary Edge-Matching Problem Sigma_2}
  Adversarial-boundary edge-matching is $\Sigma_2$-hard,
  with or without tile rotation, even when the first player has a losing strategy.
\end{lemma}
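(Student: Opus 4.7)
The plan is to reduce from the $\Sigma_2^P$-complete problem of deciding, given a 3-CNF formula $\phi(x,y)$ on variables $x=(x_1,\ldots,x_n)$ and $y=(y_1,\ldots,y_m)$, whether there exists $x^*$ such that $\phi(x^*,\cdot)$ is unsatisfiable as a formula in $y$. Starting from any standard NP-hardness reduction from 3-SAT to ordinary (non-adversarial) edge-matching puzzles (which produces variable gadgets with exactly two valid local tilings, clause gadgets enforcing at least one true literal, and wires propagating values), I would construct an edge-matching instance whose tilings correspond to satisfying assignments of $\phi(x,y)$. The twist is that the $x$-variable gadgets are removed, and their wires are instead routed up to terminate at the top boundary, to be driven by the adversarial swap mechanism.

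For each $x_j$, I would allocate one swap pair $(e_{2j-2},e_{2j-1})$ on the top boundary and assign it the opposite signed colors $+c_j$ and $-c_j$ of a fresh color $c_j$. A small \emph{signal-injection gadget} sits in the two cells directly below this pair; using additional fresh ``linkage'' colors unique to $j$ on the tiles' internal shared edges, I arrange that exactly two local tile configurations are compatible with the boundary, one per swap outcome. These two configurations inject opposite Boolean signals into the $x_j$-wire that runs down into the interior, where the signal feeds the clause gadgets. Disjoint color palettes across different $j$ ensure that the first player's $n$ independent swap choices realize exactly the $2^n$ assignments to $x$ with no cross-interference, and that the tiles devoted to the signal-injection gadgets cannot migrate elsewhere in the board.

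Once player~1 fixes a swap choice $x^*$, player~2's remaining task is to tile the rest of the board consistent with $x^*$, which is solvable iff $\phi(x^*,\cdot)$ is satisfiable in $y$. Hence player~1 wins iff $\phi(x^*,\cdot)$ is unsatisfiable, and player~1 has a winning strategy iff $\exists x:\phi(x,\cdot)$ is unsatisfiable, giving $\Sigma_2$-hardness. The reduction extends to the rotatable-tile variant by equipping every tile in the base reduction with a small rotation-breaking color marker, so that only one orientation of each tile can match anywhere. For the ``even when the first player has a losing strategy'' clause, I would prepend a dummy variable $x_0$ and replace $\phi$ by $\phi'(x_0,x,y)=x_0\vee\phi(x,y)$; then $x_0=1$ always makes $\phi'$ satisfiable, so player~1 has a canonical losing strategy (set $x_0=1$), while the $\Sigma_2^P$-hard question $\exists(x_0,x)$ making $\phi'$ unsatisfiable coincides with $\exists x$ making $\phi$ unsatisfiable.

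The main obstacle is verifying rigidity of the signal-injection gadget: each of the two swap outcomes must yield exactly one valid local tiling, the $x_j$-value must propagate uniquely through the wire below, and no stray tiles from elsewhere in the board can substitute inside the gadget. Isolating the gadget with colors fresh to each swap pair, padding the interior linkage with locally unique helper colors, and checking that only the intended two local tilings survive the matching constraints reduces this to a constant-size local case analysis, which is the routine but essential component of the reduction.
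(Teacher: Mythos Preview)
Your high-level plan matches the paper's: reduce from $\exists\forall$-SAT, feed the existential variables through the top boundary via swap pairs, and let the tiling encode the inner satisfiability question. The paper does this by laying out a Boolean circuit from scratch (with position-encoded tiles, a ``wasteline'' row, and an explicit garbage region), whereas you invoke a black-box 3-SAT-to-edge-matching reduction and graft signal-injection gadgets onto it. Your treatment of the losing-strategy clause via the extra disjunct $x_0$ is the dual of the paper's extra conjunct and is fine.

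The real gap is the multiset constraint. In boundary edge-matching, the second player must place \emph{every} tile in the given multiset $T$ of exactly $2nm$ tiles. Any reduction whose gadgets admit multiple local tilings---and your clause gadgets, $y$-variable gadgets, and wire cells all do---must provide somewhere for the unused alternative tiles to go. Standard edge-matching reductions handle this with an explicit garbage-collection region whose size is computed from the total number of alternative tiles, and the paper devotes substantial machinery to it (the wasteline, the $3\times 3$ encapsulation blocks, the domino pairing of leftover garbage tiles, and an exact area count). You never mention this issue, yet your modifications (deleting the $x$-variable gadgets, extending $x$-wires to the boundary, inserting signal-injection cells) change both the board dimensions and the tile inventory in ways that must be rebalanced against whatever garbage mechanism the black-box reduction used. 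Without this accounting, a satisfiable $\phi(x^*,\cdot)$ could fail to tile because unused alternatives have nowhere to sit, or an unsatisfiable one could tile because stray tiles substitute into circuit positions. This is not a detail that falls out of ``constant-size local case analysis''; it is a global bookkeeping argument that your proposal does not address.

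A smaller point: your rotation argument (``equip every tile with a rotation-breaking color marker'') is too vague to evaluate. The paper gets rotation-robustness for free because the position encoding on top and bottom edges, anchored by the wasteline's unique boundary colors, already pins every circuit tile's orientation.
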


See Appendix~\ref{appendix Adversarial-Boundary Edge-Matching Problem Sigma_2}
for the proof.

\both{
\begin{theorem}
  \label{thm:antibody-poly}
  It is $\Sigma_2$-complete to solve Witness puzzles containing
  two antibodies 
  and polyominoes.
\end{theorem}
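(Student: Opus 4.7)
Membership in $\Sigma_2$ is immediate from Theorem~\ref{in Sigma_2}. For $\Sigma_2$-hardness, my plan is to reduce from adversarial-boundary edge-matching, which is $\Sigma_2$-hard by Lemma~\ref{Adversarial-Boundary Edge-Matching Problem Sigma_2}. Given a game instance with board $B$ of dimensions $2n \times m$ and tile multiset $T$, I would build a Witness puzzle whose only clues are polyomino clues and two antibodies, all confined to a single distinguished region~$R$; correctness should give that the puzzle is solvable iff the first player has a winning strategy.

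First, I would design the solution path so that its only freedom is an encoding of the first player's $n$ binary swap choices: along a long, otherwise deterministic route, I place $n$ two-state ``choice gadgets,'' each with exactly two local completions that correspond to applying or not applying the swap at the $i$-th even top-boundary pair of~$B$. Each choice gadget also locally reshapes the top of $R$ (by rerouting a short segment of the path around a small bump), so that the top boundary of $R$ encodes the post-swap colors of $B$'s top edge under a standard jigsaw-style bump/notch encoding of signed colors; the other three sides of $R$ are permanently shaped to encode $B$'s fixed boundary colors. Inside $R$, I place one polyomino clue per tile in $T$, shaped so that two tile-polyominoes can be adjacent iff their shared signed colors match, making a perfect packing of the tile-polyominoes into a board-shaped region equivalent to an edge-matching tiling of $B$ under the chosen swaps.

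The heart of the construction is a pair of ``control'' polyominoes placed inside $R$ together with the two antibodies. I would pick these controls to have distinct sizes $a_1 \ne a_2$ and pick the total area of $R$ so that, by strict area counting, any 0- or 1-polyomino elimination leaves the polyomino-area sum unequal to $|R|$, hence unpackable, while the \emph{only} feasible 2-polyomino elimination area-wise is to remove both controls. Then the antibody-region condition ``$\exists$ two-clue elimination packs $R$ and $\forall$ one-clue elimination fails to pack $R$'' reduces to the statement that after removing both controls, the remaining tile-polyominoes pack $R$, and moreover that every ``near-miss'' single removal is unpackable; by wiring the controls to a specific residual shape inside $R$ using jigsaw-style bumps, one can arrange for this equivalence to encode precisely ``no edge-matching tiling of $B(c)$ exists'' for the swap vector $c$ selected by the path.

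The main obstacle is the direction of the encoding: polyomino packing is intrinsically an existential (NP-style) problem, so expressing non-existence of a tiling via the $\forall$ 1-elim clause requires a complement-style construction. I plan to handle this analogously to the antipolyomino trick used in Theorem~\ref{thm:antibody-antipoly}, where the ``subtractive'' effect of an antipolyomino is mimicked here by the combined effect of the two antibodies plus the control polyominoes; alternatively, one can reduce from the complement formulation of adversarial edge-matching (obtained by dualizing the swap choices, which remains $\Sigma_2$-hard), so that the packing condition and the first-player-wins condition line up directly. Once the gadget is in place, verifying both implications should be routine: a Witness solution yields a swap vector $c$ witnessing first-player victory on $B(c)$, and conversely a winning $c$ induces the required path and forces both antibodies to be necessary in $R$.
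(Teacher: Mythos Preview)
Your proposal has the right starting point (reduce from adversarial-boundary edge-matching, encode the first player's swap vector in the path, encode tiles as jigsaw-bumped polyominoes), but the core mechanism is inverted, and this is not a detail that ``wiring'' or ``dualizing'' can fix.

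Recall what the antibody semantics actually gives you for a region with two antibodies: the region is satisfied iff (i) some size-$2$ elimination packs it, \emph{and} (ii) no size-$\le 1$ elimination packs it. To hit $\Sigma_2$, you want the existential quantifier to live in the choice of path (the swap vector $c$) and the universal quantifier to live in clause~(ii). Concretely, you need
\[
\text{(i) always succeeds, independent of }c,\qquad
\text{(ii) fails }\Longleftrightarrow\ \exists\text{ tiling of }B(c),
\]
so that the puzzle is solvable iff $\exists c\ \neg(\exists\text{ tiling of }B(c))$, i.e., the first player wins. Your area-counting step does the opposite: it makes every $1$-elimination unpackable \emph{by area alone}, regardless of~$c$. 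Then clause~(ii) is vacuously true and the whole question collapses to ``does some $2$-elimination pack $R$?'', an NP question that cannot encode the adversary. You noticed this (``the main obstacle is the direction of the encoding''), but neither proposed patch works: the ``antipolyomino trick'' of Theorem~\ref{thm:antibody-antipoly} literally uses antipolyominoes, which you are not allowed here; and ``dualizing the swap choices'' turns a $\Sigma_2$ problem into a $\Pi_2$ one, which does not line up with the antibody semantics either.

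What the paper does, and what your construction is missing, is to engineer \emph{three} large control polyominoes (a ``medium'' polyomino and two ``board-frame'' halves) so that (a) eliminating the two board-frame pieces always leaves a trivially packable residual (the medium polyomino has tile-shaped holes that the tile polyominoes drop into regardless of~$c$), guaranteeing clause~(i); while (b) eliminating only the medium polyomino forces the board-frame pieces to assemble the actual boundary-colored board $B(c)$, so that the remaining tile polyominoes pack iff the second player has a tiling. Thus the ``hard'' question lives in the $1$-elimination case, not the $2$-elimination case. A separate device (stamp polyominoes slotting into a large frame) is what encodes the swap vector in the boundary shape; the path is essentially forced and only chooses which stamp goes in which slot. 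You could plausibly replace the stamps by path-level choice gadgets as you suggest, but you cannot avoid designing the eliminations so that the $1$-elimination branch, not the $2$-elimination branch, simulates the edge-matching instance.
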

}

\global\def\abem{adversarial-boundary edge-matching}

\begin{proof}
We reduce from \abem{} with the guarantee that the first player has a losing strategy.  We create a Witness puzzle containing two antibodies.  We will force the solution path to split the puzzle into two regions, with both antibodies in the same region and with part of the solution path encoding top-boundary swaps.  In the construction, it will be easy to find a solution path satisfying all non-antibody clues when both antibodies are used to eliminate clues, but the antibodies themselves are only satisfied if they are necessary.  When only one antibody is used, the remaining polyominoes in one of the regions, together with the solution path, simulate the \abem{} instance.  The remaining polyominoes cannot pack the region (necessitating the second antibody and making the Witness solution valid) exactly when the \abem{} instance is a \textsc{yes} instance.  (In the context of The Witness, the human player is the first player in an \abem{} game, and The Witness is the second player.)

\newcommand{\numcolors}{\lceil \log_2(c+1)\rceil}

\paragraph{Encoding signed colors.}  We encode signed colors on the edges of polyominoes in binary as unit-square tabs (for positive colors) or pockets (for negative colors) \cite[Figure 7]{Jigsaw_GC}.  If the input \abem{} instance has $c$ colors, we need $\numcolors$ bits to encode the color\footnote{We cannot use $0$ as a color because we need at least one tab or pocket to determine the sign.}.  To prevent pockets at the corners of a tile from overlapping, we do not use the $2 \times 2$ squares at each corner to encode colors, so tiles are built out of squares with side length $w = \numcolors + 4$\footnote{At the cost of introducing disconnected polyomino clues, we could leave only one pixel at each corner out of the color encoding; that pixel is disconnected when the colors on its edges both have pockets next to it.}.

\paragraph{Clue sets.} We consider the clues in the Witness puzzle to be grouped into two clue sets, $A$ and $B$, which we place far apart on the board.  We will argue that any solution path must partition the puzzle into two regions, such that each set is fully contained in one of the regions.  The clue sets are shown in Figures~\ref{fig:antibody1-clueseta} and \ref{fig:antibody1-cluesetb}.

\begin{figure}
\centering
\subcaptionbox{\label{fig:antibody1-clueseta-monominoes} Antibodies.}[1in]{%
  \includegraphics{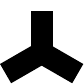}\includegraphics{figures/antibody.pdf}%
}~~~~
\subcaptionbox{Monominoes.}[1in]{%
  \shortstack{\includegraphics[trim=20 20 20 20]{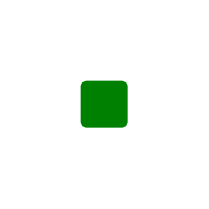}\\\large{$\times~2nw-q$}}%
}~~~~
\subcaptionbox{\label{fig:antibody1-clueseta-tiles} Tile polyominoes.}[1.5in]{%
  \shortstack{\includegraphics[scale=1.5]{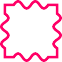}\\\large{$\times~2nm$}}%
}

\newdimen\figheight
\settoheight\figheight{\includegraphics[scale=1.2]{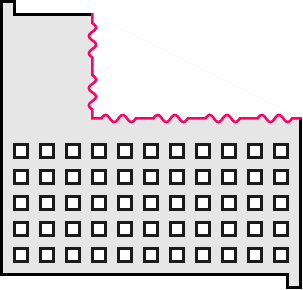}}%
\subcaptionbox{\label{fig:antibody1-clueseta-medium} The medium polyomino.}{%
  \includegraphics[scale=1.2]{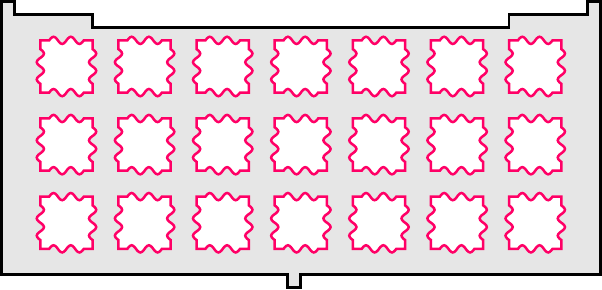}%
}~~~~
\subcaptionbox{\label{fig:antibody1-clueseta-frame-left} The left board-frame polyomino.}{%
  \includegraphics[scale=1.2]{figures/antibody-boardframe-left.pdf}%
}~
\subcaptionbox{\label{fig:antibody1-clueseta-frame-right} The right board-frame polyomino.}{%
  \vtop to \figheight{%
    \hbox{\includegraphics[scale=1.2]{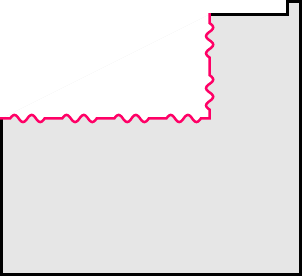}}%
    \vfil
  }%
}
\caption{The contents of clue set A in the proof of Theorem~\ref{thm:antibody-poly} (not to scale).  Pink wavy edges bear tabs and pockets encoding signed colors.}
\label{fig:antibody1-clueseta}
\end{figure}

\begin{figure}
\centering
\subcaptionbox{$2n$ stamps.}{%
  \includegraphics[scale=1.2,valign=c]{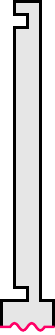}~
  \includegraphics[scale=1.2,valign=c]{figures/antibody-stamp1.pdf}~~
  \includegraphics[scale=1.2,valign=c]{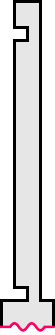}~
  \includegraphics[scale=1.2,valign=c]{figures/antibody-stamp2.pdf}~~
  \includegraphics[scale=1.2,valign=c]{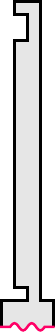}~
  \includegraphics[scale=1.2,valign=c]{figures/antibody-stamp3.pdf}~~
  \includegraphics[scale=1.2,valign=c]{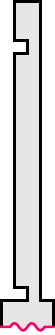}~
  \includegraphics[scale=1.2,valign=c]{figures/antibody-stamp4.pdf}%
  \label{fig:antibody1-cluesetb-stamps}%
}~~~~\subcaptionbox{The large polyomino.}{%
  \includegraphics[scale=1.2,valign=c]{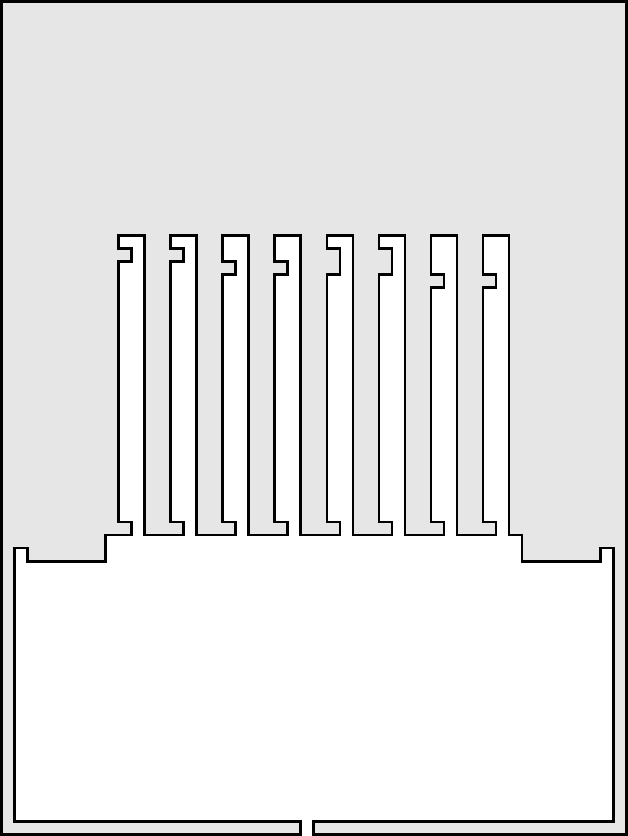}%
  \label{fig:antibody1-cluesetb-large}%
}
\caption{The contents of clue set B in the proof of Theorem~\ref{thm:antibody-poly} (not to scale).  Pink wavy edges bear tabs and pockets encoding signed colors.}
\label{fig:antibody1-cluesetb}
\end{figure}

\begin{figure}
\centering
\captionsetup[subfigure]{justification=raggedright}
\subcaptionbox{\label{fig:antibody1-trivial-packing} The trivial packing of the puzzle after eliminating both board-frame polyominoes.  The medium polyomino slots inside the large polyomino and the tiles fill the medium polyomino's holes.  The stamps fill in their matching handle slots in the large polyomino and the monominoes fill in the pockets and any extra space in the stamp accommodation zone.  This packing is always possible.}{%
  \includegraphics[scale=1.2]{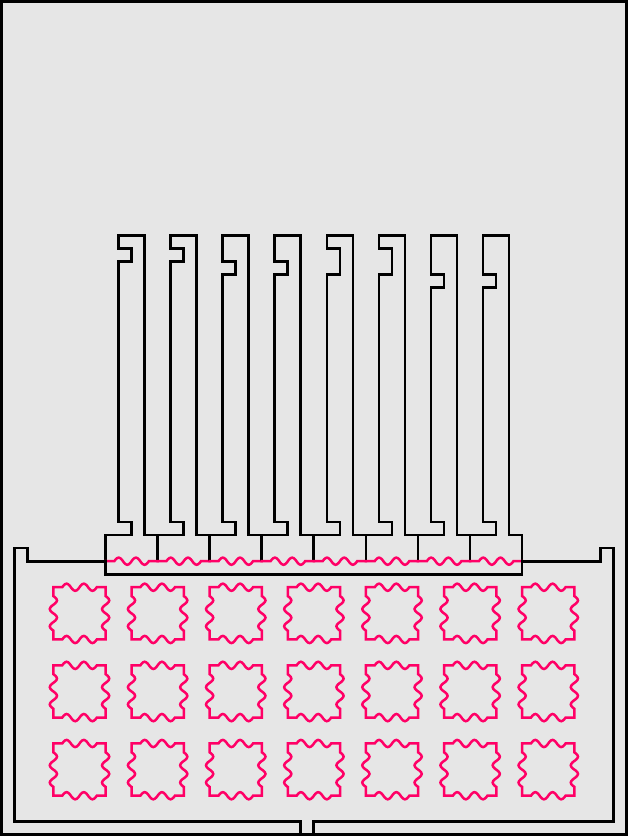}
}~~
\subcaptionbox{\label{fig:antibody1-intended-packing} The intended packing of the puzzle after eliminating the medium polyomino (not to scale).  The left and right board-frame polyominoes slot inside the large polyomino, and the monominoes fill the holes in the left board-frame polyomino.  The stamps fill in their matching handle slots in the large polyomino, leaving only the boundary-colored board for the simulated \abem{} instance.}{
  \includegraphics[scale=1.2]{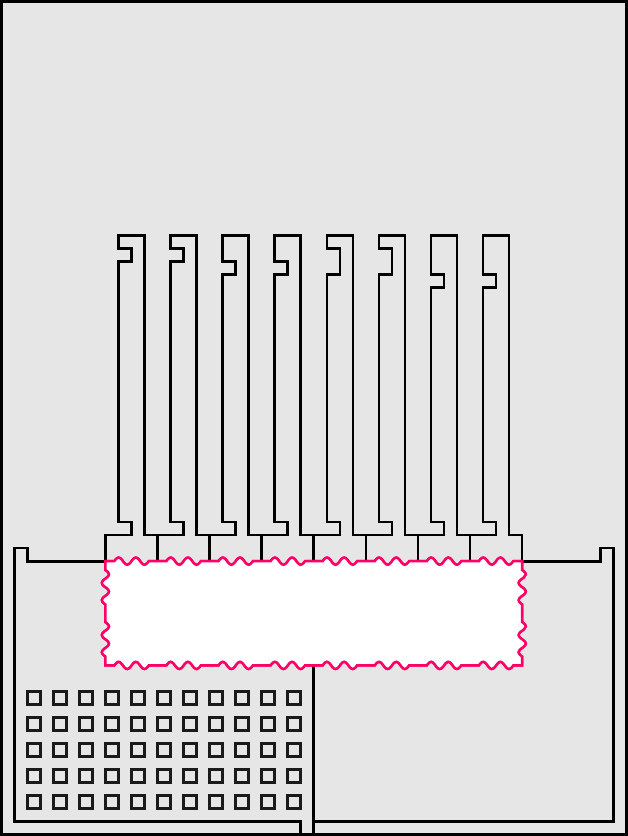}
}
\caption{The two possible polyomino packings.}
\label{fig:antibody1-packings}
\end{figure}

Clue set $A$ contains:
\begin{itemize}
\item Two antibodies.

\item $2nw - q$ monominoes, where $q$ is the total number of pockets minus the total number of tabs across the ``dies'' of the ``stamps'' in clue set $B$ (see below).  There are $2n$ stamps each having up to $\numcolors$ tabs or pockets, so the total number of monominoes is between $2nw - 2n\numcolors = 8n$ and $2nw + 2n\numcolors = 4nw - 8n$ inclusive.

\item A $w \times w$ square polyomino for each of the $2nm$ tiles in the \abem{} instance.  The edges of each polyomino are modified with tabs and pockets encoding the signed colors on the corresponding edges of the corresponding tile.  Call the upper-left corner of the $w \times w$ square the \emph{key pixel} of that polyomino (even if tabs caused other pixels to be further up or to the left).

\item A ``medium'' sized polyomino formed from a $2n(w+3) - 1 \times m(w+3) + 3$ rectangle polyomino.  Cut a hole out of this rectangle in the image of each tile polyomino, aligning the key pixel of each tile polyomino to a $2n \times m$ grid with upper-left point at the fourth row, second column of the rectangle and $w+3$ intervals between rows and columns.  Regardless of the pattern of tabs and pockets on each tile, this spacing ensures at least two rows of pixels above the top row of tile-shaped holes, at least one row on each other side, and at least one row between adjacent holes.  Then add pixels\xxx{how many?} above the upper-leftmost and upper-rightmost pixel of the rectangle (the \emph{horns}) and below the middle-bottommost pixel of the rectangle (the \emph{tail}).  Finally, cut $2nw$ pixels out of the top row of the rectangle starting from the third pixel; this cutout is the \emph{stamp accommodation zone}.

\item Two \emph{board-frame} polyominoes.  Again, starting from a $2n(w+3) - 1 \times m(w+3) + 3$ rectangle polyomino, add horns and tail pixels in the same locations.  Then cut out a $2nw \times mw$ rectangle whose upper-left pixel is the third pixel in the top row of the rectangle.  The left, right and bottom edges of this cutout are modified with tabs and pockets encoding the signed colors on the corresponding sides of the boundary-colored board in the \abem{} instance.  Split the polyomino vertically along the column of edges immediately to the right of the tail pixel.

Finally, for each monomino in this clue set, cut a pixel out of the left
board-frame polyomino, starting from the second-bottommost pixel in the
second column, continuing across every other column, then continuing
with the fourth-bottommost pixel in the second column, and so on.  The
left board-frame polyomino has width $nw+3n$, we cut pixels out of every
other column, and we do not cut holes in its left or right columns, so
we cut pixels out of $\frac{nw + 3n - 2}{2}$ columns.  Below the
$mw$-tall cutout and allowing two rows to ensure cut pixels do not join
with pockets encoding signed colors along the edges of the cutout, we
can cut pixels out of $\frac{3w - 1}{2}$ rows (or $\frac{3w}{2}$,
depending on parity).  This allows up to $(\frac{nw + 3n -2}{2})(\frac{3w-1}{2}) = \frac{n(w-4)^2 + 2w(nw-3) + 13n + 2}{4} + 4nw-8n$ pixels to be cut out, but there are at most $4nw - 8n$ monominoes, so we can always cut enough pixels without interfering with any other cuts. 
\end{itemize}

Clue set $B$ contains:
\begin{itemize}
\item A \emph{stamp} polyomino for each of the $2n$ edge segments of the top edge of the boundary-colored board.  Each stamp is composed of a $w \times 2$ rectangle modified to encode the signed color on the corresponding edge segment (called the \emph{die}), a pixel centered above that rectangle, and a $2 \times h$ rectangular \emph{handle} whose bottom-right pixel is immediately above that pixel, where $h = \max(m(w+3) + 7, n)$.  Stamps corresponding to 1-indexed edge segments $2i$ and $2i+1$ have pockets encoding $i$ in binary cut into the left edge of their handle, starting from the second-to-top row of the handle. 

\item A ``large'' sized polyomino built from a $2n(w+3)+1 \times t$ rectangular polyomino, where $t$ is the total area of all other polyominoes so far defined.  Modify this polyomino by cutting out the middle pixel of the bottom row, the $2n(w+3) - 1 \times m(w+3) + 3$ horizontally-centered rectangle immediately above that removed pixel, and the pixels above the upper-left and upper-right removed pixels.  (That is, cut out space for the medium polyomino, including the horns and tail but not including the stamp accommodation zone.)  Then cut out the image of each stamp in the order of their corresponding edge segments in the \abem{} instance, aligning the leftmost-bottom pixel of the first stamp's die two pixels to the right of the upper-left removed pixel and aligning successive dies immediately adjacent to one another.
\end{itemize}

\paragraph{Puzzle.} The Witness puzzle is a $2n(w+3)+1 \times t$ rectangle.  The start circle and end cap are at the middle two vertices of the bottom row of vertices.

\paragraph{Placement of $A$ clues.} We place a monomino from clue set $A$ in the cell having the start circle and end cap as vertices, then place an antibody above that monomino, surrounded by a monomino in each of its other three neighbors.  We then place the other antibody, surrounded by monominoes in its neighboring cells, three cells above the first antibody.  (See Figure~\ref{antibody-monomino-surround}.)  It is always possible to surround the antibodies in this way because there are at least $8n$ monominoes.  We place the remaining clues from clue set $A$ inside the $2n(w+3) - 1 \times m(w+3) + 3$ rectangle one row above the bottom of the puzzle; this is always possible because $|A| \le 4nw-8n+2nm+5$.

\begin{figure}
\centering
\includegraphics[scale=0.5]{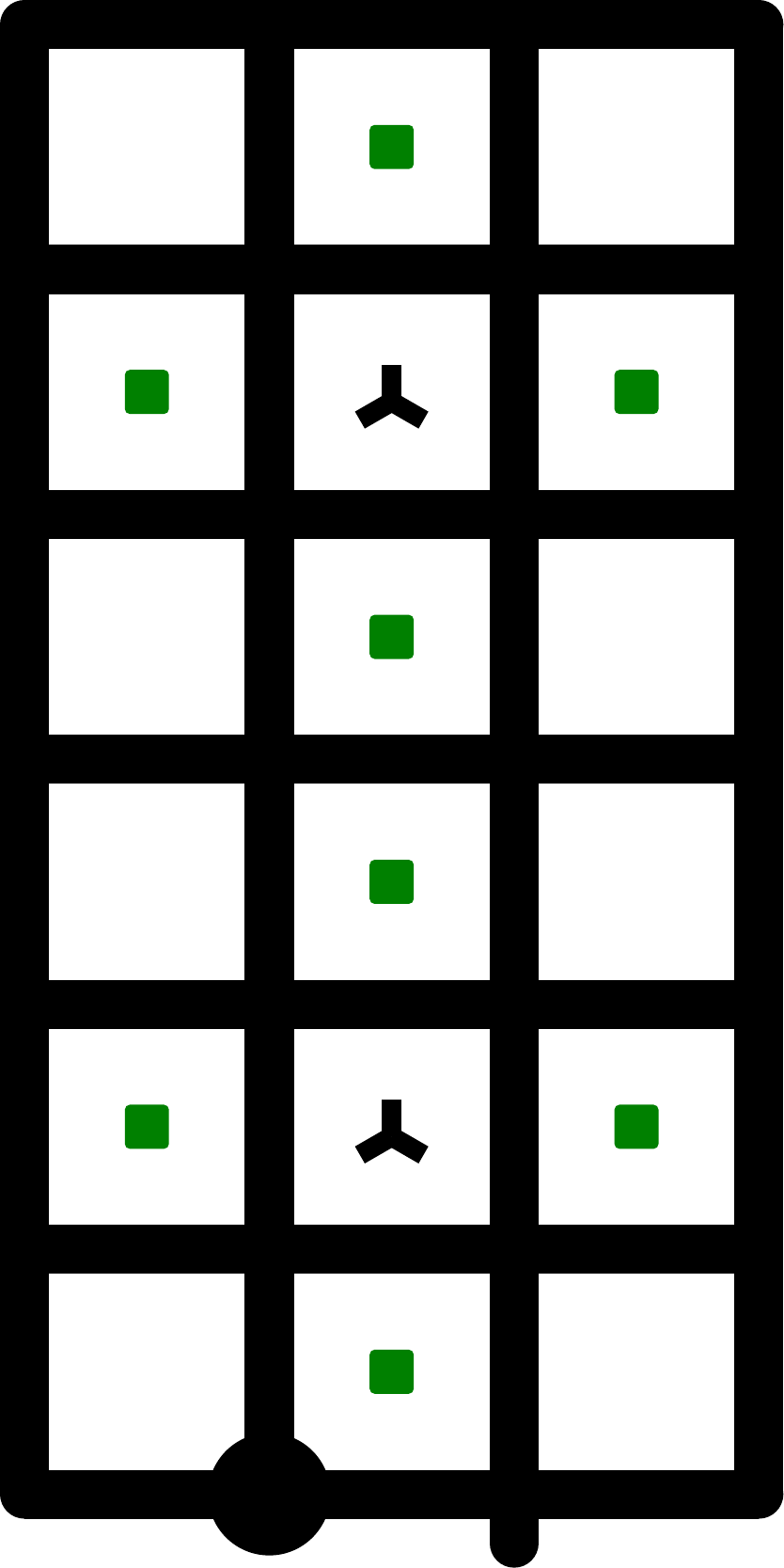}
\caption{Because both antibodies are surrounded by monominoes, any region containing an antibody also contains at least one monomino.}
\label{antibody-monomino-surround}
\end{figure}

\paragraph{Placement of $B$ clues.} We place the large polyomino clue in the upper-left cell of the board and the stamp clues in the $2n$ cells to its right.

\paragraph{Argument.} In any solution to the resulting puzzle, the large polyomino is not eliminated.  If it were, it must be in the same region as an antibody.  Because each antibody is surrounded by monomino clues, the number of polyomino clues in this region is strictly greater than the number of antibodies, so the region must be packed by the non-eliminated polyomino clues.  The nearest (upper) antibody is $t-4$ columns and $nw+3n$ rows away from the large polyomino clue, so this region has area at least $t$.  Recall that $t$ is the total area of all polyomino clues except the large polyomino.  If the large polyomino is eliminated, there is no way to pack this region, even if all other polyomino clues are used.

The large polyomino is as wide and as tall as the entire puzzle, so it has a unique placement.  The large polyomino intersects its bounding box everywhere except one unit-length edge aligned with the start vertex and end cap, so any solution path can only touch the boundary at the start and end.  Thus, the solution path divides the puzzle into at most two regions (an inside and an outside).

Suppose the solution path places the entire puzzle into a single region; that is, suppose the solution path proceeds (in either direction) from the start vertex to the end cap without leaving the boundary.  Then by the assumption that the first player has a losing strategy in the input \abem{} instance, we can pack the region while eliminating only one clue.  The large polyomino's placement is fixed.  We eliminate the medium polyomino, place the two board-frame polyominoes inside the large polyomino, and place the monominoes in the pixels cut out of the left board-frame polyomino.  It remains to place the stamps and tiles.  By the assumption, there is a losing set of top-boundary swaps; we swap the corresponding pairs of stamps when placing them into the cutouts in the large polyomino, and then place the tiles in the remaining uncovered area bordered by the board-frame polyominoes and stamp dies.  Because we satisfied all non-antibody constraints after eliminating only one clue, the unused antibody is unsatisfied, so any path resulting in a single region is not a solution to the puzzle.  Thus, there are exactly two regions.

The cells containing the stamp clues are covered by the large polyomino, so any solution places the stamps in the same region as the large polyomino.  The handles of the stamps are taller than the cutout in the bottom-middle of the large polyomino, so they must instead be placed in the stamp-shaped cutouts in the large polyomino.  The pockets cut into the left edges of the handles ensure that stamps can only swap places corresponding to top-boundary swaps in the \abem{} instance.

All clues in set $A$ are in the other region.  The monomino clue in the
cell having both the start circle and end cap as vertices cannot be in
the same region as the large polyomino (else the path could not divide
the puzzle into two regions).  Because each antibody is surrounded by
monomino clues, the number of polyomino clues in this region is strictly
greater than the number of antibodies, so the region must be packed by
the non-eliminated polyomino clues.  When both antibodies are used to
eliminate clues, they must eliminate both board-frame polyominoes, and
when only one is used, it must eliminate the medium polyomino; any other
elimination 
leaves polyomino clues with too much or too little area to pack the area of the puzzle not yet covered by the large polyomino or the stamps.  Thus, either the medium polyomino or both board-frame polyominoes will not be eliminated.  The medium polyomino and board-frame polyominoes have unique placements within the large polyomino determined by the horns and tail.  The intersection of the outlines of these placements covers all the $A$ clues, so they are all in the same other region.

By this division of the clues into regions, any solution path traces the inner boundary of the large polyomino and the dies of the stamps (possibly after swapping some pairs).  It remains to show that the path is a solution exactly when the implied set of top-boundary swaps is a winning strategy in the \abem{} instance.

When using both antibodies to eliminate the board-frame polyominoes, the remaining polyominoes always pack their region (see Figure~\ref{fig:antibody1-trivial-packing}).  The medium polyomino's placement is fixed by the horns and tail; the stamp accommodation zone ensures this placement is legal regardless of the pattern of tabs on the dies of the stamps.  The tile polyominoes fit directly into the cutouts in the medium polyomino and there are exactly enough monominoes to fill in the uncovered area in the stamp accommodation zone and the pockets of the dies.

The path is a solution only if both antibodies are necessary.  When using one antibody to eliminate the medium polyomino, the board-frame polyominoes' position is forced by the horns and tail.  The monominoes are the only way to fill the single-pixel holes in the left board-frame polyomino and there are exactly enough monominoes to do so.  Then the dies of the stamps and the edges of the rectangular cutout in the board-frame polyominoes models the boundary-colored board of the input \abem{} instance (see Figure~\ref{fig:antibody1-intended-packing}).  The tile polyominoes cannot pack this area, necessitating the second antibody and making the path a solution, exactly when the set of top-boundary swaps is a winning strategy in the \abem{} instance.
\end{proof}

\both{
\begin{theorem}
  \label{thm:antibody-antipoly}
  It is $\Sigma_2$-complete to solve Witness puzzles containing
  one antibody, polyominoes and antipolyominoes.
\end{theorem}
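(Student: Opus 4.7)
The plan is to reduce from adversarial-boundary edge-matching with the promise that the first player has a losing strategy, adapting the construction in the proof of Theorem~\ref{thm:antibody-poly}. The central idea is to replace one of the two antibodies with a single antipolyomino clue so that the cancellation between the antipolyomino and a carefully chosen companion polyomino mimics the effect of the second elimination in the original two-antibody construction, leaving a single antibody whose necessity encodes the $\Sigma_2$ condition.

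First I would retain essentially the entire layout of Theorem~\ref{thm:antibody-poly}: a large polyomino pinned uniquely on the board, stamp polyominoes in its cutouts encoding the top-swap choices, the medium polyomino $M$ and two board-frame polyominoes with placements forced by horns and a tail, plus tiles, monominoes, and the boundary-colored board structure simulating the \abem{} game. A single antibody surrounded by monominoes near the start circle and end cap forces the solution path to partition the puzzle into an outer region (containing the large polyomino and the stamps) and an inner region $R_A$ (containing the antibody and all packing clues).

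Next I would add a single antipolyomino $Q$ to $R_A$, together with an auxiliary polyomino whose area equals $|R_A|$ to provide the antibody a valid single-clue elimination target; the latter is required because the only area-consistent single-elimination transitions between $i = 1$ and $i = 0$ layer counts involve eliminating a clue of area exactly $|R_A|$. The shape, horns, and tail of $Q$ are chosen so that without any elimination, $Q$ is forced to cancel with the medium polyomino (reproducing the analog of Config~2 in Theorem~\ref{thm:antibody-poly}, which is packable exactly when the adversarial game has a tiling under the encoded top-swap), while with the antibody's elimination the antipolyomino pairs differently to yield the always-satisfiable analog of Config~1. The necessity of the antibody then corresponds to the hard packing being infeasible, i.e., to the first player having a winning strategy.

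The hard part will be engineering the shapes (and auxiliary polyominoes) so that exactly the intended cancellation patterns are forced; in particular, ruling out spurious ``partial'' overlaps of $Q$ that yield alternative $i = 1$ arrangements without the antibody, which would trivialize the ``no arrangement without elimination'' condition and destroy the reduction. This amounts to a more delicate version of the horn/tail analysis already used in Theorem~\ref{thm:antibody-poly}, together with careful area bookkeeping to pin down which clue the antibody is forced to eliminate. Containment in $\Sigma_2$ is immediate from Theorem~\ref{in Sigma_2}, so $\Sigma_2$-completeness will follow.
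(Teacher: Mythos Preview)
Your high-level plan---adapt the two-antibody construction of Theorem~\ref{thm:antibody-poly} by using a single antipolyomino to simulate the second elimination---is the right instinct, but the specific scheme you outline does not balance. You keep both board-frame polyominoes \emph{and} the medium polyomino \emph{and} add an auxiliary polyomino of area $|R_A|$, then ask $Q$ to cancel the medium polyomino in the no-elimination case. But in Theorem~\ref{thm:antibody-poly} the board-frames plus tiles plus monominoes already have total area $|R_A|$ (that is exactly the ``intended'' packing there), so after $Q$ annihilates the medium you are left with area $2|R_A|$, not $|R_A|$, and no $i\in\{0,1\}$ packing exists. The same overcount blocks your $i=0$ configuration after eliminating the auxiliary piece. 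You also have not controlled what happens if the antibody eliminates $Q$, a board-frame, or the medium polyomino instead of your auxiliary piece; each such alternative elimination must be shown either infeasible or strictly harder than the intended one, and with this many large pieces of comparable area that case analysis is nontrivial.

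The paper avoids all of this by a structurally different idea: it \emph{merges} the board-frame role into the medium polyomino and deliberately makes that medium polyomino wider than the entire board, so it \emph{cannot} fit and the single antibody is forced to eliminate it (no auxiliary piece, no losing-strategy assumption needed). The antipolyomino is not a copy of the medium; it is an ``antikit'' that exactly cancels a protruding attachment glued onto the medium polyomino, so that in the no-elimination case the medium shrinks to fit inside the large polyomino and leaves precisely the boundary-colored board for the tile polyominoes. A companion ``sprue'' polyomino is chosen so that antikit $+$ sprue $+$ tiles have net area $0$; hence when the medium is eliminated the remaining clues in the inner region pack trivially at $i=0$. This single forced-elimination mechanism replaces your auxiliary-piece idea and removes the ambiguity about which clue the antibody targets.
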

}

\later{
\begin{proof}
As in the previous proof, we reduce from \abem{} (though we do not need the first player to have a losing strategy), and the reduction is similar.  The primary difference is that the medium polyomino is also the (singular) board-frame polyomino.

\paragraph{Clue sets.} As before, we have two clue sets.  Clue set $A$ is shown in Figure~\ref{fig:antibody2-clueseta}.  Clue set $B$ is nearly the same as it is in the previous proof (see Figure~\ref{fig:antibody1-cluesetb}), only with the large polyomino being slightly wider (not visibly different at the scale of the figure).

Clue set $A$ contains:

\begin{figure}
\centering
\subcaptionbox{\label{fig:antibody2-clueseta-monominoes} One antibody.}[1.2in]{%
  \includegraphics{figures/antibody.pdf}%
}~
\subcaptionbox{\label{fig:antibody2-clueseta-tiles} Tile polyominoes.}[1.5in]{%
  \shortstack{\includegraphics[scale=1.5]{figures/antibody-tile.pdf}\\\large{$\times~2nm$}}%
}~
\newdimen\figheight
\settoheight\figheight{\includegraphics[scale=1.2]{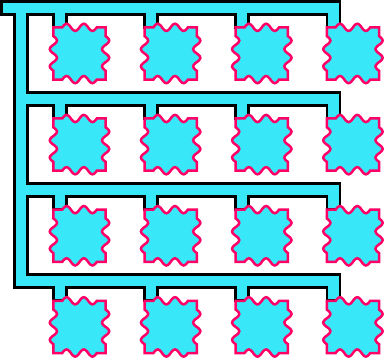}}%
\subcaptionbox{\label{fig:antibody2-antikit} The antikit antipolyomino.}{%
  \includegraphics[scale=1.2]{figures/antibody-antikit.pdf}%
}~~~~
\subcaptionbox{\label{fig:antibody2-sprue} The sprue polyomino.}{%
  \vtop to \figheight{%
    \hbox{\includegraphics[scale=1.2]{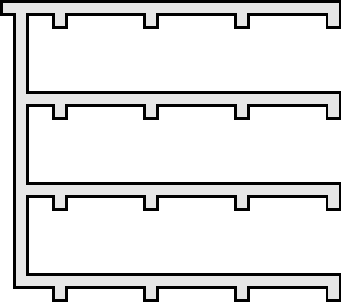}}%
    \vfil
  }%
}

\subcaptionbox{\label{fig:antibody2-clueseta-medium} The medium polyomino.  The attached kit polyomino and sprue cutout are to scale with each other, but not with the body of the medium polyomino.}{%
  \includegraphics[scale=1.2]{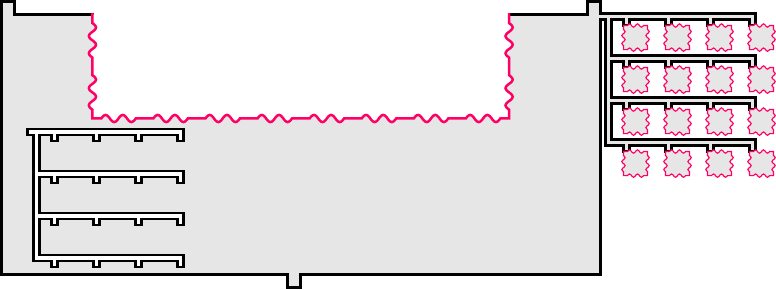}%
}
\caption{The contents of clue set A in the proof of Theorem~\ref{thm:antibody-antipoly} (not to scale).  Pink wavy edges bear tabs and pockets encoding signed colors.}
\label{fig:antibody2-clueseta}
\end{figure}

\begin{figure}
\centering
\captionsetup[subfigure]{justification=raggedright}
\subcaptionbox{\label{fig:antibody2-unconstrained-packing} The trivial packing of the puzzle after eliminating the medium polyomino.  The sprue and tiles annihilate with the antikit antipolyomino.  The bottom region is unconstrained because it does not contain any surviving polyomino clues.  After the stamps are placed in their handle slots, the path simply traces the exterior of the surviving polyominoes.  This packing is always possible.}{%
  \includegraphics[scale=1.2]{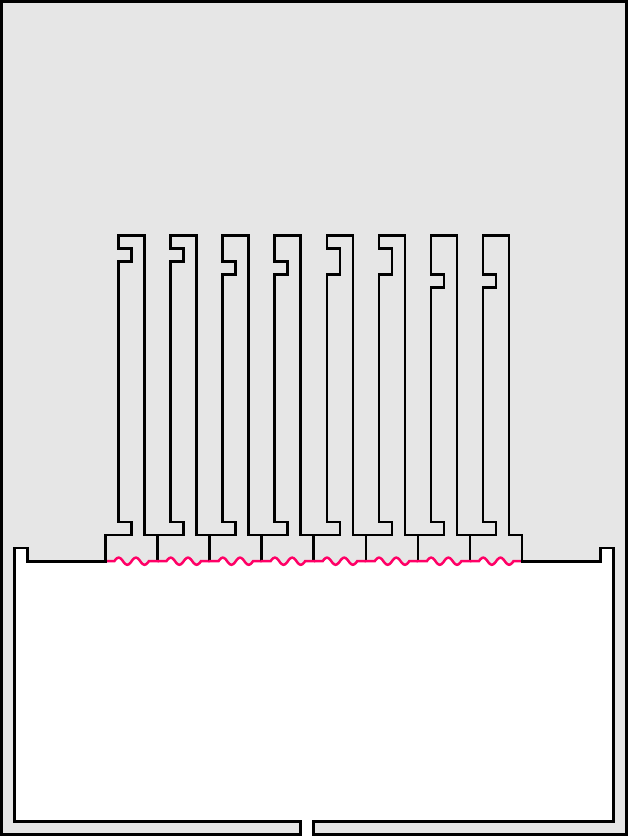}
}~~
\subcaptionbox{\label{fig:antibody2-noelimination-packing} When the antibody is not used, the antikit antipolyomino annihilates part of the medium polyomino, allowing it to fit inside the large polyomino.  The sprue polyomino fills the cutout in the medium polyomino.  The stamps fill in their matching handle slots in the large polyomino, leaving only the boundary-colored board for the simulated \abem{} instance.}{
  \includegraphics[scale=1.2]{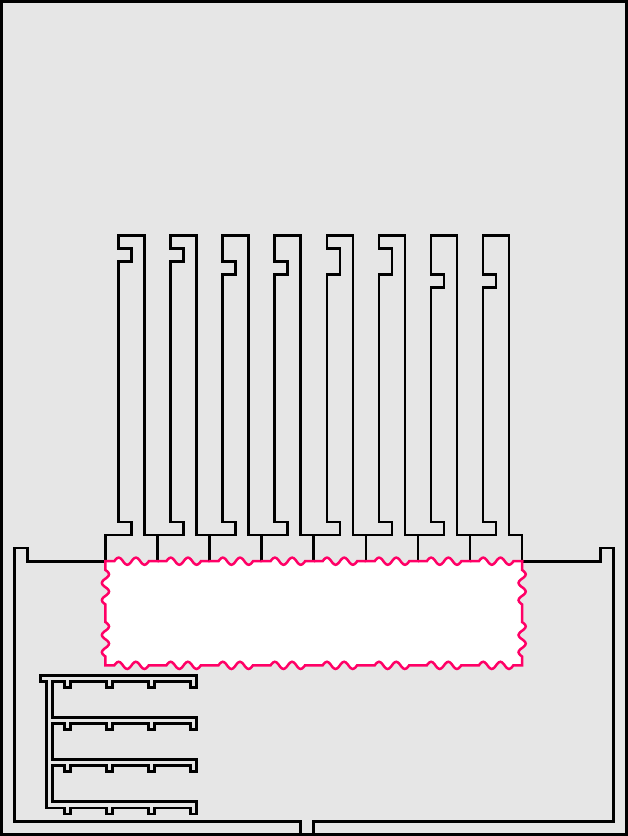}
}
\caption{The two possible polyomino packings.}
\label{fig:antibody2-packings}
\end{figure}

\begin{itemize}
\item One antibody.

\item A $w \times w$ square polyomino for each of the $2nm$ tiles in the \abem{} instance.  The edges of each polyomino are modified with tabs and pockets encoding the signed colors on the corresponding edges of the corresponding tile.  Call the upper-left corner of the $w \times w$ square the \emph{key pixel} of that polyomino (even if tabs caused other pixels to be further up or to the left).

\item An \emph{antikit} antipolyomino, which we define in terms of a kit polyomino (which is not itself a clue).  Start with $m$ copies of a $2(n-1)(w+5)+3 \times 1$ rectangle, vertically aligned and spaced at $w+5$ intervals.  Connect these rectangles with $1 \times w+4$ rectangles in the left column.  Align the key pixels of the tile polyominoes to a $2n \times m$ grid spaced at $w+5$ intervals, then place this grid such that the key pixel of the upper-left polyomino is three pixels below and to the right of the upper-left pixel of the polyomino being built.  (A tile polyomino having tabs on each edge is $w+2$ wide and tall, but there is $w+4$ width and height in the grid, so this placement is always possible.)  Connect each tile polyomino to the rectangle above it by adding two pixels immediately above the key pixel.  Finally, add one pixel immediately to the left of the upper-left pixel.  The antikit antipolyomino is just the antipolyomino with an antipixel for each pixel in the kit polyomino.  (See Figure~\ref{fig:antibody2-antikit}.)

\item A \emph{sprue} polyomino shaped like the sprue\footnote{In molding, the sprue is the waste material that cools in the channels through which the material is poured or injected into the mold.  Specifically, in plastic model kits, the sprue is the frame from which the model pieces are snapped out.}, the non-tile-polyomino area of the kit polyomino.

\item A medium polyomino built from a $2n(w+5) + 5 \times m(2w+5) + 2$ rectangle polyomino.  Add a pixel directly below the center pixel of the rectangle (the tail).  Cut out a $2nw \times mw$ rectangle whose upper-left pixel is the third pixel in the top row of the rectangle.  The left, right and bottom edges of this cutout are modified with tabs and pockets encoding the signed colors on the corresponding sides of the boundary-colored board in the \abem{} instance.  Then add the kit polyomino, placing the leftmost pixel of the kit to the right of the upper-rightmost pixel of the rectangle.  Then cut out the shape of the sprue polyomino, aligning the leftmost pixel in the second column and the bottommost pixels in the second-to-bottom row of the rectangle. \xxx{If not for the tail, we could put the sprue cutout in the bottom row, making the medium polyomino hole-free.}  This is always possible because the sprue is bounded by a $2(n-1)(w+5)+4 \times (m-1)(w+5) + 2$ rectangle, so the sprue cutout fits within the medium polyomino even after the board cutout.
\end{itemize}

Clue set $B$ contains:

\begin{itemize}
\item A \emph{stamp} polyomino for each of the $2n$ edge segments of the top edge of the boundary-colored board.  The stamps are exactly as specified in the previous proof.

\item A large polyomino built from a $2n(w+5)+7 \times t$ rectangular polyomino, where $t$ is the total area of all other polyominoes so far defined. 
Modify this polyomino by cutting out the middle pixel of the bottom row and the $2n(w+5) + 5 \times m(2w+5) + 2$ horizontally-centered rectangle immediately above the removed pixel.  Then cut out the image of each stamp in the order of their corresponding edge segments in the \abem{} instance, aligning the leftmost-bottom pixel of the first stamp's die one pixel above and two pixels to the right of the upper-left pixel of the rectangular cutout and aligning successive dies immediately adjacent to one another.  (This is the same procedure as in the previous proof, but with slightly different dimensions.)
\end{itemize}

\paragraph{Puzzle.} The Witness puzzle is a $2n(w+5)+7 \times t$ rectangle.  The start circle and end cap are at the middle two vertices of the bottom row of vertices.

\paragraph{Placement of $A$ clues.} We place the antibody in the cell having the start circle and end cap as vertices, then place the medium polyomino above the antibody.  We place the remaining $2nm+2$ clues from clue set $A$ contiguously, directly to the left and right of the medium polyomino, half on each side.

\paragraph{Placement of $B$ clues.} We place the large polyomino clue in the upper-left cell of the board and the stamp clues in the $2n$ cells to its right.

\paragraph{Argument.} The medium polyomino clue is wider than the puzzle, so any solution must eliminate it.  There is only one antibody, so the large polyomino is not eliminated in any solution.

The large polyomino is as wide and as tall as the entire puzzle, so it has a unique placement.  The large polyomino intersects its bounding box everywhere except one unit-length edge aligned with the start vertex and end cap, so any solution path can only touch the boundary at the start and end.  Thus, the solution path divides the puzzle into at most two regions (an inside and an outside).

Suppose the solution path places the entire puzzle into a single region; that is, suppose the solution path proceeds (in either direction) from the start vertex to the end cap without leaving the boundary.  After eliminating the medium polyomino, the remaining polyominoes and antipolyomino have less net area than the puzzle, so any path resulting in a single region is not a solution to the puzzle.  Thus, there are exactly two regions.

The cells containing the stamp clues are covered by the large polyomino, so any solution path places the stamps in the same region as the large polyomino.  The handles of the stamps are taller than the cutout in the bottom-middle of the large polyomino, so they must instead be placed in the stamp-shaped cutouts in the large polyomino.  The pockets cut into the left edges of the handles ensure that stamps can only swap places corresponding to top-boundary swaps in the \abem{} instance.

All clues in set $A$ are in the other region.  The antibody clue is in the cell having both the start circle and end cap is vertices, so it cannot be in the same region as the large polyomino (else the path could not divide the puzzle into two regions).  The medium polyomino is eliminated, so it must be in the same region as the antibody.  The remaining $A$ clues (the sprue polyomino, tile polyominoes and antikit antipolyomino) have net area 0.  If a polyomino from this group is in the first region, the other region's clues require negative area to be satisfied; if the antikit is in the first region, the other region's clues do not have enough area to pack the area of the puzzle not already covered by the large polyomino or the stamps.  Thus, the remaining clues are in the other region (and, because there are no other non-eliminated polyomino clues in the region, are satisfied).

By this division of the clues into regions, any solution path traces the inner boundary of the large polyomino and the dies of the stamps (possibly after swapping some pairs).  It remains to show that the path is a solution exactly when the implied set of top-boundary swaps is a winning strategy in the \abem{} instance.

By the above arguments, any solution path of this form satisfies all non-antibody clues when the antibody is used to eliminate the medium polyomino (see Figure~\ref{fig:antibody2-unconstrained-packing}), but the antibody is only satisfied if it is necessary.  If the antibody is not used to eliminate the medium polyomino, the antikit antipolyomino must annihilate the portion of the medium polyomino shaped like a kit polyomino.  The rest of the medium polyomino is exactly the right size to fit inside the large polyomino and the sprue polyomino fills the sprue cutout in the medium polyomino.  Then the dies of the stamps and the edges of the rectangular cutout in the medium polyomino models the boundary-colored board of the input \abem{} instance (see Figure~\ref{fig:antibody2-noelimination-packing}).  The tile polyominoes cannot pack this area, necessitating the antibody and making the path a solution, exactly when the set of top-boundary swaps is a winning strategy in the \abem{} instance.
\end{proof}
}

\ifabstract
\begin{proofsketch}
Both Theorems~\ref{thm:antibody-poly} and~\ref{thm:antibody-antipoly} follow the same sketch. We reduce from a new $\Sigma_2$-complete problem, \abem{}, in which the first player may swap some edge segments of a rectangular signed edge-matching puzzle, then the second player tries to solve the resulting puzzle.  In the Witness puzzles produced by our reduction, it's trivial to satisfy the non-antibody clues when all antibodies are used, but the antibodies themselves are only satisfied if they are necessary.  The polyomino pieces are designed to simulate the \abem{} instance, forming a playfield whose inner surface encodes the signed colors of the edge-matching puzzle.  The top of the playfield is defined by the solution path (constrained to only swap segments as allowed in the \abem{} instance), so the player solving the Witness puzzle takes the place of the first player in the \abem{} instance.  The Witness puzzle verifier then attempts to solve the edge-matching puzzle by packing polyomino clues into the playfield region.  This is impossible, necessitating all the antibodies and making the path a solution, exactly when the edge segment swaps are a winning strategy in the \abem{} instance.
\end{proofsketch}
\fi

By Theorem~\ref{one antibody in NP}, Theorem~\ref{thm:antibody-poly} and Theorem~\ref{thm:antibody-antipoly} are tight.

%
%
%
%
%
%
%
%

\section{Nonclue Constraints} \label{sec:nonclue-constraints}

In this section, we analyze Witness puzzle constraints that are not represented by clues on vertices, edges, or cells. This is not an exhaustive list of elements included in puzzles in The Witness. In most cases, these elements do not appear to change the computational complexity of the problem. However, containment in L for broken edges is no longer obvious in many cases. In addition, it is unclear how to modify some of our reductions to obey the symmetry constraint, though we believe the problem still remains hard for all prior cases.

\subsection{Visual Obstruction}

Some panels are placed in the 3D world in such a way that obstacles obstruct the player's vision of some of the panel.  Because the player can only draw paths along edges they can see, the obstructions constrain the panel's solution set, similar to broken edges. Indeed, we show how to reduce visual obstructions to broken-edge puzzles:

\begin{theorem} \label{thm:visual obstruction}
Witness puzzles constrained only by visual obstruction are in P.
\end{theorem}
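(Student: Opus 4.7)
The plan is to reduce directly to the broken-edge case handled by Observation~\ref{thm:broken-edges}, which is in L and therefore in P. The key observation is that a visually obstructed edge behaves exactly like a broken edge from the solver's perspective: the path cannot include it. Everything else about the puzzle (start circles, end caps, rectangular grid structure) is unchanged. So the only work is to identify which edges are obstructed, mark them as broken, and hand the resulting instance off to the algorithm of Observation~\ref{thm:broken-edges}.

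First I would formalize the input. The puzzle ships with a 3D scene consisting of (a polynomial number of) polygonal obstacles and a viewpoint (the player's eye position, or more accurately the region of viewpoints from which the panel can be traced). For each grid edge $e$ of the 2D panel, $e$ is \emph{usable} exactly when every point of $e$ admits an unobstructed line of sight from at least one admissible viewpoint. Computing this is a standard polynomial-time visibility query: for each of the $O(n)$ panel edges and each of the $O(n)$ obstacles, test whether the obstacle blocks the segment from the viewpoint to each endpoint (and to interior witness points where needed for the extremal obstacle boundaries). This yields the obstructed set in polynomial time.

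Next I would build the reduced instance: replace every obstructed edge with a broken-edge clue, and leave the start circles and end caps alone. Correctness of the reduction is immediate in both directions. Any solution path in the original puzzle only traverses visible edges, so it avoids every obstructed edge and is therefore a valid solution to the broken-edge instance. Conversely, any path in the broken-edge instance avoids every broken edge, hence every obstructed edge, hence is drawable by the player in the original puzzle. Applying Observation~\ref{thm:broken-edges} then solves the broken-edge instance in L, and the overall procedure runs in polynomial time.

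The only mildly delicate step is the visibility computation, and it is only delicate if one wants to be careful about what counts as ``visible'' when the player can move around (different viewpoints expose different edges); but since the set of admissible viewpoints is fixed by the level geometry and described by a polynomial number of polygonal regions, the union of edges visible from \emph{some} admissible viewpoint is still computable in polynomial time by standard computational-geometry techniques. Once that preprocessing is done, the rest of the proof is a purely syntactic reduction.
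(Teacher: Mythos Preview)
Your reduction to broken edges is the right idea and matches the paper's core approach, but your handling of multiple viewpoints has a genuine gap. You propose computing ``the union of edges visible from \emph{some} admissible viewpoint'' and building a single broken-edge instance from that union. This is incorrect: in The Witness the player draws the entire path from one fixed position, so if edge $A$ is visible only from viewpoint $p_1$ and edge $B$ only from viewpoint $p_2$, no solution path can use both $A$ and $B$---yet your reduced instance would allow it. Your reduction is sound in one direction (any real solution survives) but not the other (your instance may be solvable when the original is not).

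The paper's fix is exactly what you need: rather than unioning visibility across viewpoints, enumerate the combinatorially distinct viewpoints (there are polynomially many, obtained from the arrangement of the planes extending the obstacle faces), build a separate broken-edge instance for \emph{each} viewpoint, and accept iff at least one of them is solvable. The paper also breaks all edges incident to an obstructed \emph{vertex}, not just obstructed edges---a detail you omit. With those two changes your argument goes through.
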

\begin{proof}
We can determine the combinatorially different viewing positions (where different sets of vertices and/or edges are obstructed) in $O(n^3)$ time by building the arrangement of the planes extending the obstacle faces~\cite{DBLP:journals/siamcomp/EdelsbrunnerOS86}.  For each viewing position reachable in the environment, we can reduce the visual obstruction puzzle to an equivalent puzzle using only broken edges: starting from a puzzle with the same size, start circles, and end caps, break each obstructed edge and break all edges incident to each obstructed vertex.\footnote{Some puzzles in The Witness have end caps halfway along edges instead of at vertices.  If only one half-edge or adjacent vertex is obstructed for such an end cap, replace the edge with an end cap at the other vertex instead of breaking edges.}  This set of broken edges exactly captures the obstruction's restrictions on path drawing.  Puzzles containing only broken edges are in L (Observation~\ref{thm:broken-edges}), and there are only polynomially many viewing positions, so we can solve from all positions in polynomial time.  The visual obstruction puzzle is a \textsc{yes} instance exactly when at least one of the broken-edge puzzles is a \textsc{yes} instance.
\end{proof}

Some puzzles constrained only by visual obstruction from The Witness can obviously only be viewed from a single position in the game world, so building the plane arrangement is unnecessary.  Such puzzles are in L if there exists an L algorithm to decide whether the projection onto the puzzle panel of some face of the obstacle overlaps a particular edge or vertex.

More generally, for any puzzle type allowing broken edges from P or larger
complexity classes, adding visual obstructions results in puzzles in the same
complexity class, as we can simply try all of the polynomially many
viewpoints by the proof of Theorem~\ref{thm:visual obstruction}.

\subsection{Symmetry}

In symmetry puzzles, there are an even number of start circles (usually exactly two), and drawing a path from one of them also causes one reflectionally symmetric path (reflected across one axis, or reflected across both axes which is equivalent to $180^\circ$ rotation) to be drawn from another start circle.  The paths must not intersect and both paths must reach an end cap.  The paths induce regions that must satisfy all clues just as in no-symmetry puzzles.  In some symmetry puzzles, the paths are colored yellow and blue, as are some vertex or edge hexagons; each path must visit the hexagons of its color (and any hexagons of other colors must be visited by either path).  Path color is not relevant for other clue types.

We now discuss the implications of adding symmetry to each of the clue sets we have considered so far in this paper.

\label{Containment}
\paragraph{Containment.}  Our proofs for containment in NP or in $\Sigma_2$ are based on algorithms for verifying certificates.  Our verifiers can be straightforwardly modified to check that the paths in the certificates are indeed symmetric, do not intersect, and visit all hexagons of their color (if any).  From then on it is irrelevant that there are two paths instead of one, so the containments still hold when the symmetry constraint is added.

Observation~\ref{thm:broken-edges} establishes that puzzles with only broken edges, even with multiple start circles and end caps, are in L.  For each start circle, we can determine the symmetric start circle in L, and the connectivity algorithm can follow the progress of both paths (if necessary, by running another instance of the algorithm in lockstep) to ensure paths are extended only if both edges are not broken.  However, it is not obvious how to ensure that the paths do not intersect each other (as there is not enough space to store them).  Thus, we have only that Witness puzzles with only broken edges and symmetry are in P.

Theorem~\ref{thm:monominoes-dp} and
Theorem~\ref{MonominoesAndBrokenEdgesTheorem}
establish polynomial-time algorithms for edge hexagons on the boundary of
a puzzle and monominoes, respectively.  While we suspect that these algorithms
can be extended to solve symmetric puzzles of these types, we leave these
as open problems.

\paragraph{Hardness.}  To prove that adding symmetry does not make puzzles easier, we need to modify the constructions from our proofs to tolerate the addition of a symmetric copy.  For most of our constructions, this is straightforward:

\begin{itemize}
\item \textbf{Vertex hexagons and broken edges} (Observation~\ref{thm:broken edges+vertex hexagons}): Under the symmetry constraint, Hamiltonian path is no longer a strict special case.  We can add a symmetric copy of the puzzle if we separate it from the original by breaking all the edges in a row or column between the copies.  This wall of broken edges ensures the two paths cannot interact, making Hamiltonian path a special case again.
\item \textbf{Edge hexagons} (Theorem~\ref{thm:edge hexagons}): We can add a symmetric copy in any position with a row and/or column of empty cells between them to ensure that the paths do not touch.  The rightmost bottommost chamber, which contains the start circle and end cap, is on the bottom boundary of the board, so the symmetric copy's start circle and end cap are also on the boundary of the board (possibly the top boundary).  Then each path is confined to its copy by the existing argument.
\item \textbf{Squares} (Theorem~\ref{thm:squares 2 color}): If we consistently use the same colors inside and outside the path, we can add a symmetric copy of the puzzle in any position.  The boundary of red squares ensures that each path is confined to its copy of the puzzle.
\item \textbf{Rotatable dominoes} (Theorem~\ref{rotating dominoes}): We can add a symmetric copy of the puzzle in any position.  The distance between the copies is twice the existing distance from the Steiner tree area to the boundary, so the copies cannot interact.  Then hardness preservation for monominoes and antimonominoes (Theorem~\ref{anti-monominoes}) follows in the same way as the no-symmetry construction.
\item \textbf{Nonrotatable dominoes} (Theorem~\ref{vertical dominoes}): We can add a symmetric copy of the puzzle without affecting the proof if we add padding rows and/or columns to ensure that there are at least five empty cells between the closest domino clues.  This preserves Properties~\ref{no vertical domino clues}, \ref{boundary domino clues}, and~\ref{domino start end} (rotated and/or reflected in the symmetric copy), so solving the puzzle still requires solving the domino tiling problem.  Then because there are at least five empty cells between clues, the symmetric paths cannot cross from their tree of domino clues to the other.
\item \textbf{1-triangle clues} (Theorem~\ref{thm:triangles1}): We can add a symmetric copy above the original construction.  The original copy's path is forced to proceed from the start circle entirely across the puzzle, forming a barrier preventing the two paths from interfering.
\item \textbf{3-triangle clues} (Theorem~\ref{thm:triangles3}): We can add a symmetric copy of the puzzle in any position because each path is confined to the chambers of its copy.
\end{itemize}

But for others of our constructions, adapting to a symmetry puzzle is not so simple, and we leave solutions as open problems:

\begin{itemize}
\item \textbf{Stars} (Theorem~\ref{thm:stars}): While Figure~\ref{star scheme} is not to scale, it does accurately depict that there is plentiful blank space outside the simulated squares instance; note that the number of cells occupied by the red and blue stars is equal inside and outside the simulated instance, but the outside stars are arranged in a line instead of a rectangle.  Thus, there is space for the symmetric path to disrupt the argument about what colors of stars must be in which region.  It may be possible to fix this by packing the outside stars more densely, so that the simulated squares instance occupies the full width or height of the puzzle, blocking the symmetric path from interfering.
\item \textbf{2-triangle clues} (Theorem~\ref{thm:triangles2}): Our construction relies on forcing paths throughout the entire puzzle, particularly wave propagation from the four corners of the puzzle.  To add a symmetric copy, we need some way to force simulated corners exactly halfway across the puzzle.
\item \textbf{Antibodies} (Theorems~\ref{thm:antibody-poly} and \ref{thm:antibody-antipoly}):  Obviously, the symmetric copy includes a copy of the antibody clue(s), so adding symmetry weakens the theorems to require four and two antibodies for hardness (respectively).  Substantively, both proofs rely on the large polyomino not being eliminated and having a unique placement.  With a symmetric copy, this is no longer true: one large polyomino could be eliminated and the other placed in the middle of the board instead of up against the boundary, or the two large polyominoes could exchange the intended placements.  Then arguments that polyominoes must be in the same region because the placement of the large polyomino overlaps the cells containing their clues no longer hold.
\end{itemize}

\subsection{Intersection}

In intersection puzzles, multiple puzzles must be solved simultaneously by the same solution path.  That is, the set of solutions to an intersection puzzle is the intersection of the sets of solutions to its component puzzles.  The Witness contains one intersection puzzle in the mountain consisting of six panels.  Drawing a path on any of these panels causes the path to be drawn on all of them, and the game only accepts a path if it satisfies all the panels.  (Initially only the first panel is energized, with each additional panel activating when all the previous panels are solved, but this is not relevant for the complexity analysis.)

Intersection puzzles are typically in the same complexity class as the hardest of their component puzzles.
Solving an intersection puzzle containing only broken edges is the same as solving a single broken-edges puzzle (which is in L, Observation~\ref{thm:broken-edges}) where an edge is broken if it is broken in any component puzzle; the path existence algorithm simply checks that the edge is present in every component puzzle.
If the hardest component puzzle is in NP or $\Sigma_2$, membership for the intersection puzzle follows by combining the certificate verification algorithms (as in Section~\ref{Containment}).
An interesting challenge is monomino puzzles
(Theorem~\ref{MonominoesAndBrokenEdgesTheorem});
it is unclear whether intersection puzzles of monominoes can still
be solved in polynomial time.

\subsection{Recursion}

In recursive puzzles, one or more panels are embedded in the cells of a larger panel.  The inner panels contain non-antibody clues and at least one antibody.  The inner panels are solved independently and any surviving clue(s) from each inner panel ``bubble up'' to the cell in the outer puzzle containing that panel.  If a surviving clue is a rotatable polyomino clue, it bubbles up as a nonrotatable clue oriented as it was used in the inner panel.\footnote{See \url{https://gaming.stackexchange.com/q/253987}.  As this puzzle (on the bottom floor of the mountain) is the sole recursive puzzle in The Witness, another consistent interpretation is that the solution paths of the inner puzzles are what matters, not the surviving clues, but that definition does not generalize to nonpolyomino clue types.}  Once all of the inner panels have been solved, the outer panel is solved just like a normal puzzle with the surviving clues in those cells (along with any other clues in the outer puzzle).
As antibodies are only satisfied if they are necessary, an inner panel always provides the same number of clues to its outer panel regardless of how it is solved.


Because clues are only promoted from inner to outer levels, recursive puzzles are contained in the larger of the largest class among the complexity classes of their inner puzzles and the largest class possible for the outer puzzle (considering all possible sets of surviving clues).
For example, consider an outer puzzle containing two antibodies and inner puzzles each containing one antibody, one polyomino clue, and one nonpolyomino clue.  Then the inner puzzles are in NP, and when the inner puzzles all provide the nonpolyomino clue, the resulting outer puzzle contains only antibodies and nonpolyominoes, and so is also in NP. But because some possible resulting outer puzzles may contain polyominoes, we obtain the weaker bound that the recursive puzzle is in $\Sigma_2$.

\section{Metapuzzles}
\label{Metapuzzles}

\abstractlater{
  \section{Proofs: Metapuzzles}
  \label{appendix:meta}
}

\begin{wrapfigure}{r}{3.5in}
  \centering
  \vspace*{-8ex}
  \includegraphics[width=\linewidth]{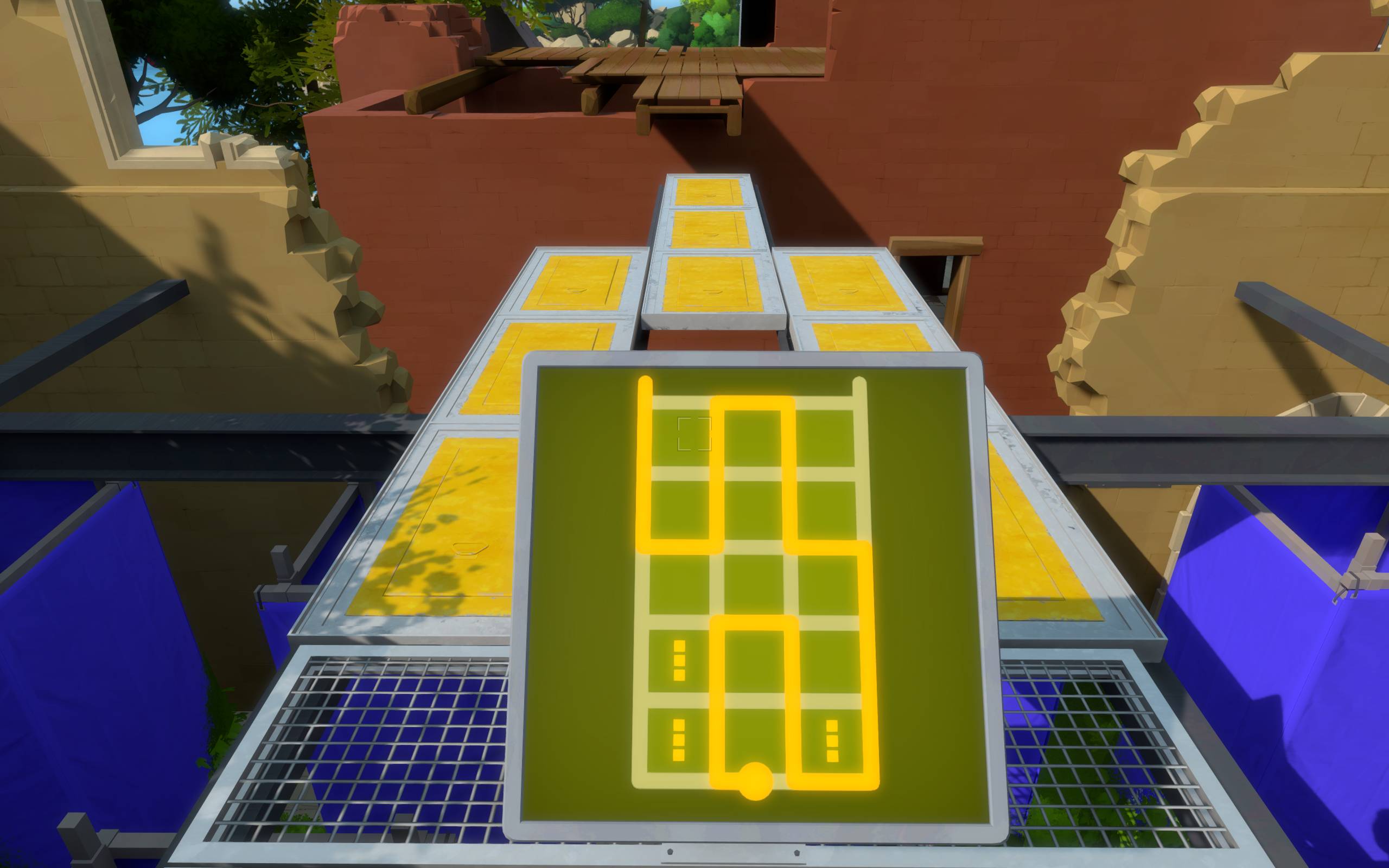}
  \caption{A screenshot from The Witness, featuring a panel whose solution
    controls a sliding bridge.}
  \vspace*{-3ex}
  \label{screenshot metapuzzle}
\end{wrapfigure}

In this section, we analyze several of the \emph{metapuzzles} that appear in The Witness.  Metapuzzles are puzzles which have one or more puzzle panels as a subcomponent of the puzzle, and in which solving the puzzle panel affects the surrounding world in a way that depends on the choice of solution that was used to solve the panel.
Figure~\ref{screenshot metapuzzle} shows one example from The Witness.
\ifabstract
Proofs omitted from this section are available in Appendix~\ref{appendix:meta}.
\fi


Unlike the single-panel path-finding puzzles, metapuzzles are naturally a kind of reconfiguration puzzle with no obvious bound on the number of moves (solving and re-solving panels). Indeed, we show PSPACE-completeness for three different metapuzzles: sliding bridges, elevators and ramps, and power cables and doors. Refer to Table~\ref{3D results}.

Our PSPACE-hardness results use the \emph{door framework} of \cite{Nintendo_TCS}. Specifically, \cite[Section~2.2]{Nintendo_TCS} shows that the following gadgets suffice to prove a one-robot motion planning puzzle PSPACE-hard:
\begin{enumerate}
\item \textbf{One-way}: allows the robot/player to traverse from a point A to a point B but not from B to A.
\item \textbf{Door}: has three paths, \emph{open}, \emph{traverse},
  \emph{close} and two states \emph{open} and \emph{close}. The
  \emph{open} and \emph{close} paths transition the gadget to the
  \emph{open} and \emph{close} states respectively. The \emph{traverse}
  path can be traversed if and only if the door is in the \emph{open}
  state and is blocked otherwise.
\item \textbf{Crossover}: has two non-interacting paths which can be
  independently traversed and which geometrically cross (in projection).
  This gadget is trivial to build in a 3D game like The Witness with
  ramps and tunnels. (The Witness is inherently nonplanar.)
\end{enumerate}

\subsection{Sliding Bridges}
The first such metapuzzle we will discuss are the \emph{sliding bridges} found in the marsh area. In this metapuzzle, each bridge has a corresponding puzzle panel, and solving the puzzle causes the bridge to move into the position depicted by the outline of the solution path. The following theorem demonstrates that, regardless of the difficulty of the puzzle panels (i.e., even if it is easy to find all solutions of each individual panel), it is PSPACE-complete to solve sliding bridge metapuzzles.

\both{
\begin{theorem}
It is PSPACE-complete to solve Witness metapuzzles containing sliding bridges.
\label{sliding-bridges}
\end{theorem}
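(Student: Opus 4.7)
For PSPACE membership, observe that any configuration of the metapuzzle is fully described by the player's current cell in a discretized model of the 3D environment together with the current position of each sliding bridge. Since the number of discrete bridge positions is polynomial in the puzzle description (each bridge's position is determined by one of its panel's solutions, and we may list solutions up to those realized by the game), the total state space has polynomial size, so player reachability can be decided in polynomial space (e.g., by Savitch's theorem on the nondeterministic search over states).

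For hardness, I would apply the door framework of \cite{Nintendo_TCS}: it suffices to build a one-way gadget, a door gadget, and a crossover gadget out of sliding-bridge metapuzzles in a 3D environment. The crossover gadget is immediate because the 3D world in The Witness is inherently non-planar (paths can go over or under other paths via ramps and tunnels). The one-way gadget can likewise be realized essentially for free using a tall drop or one-directional ramp, independent of any puzzle panel. The heart of the reduction is thus the door gadget, which I construct using sliding bridges together with walls and one-way drops.

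My door gadget uses a single sliding bridge $B$ whose panel $\Pi$ has exactly two solutions, corresponding to two bridge positions $U$ (``open'') and $D$ (``close''). Position $U$ spans a gap that completes the \emph{traverse} corridor between the door's two traverse ports; position $D$ slides $B$ aside so that the traverse corridor is broken by the gap. The panel $\Pi$ sits in a small control alcove reached by two disjoint forking corridors, one entered from the \emph{open} port and one entered from the \emph{close} port. In the open fork, after the panel is solved, the only exit back to the open port is via a short interior passage that is itself bridged by $B$ when and only when $B$ is in position $U$ (for instance, the interior passage includes a secondary crossing of the same trench that $B$ spans in position $U$, but on the far side of the gap from the traverse corridor). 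Symmetrically, the close fork's return passage is traversable only when $B$ is in position $D$. Using short one-way drops along each fork prevents the player from backtracking into the alcove without re-entering from the corresponding port, so traversing the open (resp.\ close) path forces $B$ into position $U$ (resp.\ $D$). The traverse path is then usable iff $B$ is currently in $U$. Because $\Pi$ can be any panel type we want, we can realize the required two-solution panel easily, e.g., with a pair of hexagon clues that admit exactly two completions drawing the outlines for $U$ and $D$.

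The main obstacle will be the geometric bookkeeping of the door gadget: I need to argue that the two forks, the return passages, the traverse corridor, and the bridge $B$ can all be laid out in 3D so that (a) $\Pi$'s two solutions really do correspond to exactly the two bridge positions I want, (b) each return passage is blocked iff $B$ is in the wrong position, and (c) no unintended shortcut lets the player flip the door state without committing to either the open or close port. Once the door gadget is verified, combining it with the free one-way and crossover gadgets completes the reduction via \cite{Nintendo_TCS}, yielding PSPACE-hardness and therefore PSPACE-completeness.
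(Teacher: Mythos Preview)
Your overall framework is right---door framework from \cite{Nintendo_TCS}, crossover for free in 3D, PSPACE membership via polynomial state---and this matches the paper. The gap is your one-way mechanism. You build the one-way gadget, and the internal one-ways inside your door gadget, from ``a tall drop or one-directional ramp,'' but in The Witness the player cannot jump or fall (the paper states this explicitly when discussing elevators). Drops therefore do not create one-way passages in this game, and without them your door gadget's forcing mechanism---enter a fork past a one-way drop, then be unable to leave unless the bridge is in the required position---collapses: nothing stops the player from simply walking back out the way they came after setting the bridge however they like.

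The paper's solution is to build the one-way \emph{out of a sliding bridge} using panel visibility rather than geometry. The controlling panel is placed so that it is visible only from the entry dock; the player boards the bridge and solves the panel while standing on it to ride across, and once on the exit side the panel can no longer be seen, so the bridge cannot be recalled. The door gadget uses the same idea: the panel is visible only from the \emph{open} dock and the \emph{close-entry} dock, so the player can freely open the door from the open path but must close it (moving the bridge to the close gap) in order to cross the close path. Your return-passage-blocking idea is a perfectly good forcing mechanism in games that do have drops, but for The Witness you would need to swap in visibility-based one-ways---at which point the construction becomes essentially the paper's.
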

}

\ifabstract
\begin{proofsketch}
We straightforwardly construct the \emph{one-way} and \emph{door}
gadgets of~\cite{Nintendo_TCS}, which are known to be sufficient for
PSPACE-completeness. \xxx{The full proof can be found in Appendix \ZZZ}
\end{proofsketch}
\fi

\later{
\begin{proof}
We apply the door framework described above using the gadgets in
Figure~\ref{sliding-bridge-gadgets}.

\begin{figure}
\centering
\subcaptionbox{\label{sliding-bridge-diode} One-way gadget}
  {\includegraphics[scale=0.45]{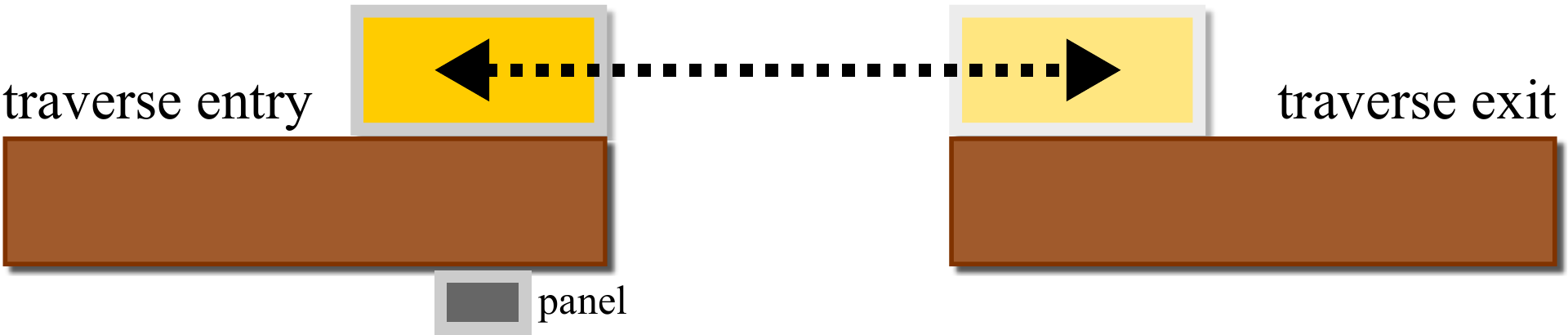}}\hfill
\subcaptionbox{\label{sliding-bridge-lock-door} Door gadget}
  {\includegraphics[scale=0.45]{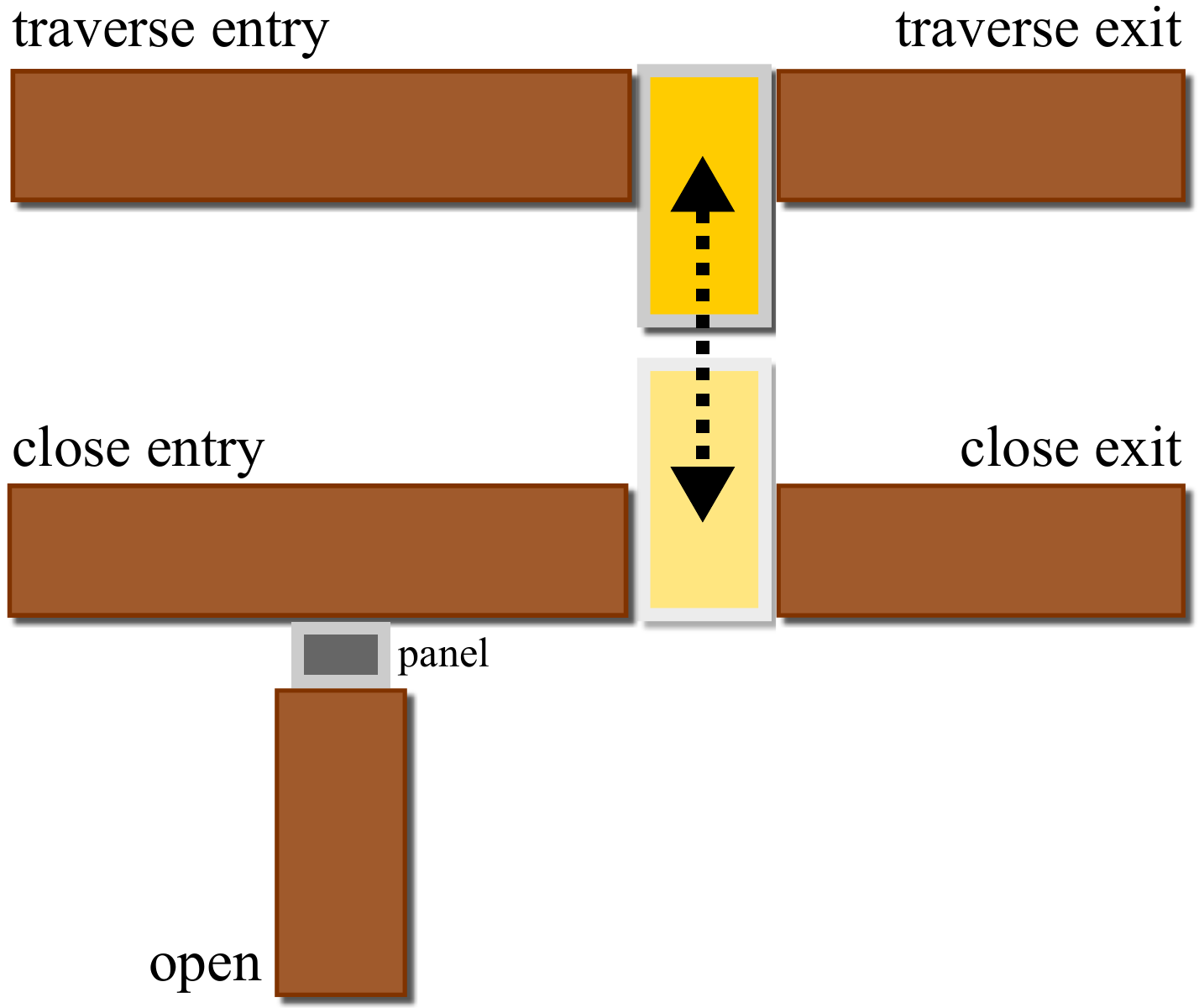}}
\caption{Sliding-bridge gadgets for Theorem~\ref{sliding-bridges},
         drawn from a birds-eye view.}
\label{sliding-bridge-gadgets}
\end{figure}

The one-way gadget consists of a sliding bridge that can be adjacent to one of two docks; see Figure~\ref{sliding-bridge-diode}.  The entry dock has a puzzle panel which controls the bridge.  When the player is at the entry dock, they can change the bridge's position by solving the panel.  To cross to the exit dock, the player stands on the bridge and solves the panel to set the bridge's position to the exit dock.  The panel is not viewable from the exit dock, so once the bridge reaches the exit dock, the player cannot return to the entry dock.

The door gadget consists of five docks, named \emph{traverse-entry}, \emph{traverse-exit}, \emph{close-entry}, \emph{close-exit} and \emph{open}, and a single bridge with two states, spanning the gap between either the two traverse docks or the two close docks; see Figure~\ref{sliding-bridge-lock-door}. There is also a one-way gadget on either side of the close-entry and close-exit so the player can only enter from the close-entry side and exit from the close-exit side. In either bridge state, the player can cross between the docks adjacent to the bridge, but not the other pair of docks.  We place the controlling puzzle panel visible from only the \emph{open} and \emph{close-entry} docks, allowing the player to set the state of the bridge from those two docks only.  The panel is far enough away from the end of the \emph{close-entry} dock to prevent the player from running onto the bridge after sending the bridge to the gap between traverse docks.  The player opens the door by using the panel from the \emph{open} dock to move the bridge to the traverse docks, and can then traverse that path; to cross between the close docks, the player is forced to use the panel from the \emph{close-entry} dock to move the bridge to the close docks, closing the door.  

Using these gadgets, we can construct any instance of TQBF using the framework from~\cite{Nintendo_TCS} and so The Witness metapuzzles with just sliding bridges are PSPACE-hard. Sliding bridge puzzles have only a polynomial amount of state (given by the position of each bridge and the position of the player) so these puzzles are also in PSPACE. Thus, sliding bridge puzzles are PSPACE-complete. 
\end{proof}
}

\subsection{Elevators and Ramps}

Another metapuzzle which appears in The Witness consists of groups of platforms that move vertically at one or both ends to form an elevator or ramp, controlled by the path drawn on puzzle panels.  Because the player cannot jump or fall in The Witness, the player can walk onto an elevator platform only if it is at the same height as the player.  The player can adjust the height of the platforms from anywhere with line-of-sight to the controlling panel, including while on the platforms themselves.  Groups of these elevator puzzles are PSPACE-complete by a similar argument to sliding bridges puzzles, and indeed the sliding ramp in the sawmill can also be used in our construction for sliding bridges.  Besides the sawmill, the other building in the quarry contains a ramp and an elevator.  The marsh contains a single puzzle with a $3 \times 3$ grid of elevators controlled by two identical panels; as a metapuzzle, our puzzle could be built out of multiple marsh puzzles with two platforms and one panel each.

\both{
\begin{theorem}
\label{thm:elevator}
It is PSPACE-complete to solve Witness metapuzzles containing elevator reconfiguration, even when each panel controls at most one elevator.
\end{theorem}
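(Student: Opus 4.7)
The plan is to follow the blueprint of Theorem~\ref{sliding-bridges} and apply the door framework of \cite{Nintendo_TCS}, replacing each sliding bridge with an elevator. I would construct three gadgets --- a crossover, a one-way, and a door --- each built so that every panel controls at most one elevator. The crossover is free in a 3D environment via ramps and tunnels (using the ramps only to separate heights, not to store state). Membership in PSPACE is immediate because the configuration is described by the position of each elevator together with the player's location, which is polynomially bounded, and transitions are polynomial-time checkable, so the argument will focus on PSPACE-hardness.

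For the one-way gadget, I would place a single elevator between a lower entry ledge and an upper exit ledge, with its controlling panel visible only from the entry ledge (occluded from the exit ledge by an overhang, and not visible from the top surface of the platform). The player solves the panel to bring the platform to the entry ledge, boards, solves the panel again from on-board to send the platform upward, and steps off at the exit ledge; from the exit ledge neither the platform nor the panel is in view, so the player cannot recall the platform and travel in the reverse direction is impossible.

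For the door gadget, I would use a single elevator whose two heights encode the open and close states: at the ``open'' height the platform fills a gap in an elevated traverse corridor, and at the ``close'' height it rests at ground level, permitting no traversal of the corridor. The open-entry and close-entry locations are both equipped with line-of-sight to the controlling panel from vantage points walled off from the corridor, and the close path is wrapped in two of the one-way gadgets above so that a player entering it must solve the panel to the close state before reaching close-exit (and analogously for open). Combining these yields the door gadget's three paths and two states using exactly one elevator and one panel.

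The main obstacle will be ensuring the traverse path cannot itself be used to change the door's state, which would invalidate the semantics needed by the framework. I would address this by routing the traverse corridor through an enclosed region (a tunnel or a ceiling-covered passage) engineered so that the controlling panel is not visible from any point along the corridor or from the top of the platform; then a player committed to traversing has no way to re-solve the panel mid-crossing. With the three gadgets verified, the reduction of \cite{Nintendo_TCS} from TQBF immediately gives PSPACE-hardness, completing the proof.
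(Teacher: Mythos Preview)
Your approach is essentially the paper's: apply the door framework of \cite{Nintendo_TCS} with elevator-based one-way and door gadgets, using line-of-sight occlusion to control when the controlling panel can be re-solved, and note PSPACE membership via the polynomial state space. Your single-elevator door (one platform whose two heights alternately fill the traverse gap and the close gap) differs in detail from the paper's two-platform version but is equally valid and arguably matches the ``one elevator per panel'' clause more cleanly; do fix the small inconsistency in the one-way description, where you say the panel is ``not visible from the top surface of the platform'' yet also have the player solve it from on-board---you want it visible from the platform only when the platform is at the lower level.
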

}

\ifabstract
\begin{proofsketch}
We construct \emph{one-way} and \emph{door} gadgets in a manner similar
to the reduction in Theorem~\ref{sliding-bridges}. \xxx{The full proof
  can be found in Appendix \ZZZ}
\end{proofsketch}
\fi

\later{
\begin{proof}
Again we apply the door framework, building our one-way and door gadgets by
modifying the gadgets from the sliding-bridges proof.

For the one-way gadget, we replace the bridge with an elevator
controlled by a panel on the lower level. When the elevator is on the
lower level, the player can access the puzzle panel to raise the
elevator and reach the upper level. From the upper level, the player
cannot see the panel at all and cannot move the elevator to reach the
lower level. This gadget's requirement that the player travel from the
lower level to the upper level is not a constraint, as we can freely
add elevators controlled by a panel visible from both levels.

\begin{figure}
\centering
\includegraphics[width=0.4\textwidth]{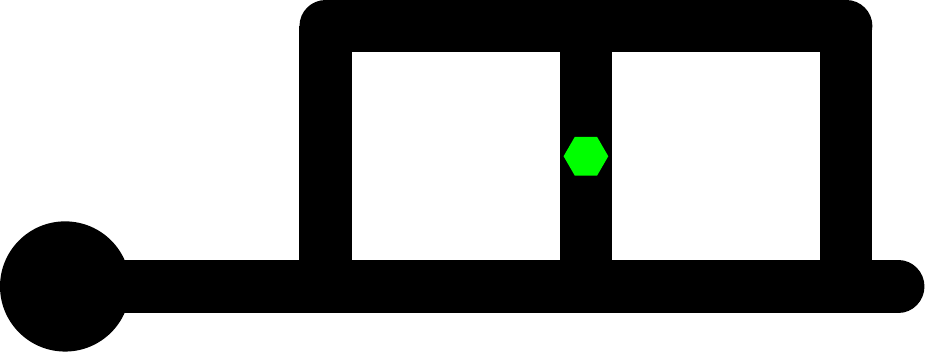}
\caption{Elevator control puzzle panel for the door gadget for Theorem~\ref{thm:elevator}.}
\label{fig:elevator}
\end{figure}

For the door gadget, we use the puzzle panel in Figure~\ref{fig:elevator} to
allow the player to flip between the two states. (We show a panel using an
edge hexagon, but equivalent (possibly larger) panels can be constructed using
squares, stars, triangles, or monominoes instead.) Now the moving bridge in
the previous reduction (Figure~\ref{sliding-bridge-lock-door}) is replaced
with two up/down platforms which are constrained by the puzzle panel such that
exactly one of them can be up at any given time, and the player can only
traverse between an entry and exit dock if the corresponding block is down. 
\end{proof}
}

\subsection{Power Cables and Doors}

In the introductory area of The Witness, there are panels with two solutions, each of which activates a power cable.  Activated cables can power one other panel (allowing it to the solved) or one door (opening it).  If a cable connected to a door is depowered, the door closes.  Cables cannot be split and panels can power at most one cable at a time.

\both{
\begin{theorem} \label{thm:power cables}
It is PSPACE-complete to solve Witness metapuzzles containing power cables and doors.
\end{theorem}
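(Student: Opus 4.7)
The plan is to apply the door framework of~\cite{Nintendo_TCS} and reduce from TQBF, which requires constructing \emph{one-way}, \emph{door}, and \emph{crossover} gadgets in this setting. The crossover gadget is immediate in the 3D environment of The Witness by routing paths through tunnels or over ramps, as in Theorems~\ref{sliding-bridges} and~\ref{thm:elevator}. One-way gadgets can be constructed by composing door gadgets in the standard way, so the main task is to build a door gadget using only panels, power cables, and doors.

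For the door gadget, I will use a single physical door $D$ on the traverse path, controlled by one panel $P$. Panel $P$ has two solutions $A$ and $B$: solution $A$ activates the cable powering $D$ (opening it), while solution $B$ activates a disjoint dummy cable (leaving $D$'s cable unpowered, so $D$ closes). The gadget has four docks: open-entry, close-entry, traverse-entry, and traverse-exit, with the last two separated by $D$. The panel $P$ is positioned in the 3D environment with obstacles arranged so that from the open-entry dock only solution $A$ can be drawn on $P$, from the close-entry dock only solution $B$ can be drawn, and from any point along the traverse path $P$ is entirely occluded. Consequently, walking the open path forces $D$ open, walking the close path forces $D$ closed, and the traverse path is passable exactly when $D$ is open---precisely the door gadget specification.

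Membership in PSPACE is standard. A configuration is specified by the player's position, the currently committed solution on each panel, and the derived power and door states, which together have polynomial total description length; each player action (moving or re-solving a panel) induces a deterministic transition between configurations, so solvability reduces to reachability in a configuration graph with polynomially-described states, which is in PSPACE by Savitch's theorem.

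The main obstacle is verifying that the required visibility restriction on $P$ from each dock can be realized within the Witness mechanics. Drawing on Theorem~\ref{thm:visual obstruction}, each viewpoint of $P$ effectively imposes a set of broken-edge restrictions on $P$; it therefore suffices to design a small two-solution panel (for instance, a $2\times 2$ grid whose two solutions use disjoint sets of edges) together with placements of 3D obstacles that, from each dock, break all edges of the undesired solution while leaving every edge of the desired solution visible, and that fully occlude $P$ from the traverse path. The existence of such geometric arrangements is intuitively clear but must be carefully confirmed; if such visibility tricks are deemed out of scope, an alternative is to gate access to distinct halves of $P$ via auxiliary cable-and-door chains, at the cost of a more elaborate layout.
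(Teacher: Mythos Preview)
Your door gadget does not actually force closing. You place a single physical door $D$ on the traverse path and arrange visibility so that from the close-entry dock only solution $B$ (which depowers $D$) is drawable. But nothing compels the player to draw anything at all while standing at close-entry; they can simply walk through the close path leaving $P$ in whatever state it was in. In the door framework the close traversal must \emph{force} the door closed, since the adversarial player never wants to close a door. The paper handles this by putting a \emph{second} physical door on the close path itself and wiring the panel so that solution $B$ powers the close-path door while solution $A$ powers the traverse-path door $D$. Then to get from close-entry to close-exit the player has no choice but to draw $B$, which as a side effect depowers $D$. Your ``dummy cable'' throws away exactly the leverage needed to make close mandatory, and your gadget has no close-exit dock at all, so there is no traversal to gate.

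You also omit a difficulty the paper treats explicitly: in The Witness every power cable is initially off, so every door starts closed. The door-framework reduction from TQBF generally requires some doors to begin in the open state. The paper resolves this with a one-time ``skybridge'' phase: an elevated walkway with sightlines to the panels of exactly those door gadgets that should start open (and walls blocking sightlines to the others), terminating in a one-way gadget into the main puzzle so the player cannot return. Without some such initialization mechanism, the player may be unable to enter the construction at all. Finally, your appeal to building one-ways ``in the standard way'' from doors only works once the door gadget itself is correct; the paper instead gives a direct airlock-style one-way using two doors on a single panel.
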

}

\ifabstract
\begin{proofsketch}
Once again, we construct \emph{one-way} and \emph{door} gadgets, with the slight complication that all powered doors in The Witness are initially closed, so we need to give the player a way to open exactly the set of doors
which are initially open in the source instance. \xxx{The full proof can
  be found in Appendix \ZZZ}
\end{proofsketch}
\fi

\later{
\begin{proof}
Again we wish to apply the door framework.
One difficulty is that power cables in The Witness are initially off until their powering puzzle has been solved, so initially all doors are closed.  We use an additional construction to allow the player to open the doors that are initially open in the one-way-and-doors instance we are reducing from.

This gadget is akin to an airlock: there are two doors in sequence (an entry and an exit) which are both initially closed.  There is a puzzle panel just outside the entry door with two solutions connected by power cables to the entry and exit doors.  When the player approaches the one-way gadget, the player can open the entry door (closing the exit door if it is open) using the panel and walk through the door.  To leave via the exit door, the player must solve the panel with the other solution while standing inside the airlock looking through the open entry door, thereby opening the exit door but closing the entry door.  Once the entry door is closed, the panel is not visible from the airlock, preventing the player from traversing from the exit to the entry door.


The door gadget is almost identical to that of Theorem~\ref{sliding-bridges} (shown in Figure~\ref{sliding-bridge-lock-door}), except that instead of a moving bridge, the puzzle panel is connected by cables to two doors located in the positions where the bridge could have been. As before, in order to traverse the \emph{close} path, the player must solve the panel puzzle such that the bottom door is open, thereby closing the top door. The \emph{open} path can be used to open the top door, and finally, the \emph{traverse} path is traversable if and only if the top door is open.

The one-way-and-doors instance we are reducing from may have some of the door gadgets initially open.  Because The Witness is three-dimensional, we can build a skybridge above the rest of the puzzle allowing the player to open any door gadgets that should be initially open.  For each such door gadget, a path branches off the skybridge to an overlook point with line of sight to the puzzle panel controlling that gadget.  The player can use this overlook to open the door from the skybridge.  Using walls along the sides of the skybridge (including its branches), we block line of sight to the panels controlling door gadgets that should be initially closed.  The skybridge ends in a one-way gadget leading to the starting point of the one-way-and-doors instance, ensuring the player cannot return to the skybridge.  Note that setting a door gadget to the \emph{open} state allows the player to traverse strictly more paths than the \emph{closed} state, so the player cannot benefit from leaving any skybridge-visible doors closed.
\end{proof}
}

\subsection{Light Bridges}

Light-bridge metapuzzles consist of a large void between two platforms, with a controller panel at one or both ends.  Drawing a path on the controller panel causes a ``hard light'' bridge to follow a corresponding path between the two platforms.  As with all panels, solution paths remain displayed on the panel after the player finishes drawing them, so the player can traverse the light bridge to the other platform.  If there is a controller panel on that side of the void, the player can draw another path to create another light bridge.  Each controller panel can only control the bridge originating on its side, but both paths appear on the panel, so the first bridge's path constrains the second, and any clues in the panels must be satisfied in each configuration.  The objective of the metapuzzle is to position the bridge(s) so that the player can view (and solve) panels located around the void, or to use the bridge(s) to reach an exit platform on a third side of the void.

The Witness has two light-bridge metapuzzles inside the mountain.  The first metapuzzle has a single bridge.  There is a pillar in the center of the void that blocks the player from traversing straight through it, but permits the player to enter from the bridge and exit at a 90-degree angle, slightly divorcing the player's path from the path drawn on the panel.  Besides panels on the platform across the void, there are additional panels mounted on the walls of the void area, so a secondary goal is to have the bridge pass in front of these panels (as if the panel has ``virtual'' edge hexagons).  Neither of these features is interesting for hardness.

The second light bridge metapuzzle has two bridges.  The objective of this metapuzzle is to exit through a door on the side of the void.  This door is only open when both bridges are in place, and even if it were always open, it is not possible to draw a solution path on the first controller panel directly to the door (because the panel's clues cannot be satisfied).  Instead the player must cross to the other platform, draw the second bridge, return to the first platform and adjust the first bridge, and so on.  The optimal solution requires crossing and returning twice before the first bridge can be drawn to the exit.

The hardness of light-bridge reconfiguration metapuzzles remains open.  We speculate that they are PSPACE-complete, as many reconfiguration problems are.

\begin{restatable}{open}{lightbridge} \label{open:light bridge}
Is it PSPACE-complete to solve Witness light-bridge reconfiguration metapuzzles?
\end{restatable}

\section{Puzzle Design Problem}
\label{sec:puzzle design}


The previous sections analyzed the hardness of solving Witness puzzles.  In this section, we consider the \emph{Witness puzzle design problem}: creating a Witness puzzle having a specified set of solution paths.  While most of the panels in The Witness only need to be interesting to solve, the controller panels for metapuzzles depend critically on having a specific set of solutions corresponding to the positions of the environmental objects they control.  That is, the panel controlling a sliding bridge metapuzzle that brings the player from one platform to another must have solutions representing the bridge being at each end.  The controller also must \textit{not} have a solution representing the bridge in the middle, lest the player get stuck on one platform by moving the bridge to the middle while not on it.


\begin{restatable}{open}{puzzledesign} \label{open:puzzle-design}
  What is the complexity of the Witness puzzle design problem?
\end{restatable}

\subsection{Path Universality}

More specifically, it is natural to ask whether there can exist an algorithm using only a subset of the available clue types and nonclue constraints.  Ideally, we would hope for a small clue set having \emph{path universality}, the ability to express all possible sets of solution paths.  Unfortunately, an information-theoretic argument rules out path universality for reasonable clue sets.

\begin{observation} \label{obs:lotsa paths}
  A rectangular board with $n$ cells admits $2^{\Theta(n)}$ simple paths and
  $2^{2^{\Theta(n)}}$ sets of simple paths, even if we fix the endpoints to
  opposite corners of the puzzle.
\end{observation}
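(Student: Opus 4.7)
The plan is to establish $N = 2^{\Theta(n)}$ for the number of simple paths between opposite corners of an $n$-cell board; the $2^{2^{\Theta(n)}}$ bound on sets of such paths then follows immediately, since a set of size $N$ has $2^N$ subsets.

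For the upper bound $N \le 2^{O(n)}$, I would use a branching argument: the grid has $O(n)$ vertices of degree at most~$4$, so a simple path starting at the fixed corner is determined by a sequence of at most $O(n)$ successive neighbor choices, each with at most~$3$ options (excluding the just-used neighbor). This gives at most $4 \cdot 3^{O(n)} = 2^{O(n)}$ simple paths from the corner, including those ending at the opposite corner.

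For the lower bound $N \ge 2^{\Omega(n)}$, it suffices to exhibit some board shape in which there are $2^{\Omega(n)}$ simple corner-to-corner paths. I would use a $2 \times (n/2)$ strip and partition it into $n/4$ disjoint $2 \times 2$ blocks, one per pair of consecutive columns. In each block, with the fixed entry port at the top-left corner and the fixed exit port at the top-right corner, there are two distinct simple local routings --- the direct horizontal edge and the detour through the two bottom cells --- both using only the block's own vertices. Chaining these blocks with linking edges along the top row, and handling the final block with one of two diagonal routings to reach the bottom-right corner of the board, gives at least $2^{n/4} = 2^{\Omega(n)}$ distinct simple paths from $(0,0)$ to $(1, n/2-1)$.

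Combining the bounds yields $N = 2^{\Theta(n)}$ and hence $2^N = 2^{2^{\Theta(n)}}$ sets of simple paths, completing the observation. The only delicate point, and the main thing I would want to verify carefully, is that (a) the two local routings in each block use only vertices of that block, so that the $n/4$ block choices are genuinely independent, and (b) the final block's two diagonal routings are compatible with any choice of routings in the preceding blocks; both checks are routine since the blocks are pairwise disjoint and only the entry/exit ports are shared with neighboring blocks' linking edges.
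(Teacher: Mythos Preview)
Your upper bound is correct and matches the paper. The lower bound has two issues.

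First, you write that it ``suffices to exhibit some board shape,'' but the observation is a universal claim: \emph{every} $x\times y$ rectangular board with $n=xy$ cells has $2^{\Theta(n)}$ corner-to-corner paths, and that is what the paper proves. Its construction works for arbitrary aspect ratio: within a single row of $x$ cells there are already $2^{x-1}$ sub-paths from one corner of that row to the diagonally opposite corner (each of the first $x-1$ cells can independently lie on either side of the path), and zig-zagging through $\Theta(y)$ such rows multiplies these choices to give $2^{\Omega(xy)}=2^{\Omega(n)}$ paths. Your existential version is enough for the downstream counting argument on that one board, but it does not establish the $\Theta$-statement as written.

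Second, on the one board you do handle, the check you flagged as ``routine'' does not go through under the natural reading of your detour. Cell-disjoint $2\times 2$ blocks tiling a $2\times(n/2)$ strip are \emph{not} vertex-disjoint: adjacent blocks share an entire column of three vertices. If ``detour through the two bottom cells'' means going down the left side of the block, across its bottom row of vertices, and back up its right side, then that right side lies in the shared column, so two consecutive detours revisit those vertices and the combined path is not simple. The fix is easy---have the detour turn back up one column early, so that the only shared vertex with the next block is the top-corner entry/exit point---but your $2^{n/4}$ count needs that correction before the independence claim holds.
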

\begin{proof}
  Fixing the endpoints reduces the number of paths by $\Theta(n^2)$,
  so we focus on the fixed-endpoint case.
  For the upper bound, a simple path makes $O(n)$ steps with at most three
  choices (turn left, turn right, or go straight) at each step, beyond
  the first step with four choices (north, west, south, east), so there
  are $O(3^n)$ paths.
  For the lower bound, we consider the case that the fixed endpoints are
  diagonally opposite, namely top-left and bottom-right,
  and describe a family of $2^{\Omega(n)}$ such paths.
  Suppose the puzzle is $x \times y$, and assume by symmetry that $x \geq y$.
  We start with $2^{x-1}$ possible paths within the top row of cells,
  from the top-left corner to the bottom-right corner of the row,
  by choosing whether each cell except the rightmost is left or right of the
  path (while the rightmost is forced to be on the left, to ensure the correct
  endpoint).
  Then we continue the path down one unit, and apply the same construction to
  the third row of cells from the top-right corner to the bottom-left corner.
  We repeat this process, zig-zagging back and forth, until we must stop
  to leave room to get to the bottom-right corner of the puzzle.
  In total, we obtain $$(2^{x-1})^{2 \lfloor (y-1)/4 \rfloor + 1}
  \geq 2^{(x-1) ((y-1)/2+1)} = 2^{(x-1) (y+1/2)}
  = 2^{\Omega(n)}$$ paths.
  %
  Combining the upper and lower bounds gives $2^{\Theta(n)}$ paths,
  and so the powerset of the set of paths has size
  $2^{2^{\Theta(n)}}$.
\end{proof}

For clues of constant size, there are $c = O(1)$ choices of clue for each
vertex, edge, and cell (including not placing a clue at that position), and
$O(n)$ clue positions, so there are only $c^{O(n)}$ puzzles.  Even if we allow
polyominoes and antipolyominoes of size $k(n)$, there are at most
$c^{O(n\cdot k(n))}$ puzzles.
By Observation~\ref{obs:lotsa paths}, most sets of solution paths
do not have a corresponding puzzle.

\subsection{$k$-Path Universality}

Instead, we must settle for \emph{$k$-path universality}, the ability to
express all sets of up to $k$ solution paths, for some small~$k$.
Here we focus on $k=1$.

\begin{observation}
Broken edges are $1$-path-universal.
\end{observation}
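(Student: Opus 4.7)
The plan is to show that, given any desired simple path $P$ from a start vertex $s$ to an end vertex $t$ on a rectangular grid, we can construct a Witness puzzle using only broken edges whose unique solution is exactly $P$. The construction is essentially trivial: take a rectangular board of the appropriate dimensions containing all the vertices visited by $P$, place the start circle at $s$ and the end cap at $t$, and mark as broken every grid edge that is not an edge of~$P$.

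First I would verify correctness in the easy direction: the path $P$ itself uses no broken edges (by construction of the broken set), starts at $s$, ends at $t$, and is simple by assumption, so $P$ is a valid solution of the constructed puzzle. Next I would argue uniqueness. After removing all broken edges, the remaining subgraph of the grid is precisely the edge set of~$P$, which forms a single simple path with endpoints $s$ and~$t$ (and interior vertices of degree~$2$). Any simple path in the puzzle from $s$ to~$t$ must use only non-broken edges and hence must lie inside this path subgraph; the only such simple $s$--$t$ path is $P$ itself. Hence the solution set of the puzzle equals $\{P\}$.

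The construction clearly runs in time linear in the size of the grid, so this gives $1$-path universality for broken edges. There is essentially no obstacle here—the only mild subtlety is making sure the board shape is large enough to contain $P$ (which we can always ensure by taking the bounding rectangle of~$P$), and noting that the definition of Witness puzzles with only broken edges places no global constraint beyond ``simple $s$--$t$ path using non-broken edges,'' so restricting the set of usable edges to those of $P$ is sufficient.
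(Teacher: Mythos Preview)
Your proposal is correct and takes essentially the same approach as the paper, which simply says ``Break all edges not in the desired solution path.'' You have merely spelled out the (obvious) correctness and uniqueness arguments that the paper leaves implicit.
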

\begin{proof}
Break all edges not in the desired solution path.
\end{proof}

\begin{observation}
Edge hexagons are $1$-path-universal.
\end{observation}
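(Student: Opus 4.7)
The plan is to place an edge hexagon on every edge of the desired simple path $P$ from the start circle to the end cap, and no other clues. I claim this produces a puzzle whose unique solution is exactly $P$.

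First, I would verify that $P$ itself is a valid solution: it is a simple path from the start circle to the end cap (by hypothesis), and it visits every hexagon, since the hexagons lie exactly on its edges.

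Next, I would show uniqueness. Let $P = v_0, v_1, \ldots, v_k$, where $v_0$ is the start circle and $v_k$ is the end cap, and let $S$ be any solution. Since every edge $\{v_{i-1}, v_i\}$ carries a hexagon, $S$ must include each such edge. The path $S$ starts at $v_0$, and the only way for $S$ to traverse $\{v_0, v_1\}$ without revisiting $v_0$ (which simplicity forbids) is to take it as its first step; so $S$ begins $v_0, v_1$. By the same argument, once $S$ is at $v_i$, the edge $\{v_i, v_{i+1}\}$ must be its next step, since otherwise $S$ would have to return to $v_i$ later to traverse it. Iterating, $S$ agrees with $P$ through $v_k$, and since $v_k$ is the end cap, $S$ terminates there: $S = P$.

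There is essentially no obstacle here beyond spelling out the simplicity argument in the previous paragraph; the construction is direct and requires no combinatorial machinery.
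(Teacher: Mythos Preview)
Your proof is correct and follows essentially the same approach as the paper: place hexagons on exactly the edges of $P$, then argue that any solution must traverse those edges consecutively because deviating at $v_i$ uses up that vertex, making the hexagon on $\{v_i,v_{i+1}\}$ unreachable afterward. Your version simply spells out the induction more explicitly than the paper's two-sentence sketch.
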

\begin{proof}
Place edge hexagons on exactly those edges in the desired solution path.  When the path visits an edge, it visits one of the vertices incident to the next edge, so if the path ever deviates from the edges with hexagons, at least one of the hexagons cannot be visited because one of its incident vertices has already been visited.
\end{proof}

Many clue sets are not even $1$-path-universal, often because they cannot distinguish between different paths along the boundary.  Consider the empty puzzle in Figure~\ref{fig:short-vs-long-empty}.  The following observations are based on trying to add clues that distinguish the \emph{short path} in Figure~\ref{fig:short-vs-long-short} from the \emph{long path} in Figure~\ref{fig:short-vs-long-long}.

\begin{figure}
\centering
\subcaptionbox{\label{fig:short-vs-long-empty} Empty puzzle}{%
  \includegraphics[width=.3\textwidth]{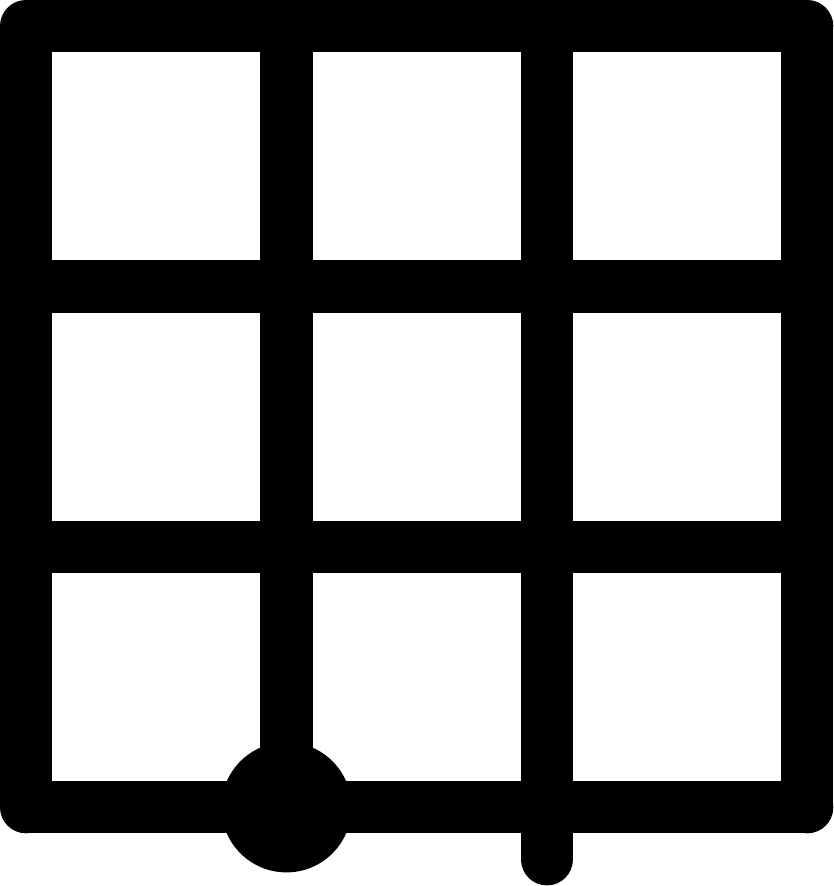}%
}~~~~
\subcaptionbox{\label{fig:short-vs-long-short} Short path}{%
  \includegraphics[width=.3\textwidth]{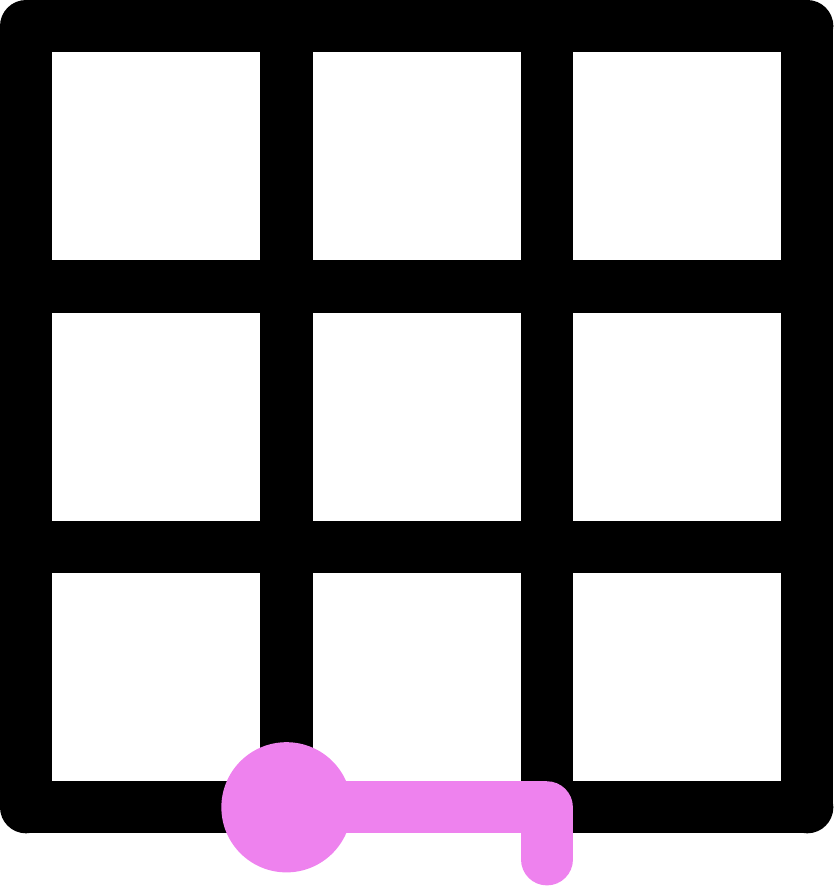}%
}~~~~
\subcaptionbox{\label{fig:short-vs-long-long} Long path}{%
  \includegraphics[width=.3\textwidth]{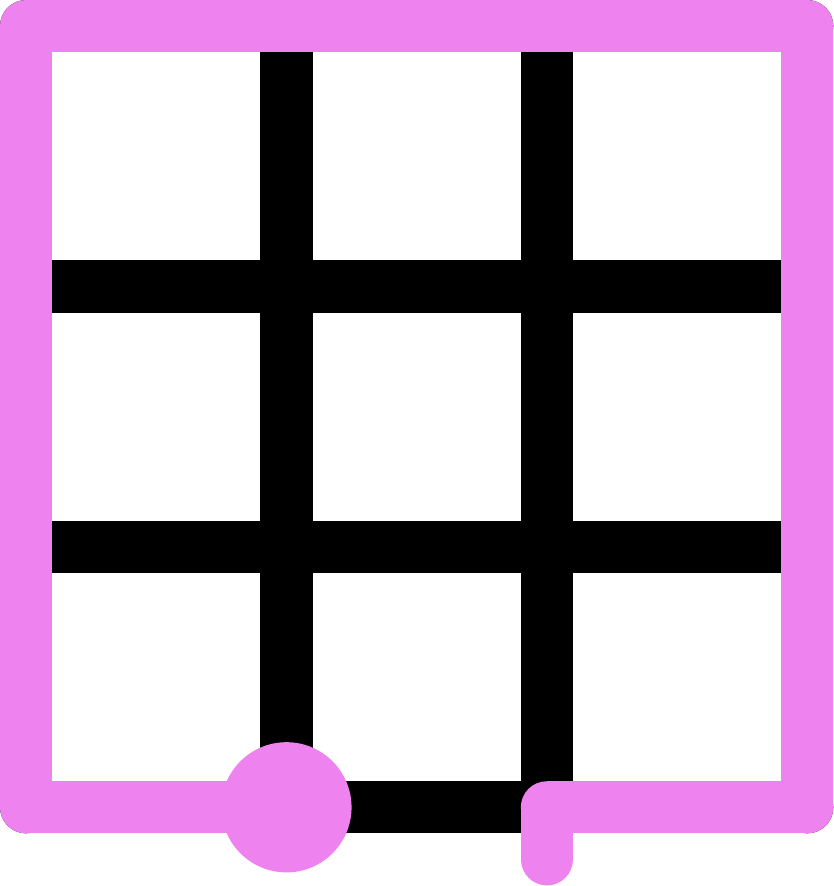}%
}
\caption{Path universality requires distinguishing the short and long boundary paths.}
\label{fig:short-vs-long}
\end{figure}

\begin{observation}
Squares, stars, polyominoes, and antipolyominoes, in any combination, are not $1$-path-universal.
\end{observation}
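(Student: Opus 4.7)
The plan is to exhibit a single path that cannot appear as the unique solution of any puzzle built from only squares, stars, polyominoes, and antipolyominoes. I will take the short path of Figure~\ref{fig:short-vs-long-short} as the witness, and show that the long path of Figure~\ref{fig:short-vs-long-long} is always an alternate solution to any such puzzle that admits the short path.

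The key observation is that each of the four listed clue types is purely \emph{region-based}: per Table~\ref{clue types}, the constraint imposed by a square, star, polyomino, or antipolyomino clue depends only on (i) which cells are grouped together into each region by the solution path and puzzle boundary, and (ii) which clues lie in each such region. None of these clue types references particular edges or vertices of the grid. Hence if two paths induce the same cell partition and carry every clue into the same region, then any puzzle built from only these clue types is satisfied by one path if and only if it is satisfied by the other.

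Next I would observe that both the short and long paths of Figure~\ref{fig:short-vs-long} lie entirely on the outline of the puzzle. Since neither crosses the interior, each induces the same cell partition, namely the single region consisting of \emph{every} cell of the board. Every clue of the puzzle, regardless of where it is placed, lies in this one common region under both paths. By the previous paragraph, the short path satisfies the puzzle if and only if the long path does, so no puzzle of this form has the short path as its unique solution.

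There is no real obstacle here: once one unpacks that these four clue types are region-based and that both candidate paths induce identical region decompositions, the conclusion is immediate. The observation simply records that purely region-based clues cannot distinguish two boundary paths that leave the cell partition unchanged, ruling out $1$-path-universality for every combination of them.
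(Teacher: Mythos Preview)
Your proof is correct and takes essentially the same approach as the paper: both argue that these clue types are region-based and therefore cannot distinguish the short and long boundary paths of Figure~\ref{fig:short-vs-long}, which induce the same single-region decomposition. Your write-up simply spells out in more detail what the paper states in one sentence.
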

\begin{proof}
These clue types cannot distinguish the short and long paths because their satisfaction depends only on the other clues contained in or the shape of their region, not the path that induced the region.
\end{proof}

\begin{observation}
Vertex hexagons are not $1$-path-universal.
\end{observation}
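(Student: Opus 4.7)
The plan is to exhibit a single path that no vertex-hexagon puzzle can uniquely specify, namely the short path $P_s$ of Figure~\ref{fig:short-vs-long-short}. The first (and essentially only) step is a monotonicity observation about vertex hexagons: if a puzzle whose only clues are vertex hexagons has some simple path $P$ from start to end as a solution, then every hexagon must lie on a vertex of $V(P)$, so every other simple path $P'$ from the same start to the same end with $V(P) \subseteq V(P')$ also visits every hexagon and is therefore also a solution.

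The second step is to produce a distinct simple path $P' \neq P_s$ with $V(P_s) \subseteq V(P')$. Because $P_s$ runs along the boundary and the puzzle has plenty of unused interior vertices, I can take $P'$ to be obtained from $P_s$ by replacing a single edge $(u,v)$ of $P_s$ with a three-edge detour through two adjacent interior vertices (one directly next to $u$ and one directly next to $v$). The resulting $P'$ is still a simple path from the same start to the same end, and it visits every vertex of $P_s$ together with the two new interior vertices, so $V(P_s) \subsetneq V(P')$.

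Combining these, any vertex-hexagon puzzle for which $P_s$ is a solution must also admit $P'$ as a solution, so $P_s$ can never be the unique solution of any such puzzle. Hence vertex hexagons are not $1$-path-universal. The argument has no real technical obstacle: it is essentially a one-line monotonicity observation paired with an explicit detour construction, and the role of Figure~\ref{fig:short-vs-long} is only to fix a concrete witness $P_s$ that admits such a detour.
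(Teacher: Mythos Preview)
Your proposal is correct and uses the same core idea as the paper: vertex hexagons constrain only which vertices the path visits, so any path $P'$ with $V(P_s)\subseteq V(P')$ is also a solution whenever $P_s$ is. The paper's proof makes this same observation about the short path (phrased as hexagons on its two vertices being ``redundant'' since they coincide with the start circle and end cap) and additionally argues that the long path cannot be forced either; your explicit monotonicity statement and detour construction are essentially the paper's second sentence made concrete.
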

\begin{proof}
Placing vertex hexagons at each boundary vertex (Figure~\ref{fig:vertex-hex-nonuniv-boundary}) rules out the short path, but does not force the long path because the path can enter the interior and return to the boundary at the next hexagon (Figure~\ref{fig:vertex-hex-nonuniv-interior}).  Vertex hexagons are entirely redundant for enforcing the short path because the vertices bearing the start circle and end cap are already required to appear in the path.
\end{proof}

\begin{figure}
\centering
\subcaptionbox{\label{fig:vertex-hex-nonuniv-boundary} Puzzle}{%
  \includegraphics[width=.3\textwidth]{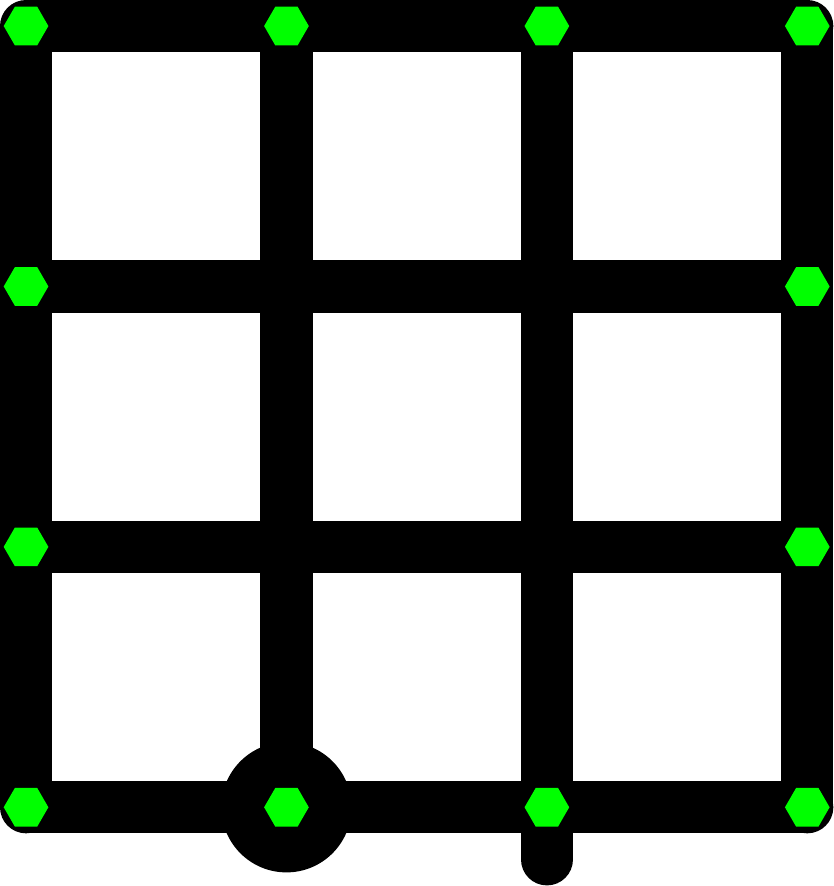}%
}~~~~
\subcaptionbox{\label{fig:vertex-hex-nonuniv-interior} Unintended solution}{%
  \includegraphics[width=.3\textwidth]{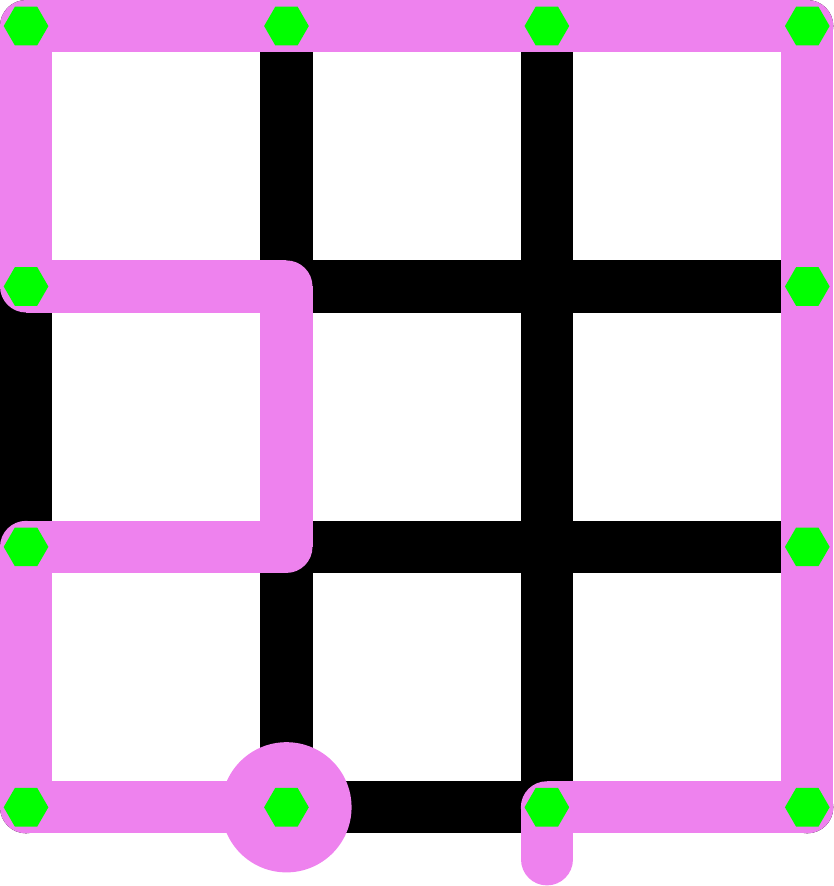}%
}
\caption{Vertex hexagons can rule out the short path, but not force the long path.}
\label{fig:vertex-hex-nonuniv}
\end{figure}

Vertex hexagons combined with antibodies can force the short and long paths as shown in Figure~\ref{fig:vertex-hex-antibody}.  This example shows that adding antibodies can increase the set of expressible paths.  However this combination is still not $1$-path-universal as vertex hexagons cannot distinguish the paths in Figure~\ref{fig:vertex-hex-zigzag} and there are no unsatisfied hexagons to be eliminated by antibodies.

\begin{figure}
\centering
\subcaptionbox{\label{fig:vertex-hex-antibody-short} Short path}{%
  \includegraphics[width=.3\textwidth]{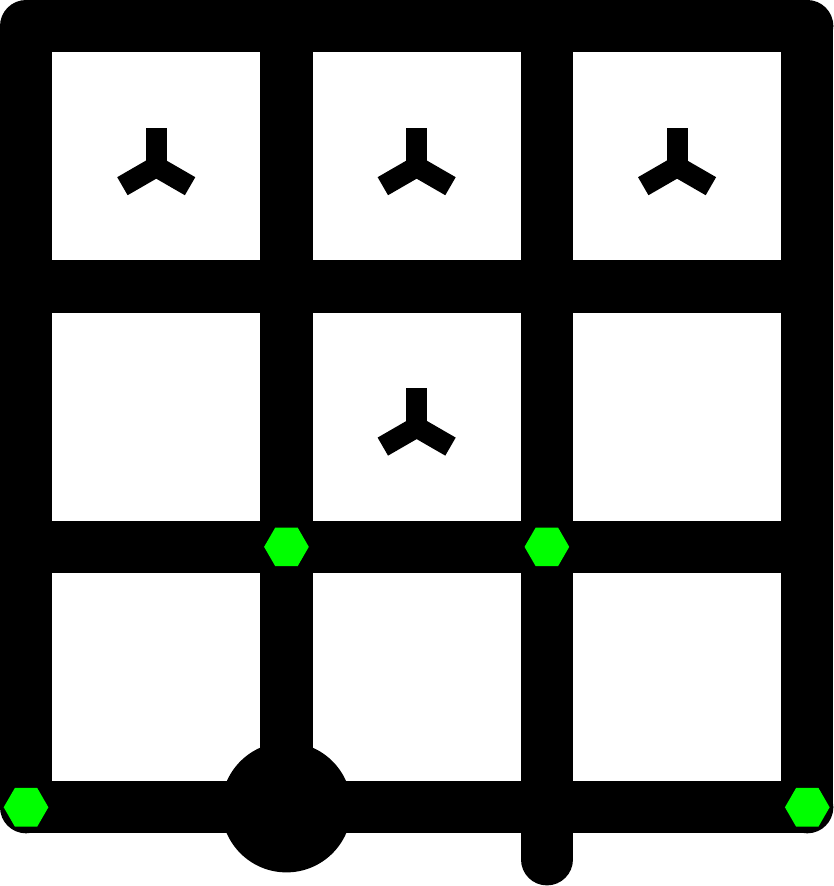}%
}~~~~
\subcaptionbox{\label{fig:vertex-hex-antibody-long} Long path}{%
  \includegraphics[width=.3\textwidth]{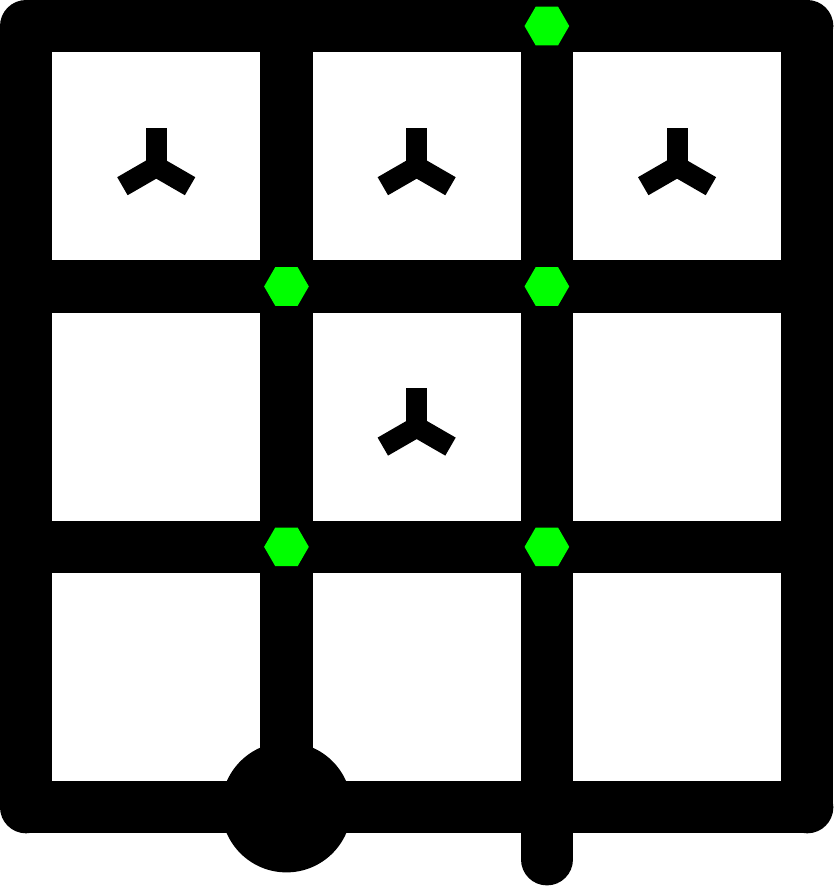}%
}
\caption{Vertex hexagons with antibodies can force the short and long paths.}
\label{fig:vertex-hex-antibody}
\end{figure}

\begin{figure}
\centering
\subcaptionbox{\label{fig:vertex-hex-zigzag-horiz} Horizontal zigzag}{%
  \includegraphics[width=.3\textwidth]{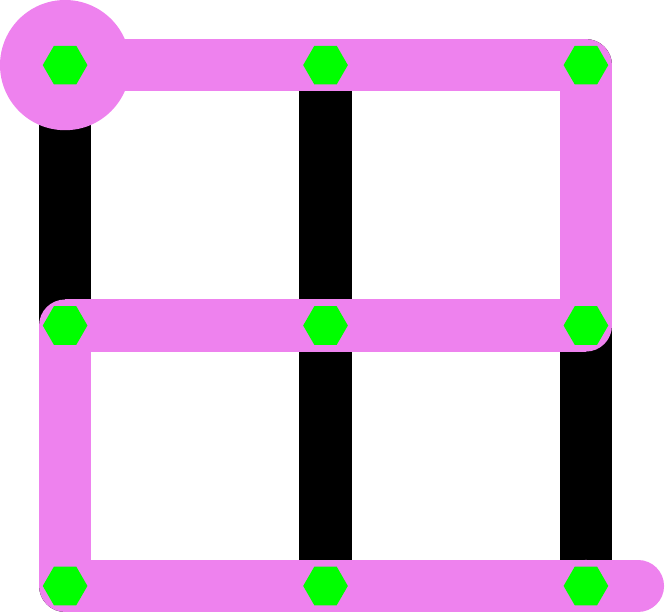}%
}~~~~
\subcaptionbox{\label{fig:vertex-hex-zigzag-vert} Vertical zigzag}{%
  \includegraphics[width=.3\textwidth]{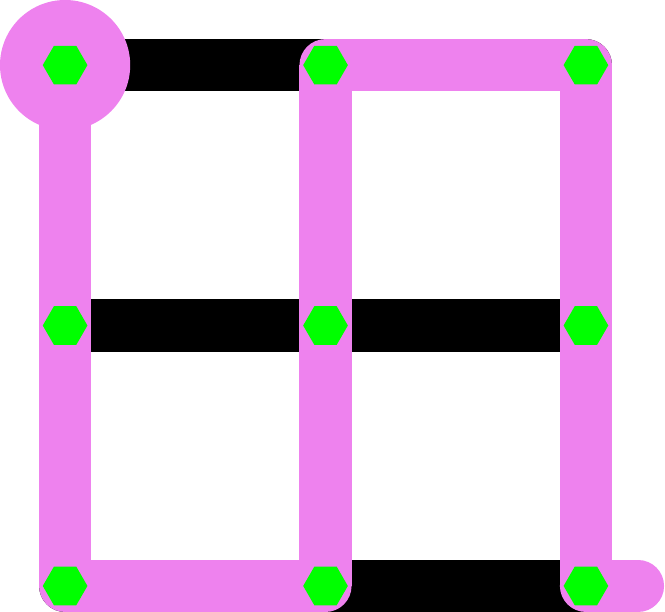}%
}
\caption{Vertex hexagons cannot distinguish these solutions, and antibodies cannot help.}
\label{fig:vertex-hex-zigzag}
\end{figure}

\begin{observation}
$1$-triangle clues are not $1$-path-universal.
\end{observation}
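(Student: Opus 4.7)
The plan is to show that the short path in Figure~\ref{fig:short-vs-long-short} is not the unique solution of any puzzle whose only clues are 1-triangles, which suffices to prove non-universality. First I would observe that the short path uses only boundary edges on one side of the board, so every cell not adjacent to those edges has zero incident path edges from the short path; a 1-triangle placed in any such cell would instantly invalidate the short path. Consequently, the only cells that may bear 1-triangle clues in a puzzle admitting the short path are the few cells along the short path's side of the board---each of which has exactly one short-path edge on its side facing that boundary.

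Next I would exhibit an alternative simple path $P'$ from the start vertex to the end vertex with exactly one incident edge on each of those candidate cells and zero on every other cell, so that $P'$ satisfies every 1-triangle puzzle that admits the short path. The construction leaves the start through the perpendicular edge of the first candidate cell (dropping into the interior), winds through interior rows and along the opposite boundary while never using another edge of a candidate cell, and finally enters the end vertex through the perpendicular edge of the last candidate cell from the opposite direction. By design, each candidate cell contributes exactly one $P'$-edge (its side edge), distinct from the boundary edge used by the short path.

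The main obstacle is verifying that such a detour can be realized as a simple path on the board of Figure~\ref{fig:short-vs-long-empty} without inadvertently using the bottom or opposite-side edge of any candidate cell. This reduces to a routine case check on the given board: the interior rows and the far side of the boundary provide ample room to route $P'$ down from the starting boundary, across through the interior, and back up to the end vertex, touching each candidate cell in only its side edge. Once $P'$ is exhibited, both the short path and $P'$ solve the same 1-triangle puzzle, so the short path cannot be uniquely forced and 1-triangle clues are not $1$-path-universal.
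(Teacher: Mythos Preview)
Your approach is essentially the paper's: identify the cells that could carry a 1-triangle without killing the short path, then exhibit a second path that also satisfies those clues. The paper does this tersely by pointing to a figure; you spell out the first step explicitly, which is a nice improvement.

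There is, however, a small slip in your reading of the board. In Figure~\ref{fig:short-vs-long} the start circle and end cap are \emph{adjacent} boundary vertices, so the short path is a single edge and there is exactly \emph{one} candidate cell $C$ (the one ``adjacent to both the start circle and end cap'' in the paper's phrasing), not ``a few cells along the short path's side.'' With only one candidate cell, your recipe for $P'$---leave the start through the perpendicular edge of the first candidate cell and enter the end through the perpendicular edge of the last candidate cell---uses \emph{both} side edges of the same cell $C$, giving $C$ two incident $P'$-edges and violating a 1-triangle placed there. The repair is immediate: have $P'$ leave the start through $C$'s perpendicular edge (one edge of $C$), detour through the interior, and then re-enter the end vertex \emph{along the boundary} from the far side of $C$ rather than through $C$'s other perpendicular edge. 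That path touches $C$ in exactly one edge and is precisely the unintended solution in Figure~\ref{fig:1tri-nonuniv}.
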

\begin{proof}
Adding a 1-triangle clue in the cell adjacent to both the start circle and end cap rules out the long path, but doesn't force the short path, as shown by the unintended solution in Figure~\ref{fig:1tri-nonuniv}.
\end{proof}

\begin{figure}
\centering
\includegraphics[width=.3\textwidth]{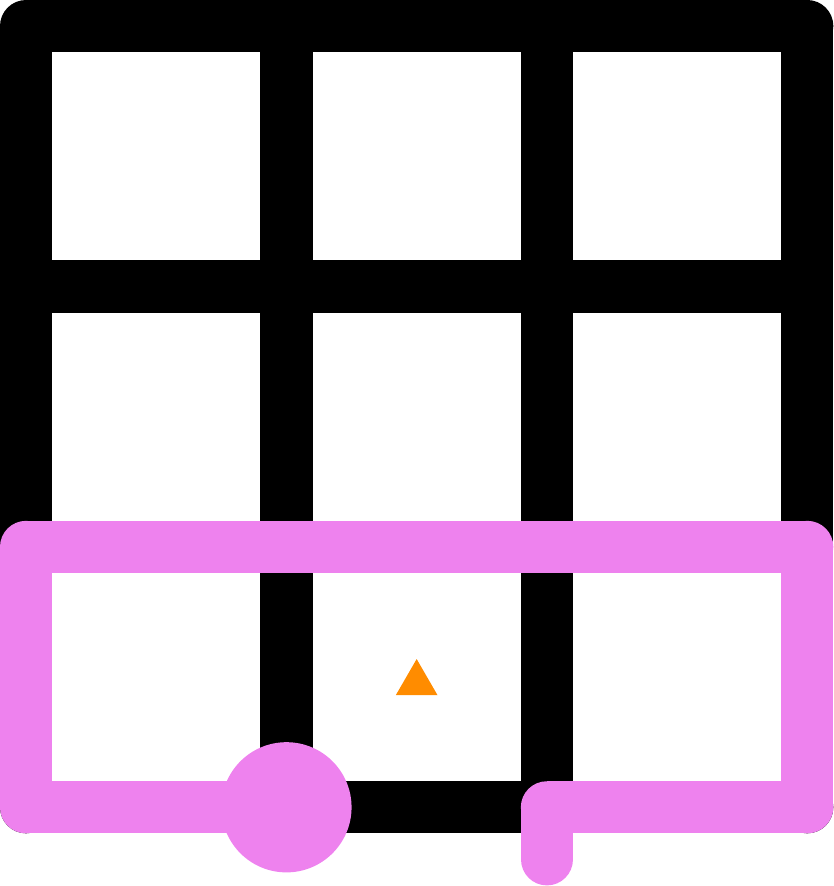}
\caption{1-triangle clues can rule out the long path, but not force the short path.}
\label{fig:1tri-nonuniv}
\end{figure}

\subsection{Region Universality}

As paths along the boundary are the source of problems, we can settle for
specifying the same interior edges as a given path, but possibly with
different boundary edges.  This is equivalent to the ability to specify all
\emph{region decompositions} (sets of regions inducible by some path and whose
union is the entire puzzle), so we call it \emph{region universality}.
There are $O(n^2)$ paths consistent with a given region decomposition
(see Lemma~\ref{thm:decomp-to-path} below),
so Observation~\ref{obs:lotsa paths} also roughly counts the
number of (sets of) region decompositions, so the same information-theoretic
argument makes general region universality impossible.
Therefore, we consider \emph{$k$-region universality} where up to $k$ region
decompositions must be realized, and focus on $k=1$.


\begin{lemma} \label{thm:decomp-to-path}
In polynomial time, we can decide whether a given set
$D = \{D_1, D_2, \dots, D_m\}$ of disjoint sets of cells is a
region decomposition, and if so,
produce a path that induces that region decomposition.
\end{lemma}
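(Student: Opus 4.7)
\begin{proofsketch}
The plan is to either reconstruct a realizing path from $D$ directly or rule one out, via a polynomial-time procedure. First, I would verify the basic well-formedness conditions: the sets $D_i$ are pairwise disjoint, their union is all cells of the rectangle, and each $D_i$ is connected as a set of cells, since any region induced by a solution path is by definition a connected component of cells.

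Next, I would compute and validate the forced non-boundary edges of the path. Any realizing solution path must contain exactly the set $E_{\mathrm{int}}$ of non-boundary grid edges whose two incident cells lie in distinct $D_i$'s, and must contain no other non-boundary edge (any edge between two cells of a common region would further subdivide it). This yields $O(n)$-time necessary checks: every non-boundary vertex has degree $0$ or $2$ in $E_{\mathrm{int}}$ (simple-path degree bound, since all edges incident to an interior vertex are non-boundary), and $E_{\mathrm{int}}$ is acyclic (a simple path contains no cycle). Note that every boundary vertex automatically has $E_{\mathrm{int}}$-degree in $\{0,1\}$, since it is incident to at most one non-boundary edge.

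The core step is to select a set $B$ of puzzle-boundary edges so that $E_{\mathrm{int}} \cup B$ is a single simple path from some start circle $s$ to some end cap $t$. The plan is to try each of the $O(n^2)$ candidate pairs $(s,t)$ and, for each, argue that $B$ is essentially unique. Let $V_1$ denote the boundary vertices of $E_{\mathrm{int}}$-degree $1$ and $V^* = V_1 \cup \{s,t\}$; the set $V^*$ partitions $\partial R$ into $|V^*|$ arcs. Every other boundary vertex is a non-endpoint of $E_{\mathrm{int}}$-degree $0$, so must receive $B$-degree in $\{0,2\}$, forcing each such arc to be either entirely in $B$ or entirely outside $B$. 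The required degree at each $V^*$-vertex (namely $b_v = 0$ if $v$ is a path endpoint lying in $V_1$, and $b_v = 1$ otherwise) imposes a cyclic ``flip/stay-out'' constraint on the two-coloring (in $B$ vs.\ not) of those arcs, which I can propagate once around $\partial R$ in $O(n)$ time to produce either a uniquely determined $B$ or detect a contradiction.

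The hard part will be the final connectivity check. Once $(s,t)$ and $B$ are fixed, all vertex degrees of $E_{\mathrm{int}} \cup B$ are correct by construction, so the graph is a disjoint union of exactly one $s$--$t$ simple path together with some number of disjoint simple cycles; a further $O(n)$ traversal tests whether any cycles are present. Unfortunately, the local degree constraints alone do not prevent a chord of $E_{\mathrm{int}}$ from closing up with boundary arcs of $B$ into an independent cycle, so this final combinatorial test is essential. If at least one pair $(s,t)$ yields a cycle-free graph, output the corresponding path; otherwise, conclude that $D$ is not a region decomposition.
\end{proofsketch}
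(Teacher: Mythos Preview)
Your approach is essentially the paper's: compute the forced interior edges $E_{\mathrm{int}}$, check local degree and acyclicity conditions, then for each candidate endpoint pair determine the boundary arcs by parity propagation and test that the result is a single path. One small slip to fix: when neither $s$ nor $t$ lies in $V_1$, every constraint at a $V^*$-vertex is a ``flip'' ($b_v=1$), and since $|V_1|$ is even the cyclic propagation admits two complementary choices of $B$, not a uniquely determined one; you need to try both (the paper does so explicitly). Alternatively, you can restrict $(s,t)$ to $V_1$ as the paper does---any realizing path can be truncated along its terminal boundary arcs to one whose endpoints are segment endpoints---and there your uniqueness claim holds, because $s\in V_1$ gives a ``force-out'' anchor for the propagation.
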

\begin{proof}
If the union of the sets of cells is not the set containing all cells in the puzzle, $D$ is not a region decomposition.  If any $D_i$ does not contain a cell adjacent to the boundary, $D$ is not a region decomposition, because the path along $D_i$'s outline is a cycle.  If any vertex in the puzzle is incident to cells in more than two regions, $D$ is not a region decomposition, because more than two edges incident to that vertex are forced to be in any solution.

Otherwise, we will search for a path $P$ inducing $D$.  Initially, $P$ consists of all edges between adjacent pairs of cells not in the same region (all paths inducing $D$ must contain these edges).  Because every $D_i$ contains at least one cell on the boundary, each path segment in $P$ begins and ends at a boundary vertex.  To avoid inducing additional regions, the edges we add to complete $P$ must lie entirely on the boundary.

For each pair $(s,t)$ of endpoints of (possibly different) segments, consider the other endpoint $s'$ of the segment starting at $s$.  The path can go clockwise or counterclockwise along the boundary, connecting $s'$ to the next endpoint in that direction.  After this choice, the boundary segments must alternate between not being in the path and being in the path, so that every endpoint except $s$ and $t$ has degree 2.  If there is such a choice of $s$, $t$ and boundary segment parity, then the resulting path induces $D$; otherwise, $D$ is not a region decomposition.  There are only $2i$ choices for $s$ and $t$ and two choices for parity, so this algorithm runs in polynomial time.
\end{proof}

Any $1$-path-universal set of clues is also $1$-region-universal.
We revisit the examples above that are not $1$-path-universal:

\begin{observation}
Vertex hexagons are not $1$-region-universal.
\end{observation}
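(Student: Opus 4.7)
The plan is to exhibit a single board on which two different solution paths both satisfy every possible vertex-hexagon-only constraint, yet induce different region decompositions; this immediately shows that no vertex-hexagon puzzle can force exactly one of the two decompositions. The natural candidate is already drawn in Figure~\ref{fig:vertex-hex-zigzag}: a small rectangular board with start circle at one corner and end cap at the opposite corner, admitting both a ``horizontal'' zigzag (traversing rows back and forth) and a ``vertical'' zigzag (traversing columns back and forth).

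First I would verify the two key properties of these zigzags. The crucial combinatorial observation is that both the horizontal and vertical zigzag are Hamiltonian paths of the grid graph from the start vertex to the end vertex, so the two paths visit \emph{exactly} the same set of vertices (namely, every vertex of the board). Consequently, for any choice of vertex-hexagon clues $H$, the set of vertex-hexagon constraints ``the path must visit $v$'' for $v \in H$ is satisfied by the horizontal zigzag if and only if it is satisfied by the vertical zigzag. Hence every vertex-hexagon-only puzzle with these start/end markers that has one of the two zigzags as a solution has the other as a solution as well.

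Next I would verify that the two zigzags induce \emph{different} region decompositions: the horizontal zigzag cuts the board into horizontal strips of cells, whereas the vertical zigzag cuts it into vertical strips, and these two collections of disjoint cell-sets are clearly not equal once the board is at least $3 \times 3$. Therefore the region decomposition induced by the horizontal zigzag and the one induced by the vertical zigzag are two distinct decompositions that any vertex-hexagon-only puzzle must either both admit or both exclude.

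To conclude, I would note that this precludes $1$-region-universality: if vertex hexagons were $1$-region-universal, there would exist a vertex-hexagon-only puzzle whose unique region decomposition is the one induced by the horizontal zigzag, but by the observation above such a puzzle also admits the vertical zigzag as a solution, which induces a different region decomposition, a contradiction. The only step that requires any care is checking that the two zigzags really do visit the same set of vertices on the chosen rectangle and that they share the same endpoints; this is a direct inspection of Figure~\ref{fig:vertex-hex-zigzag}, and there is no genuine obstacle in the argument.
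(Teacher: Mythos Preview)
Your argument is correct and uses the same example (Figure~\ref{fig:vertex-hex-zigzag}) and the same core observation as the paper: both zigzags visit exactly the same vertex set, so vertex hexagons cannot distinguish them, yet they induce different region decompositions. The paper phrases this via monotonicity (start from the maximal hexagon set, which both zigzags satisfy, and note that removing hexagons only adds solutions), whereas you observe directly that both paths are Hamiltonian and hence satisfy every possible vertex-hexagon subset; these are equivalent formulations of the same idea.
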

\begin{proof}
The puzzle in Figure~\ref{fig:vertex-hex-zigzag} contains the maximal set of vertex hexagons and admits at least two region decompositions (induced by the solutions shown in the figure).  Removing vertex hexagons from a puzzle can only add additional solutions and region decompositions, not remove them, so vertex hexagons cannot express only one of the region decompositions shown in the figure.
\end{proof}

\begin{observation}
Squares are not $1$-region-universal.
\end{observation}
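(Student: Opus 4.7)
The plan is to exhibit a single region decomposition $D_*$ that no square-clue arrangement can force uniquely. I would take $D_*$ to be the ``trivial'' decomposition in which the entire puzzle forms a single region; this is realized on any rectangular board (say $2 \times 2$) with start circle and end cap at diagonally opposite corners by the solution path that simply traverses the boundary.

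The key observation is that any square-clue arrangement consistent with $D_*$ must be monochromatic: since $D_*$ places all cells in a single region, every square clue in the puzzle lies in the same region, and so all square clues must share a common color. But a monochromatic square arrangement imposes no constraint on the solution path at all: every region of every region decomposition is automatically monochromatic in this arrangement. Consequently, every achievable region decomposition is a valid solution whenever $D_*$ is, so $D_*$ can never be forced to be the unique solution by squares alone.

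To finish, I would just verify that the chosen puzzle actually admits at least one other achievable region decomposition besides $D_*$. On the $2 \times 2$ board with diagonally opposite start and end, a staircase path through the central vertex yields a two-region decomposition (one region a single corner cell, the other the remaining three cells), so an alternative decomposition exists. Hence any square assignment forcing $D_*$ must also admit this second decomposition, and $D_*$ cannot be uniquely expressed.

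The proof is short, and the only subtlety is picking a puzzle shape that definitely supports more than one region decomposition; this is handled by choosing any board large enough to allow a non-boundary path, as above.
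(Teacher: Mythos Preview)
Your proposal is correct and follows essentially the same approach as the paper: both target the single-region decomposition and observe that any square arrangement admitting it must be monochromatic, hence imposes no constraint and admits other decompositions as well. The paper's version is terser (it phrases the argument as a dichotomy on the number of square colors), but the underlying idea is identical.
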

\begin{proof}
Squares cannot force the entire puzzle to be a single region.  If the puzzle contains only squares of a single color, the squares do not constrain the path or the region decomposition.  If the puzzle contains multiple colors of squares, all solutions induce at least two regions.
\end{proof}

\begin{observation}
Stars are not $1$-region-universal.
\end{observation}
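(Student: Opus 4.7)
The plan is to mirror the preceding arguments for squares and vertex hexagons by exhibiting a single concrete region decomposition that no star-only puzzle can force uniquely. I would use a $2\times 2$ puzzle with start and end at diagonally opposite corners, and target the decomposition $D$ that isolates the top-left cell from the remaining three cells. This decomposition is realized by an explicit simple path that wraps around the top-left cell, so by Lemma~\ref{thm:decomp-to-path} it is a genuine region decomposition.

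The key structural observation is that in any star-only puzzle for which $D$ is valid, the singleton region cannot contain a star at all, since a lone star would need a matching clue of the same color in its own region. Consequently every star clue lies in the three-cell region, with each color present appearing exactly $0$ or $2$ times there. I would then compare $D$ against the whole-puzzle decomposition $D'$, which is achievable via the boundary path from start to end: moving from $D$ to $D'$ simply absorbs the empty singleton cell into the three-cell region without changing any color's count, so every color in $D'$'s single region still appears $0$ or $2$ times. Hence $D'$ is simultaneously valid, contradicting the supposed uniqueness of $D$. The degenerate case of no stars gives an unconstrained puzzle, yielding the same conclusion.

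The only real pitfall I anticipate is a configuration that splits a color pair across the two regions of $D$, which might otherwise invalidate $D'$ while keeping $D$ valid. But a $1{+}1$ split of any color immediately violates $D$ itself, since the isolated singleton would then contain a lone unmatched star, and a $2{+}2$ split of a single color is geometrically impossible because the singleton region holds only one cell. With these cases ruled out, all stars collapse into the three-cell region and the merge argument goes through cleanly, establishing that stars cannot force $D$ and therefore are not $1$-region-universal.
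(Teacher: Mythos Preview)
Your proposal is correct and takes essentially the same approach as the paper: both argue that a singleton region cannot contain a star, so that cell must be empty, and then merging the empty cell into a neighboring region leaves all star counts unchanged and hence yields a second valid decomposition. The paper phrases this as a local modification of the path near the bottom-left cell rather than fixing a $2\times 2$ board and the explicit pair $D$, $D'$, but the content is the same.
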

\begin{proof}
Stars cannot express some region decompositions in which a region consists of a single cell.  Consider a desired decomposition in which the bottom-left cell forms a singleton region.  Any cell containing a star cannot be in a singleton region in any solution, so the bottom-left cell must be empty.  But stars are not affected by empty cells in their region, so if there is a solution path placing the bottom-left cell alone in a region, the path remains a solution after being locally modified to place the bottom-left cell in the adjacent region instead.
\end{proof}

\begin{restatable}{open}{polyominoregionuniversality} \label{open:poly-region-univ}
Are polyominoes $1$-region-universal?
\end{restatable}

It is tempting to think that monominoes are $1$-region-universal by
2-coloring the regions and placing monominoes in every cell in one color class.
While this forces any solution path to visit all the specified region
boundaries, it leaves the path free to further subdivide those regions.

\paragraph{Nonclue constraints.}

Previously uninteresting nonclue constraints become more interesting for universality.  For example, intersection puzzles inherently take the intersection of the solution sets of their component puzzles, so adding intersection to a set of clue types may allow specifying more solution sets than before, possibly making them universal.

\begin{theorem}
\label{thm:intersection-disjointness-cohesion-universal}
Intersection puzzles having
\begin{itemize}
\item one component puzzle containing only squares of $2$ colors, or
\item one component puzzle containing only monominoes;
\end{itemize}
and
\begin{itemize}
\item two component puzzles containing only stars of $q$ colors, or
\item $q-1$ component puzzles containing only stars of $1$ color, or
\item one component puzzle containing nonmonomino polyominoes, or
\item $q-1$ component puzzles containing only monominoes and antimonominoes, or
\item $q-1$ component puzzles containing only antibodies and antimonominoes,
\end{itemize}
are $1$-region universal, where $r$ is the number of regions in the input region decomposition and $q$ is half of the maximum number of exterior cells in any region in the input region decomposition, rounded down.
\end{theorem}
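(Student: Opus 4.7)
My plan is to establish $1$-region universality for each listed clue combination via a two-layer construction, realized as separate intersection components: a \emph{cohesion} layer (the first bullet) that forbids any solution region from spanning two target regions, and a \emph{disjointness} layer (the second bullet) that forbids any target region from splitting. A standing ingredient is that every region decomposition $D = \{D_1, \ldots, D_m\}$ arising from a simple start-to-end Witness path is $2$-colorable: closing the path with an arc through the puzzle exterior yields a Jordan curve, and coloring each region by the side of this curve on which it lies gives a proper $2$-coloring of the region-adjacency graph.

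For the cohesion layer, I will assign each target region to its $2$-coloring class and fill every cell either with the corresponding square color (in the squares-of-two-colors variant) or with a monomino versus an empty cell (in the monomino variant); in either case any merger of target regions produces a solution region containing both square colors or mixing monominoes with empty cells, violating its clue constraint---even for non-adjacent mergers, which must pass through an opposite-class intermediate region. For the disjointness layer, the key observation is that any split of $D_i$ corresponds to a simple chord of $D_i$'s boundary cycle meeting the cycle at two crossings between consecutive exterior cells; my goal is to place clue-pairs on the exterior cells so that every such chord separates at least one pair. For two star puzzles of $q$ colors, I place pairs at positions $(1,2),(3,4),\ldots,(2q-1,2q)$ in the first puzzle and at the offset positions $(2,3),(4,5),\ldots,(2q,1)$ in the second, and verify by case analysis on the two chord-crossing positions that some pair is always separated. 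I will adapt this interlocking scheme to each of the remaining disjointness variants: one pair per puzzle for the $q-1$ single-color star case, a single region-shaped clue per region for the nonmonomino polyomino case, and area-balance arrangements across the puzzles for the $q-1$ (anti)monomino and antibody plus antimonomino cases, in each instance exploiting the puzzle-boundary portion of each region's outline (which no chord can cross) to save puzzles.

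The main obstacle I expect is verifying the disjointness layer for the sparser $(q-1)$-puzzle constructions, where only $O(q)$ pairs must catch every chord of every region. This will require using the puzzle-boundary portion of each region's outline to ``pre-catch'' one direction of the pairing, so that the remaining $q-1$ pairs cover every chord-crossing pair not already ruled out. A related challenge is designing the polyomino shapes for the nonmonomino polyomino variant so that a single per-region clue packs only the exact target shape and any split yields an unpackable piece, which may require tab-and-pocket-style encodings and per-region tailoring to rule out unintended packings.
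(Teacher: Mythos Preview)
Your two-layer plan is correct and matches the paper's decomposition, though you have inverted the labels: what you call the ``cohesion'' layer (squares/monominoes preventing cells of distinct $D_i$ from sharing a region) is what the paper calls \emph{disjointness}, and what you call ``disjointness'' (stars/polyominoes/etc.\ preventing a single $D_i$ from splitting) is the paper's \emph{cohesion}. The first-bullet construction you sketch (2-color the regions and fill with squares or monominoes accordingly) is exactly the paper's.

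For the second bullet, your chord-blocking viewpoint works but is more laborious than necessary. The paper argues by transitivity rather than by exhausting chords: a star pair forces its two cells into the same region, so placing pairs $(0,1),(2,3),\dots$ in one puzzle and $(1,2),(3,4),\dots$ in the other chains all non-interior cells of each $D_i$ into one region directly (the interior cells then follow because their separating edges would form a cycle). The single-color-stars, monomino/antimonomino, and antibody/antimonomino variants just replace ``star pair'' by the appropriate two-clue gadget and spread the consecutive pairs across separate component puzzles; no chord case analysis or puzzle-boundary argument is needed, because the same-region relation is transitive. The paper explicitly notes that these cohesion arguments may assume disjointness from the first-bullet puzzle, so a lone star in a fragment of $D_i$ cannot find a partner in some $D_j$.

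You are also overcomplicating the nonmonomino-polyomino case. No tab-and-pocket encoding is required: the paper simply places, in each $D_i$ with more than one cell, a single polyomino clue whose shape is $D_i$ itself. Given disjointness, the only way to pack that clue is to leave $D_i$ unsplit, and any subdivision changes the region shape so the polyomino cannot fit.

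\paragraph{Bottom line.} Your plan would go through, but the chord-enumeration and boundary-exploitation machinery you anticipate is unnecessary; transitivity of ``in the same region'' collapses the cohesion argument to a few lines, and the polyomino case is a one-liner once you use region-shaped clues.
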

\begin{proofsketch}
The clue types in the first set of component puzzles constrain groups of cells to be in different regions (\emph{disjointness}).  Conversely, the clue types in the second set of component puzzles constrain groups of cells to be in the same region (\emph{cohesion}).  Intersection puzzles allow us to overcome the one-clue-per-cell limit to fully specify the region decomposition using these groupwise constraints.  For colored clue types, there is a trade-off between the number of components and the number of colors.
\end{proofsketch}
\begin{proof}
We give an algorithm taking as input a region decomposition $D = \{R_1, R_2, \dots, R_r\}$ and producing an intersection puzzle for which all solutions induce $D$ (and there is at least one such solution).  Each component puzzle is shaped like the union of the $R_i$.  By Lemma~\ref{thm:decomp-to-path}, we can produce a path $P$ inducing $D$.  In every component puzzle, we place the start circle and end cap at the endpoints of $P$.

\paragraph{Disjointness.}  The component puzzle enforcing disjointness constrain the solution path to induce $D$ or a refinement of $D$ (splitting some $R_i$ into multiple regions).

\begin{itemize}
\item \textbf{Squares of $2$ colors:} We 2-color $D$ and create one component puzzle in which each cell contains a square of that cell's color.  Regions must be monochromatic in squares, so this puzzle enforces disjointness.

\item \textbf{Monominoes:} We 2-color $D$ and create one component puzzle having monomino clues in all the cells in one color class, leaving the cells in the other class empty.  By the polyomino area constraint, no solution can place a cell containing a monomino clue in the same region as an empty cell, so this puzzle enforces disjointness.
\end{itemize}

\paragraph{Cohesion.}  The component puzzle(s) enforcing cohesion constrain the solution to place cells in the same $R_i$ in the same region.  Note that our constructions for cohesion can assume disjointness, because it is enforced by the first component puzzle.

The constructions for some sets depend on a constant $q$, defined as follows.  First, define the \emph{interior cells} of $R_i$ to be the cells in $R_i$ having all four neighbors also in~$R_i$, and call the remaining cells of $R_i$ \emph{non-interior}.  Then $q$ is half of the maximum number of non-interior cells in any $R_i$, rounded down.

\begin{figure}
\centering
\subcaptionbox{\label{fig:cohesion-starsmulti-numbering} Numbering the non-interior cells.  The interior cells are shaded.}{%
  \includegraphics[width=.3\textwidth]{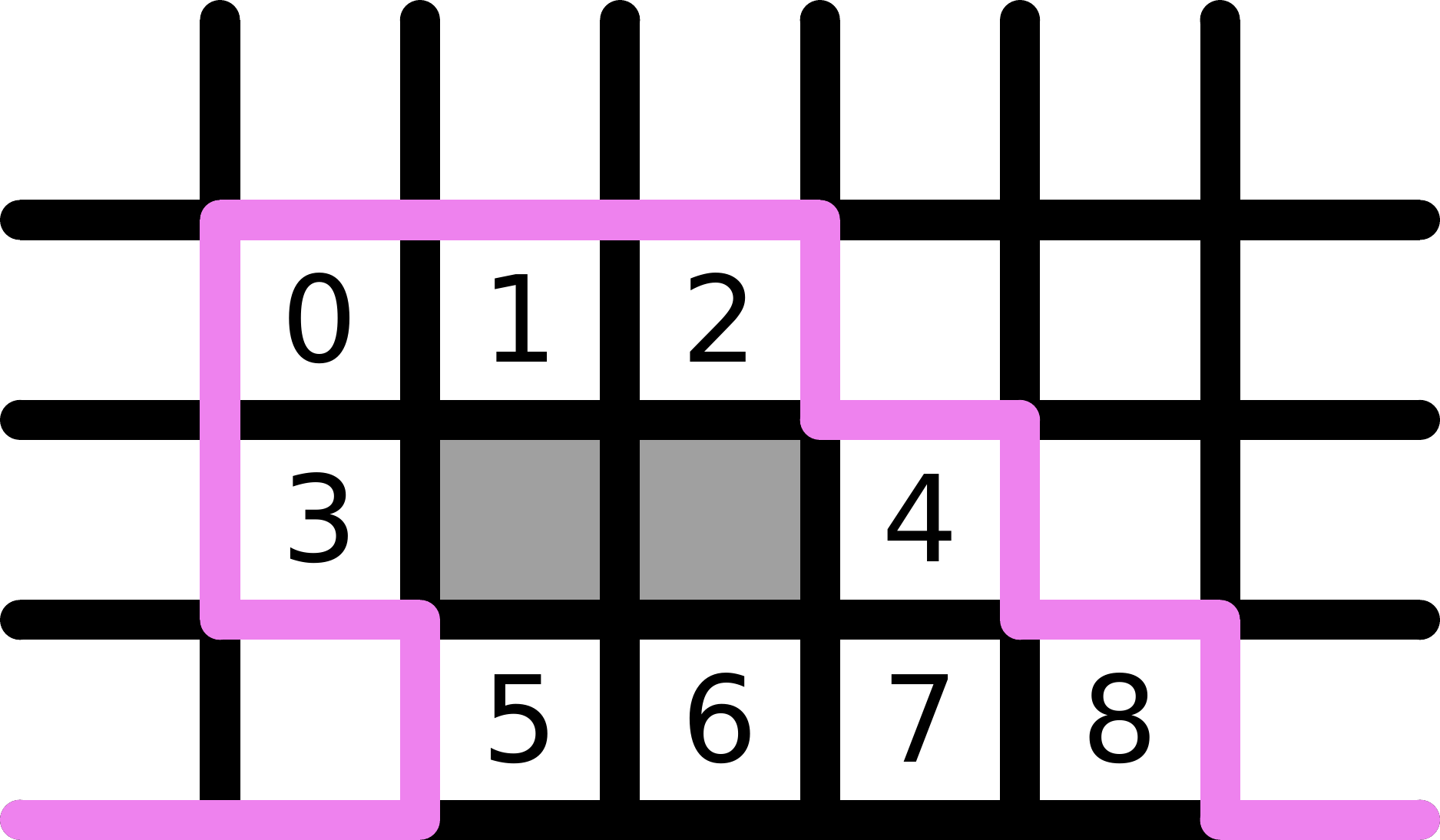}%
}~~~~
\subcaptionbox{\label{fig:cohesion-starsmulti-boards} The two component puzzles implementing cohesion.}{%
  \includegraphics[width=.3\textwidth]{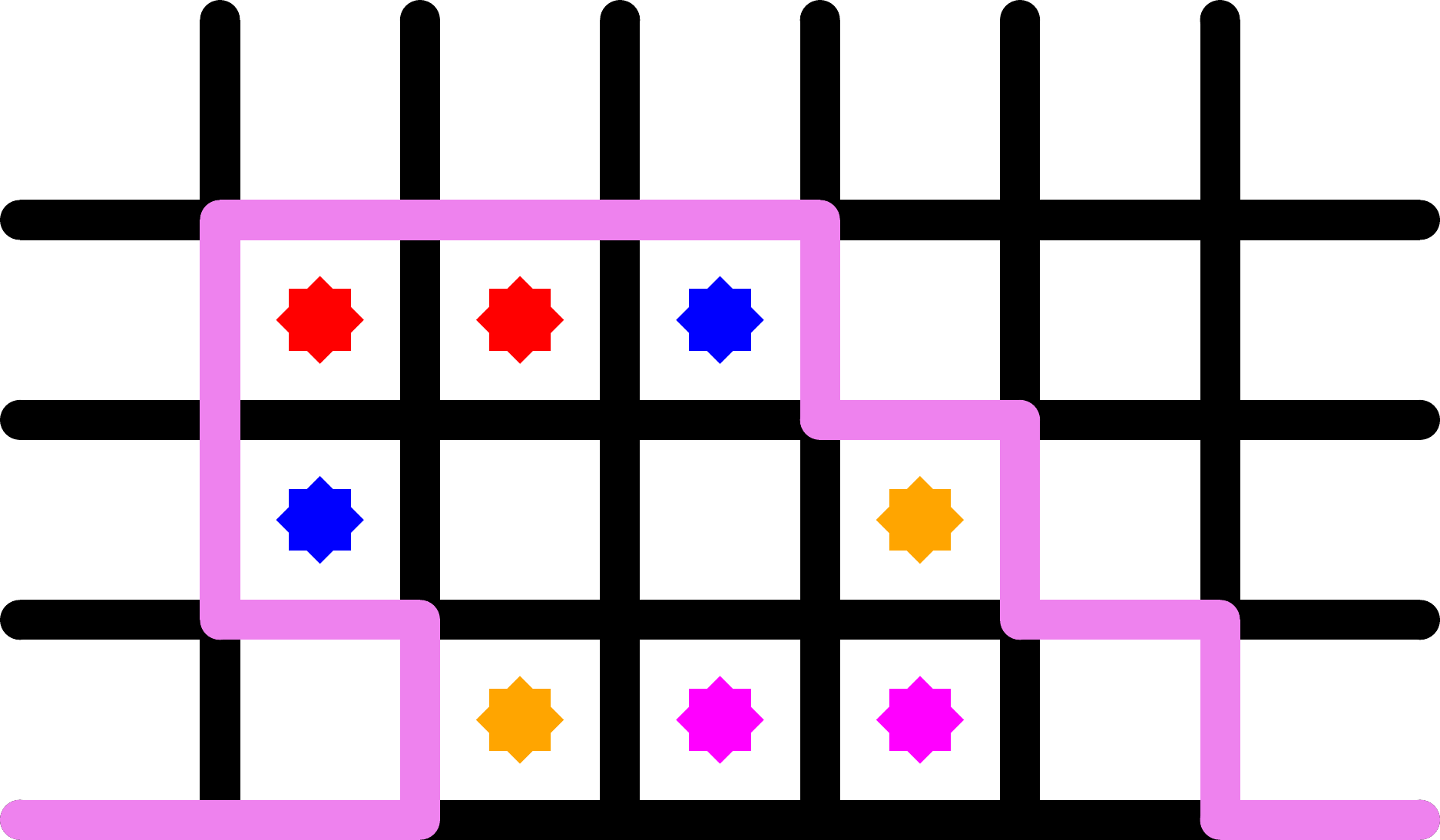}~~%
  \includegraphics[width=.3\textwidth]{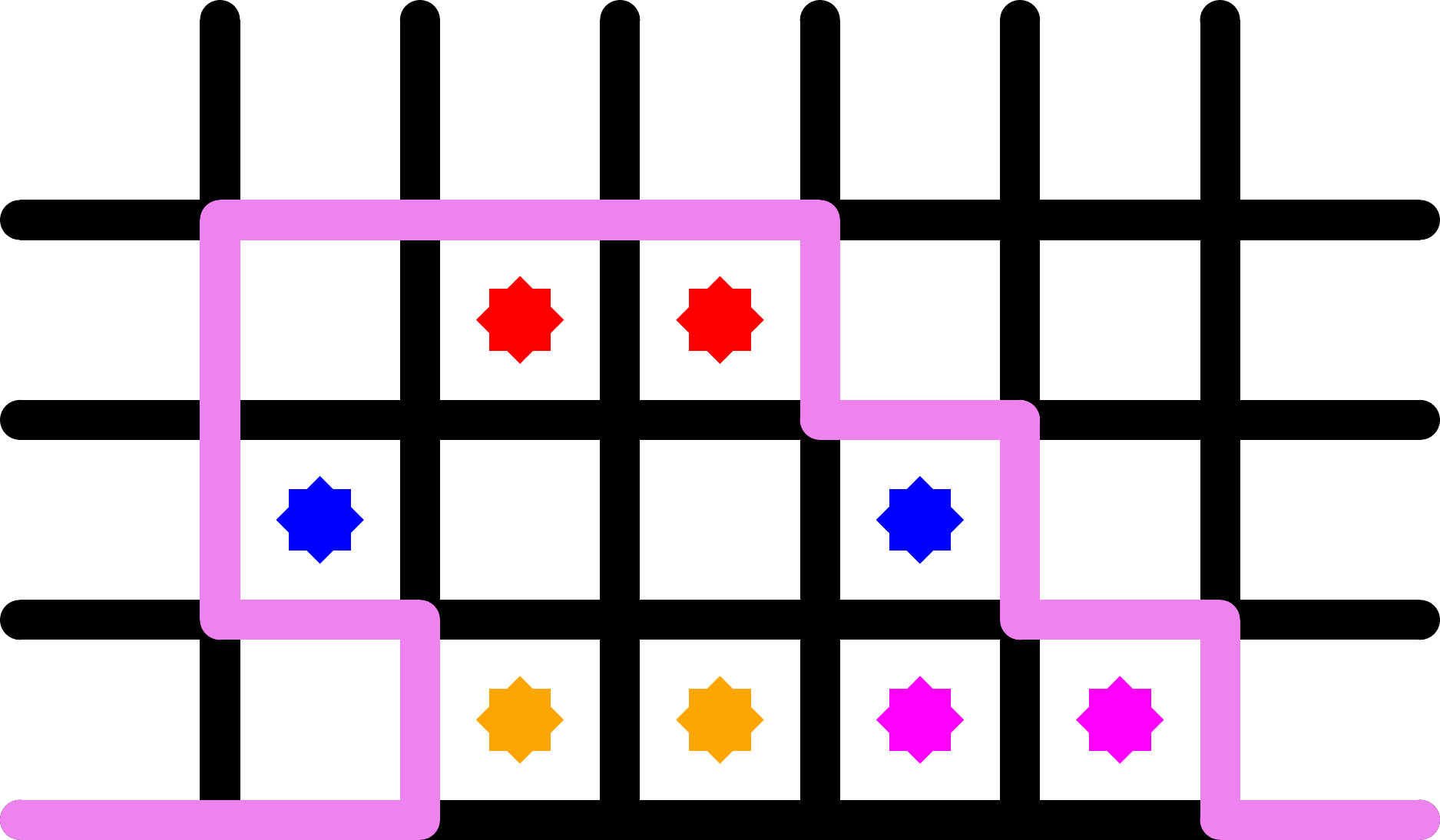}%
}
\caption{An example of implementing cohesion using stars of $q$ colors.}
\label{fig:cohesion-starsmulti}
\end{figure}

\begin{itemize}
\item \textbf{Stars of $q$ colors:} Create two component puzzles.  Arbitrarily number the non-interior cells of each $R_i$ starting at $0$.  In each $R_i$, place stars of color $c$ in the even-parity cell pairs $(2c, 2c+1)$ in the first component puzzle and in the odd-parity cell pairs $(2c+1, 2c+2)$ in the second component puzzle.  See Figure~\ref{fig:cohesion-starsmulti} for an example.

There may be up to $2r$ stars of each color in each component puzzle, but any path satisfying disjointness cannot pair any star in $R_i$ with a star in $R_j$ for $i \neq j$, so each star color is effectively unique within $R_i$.  Then cells numbered $2c$ and $2c+1$ must be in the same region to satisfy the stars in the first component puzzle, and cells numbered $2c+1$ and $2c+2$ must be in the same region to satisfy the stars in the second component puzzle.  Then by transitivity, all non-interior cells of each $R_i$ must be in the same region in any solution.  Any path under which the non-interior cells are in the same region also places the interior cells in that region because the edges separating the interior cells from the non-interior cells constitute a cycle.  Thus, all cells in each $R_i$ must be in the same region.

\item \textbf{Stars of $1$ color:} Create $q-1$ component puzzles.  As in the previous construction, number the non-interior cells of each $R_i$ starting at $0$.  In puzzle $p$, in each $R_i$ containing a $(p, p+1)$ cell pair, place a star in those cells.

In each puzzle, each $R_i$ contains only two stars, so those cells must be in the same region in any solution, and so by transitivity all non-interior cells in each $R_i$ are in the same region.  Then the interior cells are also in the region by the previous argument, so all cells in each $R_i$ must be in the same region.

\item \textbf{Nonmonomino polyominoes:} Create one component puzzle.  For each $R_i$ containing more than one cell, place a polyomino clue shaped exactly like $R_i$ in the leftmost bottommost cell in $R_i$.  A path satisfying disjointness can only satisfy these polyomino clues by refraining from further dividing the $R_i$.  (We do not place monomino clues in single-cell $R_i$ because they cannot be divided further in any case.)

\item \textbf{Monominoes and antimonominoes:} Use the construction for stars of $1$ color, except when placing pairs of stars, place a monomino and an antimonomino instead.  In each of the $q-1$ component puzzles, each $R_i$ contains only a single monomino and antimonomino, so they must be in the same region in any solution.  Then follow the same argument.

\item \textbf{Antibodies and antimonominoes:} Use the construction for stars of $1$ color, except when placing pairs of stars, place an antibody and an antimonomino instead.  In each of the $q-1$ component puzzles, each $R_i$ contains only a single antibody and antimonomino, so they must be in the same region in any solution.  Then follow the same argument.
\end{itemize}

The first set of component puzzles constrains any solution to place cells in different $R_i$ in different regions and the second set constrains any solution to place cells in the same $R_i$ in the same region.  Thus, any solution path must induce $D$, and $P$ is such a path.
\end{proof}

\subsection{Further Variants}

All of the analysis above looked for exact matches to a path or region
decomposition.  We can also consider applying a scale factor, where each
edge in the specified path is represented by $s$ consecutive edges,
or each cell in the specified region decomposition is represented by an
$s \times s$ square, in the produced puzzle.

Returning to the motivation of specifying $k > 1$ paths or region
decompositions, metapuzzles differ in the size of the sets they require.
Light-bridge metapuzzles, for example, require only a small constant number of
paths, because the player will be spending time traversing the bridge and
interacting with any panels or other objects in the void.  But the moving ramp
and log claw in the sawmill can be moved to a linear number of positions,
so we might hope for $O(n)$-path or -region universality.

\begin{restatable}{open}{universalityclassification} \label{open:universality-classification}
Classify sets of clue types and/or nonclue constraints
by their degree of path and region universality, or more generally,
characterize the sets of paths and regions they can express.
\end{restatable}

\section{Open Problems}
\label{sec:conclusion}

We restate here the open problems defined in the previous sections, then briefly suggest additional possible lines of inquiry.

\vertexhexagons*
\star*
\lightbridge*
\puzzledesign*
\polyominoregionuniversality*
\universalityclassification*

\paragraph{Parsimony.}

Which hardness reductions can be made (or are already) parsimonious, and thus
prove \#P-hardness and ASP-hardness?
For example, Theorem~\ref{thm:triangles1} is a parsimonious reduction
from X3C, and X3C is \#P- and ASP-complete
\cite{Hunt-Marathe-Radhakrishnan-Stearns-1998}.
On the other hand, Theorem~\ref{thm:squares 2 color} is a parsimonious reduction
from \trvb, but is \trvb\ \#P/ASP-hard?

\paragraph{Fixed-parameter tractability with respect to board height.}

We conjecture that many Witness puzzles can be solved in polynomial time (XP)
when there are a constant number of rows in the puzzle, using dynamic
programming over the possible vertical cross-sections and how they are
topologically connected on either side.
In particular, we believe this to be possible for broken edges, hexagons,
triangles, squares, stars, and constant-sized polyominoes and antipolyominoes.
This leaves three clue types:
antibodies and unbounded polyominoes and antipolyominoes.
Antibodies seem difficult to handle because of the need to verify that they
are necessary.
Unbounded polyomino clues are NP-hard in $2 \times n$
puzzles by a simple reduction from 3-Partition (similar to
\cite[Theorem~3]{Ebbesen-Fischer-Witt-2011}), so are not in XP unless P = NP.

Beyond just membership in XP as above, are any of these problems
fixed-parameter tractable with respect to the number of rows?
For example, we conjecture that this is possible for broken edges, hexagons,
triangles, and squares and stars of $O(1)$ colors.


\section*{Acknowledgments}

We thank Jason Ku and Quanquan Liu for helpful discussions related to this paper.
Most figures were produced using SVG Tiler
(\url{https://github.com/edemaine/svgtiler}),
based on SVG source code from The Windmill
(\url{https://github.com/thefifthmatt/windmill-client/}),
a clone of Witness-style puzzles distributed under Apache License 2.0.


\bibliographystyle{alpha}
\bibliography{Witness}


\appendix\magicappendix

\section{Adversarial-Boundary Edge-Matching Problem is $\Sigma_2$-complete}
\label{appendix Adversarial-Boundary Edge-Matching Problem Sigma_2}

\begin{proof}[Proof of Lemma~\ref{Adversarial-Boundary Edge-Matching Problem Sigma_2}]
  We reduce from from QSAT$_2$,
  which is the $\Sigma_2$-complete problem of deciding
  a Boolean statement of the form
  $\exists x_1 : \exists x_2 : \cdots : \exists x_n :
   \forall y_1 : \forall y_2 : \cdots : \forall y_n :
   f(x_1, x_2, \dots, x_n; y_1, y_2, \dots, y_n)$
  where $f$ is a Boolean formula using \textsc{and} ($\wedge$),
  \textsc{or} ($\vee$), and/or \textsc{not} ($\neg$).  We convert this formula into a circuit, lay out the circuit on a square grid, and implement each circuit element as a set of tiles.  The first player's boundary-edge swaps encode a setting of true or false for the first player's variables.  Then, as part of solving the edge-matching problem, the second player must exhibit a setting of their variables that makes the formula false; otherwise the first player wins.

\paragraph{Conversion to circuit.} We convert the formula into a Boolean circuit with \textsc{and},
  \textsc{or}, and \textsc{not} gates for the Boolean operations, plus
  $2n$ \textsc{input} gates to represent the $x_i$s and $y_i$s,
  binary \textsc{fan-out} gadgets to make copies of these values,
  and one \textsc{output} gate to represent the formula output
  (which must be \textsc{true}).

\paragraph{Circuit layout.} We construct a planar layout of this circuit on an integer grid.  Each cell in the grid views itself as being ``connected'' to zero, one, two, or three of its edge neighbors:
\begin{itemize}
\item Each \textsc{and} and \textsc{or} gate maps to a three-neighbor cell (two neighbors for the inputs, one neighbor for the output).
\item Each \textsc{fan-out} gate maps to a three-neighbor cell (one neighbor for the input, two neighbors for the outputs).
\item Each \textsc{not} gate maps to a two-neighbor cell (one neighbor for the input, one neighbor for the negated output).
\item Each \textsc{input} gate maps to a one-neighbor cell (one neighbor for the output).
\item Each \textsc{output} gate maps to a one-neighbor cell (one neighbor for the input).
\item Each wire between gates maps to a (possibly empty) path of two-neighbor \textsc{wire} cells (one neighbor for the input, one neighbor for the copied output).
\item All other cells are zero-neighbor \textsc{null} cells.
\end{itemize}

  We construct the layout as follows.  Let $g$ be the number of
  (\textsc{and}, \textsc{or}, \textsc{fan-out}, \textsc{not}, \textsc{input},
  and \textsc{output}) gates.
  Because the graph has maximum degree $3$, we can find an
  orthogonal layout in a $({g \over 2}+1) \times {g \over 2}$ grid of cells
  \cite{Papakostas-Tollis-2005}.

  We want each $x_i$ input to appear in the top row of the grid, in an even column, with its output on the south side, with all other cells in the top row (including all odd columns) being \textsc{null} cells.  To guarantee this,
  we add a top row, move each $x_i$ \textsc{input} gadget
  to an even column in this top row, and connect the old location of that gadget
  to the new location via an orthogonally routed edge
  by adding $O(1)$ new empty rows and columns.

  Then we replace each orthogonally routed edge by a path of \textsc{wire} cells.
  Additionally, we scale the coordinates by a constant factor,
  and replace each crossover between pairs of edges
  with a crossover gadget built out of \textsc{and} and \textsc{not} gates,
  implementing a \textsc{nand} crossover which can be built from the standard
  3-\textsc{xor} crossover \cite{Goldschlager-1977}.

\paragraph{Constructing the board.} The \abem{} board is divided into two regions, an upper \emph{circuit} region and a lower \emph{garbage} region.  We emit a row of tiles called the \emph{wasteline} that is forced to separate the regions (explained later).  The circuit region is tileable only if the circuit is satisfied.  The garbage region is always tileable; it provides space for tiles that weren't used in the circuit region to be placed.

\paragraph{Circuit region.} In the circuit region, each cell in the circuit layout is mapped to a set of tiles, one of which will be used in that position in the solution.  The other tile choices for that cell will be garbage collected in the garbage region.  The first player's boundary-edge swaps set the $x_i$ inputs to the circuit.  The second player will then try to solve the resulting problem by choosing a value for the $y_i$ inputs, then evaluating the circuit; all other tile choices are forced by the inputs.

In the circuit region of the board, all tile edges encode a circuit signal from the set $\{\text{true}, \text{false}, \text{null}\}$.  To preserve the circuit layout in all legal placements of the tiles, the top and bottom edges of each tile in the circuit region also encode their position.  The tile at position $(x, y)$ (where the origin is the upper-left tile) encodes $(x, y)$ on its top edge and $(x, y+1)$ on its bottom edge.  In this way, we can associate each circuit cell with the set of tiles that may appear in the corresponding slot in the tiling problem.

The set of tiles produced for each cell depends on its type.  Unless explicitly mentioned otherwise, the position encoding on the top and bottom edges is as above.

\begin{itemize}
\item \textsc{and}, \textsc{or}: four tiles encoding the truth-table operation on the edges connecting to other cells.  For example, an \textsc{and} cell with inputs on its top and left edges and output on its right edge would produce four tiles with $(F, F, F)$, $(F, T, F)$, $(T, F, F)$ and $(T, T, T)$ as the circuit signals on its top, left and right edges respectively, with null on its bottom edge.
\item \textsc{not}, \textsc{wire}, \textsc{fanout}: two tiles encoding their truth table on the edges connecting to other cells and null on the other edges.
\item \textsc{output}: one tile with the false circuit signal on its connected edge and null on its other edges.  Because there is no tile for this cell with true on its connected edge, the tiling can only be completed when the circuit evaluates to false.
\item \textsc{input} representing $x_i$: two tiles that propagate the circuit signal selected by the first player's top-boundary swap.  By the layout process above, these cells are always in the top row and an even column.  These tiles encode their column pair number instead of their position on their top edge, along with a circuit signal (so either $(\frac{x}{2}, T)$ or $(\frac{x}{2}, F)$).  The bottom edge uses the usual position encoding and propagates the circuit signal (so $((x, 1), T)$ or $((x, 1), F)$, respectively).  Their left and right edges always encode the null circuit signal.
\item \textsc{input} representing $y_i$: two tiles, one with true and the other with false on the connecting edge, and null on the other edges.
\item \textsc{null}: one tile with the null circuit signal on all edges.  For cells in the top row, the top color encodes $\lfloor \frac{x}{2} \rfloor$ instead of the usual $(x, 0)$ position, so that the first player's swapping does not affect the tile's placeability.  Cells in the top row immediately to the right of an \textsc{input} cell representing $x_i$ instead have two tiles in their set, encoding true and false circuit signals on their top edge (and null on their other edges), to terminate whichever signal the first player did not assign to $x_i$.
\end{itemize}

\def\tilescale{0.68}
\begin{figure}
\centering
\subcaptionbox{\textsc{and} with inputs from up and left and output down.}{%
  \includegraphics[scale=\tilescale]{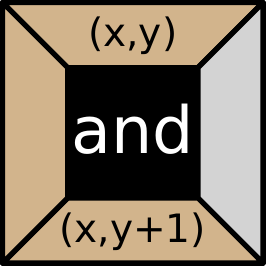}
  \includegraphics[scale=\tilescale]{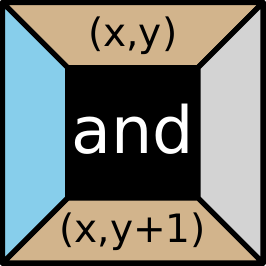}
  \includegraphics[scale=\tilescale]{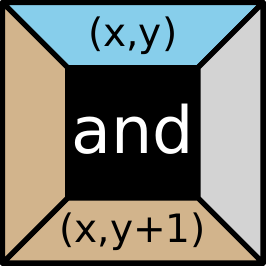}
  \includegraphics[scale=\tilescale]{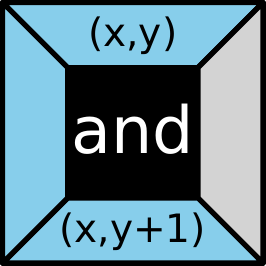}
}\hfil
\subcaptionbox{\label{or-tiles} \textsc{or} with inputs from up and left and output down.}{%
  \includegraphics[scale=\tilescale]{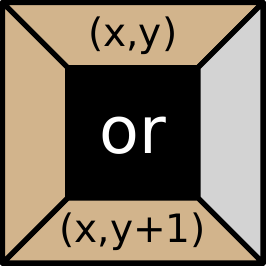}
  \includegraphics[scale=\tilescale]{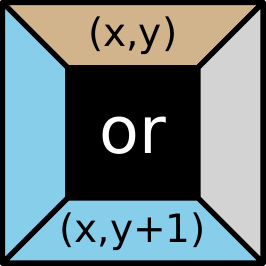}
  \includegraphics[scale=\tilescale]{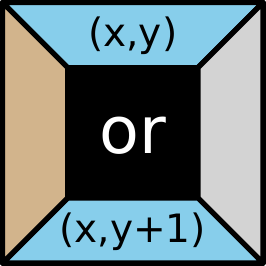}
  \includegraphics[scale=\tilescale]{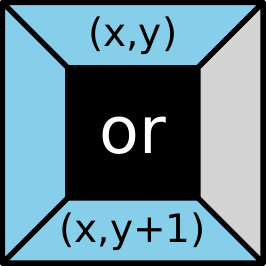}
}

\subcaptionbox{Horizontal \textsc{not}.}{
  \includegraphics[scale=\tilescale]{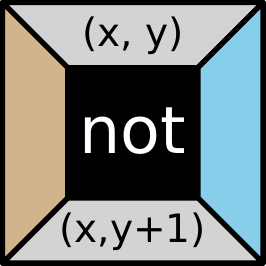}
  \includegraphics[scale=\tilescale]{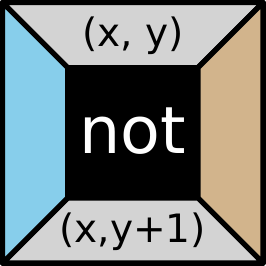}
}\hfil
\subcaptionbox{Horizontal \textsc{wire}.}{
  \includegraphics[scale=\tilescale]{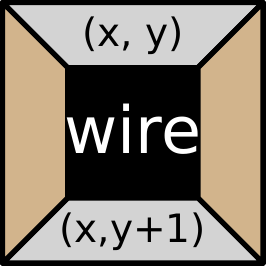}
  \includegraphics[scale=\tilescale]{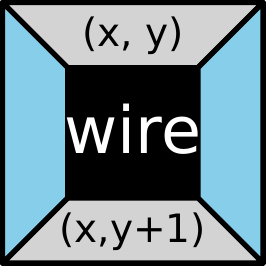}
}\hfil
\subcaptionbox{\textsc{fanout}.}{
  \includegraphics[scale=\tilescale]{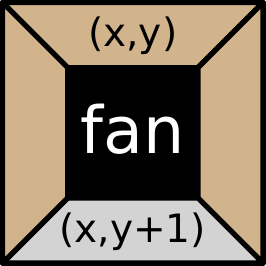}
  \includegraphics[scale=\tilescale]{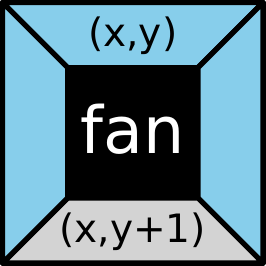}
}\hfil
\subcaptionbox{\textsc{null}.}{
  \includegraphics[scale=\tilescale]{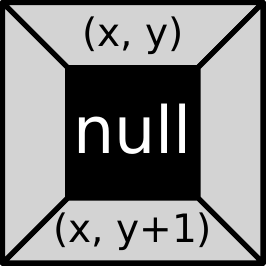}
}

\subcaptionbox{\textsc{input} representing $x_i$.}{
  \includegraphics[scale=\tilescale]{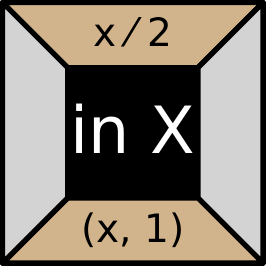}
  \includegraphics[scale=\tilescale]{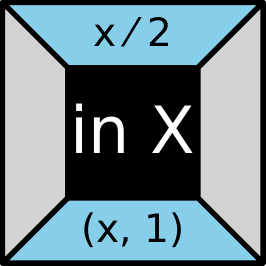}
}\hfil
\subcaptionbox{Special \textsc{null}s to the right of \textsc{input}s representing $x_i$.}{
  \includegraphics[scale=\tilescale]{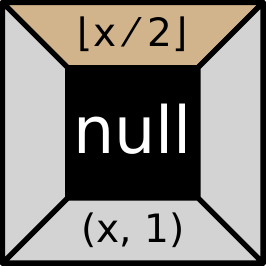}
  \includegraphics[scale=\tilescale]{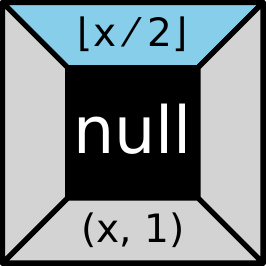}
}\hfil
\subcaptionbox{\textsc{input} representing $y_i$, output on the right.}{
  \includegraphics[scale=\tilescale]{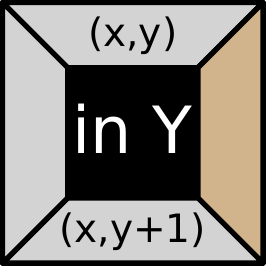}
  \includegraphics[scale=\tilescale]{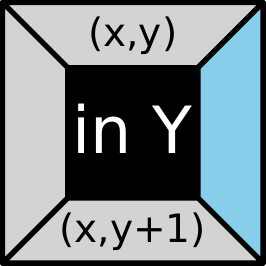}
}\hfil
\subcaptionbox{\textsc{output}, input from up.}{
  \includegraphics[scale=\tilescale]{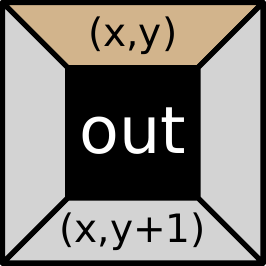}
}
\caption{Examples of tile sets produced for each type of cell in the circuit region.  $x$ and $y$ represent the position of the cell and tan, blue and gray represent the false, true and null circuit signals respectively.  Except for the \textsc{input}s representing $x_i$ and their special \textsc{null} neighbors, each tile type can be instantiated with connecting edges in any direction.}
\end{figure}

\paragraph{Wasteline.} The wasteline is a row of tiles that transitions from the circuit region to the garbage region.  The top of each tile encodes the null circuit signal and the position of the cell in the same way as in the circuit region.  The left and right edges in each tile are unique colors encoding the column of the tile, forcing the wasteline to occur in order as a contiguous row in the tiling.  The bottom of each tile is the garbage color~$g$.

\paragraph{Garbage region.} The garbage region provides space for the unused tiles from each set to be placed.  For each tile $t$ in the circuit region (even if it's the only tile in its set), we emit four garbage tiles, each having one of $t$'s edge colors on one side and $g$ on the other three sides, and four tiles having $g$ on all four edges, allowing that tile to be encapsulated into a $3 \times 3$ block having $g$ on its exterior, as in Figure~\ref{gc-3x3}.

\begin{figure}
\centering
\subcaptionbox{\label{gc-3x3} One of the three $3 \times 3$ garbage blocks encapsulating the unused tiles.}{
  $\vcenter{\hbox{\includegraphics[scale=\tilescale]{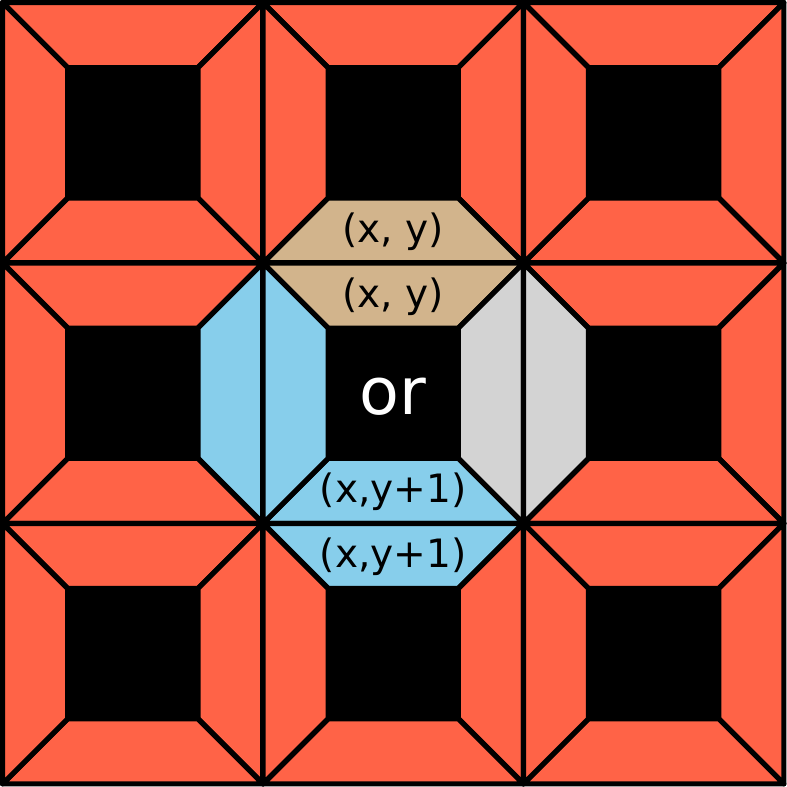}}}$
}\hfil
\subcaptionbox{\label{gc-2x2} The garbage dominoes collecting the garbage collection tiles that would have been used for the both-inputs-true \textsc{or} tile had it not been placed in the circuit region.}{
  $
  \vcenter{\hbox{\includegraphics[scale=\tilescale]{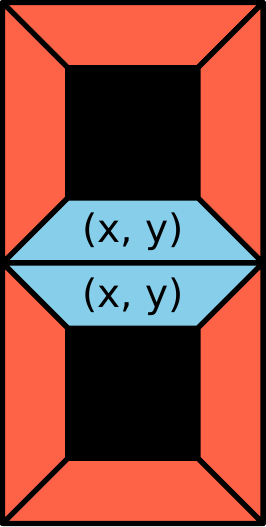}}}
  ~
  \vcenter{\hbox{\includegraphics[scale=\tilescale]{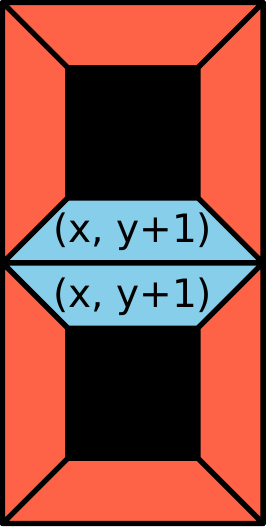}}}
  ~
  \vcenter{\hbox{\includegraphics[scale=\tilescale]{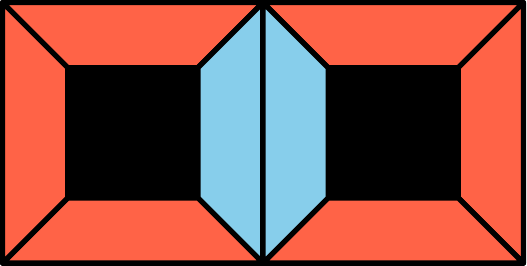}}%
    \vskip 0.5ex
    \hbox{\includegraphics[scale=\tilescale]{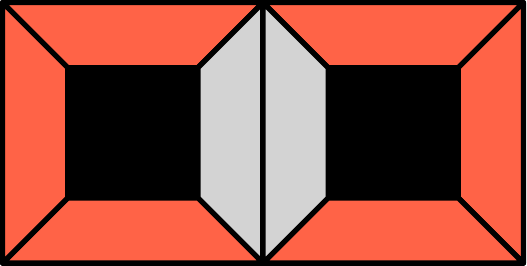}}}
  $
}
\caption{Collecting garbage for the \textsc{or} cell with tile set shown in Figure~\ref{or-tiles}, assuming the rightmost (both inputs true) tile was placed in the circuit region.}
\end{figure}

Consider the boundary edges of the circuit region (that is, the entire top board boundary, parts of the left and right board boundary, and the top edges of the wasteline).  In a valid tiling, these edges will always match a circuit tile edge, so for each of those edges we remove a corresponding garbage collection tile (with that boundary color and three $g$ edges) that we just produced.

We then need to provide a way to garbage collect the garbage collection tiles for the used tiles.  Each nonboundary edge of each used tile corresponds to an edge of a neighboring used tile, so we can pair the corresponding garbage collection tiles to create a vertical or horizontal domino having $g$ on its exterior, as in Figure~\ref{gc-2x2}.

Finally, we need to provide exactly enough tiles having $g$ on all four sides to fill the garbage region.  If the circuit region is $w$ tiles wide and $h_c$ tiles tall (assuming $w \geq 3$) and we produced $t$ total circuit tiles, we have $t - wh_c$ $3 \times 3$ blocks, $2wh_c - 2w$ vertical dominoes and $2wh_c - 2h_c$ horizontal dominoes to pack.  Using a simple strategy of packing the $3 \times 3$ blocks horizontally in as many rows as necessary, followed by the vertical dominoes and horizontal dominoes, each type starting on a new row, the garbage region has height $h_g = 3\lceil \frac{3(t-wh_c)}{w} \rceil + 4(h_c - 1) + \lceil \frac{4h_c(w-1))}{w}\rceil$.  We provide $wh_g - 9(t - wh_c) - 2(4wh_c - 2w - 2h_c)$ tiles with $g$ on all four sides to pack the remaining area in the garbage region.

\paragraph{Boundary-colored board.} The board's width is determined by the width of the circuit region, which is in turn determined by the circuit layout.  The height is the circuit region height, plus 1 for the wasteline, plus the garbage region height (i.e., $h_c + 1 + h_g$).  The top edge of the board is colored $(\lfloor \frac{x}{2} \rfloor, T)$ and $(\lfloor \frac{x}{2} \rfloor, F)$ above the \textsc{input} cells corresponding to $x_i$ variables and the \textsc{null} cells immediately to their right, and $(\lfloor \frac{x}{2} \rfloor, N)$ elsewhere.  The left and right edges of the board above the wasteline have the null circuit signal color.  The left and right edges at the wasteline have the unique colors on the left (right) edge of the leftmost (rightmost) wasteline tile.  Below the wasteline, the left, right and bottom edges are all colored $g$.

\paragraph{Argument.} The first player swaps pairs of top-boundary edges, then the second player attempts to tile the resulting board.

Garbage tiles cannot be placed in the circuit region.  Because each garbage tile has $g$ on three or four of its edges, there is no placement of garbage tiles with non-$g$ colors on its entire exterior.  Thus, if any garbage tiles are placed in the circuit region, the entire region must be filled with garbage tiles, but this is impossible because the corners of the boundary of the circuit region have two non-$g$ edges.

The wasteline's placement is fixed by the unique colors on the left and right boundary.  The wasteline tiles encode their position on their top edge, so this also fixes all the tiles above in place; each tile in the circuit region must be from the set created for the corresponding circuit cell.  Then the tiles corresponding to the $x_i$ \textsc{input} cells must correctly conduct the circuit signal (true or false) from the boundary.  (Swaps of the boundary above the \textsc{null} cells in the top row change nothing because those edges have the same color.)

If the second player knows an assignment to the $y_i$ variables that makes the circuit evaluate to false, they can select the tiles conducting that circuit signal from the sets corresponding to their \textsc{input} cells.  After that, the tiling in the circuit region is forced; from the sets of tiles corresponding to gate and wire cells, only one tile can be placed (based on the inputs from its connecting cells), and its output connection forces further cells.  If the circuit indeed evaluates to false, then the tiling can be completed at the \textsc{output} cell, whose only tile requires its input to be false.  In that case, the remaining tiles can be garbage collected (by the construction of the garbage region given above), and the second player wins the game.

Otherwise, if the circuit evaluates to true under all $y_i$ variable settings, there is no way to place a tile that connects to the \textsc{output} cell's tile, and the first player wins the game.

\paragraph{Losing strategy guarantee.} We can easily adapt this proof to guarantee that the first player always has a \emph{losing} strategy in the produced \abem{} instance (regardless of whether they have a winning strategy).  Simply amend the QSAT$_2$ instance by adding an additional existential variable $x_{n+1}$ which is conjoined with the formula ($f(\dots) \land x_{n+1}$).  Performing the boundary-edge swap corresponding to setting $x_{n+1}$ to \textsc{false} is a losing strategy.

\paragraph{Tile rotation.} The reduction above remains valid when tile rotation is allowed.  The wasteline's position and orientation are fixed by the unique boundary colors, then the position encoding on the top boundary and top edge of the wasteline prevents the circuit tiles from being rotated.  Allowing the garbage collection tiles to rotate is harmless because they still cannot be placed in the circuit region.

\paragraph{Signed colors.} The reduction above produces an unsigned edge-matching instance, but every unsigned edge-matching puzzle can be converted into a signed edge-matching puzzle with a constant-factor increase in size and the number of colors \cite[Theorem 4]{Jigsaw_GC}.
\end{proof}

\end{document}